\documentclass[11pt]{article}

\usepackage[utf8]{inputenc}
\usepackage[T1]{fontenc}
\IfFileExists{newtxtext.sty}{%
  \usepackage{amsmath,amsthm}%
  \usepackage{newtxtext}%
  \usepackage{newtxmath}%
}{%
  \usepackage{amsmath,amssymb,amsfonts,amsthm}%
  \usepackage{mathptmx}%
}  
\usepackage{bm}   
\usepackage{mathtools}   
\usepackage{graphicx} 
\usepackage{booktabs}
\usepackage{array} 
\usepackage{enumitem}  
\usepackage[most]{tcolorbox}
\usepackage{tikz}
\usetikzlibrary{arrows.meta,calc,positioning,decorations.pathreplacing} 
\definecolor{figblue}{RGB}{25,80,160}
\definecolor{figred}{RGB}{178,45,38} 
\definecolor{figgreen}{RGB}{22,115,72}
\definecolor{figorange}{RGB}{205,105,0}
\usepackage{float} 
\usepackage[margin=2.6cm]{geometry}
\usepackage{natbib}
\usepackage{hyperref}
\hypersetup{colorlinks=true, linkcolor=black!70!blue, citecolor=black!60!blue, urlcolor=black!60!blue,
  pdftitle={Shape-Preserving Treatment Effects for Functional Outcomes},
  pdfauthor={Agathe Chabassier, Joris Olivier, Julie Josse, Erwan Scornet}}
\theoremstyle{plain}
\newtheorem{theorem}{Theorem}
\newtheorem{proposition}[theorem]{Proposition}
\newtheorem{lemma}[theorem]{Lemma}
\newtheorem{corollary}[theorem]{Corollary}
\theoremstyle{definition}
\newtheorem{assumption}{Assumption}

\newtheorem{example}[theorem]{Example}
\newtheoremstyle{remarkit}%
  {\topsep}{\topsep}%
  {\normalfont}%
  {}%
  {\itshape}%
  {}%
  {.5em}%
  {\thmname{#1}\thmnumber{ #2}.\thmnote{ \itshape #3\/}}%
\theoremstyle{remarkit}
\newtheorem{remark}[theorem]{Remark}

\newtcolorbox{defbox}[1]{
    colback=white, colframe=white,
    fonttitle=\bfseries\small, coltitle=black!80, colbacktitle=white,
    enhanced, attach boxed title to top left={yshift=-2mm, xshift=2mm},
    title=#1, sharp corners, boxrule=0pt,
    borderline={0.6pt}{0pt}{black!40!gray, dashed},
    fontupper=\small\itshape}

\newtcolorbox{resultbox}[1]{ 
    colback=white, colframe=black!60!gray,
    fonttitle=\bfseries, coltitle=black, colbacktitle=white,
    enhanced, attach boxed title to top left={yshift=-2mm, xshift=2mm},
    title={#1}, sharp corners, boxrule=0.6pt,
    drop shadow={black!10!white}}

\newtcolorbox{mainbox}[1]{
    colback=white, colframe=black,
    fonttitle=\bfseries, coltitle=black, colbacktitle=white,
    enhanced, attach boxed title to top left={yshift=-2mm, xshift=2mm},
    title=#1, sharp corners, boxrule=0.9pt,
    borderline={0.35pt}{3pt}{black},
    left=8pt, right=8pt, top=6pt, bottom=6pt}

\tcolorboxenvironment{proposition}{
    enhanced, breakable,
    colback=white, colframe=black!45,
    sharp corners, boxrule=0.5pt,
    left=6pt, right=6pt, top=4pt, bottom=4pt,
    before skip=10pt, after skip=10pt}

\newtcolorbox{proofenv}[1][Proof.]{
    enhanced, breakable, colback=gray!2, colframe=white, frame hidden,
    borderline west={1pt}{0pt}{black!55!gray}, boxrule=0pt, sharp corners,
    left=10pt, right=4pt, top=4pt, bottom=4pt,
    before upper={\textit{\textbf{#1}}\ \ignorespaces}}

\newcommand{\shp}{\mathfrak s}   %

\newcommand{\F}{\mathcal{F}}
\newcommand{\X}{\mathcal{X}}
\newcommand{\G}{\Gamma}
\newcommand{\Gt}{\widetilde{\Gamma}}
\newcommand{\Gc}{\mathcal{G}}
\newcommand{\gid}{\gamma_{\mathrm{id}}}
\newcommand{\E}{\mathbb{E}}
\newcommand{\Prob}{\mathbb{P}}
\newcommand{\R}{\mathbb{R}}
\newcommand{\Lt}{L^2}
\newcommand{\norm}[1]{\left\| #1 \right\|}
\newcommand{\inner}[2]{\left\langle #1, #2 \right\rangle}
\newcommand{\ind}{\mathbf{1}}
\newcommand{\del}{d_{elastic}}
\newcommand{\dG}{d_{\G}}
\newcommand{\dFR}{d_{\mathrm{FR}}}
\newcommand{\Pstar}{P^{*}}
\newcommand{\Sw}{\mathcal{S}}
\newcommand{\Sphere}{\mathbb{S}_{\infty}}
\newcommand{\pto}{\overset{p}{\longrightarrow}}
\DeclareMathOperator*{\argmin}{arg\,min}
\DeclareMathOperator{\inj}{inj}
\DeclareMathOperator{\supp}{supp}
\newlength{\whywidth}
\newcommand{\why}[1]{&&\parbox[t]{\whywidth}{\raggedright\footnotesize\color{black!55}#1}}
\newcommand{\tk}[1]{} 
\title{\bf Shape-Preserving Treatment Effects for Functional Outcomes}
\author{Agathe Chabassier\textsuperscript{1,2,3}\thanks{Corresponding author: \texttt{agathe.chabassier@withings.com}} \and Joris Olivier\textsuperscript{1} \and Julie Josse\textsuperscript{2} \and Erwan Scornet\textsuperscript{3}}
\date{\small
\textsuperscript{1}Withings, Issy-les-Moulineaux, France\\
\textsuperscript{2}Inria, PreMeDICaL Team, IDESP (UMR Universit\'e de Montpellier -- Inserm), Montpellier, France\\
\textsuperscript{3}Sorbonne Universit\'e, CNRS, Laboratoire de Probabilit\'es, Statistique et Mod\'elisation (LPSM), Paris, France}

\begin{document}
\maketitle
\vspace{-0.8cm} 

\begin{abstract}
\noindent
By recording health measurements daily rather than at a few clinical visits, connected devices pave the way for more precise real-world evidence that measures how the effect of an intervention unfolds over time: does the response rebound, peak late, or oscillate? Yet causal methods for such functional outcomes remain scarce, and those that exist target the Functional Average Treatment Effect (FATE). Its pointwise averaging flattens the features patients share whenever their curves are misaligned in time, as is common with biosignals, where patients respond with their own timing. We develop a causal inference framework that captures the shape of the effect of a binary intervention, based on the phase-centered elastic mean from elastic shape analysis. We lift this mean from a sample statistic to a functional of the data-generating law and prove that it is well defined under explicit geometric assumptions. Because it is not linear, it yields three causal estimands instead of one: populational, individual and local. We identify the populational estimand by inverse-probability weighting and the local estimand through the conditional outcome laws, estimate them by weighting and by matching, and show that the individual one is not identifiable. On simulations with timing variability or heterogeneous effect shapes, our estimators recover the shape that the FATE loses. On real weight trajectories following GLP-1 receptor agonist initiation, they reveal a rebound that the FATE averages away. This framework thus gives clinicians a treatment-effect curve whose shape and timing reflect genuine patient trajectories rather than a misalignment artifact.
\medskip

\noindent\textbf{Keywords:} functional data; elastic shape analysis; Fisher--Rao metric; Fréchet mean; causal inference; inverse-probability weighting; phase--amplitude separation; real-world evidence; wearable devices; connected health; longitudinal data; GLP-1 receptor agonists.
\end{abstract}

\section{Introduction} \label{sec:intro}

In medicine, the gold standard to prove a drug's effect is the \emph{Randomized Controlled Trial (RCT)}: as treatment is given at random, the treated and control groups have the same characteristics, so that any difference in outcomes observed in the two groups can be attributed to the causal effect of the treatment. Causal inference provides the formal framework for the analysis of such trials, and also for the far more frequent situation in which randomisation is absent. When only observational data are available, the standard formulation is to emulate a target trial \citep{hernan_target_2016}: one states the trial that would have answered the question, then adjusts for confounding to recover the causal effect under explicit assumptions. 

The classical setting, where one studies the impact of a scalar treatment on a scalar
outcome, has been intensively studied in the literature
\citep{hernan_causal_2020,imbens_rubin_2015}, starting from a settled choice of the
quantity to be identified and estimated, called the \emph{causal estimand}. In that setting, the most common way to measure an average effect on the difference scale is the \emph{Average Treatment Effect (ATE)}. Write $Y^{(1)}$ and $Y^{(0)}$ for the \emph{potential outcomes}
of a unit, that is the outcomes that it would have under treatment and no treatment, respectively. The ATE is then the difference between their expectations
\[ \tau_{\mathrm{ATE}} = \E[Y^{(1)}] - \E[Y^{(0)}]. \]
Note that ratio scales, such as the risk ratio, define an average effect as well. We work throughout on the difference scale. 

However, in many applications, the outcome of interest is not a scalar but a function, more specifically a curve indexed by time. Examples include body weight over the year, hormonal variations, physical-activity intensity across the day. Outcomes of this kind are becoming ordinary
rather than exceptional: connected scales, wearable devices and health applications now
record them day by day, for large populations and over long follow-up. In this paper, we restrict ourselves to the setting of a binary treatment acting on such \emph{functional outcomes}. Note that only the outcome is time-dependent and that the treatment is assigned once at baseline, which places our problem outside the literature on time-varying treatments, dynamic regimes, and marginal structural models \citep{hernan_causal_2020}, where time parametrises the interventions themselves.
Before reviewing the methodologies designed for this setting, we consider two ways of reducing it to a simpler one.

First, one can always reduce the functional outcome to a \emph{scalar}, for instance by taking the average of the curve over time, or by taking its maximum value. This aggregation, however, discards clinical information carried by the individual trajectories, and the resulting quantity may not answer the scientific question at hand. A single number cannot describe all the ways a treatment shapes the short-term and the long-term evolution of a trajectory. Alternatively, the functional outcome can be reduced to the \emph{distributions of its values at each time point}, as done in \citet{chabassier_glp1_2026}, to leverage the growing literature on distributional causal inference \citep{lin_causal_2023,katta_interpretable_2024}. This literature treats distribution-valued outcomes via Wasserstein barycenters, with specific functional causal estimands designed for this setting. However, these approaches discard the temporal structure in the trajectories: an increasing curve and a decreasing one can have the same distribution of values, but usually yield very different clinical interpretations.

In order to fully capture all the clinical information contained in the functional outcomes, one needs to develop suitable \emph{functional causal inference methodologies}, able to estimate the impact of a scalar treatment on entire trajectories. Despite the growing availability of functional data, this field has received relatively little attention, especially regarding the choice of tailored estimands. Designing suitable causal estimands is, however, not as straightforward as in the scalar case. Indeed, given the higher-dimensional nature of functional outcomes, several causal estimands can be defined, each targeting a different aspect of the treatment's impact. Some may capture the typical values the functional outcomes take at each time point, others the typical shape of the response over time: does it rebound, peak late, or oscillate? We review below the few existing proposals and their limitations, before presenting ours.

Recently, the scalar ATE has been directly generalized to the setting of functional outcomes through the \emph{Functional Average Treatment Effect (FATE)}, defined as the pointwise average of the difference between the treated and control potential outcomes:
\begin{equation}
  \label{eq:fate}
\tau_{\mathrm{FATE}}(t) = \E\big[Y^{(1)}(t)\big] - \E\big[Y^{(0)}(t)\big],
\qquad t \in [0,T].
\end{equation}
\noindent \citet{ecker_causal_2024} introduced this new causal estimand, and \citet{testa_doubly_2025} developed doubly robust estimators with simultaneous confidence bands for it. In a missing-at-random setting, \citet{liu_doubly_2024} also developed doubly robust estimators for the mean of a functional outcome using parametric models. 

These works, however, leave aside a property of pointwise averaging that is well established in Functional Data Analysis (FDA): averaging at fixed time points does not preserve shape. This phenomenon is called \emph{amplitude flattening} \citep{marron_functional_2015,srivastava_klassen_2016,srivastava_registration_2011}: when curves are misaligned in time, the pointwise mean fails to capture the shape of the individual curves, spreading for example sharp peaks into low, wide bumps that are shorter than any individual's. This misalignment of individual curves, called \emph{phase variation}, is very common in functional data, especially biosignals. Consider for instance growth curves in children: while all children go through the same growth peaks, they do so at different ages, and the pointwise mean of their growth curves will not capture the typical shape pattern of a child's growth curve \citep{srivastava_klassen_2016}.

The FATE, a difference of two pointwise means, \emph{inherits this distortion} and can entirely miss a shape that is present in every single unit. Consider a drug that induces a sharp weight loss followed by a rebound, at a different time for each patient (Figure~\ref{fig:introfate}(a)). The FATE averages the individual effects at fixed time points. As the patients do not drop at the same time, this average becomes a long, shallow plateau that only slowly returns to zero, although every simulated patient drops sharply and rebounds quickly (Figure~\ref{fig:introfate}(b)). Such a plateau suggests a sustained effect with no rebound, which is clinically misleading. The same pattern appears on real weight trajectories (Section~\ref{sec:realdata}).

\begin{figure}[htbp]
\centering
\includegraphics[width=\textwidth]{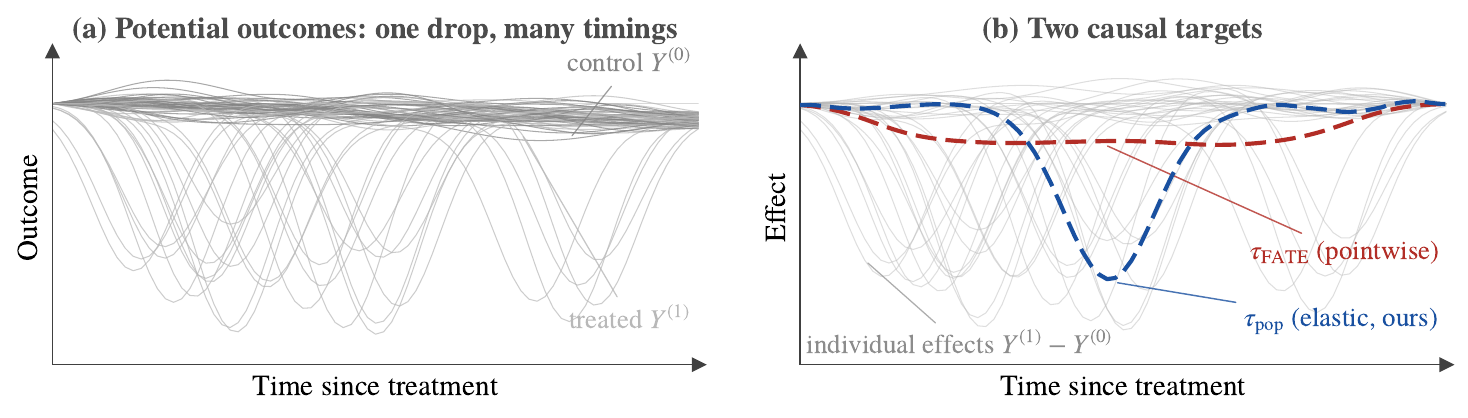}
\caption{\textbf{The FATE flattens a shape that every unit has.} Simulated drop-and-rebound curves, as in Scenario~1 (Section~\ref{sec:simmodel}) but with nadir times spread over most of the follow-up and a narrower dip (width $0.065$ instead of $0.10$): under treatment, every unit drops sharply and then rebounds, each at its own time. \emph{(a)} Potential outcomes of $30$ units under control (dark grey) and under treatment (light grey). \emph{(b)} Individual effects $Y^{(1)}-Y^{(0)}$ of the same units (grey) and two causal targets. The FATE (red, dashed) is the pointwise mean of the individual effects: it is a shallow plateau, although every unit rebounds. The populational elastic effect $\tau_{\mathrm{pop}}$ (blue, dashed) is the difference of the elastic means of the two arms (Section~\ref{sec:three}).}
\label{fig:introfate} 
\end{figure}

The canonical solution from FDA to this amplitude flattening issue is to average curves with the \emph{phase-centered elastic mean}, or \emph{elastic mean} for short, instead of the pointwise mean \citep{srivastava_klassen_2016,srivastava_registration_2011}. The former non-parametrically captures the typical shape of a sample of curves that the cross-sectional mean flattens out, and its features occur at the sample's typical timing (see Figures~\ref{fig:warpclock} and~\ref{fig:karchersteps} in Section~\ref{sec:background}, and Figure~\ref{fig:meansgrid} in Appendix~\ref{app:meansgrid}). Averaging this way means solving a minimisation problem rather than averaging values:
the elastic mean is based on a \emph{Fréchet mean} in a space of shapes. This Fréchet mean is the shape that minimises the average
squared \emph{elastic distance} to the sample, where the elastic distance between two
shapes is measured after aligning representatives of the shapes as well as possible. A second step then
places the features of that shape at the sample's typical timing, which is what
\emph{phase-centered} refers to, and a third step averages the aligned curves. Together, these tools form \emph{elastic shape analysis}, and important analytic properties of the square-root framework that they use, such as the existence of optimal re-timings, were established by \citet{bruveris_optimal_2016}. 

Two recent works instead replace the pointwise mean by Fréchet means in general metric spaces. \citet{kurisu_geodesic_2024} develop this idea for random objects in a general geodesic space, with a doubly robust estimator and convergence guarantees. Conditionally on selecting a suitable metric, this framework can be applied to our functional setting. With the usual $L^2$ metric, however, the Fréchet mean of
curves is again their pointwise mean, so the flattening returns. Using the elastic distance instead is not straightforward, and it violates some of the assumptions of the framework.

The second work, which is closer to ours, is that of \citet{raykov_kernel_2025}. Their causal estimands are Fréchet means of the potential outcomes, identified through outcome regression, with conditional outcome regression models estimated by operator-valued kernel ridge regression. Curve registration to the Fréchet mean shape (the Karcher mean of the SRSF literature) enters separately, as an unadjusted preprocessing step applied before the causal analysis, which then targets the Fréchet mean (under any functional metric) of the already-aligned potential outcomes. Two consequences follow.
First, no clear overall causal estimand is targeted by the whole pipeline (preprocessing followed by Fréchet-mean estimation), which prevents interpreting the output as the typical shape of a treatment effect. Second, the dependence and bias that this pre-processing alignment to a common Fréchet mean shape induces among the aligned curves is left unaddressed. 

Taken together, these elements suggest a simple correction to the amplitude flattening observed in the FATE: build the estimand from the elastic means of the two arms rather than from their pointwise means, as $\tau_{\mathrm{pop}}$ does in Figure~\ref{fig:introfate}(b). Three difficulties stand in the way. The elastic mean is only defined in the functional data analysis literature as a sample-level statistic, whereas a causal estimand must be a functional of the data-generating law, and that lift must be proved well posed in spaces that are non-linear and infinite-dimensional. Secondly, contrary to the pointwise mean, the elastic mean is not separable: writing $\mu(.)$ for the elastic mean, $\mu(Y^{(1)})-\mu(Y^{(0)})$ and $\mu(Y^{(1)}-Y^{(0)})$ are two different quantities, which raises the question of which one to target. Lastly, identification is usually immediate through inverse-probability weighting or outcome regression. The three-step definition of the elastic mean makes it much harder to prove here, and estimation carries part of that difficulty.

\paragraph{Contributions.}
This paper resolves these three difficulties and builds a causal framework
around the elastic mean, from the definition of the target to its estimation
on real data. Our contributions are the following.

\begin{itemize}[leftmargin=1.6em,itemsep=3pt] 

\item \textbf{We lift the elastic mean to the distribution level.} We define the
phase-centered elastic mean of a probability measure on curves (Section~\ref{sec:lift}), and we prove
that it exists and is unique under four geometric assumptions (Section~\ref{sec:wellposed}). We analyse the plausibility of these assumptions in the context of functional data. Notably, the existing Fréchet-mean results do not directly apply to our setting, as the space in which the timings are averaged is infinite-dimensional and positively curved (Appendix~\ref{app:frechetgeo}).

\item \textbf{We define and compare three distinct shape-preserving causal estimands.} Because the
elastic mean is not separable, one target is not enough. We introduce a
populational effect, an individual effect and a local effect, we prove that
all three are well posed, and explain to which aim each one should be used (Section~\ref{sec:three}). 

\item \textbf{We prove the identifiability of our populational and local estimands.} The
populational effect is identified by placing the inverse-probability weights
inside the alignment problems themselves (Section~\ref{sec:identpop}), and the local effect through the conditional outcome laws of the two arms given the covariates, which matching treated and control units on their covariates estimates (Section~\ref{sec:identloc}).

\item \textbf{We build estimators and provide an empirical evaluation on synthetic data.} We propose an
inverse-probability-weighted estimator for the populational
effect (Section~\ref{sec:popest}) and a nearest-neighbour-matched estimator for the local one (Section~\ref{sec:locest}). On
drop-and-rebound simulations (Section~\ref{sec:sims}), we compare them with the doubly robust FATE
estimator of \citet{testa_doubly_2025} and with the registered pipeline of
\citet{raykov_kernel_2025}. Sensitivity analyses vary the sample size, the propensity model and the covariate dimension (Appendix~\ref{app:sens}). 

\item \textbf{We apply the framework to real data.} We estimate the effect of
initiating a GLP-1 receptor agonist on body-weight trajectories, and we
compare the shapes our two estimators recover with those of competitors (Section~\ref{sec:realdata}). 

\end{itemize}

\section{Background: averaging in the shape space} \label{sec:background}

This section recalls the main elements of functional Elastic Shape Analysis (ESA) from \cite{srivastava_klassen_2016,srivastava_registration_2011} required to construct the elastic mean used in our causal targets. The notions introduced are necessary to understand the three-step definitions of our causal estimands. Crucially, the existing literature defines these quantities only as sample-level statistics---we lift these empirical quantities to functionals of data-generating laws that can be used in causal estimands in Section~\ref{sec:estimand}. All the proofs are detailed in the appendices.

\subsection{Functional data: amplitude and phase} \label{sec:fda}

The objective of the elastic mean is to decouple the amplitude (shape) and phase content of the functions $f_i(t)$. The aim is to provide a single representative function that captures the "average shape" of the functions $\{f_i\}_{i=1}^n$. This average shape sits at a central temporal position compared to the functions $f_i$. We therefore start by recalling the notions of phase and amplitude of functions from ESA. This fully non-parametric framework is defined for absolutely continuous functions in
\[
  \F=\{f:[0,T]\to\R\ \text{absolutely continuous}\}\ \ (\text{so that }\dot f\in L^1([0,T])).
\]

\begin{defbox}{Phase content and warping of functions}
    The misalignment of structural features (peaks, valleys, etc.) across a group of functions $\{f_i\}_{i=1}^n \in \mathcal{F}^n$ is modeled as a \emph{reparameterization} of the time domain. Specifically, each function $f_i(t)$ is viewed as being transformed by a warping function $\gamma_i \in \Gamma$, resulting in an observed "warped" version $f_i \circ \gamma_i(t)$.\\
    
    The warping group $\Gamma$ is formally defined as the set of all absolutely continuous, strictly increasing bijections of the time domain $[0, T]$:
    \begin{equation}
        \Gamma = \{ \gamma: [0,T] \to [0,T] \mid \gamma \text{ absolutely continuous},\ \gamma(0)=0, \gamma(T)=T, \dot{\gamma} > 0\text{ almost everywhere} \}.
    \end{equation}
    In this notation, the element $\gamma(t)$ acts as a non-linear "clock" that accelerates or decelerates the progression of time, shifting the original feature $f_i(t)$ from its canonical position $t$ to a new position $\gamma_i^{-1}(t)$ in the observed domain.
\end{defbox}

\begin{defbox}{Amplitude content and shape of functions}
    While phase variation is modeled by the warping group $\Gamma$, the \emph{amplitude} (or shape content) refers to the intrinsic height and structural features of the function that are invariant to such warpings (waves, bell-shapes, straight lines, step shapes, etc.). In the ESA framework, two functions $f_1, f_2 \in \mathcal{F}$ are said to have the same amplitude if there exists  $\gamma \in \Gamma$ such that: $$f_1(t) = f_2 \circ \gamma(t).$$
Section~\ref{sec:shape} slightly enlarges this definition, so that the elastic distance becomes a true distance.
\end{defbox} 

Figure~\ref{fig:warpclock} illustrates the two ingredients of this decomposition: phase is the timing imposed by a warp, amplitude is the shape that every warp leaves unchanged.

\begin{figure}[htbp]
\centering
\begin{tikzpicture}[>=Stealth, scale=1.0]
\begin{scope}
  \node[font=\footnotesize, black!70] at (1.85,4.35) {\textbf{(a) The warp $\gamma$: a nonlinear clock}};
  \draw[->] (0,0) -- (3.6,0) node[right, font=\small] {$t$};
  \draw[->] (0,0) -- (0,3.6) node[above, font=\small] {$\gamma(t)$};
  \draw[black!40, densely dotted] (3,0) -- (3,3) -- (0,3);
  \node[below, font=\small] at (3,0) {$T$};
  \node[left, font=\small] at (0,3) {$T$};
  \draw[dash dot, black!55] (0,0) -- (3,3);
  \node[font=\small, black!55, rotate=45] at (2.42,2.14) {$\gid$};
  \draw[thick, figred] plot[domain=0:3, samples=100] (\x,{\x+0.55*sin(60*\x)});
  \node[font=\small, figred] at (0.55,1.35) {$\gamma$};
  \draw[->, figred!70, densely dotted] (2,0.05) -- (2,2.43);
  \draw[->, figred!70, densely dotted] (1.95,2.476) -- (0.07,2.476);
  \node[below, font=\footnotesize, figred] at (2,0) {$t_0$};
  \node[left, font=\footnotesize, figred] at (0,2.476) {$\gamma(t_0)$};
  \node[font=\footnotesize, black!70] at (2.15,3.36) {$\gamma$ above $\gid$: the clock runs ahead};
\end{scope}
\begin{scope}[xshift=5.6cm]
  \node[font=\footnotesize, black!70] at (1.85,4.35) {\textbf{(b) Phase variation}};
  \node[font=\footnotesize, black!70] at (1.85,3.85) {\textit{Same amplitude, re-timed by $\gamma$}};
  \draw[->] (-0.15,0) -- (3.6,0) node[right, font=\small] {$t$};
  \draw[thick, figblue] plot[domain=0:3, samples=120] (\x,{2.6*exp(-pow(\x-1.8,2)/0.18)});
  \draw[thick, dashed, figred] plot[domain=0:3, samples=160] (\x,{2.6*exp(-pow(\x+0.55*sin(60*\x)-1.8,2)/0.18)});
  \node[font=\small, anchor=west, figblue] at (2.25,1.6) {$f$};
  \node[font=\small, anchor=east, figred] at (0.82,1.6) {$f\circ\gamma$};
  \draw[densely dotted, figblue!70] (1.8,0) -- (1.8,2.6);
  \draw[densely dotted, figred!70] (1.27,0) -- (1.27,2.6);
  \node[below, font=\footnotesize, figblue] at (1.94,0) {$t^\star$};
  \node[below, font=\footnotesize, figred] at (1.05,0) {$\gamma^{-1}(t^\star)$};
  \draw[->, black!70] (1.74,2.85) -- (1.34,2.85);
  \node[above, font=\footnotesize, black!70] at (1.54,2.88) {same height, earlier};
\end{scope}
\begin{scope}[xshift=11.2cm]
  \node[font=\footnotesize, black!70] at (1.85,4.35) {\textbf{(c) Amplitude variation}};
  \node[font=\footnotesize, black!70] at (1.85,3.85) {\textit{Different amplitude, same timing}};
  \draw[->] (-0.15,0) -- (3.6,0) node[right, font=\small] {$t$};
  \draw[thick, figblue] plot[domain=0:3, samples=120] (\x,{2.6*exp(-pow(\x-1.8,2)/0.18)});
  \draw[thick, figgreen] plot[domain=0:3, samples=160] (\x,{1.6*exp(-pow(\x-1.8,2)/0.18)+1.0*exp(-pow(\x-2.45,2)/0.09)});
  \node[font=\small, anchor=east, figblue] at (1.62,2.75) {$f$};
  \node[font=\small, anchor=west, figgreen] at (2.92,0.72) {$g$};
  \draw[densely dotted, black!55] (1.8,0) -- (1.8,2.6);
  \node[below, font=\footnotesize, black!70] at (1.8,0) {$t^\star$};
  \draw[<->, black!70] (1.8,1.68) -- (1.8,2.52);
  \node[font=\footnotesize, black!70, anchor=east] at (1.66,2.1) {lower peak};
  \draw[->, black!70] (2.9,2.05) -- (2.58,1.32);
  \node[font=\footnotesize, black!70, anchor=south] at (2.9,2.08) {extra bump};
\end{scope}
\end{tikzpicture}
\caption{\textbf{Phase is when, amplitude is what.} \emph{(a)} A warp $\gamma\in\G$ (red), an increasing bijection of $[0,T]$ fixing the endpoints, and the identity warp $\gid$ (dash-dotted diagonal): where $\gamma$ lies above the diagonal, the clock runs ahead of $t$. \emph{(b)} Phase variation. The curve $f$ (blue) and its reparametrisation $f\circ\gamma$ (red, dashed) have the same amplitude: the peak of height $f(t^\star)$ is visited at $\gamma^{-1}(t^\star)$ instead of $t^\star$, and nothing else changes. \emph{(c)} Amplitude variation. The curves $f$ and $g$ (green) peak at the same time $t^\star$ but differ in shape (a lower peak and an extra bump): no warp turns one into the other. The elastic mean of Section~\ref{sec:karcher} averages the amplitudes of \emph{(c)} and the timings of \emph{(b)} separately.}
\label{fig:warpclock}
\end{figure}

\subsection{The SRSF and the Fisher--Rao isometry} \label{sec:srsf}

To compare functions in $\mathcal{F}$ in a manner that is insensitive to temporal misalignment, we need a notion of distance that is invariant to reparameterization, \textit{i.e.} $d(f_1\circ\gamma,f_2\circ\gamma)=d(f_1,f_2)$ for every $\gamma\in\G$. The canonical reparametrisation-invariant choice in the non-linear manifold $\mathcal{F}$ is the Fisher--Rao metric which induces the distance $\dFR$ \citep{srivastava_klassen_2016}. 

Computing the distance between functions under the Fisher--Rao metric is traditionally very challenging. Because the metric is defined on the tangent space of a non-linear manifold, calculating the geodesic distance $\dFR(f_1, f_2)$ requires solving complex differential equations. The core innovation of the ESA framework is to simplify this geometry through the Square-Root Slope Function (SRSF) transform, defined below.

\begin{defbox}{Square-root slope function (SRSF)}
The square-root slope function of any $f\in\F$ is the transform $q_f$ given by
\[
  q_f(t)=\operatorname{sign}\!\big(\dot f(t)\big)\sqrt{\big|\dot f(t)\big|}.
\]
Based on $q_f$ and the initial height $f(0)$, one can reconstruct $f$ via $f(t)=f(0)+\int_0^t q_f(s)\,|q_f(s)|\,ds$.
\end{defbox}

The transform $f\mapsto q_f$, from $\F$ to $\Lt$, acts as a bijection between the space of functions $\mathcal{F}$ (up to a translation constant $f(0)$) and the standard Hilbert space $\Lt$. It is also continuous, hence measurable. The fundamental advantage of the SRSF is to map the Fisher--Rao distance in $\mathcal{F}$ directly to the standard $\Lt$ distance between the SRSF transforms of the functions.

\begin{proposition}[The Fisher--Rao isometry, {\citealp{srivastava_klassen_2016}}]\label{prop:friso}
For all $f_1,f_2\in\F$,\quad $\dFR(f_1,f_2)=\norm{q_{f_1}-q_{f_2}}_2$.
\end{proposition}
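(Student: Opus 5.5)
The plan is to prove the isometry by showing that the SRSF map pulls the flat $\Lt$ metric back to the Fisher--Rao metric on $\F$. The distance statement then follows because geodesic distances are infima of path lengths, and the two path-length functionals agree.

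First, recall the definition: the Fisher--Rao metric at $f\in\F$, evaluated on tangent vectors $v_1,v_2$ (perturbations of $f$), is
\[
\langle v_1,v_2\rangle_f=\tfrac14\int_0^T \dot v_1(t)\,\dot v_2(t)\,\frac{dt}{|\dot f(t)|}.
\]
The distance $\dFR(f_1,f_2)$ is the infimum of the length of paths $\alpha:[0,1]\to\F$ joining $f_1$ to $f_2$. As the paper does when it identifies $\F$ with $\Lt$ up to the translation constant $f(0)$, this is understood modulo that constant: the metric only involves derivatives, so vertical translations have zero length.

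Second, compute the differential of $Q:f\mapsto q_f$. Where $\dot f\neq 0$, a direct computation gives
\[
dQ_f(v)=\frac{\dot v}{2\sqrt{|\dot f|}}.
\]
Hence
\[
\norm{dQ_f(v)}_2^2=\tfrac14\int \dot v^2/|\dot f|=\langle v,v\rangle_f .
\]
So for any smooth path $\alpha$ in $\F$, the Fisher--Rao length of $\alpha$ equals the $\Lt$ length of $Q\circ\alpha$.

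Third, use surjectivity to pass to distances. Since $Q$ is a bijection from $\F$ modulo constants onto $\Lt$, with explicit inverse $q\mapsto f(0)+\int_0^t q|q|$, every path in $\Lt$ between $q_{f_1}$ and $q_{f_2}$ comes from a path in $\F$. Therefore $\dFR(f_1,f_2)$ equals the infimum of $\Lt$ lengths of paths joining $q_{f_1}$ and $q_{f_2}$. In the Hilbert space $\Lt$, this infimum is attained by the straight segment and equals $\norm{q_{f_1}-q_{f_2}}_2$.

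The main obstacle is regularity. On $\{\dot f=0\}$ the differential of $Q$ blows up, and the metric is degenerate or infinite there. Moreover, the straight segment in $\Lt$ may pull back to paths through functions with vanishing derivative on sets of positive measure. I would handle this in one of two ways. The first is to take the pulled-back metric as the definition, as in \citet{srivastava_klassen_2016}, which makes the statement essentially definitional on the open set $\{\dot f\neq0 \text{ a.e.}\}$. The second is to define lengths via absolutely continuous paths and argue by density: restrict to paths avoiding the degenerate set, note that $\Lt$ lengths of such paths are dense among all lengths, and conclude that the infimum is unchanged. I would attempt the density argument first, and otherwise fall back on the definitional convention of the ESA literature.
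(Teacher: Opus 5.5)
The paper does not re-prove this proposition but defers to \citet[Ch.~4]{srivastava_klassen_2016} and \citet{bruveris_optimal_2016}, and your argument is exactly the standard one behind that citation: the SRSF differential $dQ_f(v)=\dot v/(2\sqrt{|\dot f|})$ pulls the flat $\Lt$ metric back to the Fisher--Rao metric, so path lengths agree, and the straight segment in $\Lt$, lifted through $q\mapsto f(0)+\int_0^{\cdot}q|q|$ with vertical translations costing nothing, realises the distance. Two small remarks on your regularity paragraph: the set $\{f:\dot f\neq0\text{ a.e.}\}$ is not open in $(\F,d_{\F})$, since setting $\dot f$ to zero on a short interval is a perturbation of arbitrarily small total variation; and the density detour is unnecessary, because along the lifted segment $q_s=(1-s)q_{f_1}+s\,q_{f_2}$ the tangent satisfies $\dot v_s=2|q_s|(q_{f_2}-q_{f_1})$, which vanishes wherever $\dot f_s=q_s|q_s|$ does, so with the convention that the Fisher--Rao integrand is $0$ where $\dot v=0$ its length is at most $\norm{q_{f_1}-q_{f_2}}_2$, while the lower bound over arbitrary paths (flat curves included) follows from the Serrin--Varberg chain rule in $s$ at each fixed $t$ together with Minkowski's integral inequality.
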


The remaining analytic properties of the SRSF (image, invertibility, continuity) are collected in Proposition~\ref{prop:srsf} and proved in Appendix~\ref{app:lemmas}; from here on we work entirely with SRSFs and the flat $\Lt$ geometry.

Re-timing acts equally cleanly on transforms. A chain-rule computation shows that the SRSF of $f\circ\gamma$ is
\[
  q_{f\circ\gamma}=(q_f\circ\gamma)\sqrt{\dot\gamma}. 
\]
In the following, we use the shorthand $(q,\gamma):= q_{f \circ \gamma} = (q\circ\gamma)\sqrt{\dot\gamma}$ for the right action of a warp $\gamma$ on a transform $q\in\Lt$. The action is linear in $q$ and preserves the $\Lt$ inner product, $\inner{(q_1,\gamma)}{(q_2,\gamma)}=\inner{q_1}{q_2}$, hence every norm, $\norm{(q,\gamma)}_2=\norm{q}_2$ (Lemma~\ref{lem:iso}, Appendix~\ref{app:lemmas}). Reparametrising thus slides a transform along the time axis while leaving its $\Lt$ norm unchanged: warping re-times a curve but cannot alter it otherwise. This norm-preservation is necessary for the existence and well-definedness results of Section~\ref{sec:estimand}. 

\subsection{The shape space and the elastic distance} \label{sec:shape}

We can now define a \emph{shape space}, where every element corresponds to a specific shape (amplitude), irrespective of the time warping used to represent it. One would naturally consider the quotient space $\Lt/\G$, whose \emph{orbits} $\{(q,\gamma):\gamma\in\G\}$ collect all reparameterizations of the same underlying functional morphology. However, as $\G$ is not closed, neither are the orbits $\{(q,\gamma):\gamma\in\G\}$, and the elastic distance defined below is only a pseudo-metric on $\Lt/\G$ (two different orbits can be at zero elastic distance), rendering the resulting quotient space incomplete \citep[App.~B.2]{raykov_kernel_2025}. We provide a concrete example of this failure in Example~\ref{ex:line} (Appendix~\ref{app:assump}). To circumvent this problem, we follow 
\citet{bruveris_optimal_2016}. His construction treats as one shape all the curves that can be obtained by re-timing one common curve, where the re-timing may pause, together with the limits of such re-timings. Formally, besides the group $\G$ of invertible re-timings, we use the larger set $\Gt$ (his $\overline\Gamma$) of non-decreasing, absolutely continuous maps of $[0,T]$ onto itself, whose elements we also call warps or re-timings. Every element of $\Gt$ is a limit of elements of $\G$ \citep[Sect.~4]{bruveris_optimal_2016}. A re-timing in $\Gt$ may \emph{pause}, that is, stay constant on a time interval. A re-timing that pauses cannot be undone, so $\Gt$, unlike $\G$, is not a group (not all its elements can be inverted). Two transforms $q_1,q_2\in\Lt$ have the same \emph{shape}, written $q_1\sim q_2$, when both are re-timings, possibly with pauses, of one common transform: $q_1=(q,\beta)$ and $q_2=(q,\gamma)$ for some $q\in\Lt$ and $\beta,\gamma\in\Gt$. Equivalently, the closures in $\Lt$ of their $\G$-orbits coincide \citep[Prop.~12]{bruveris_optimal_2016}. The shape of $q$ is therefore the closure of its $\G$-orbit,
\begin{equation}
  [q]=\overline{\{(q,\gamma):\gamma\in\G\}},
\end{equation}
and every element of $[q]$ is called a \emph{representative} of this shape. The \emph{shape space} is the set of shapes, $\mathcal S=\Lt/\!\sim$. We say that a curve $f$ is \emph{nowhere flat} when $\dot f\neq0$ almost everywhere, and that it has a \emph{flat piece} when it is constant on some time interval.

\begin{remark}[Why $\mathcal S=\Lt/\!\sim$ and not $\Lt/\Gt$?]\label{rem:notquot}
A re-timing $\gamma \in \Gt$ can create, lengthen or shorten a flat piece, but never remove one. Hence:
\begin{itemize}[leftmargin=1.6em,itemsep=1pt,topsep=2pt]
\item if $f$ is nowhere flat, its shape is exactly the set of its re-timings, $[q_f]=\{(q_f,\gamma):\gamma\in\Gt\}$ \citep[Cor.~9]{bruveris_optimal_2016};
\item if $f$ has a flat piece and is not constant, its shape is larger: it also contains the curve obtained by cutting out the flat piece and stretching the rest back to $[0,T]$, which is a limit of re-timings of $f$ but not a re-timing of $f$. In Example~\ref{ex:line}, cutting out the flat start of $f_1$ gives the line $f_2$.
\end{itemize}
So the sets $\{(q,\gamma):\gamma\in\Gt\}$ are not the shapes in general. The notation $\Lt/\Gt$, common in elastic shape analysis, is exact only for nowhere-flat curves, and \citet[Sect.~5.2]{bruveris_optimal_2016} calls it informal. Appendix~\ref{app:flat} explains what flat pieces change in our construction.
\end{remark}

To compare different shapes, we define the \emph{elastic distance} between two shapes $[q_1]$ and $[q_2]$ in $\mathcal{S}$ as the optimal alignment between any two representatives $q_1$ and $q_2$ in $\Lt([0,T], \mathbb{R})$:  
\begin{equation}
    d_{elastic}([q_1], [q_2]) = \inf_{\gamma \in \Gt} \norm{q_1 - (q_2, \gamma)}_{\Lt}. 
\end{equation}
The infimum is the same over $\G$ and over $\Gt$, and it does not depend on which representatives $q_1$ and $q_2$ of the two shapes are used \citep[Sect.~5.3]{bruveris_optimal_2016}. With this construction,  $d_{elastic}$ is a genuine distance and $(\mathcal{S}, d_{elastic})$ is a complete metric space \citep[Lem.~13]{bruveris_optimal_2016}. The infimum, however, is not always attained. For $C^1$ curves that are nowhere flat, \citet[Prop.~15]{bruveris_optimal_2016} proves that an optimal \emph{pair} of re-timings exists, one for each curve: $d_{elastic}([q_1],[q_2])=\norm{(q_1,\gamma_1)-(q_2,\gamma_2)}$ for some $\gamma_1,\gamma_2\in\Gt$. His result lets both curves be re-timed. Our construction, however, always re-times a data curve towards a fixed curve, which is itself never re-timed; we call this \emph{one-sided registration}. A best one-sided re-timing can fail to exist, for curves with a flat piece (Remark~\ref{rem:onesided}) and even for smooth, nowhere-flat curves (Appendix~\ref{app:flat}), so we assume it where we use it (Assumption~\ref{ass:warp}(iii)).

  \begin{defbox}{Registration (optimal warp) of functions}
        The \emph{registration} between two elements $q_1$ and $q_2$ in $\Lt([0,T], \mathbb{R})$ is the solution, when it exists, $\gamma^\star_{q_2\to q_1}$ of the optimization problem $\inf_{\gamma \in \Gt} \norm{q_1 - (q_2, \gamma)}_{\Lt}$: the \emph{optimal warp} $\gamma^\star_{q_2\to q_1}$ aligns the features of $f_2$ (represented by its SRSF transform $q_2$) to those of $f_1$ (represented by its SRSF transform $q_1$). The subscript always reads from the transform being warped to its target, and both are written explicitly.
\end{defbox}
 
Figure~\ref{fig:orbitdist} (Appendix~\ref{app:lemmas}) pictures the geometry behind these definitions: shapes are closures of orbits, and the elastic distance is the shortest $\Lt$-distance from a representative to the other shape, attained at the registration when it exists.

\subsection{The sample phase-centered elastic mean} \label{sec:karcher}

The previous subsections provide the two ingredients needed to capture the typical shape of a set of curves: a complete metric space $(\mathcal{S}, d_{elastic})$ whose points are shapes, and a procedure to map individual functions to this shape space. However, a mean of shapes cannot be computed as an ordinary average in this space since the complete metric space $(\mathcal{S}, d_{elastic})$ is not flat. But one can still compute a Fréchet mean by finding the point minimising the sum of squared distances to the sample, thus solving an optimization problem. We summarise below the three-step construction of a phase-centered elastic mean of curves $f_1,\dots,f_n$ with SRSF transforms $q_1,\dots,q_n$ from \citet{srivastava_registration_2011,srivastava_klassen_2016}.

The mean shape (Step 1 below) settles what the typical curve looks like but not when its features occur:
it is a shape, a whole family of representatives differing only in their timing. To pick the specific function in that shape whose timing is itself typical (Step 2) we must average not the
curves but the warps that relate them. Warps resist ordinary averaging---they
compose rather than add, and a convex combination of warps may not be a warp---but
the same square-root transform as we used for curves can be used for warps. Each warp $\gamma$ is
represented by its SRSF $\psi_\gamma=\sqrt{\dot\gamma}$, and simple calculations show that the transforms $\psi_\gamma$ all live on a single Hilbert sphere $\Sphere(\sqrt T)$ of radius $\sqrt T$, and that a natural distance between warps is given by the
great-circle arc $\dG(\gamma_1,\gamma_2)=\sqrt T\,\arccos\!\big(T^{-1}\inner{\psi_{\gamma_1}}{\psi_{\gamma_2}}\big)$ (see Appendix~\ref{app:sphere} for details). The SRSF of the optimal warp $\gamma^\star_{q_2\to q_1}$ is written $\psi^\star_{q_2\to q_1}=\psi_{\gamma^\star_{q_2\to q_1}}$. The typical timing of a set of warps can thus be defined as the Fréchet mean of their SRSF representations on the sphere.

\begin{resultbox}{The sample phase-centered elastic mean $\hat\mu_n$}
Given curves $f_1,\dots,f_n$ with SRSF transforms $q_1,\dots,q_n$, the sample phase-centered elastic mean $\hat\mu_n\in\F$ is defined in three steps.

\smallskip
\emph{Step 1 --- mean shape.} The Fréchet mean shape in $(\Sw,\del)$,
\[
  \hat\shp_n\ =\ \argmin_{[q]\in\Sw}\ \frac1n\sum_{i=1}^n\del^2\big([q],[q_i]\big).
\]
 
\smallskip 
\emph{Step 2 --- mean warp.} Fix a nowhere-flat representative $\tilde q\in\hat\shp_n$ and let $\gamma^\star_{q_i\to\tilde q}=\argmin_{\gamma\in\Gt}\norm{\tilde q-(q_i,\gamma)}$ be the warp registering $q_i$ to this fixed $\tilde q$. Take the Fréchet mean of these warps by the Fisher--Rao metric (which is computed using their SRSF representations on the sphere $\Sphere(\sqrt T)$), 
\[
  \bar\gamma_n\ =\ \argmin_{\gamma\in\Gt}\ \frac1n\sum_{i=1}^n\dG^2\big(\gamma,\gamma^\star_{q_i\to\tilde q}\big).
\]
\smallskip
\emph{Step 3 --- phase-centering and averaging.} Re-time every registered curve by the inverse mean warp, and average the resulting curves: the phase-centered elastic mean $\hat\mu_n\in\F$ is
$$
  \forall t\in[0,T],\quad \hat\mu_n(t)\ =\ \frac1n\sum_{i=1}^n \big(f_i\circ\gamma^\star_{q_i\to\tilde q}\circ\bar\gamma_n^{-1}\big)(t).
$$
\end{resultbox}

Figure~\ref{fig:karchersteps} shows the three steps on a small sample. Step~1 returns a central shape (an equivalence class of SRSFs), defined without any specific timing. At the beginning of Step~2, the curves are registered to one representative $\tilde q$ of that central shape, producing aligned versions of the curves and optimal warps. These warps are then averaged on the warp sphere. In Step~3, re-timing the registered curves by the inverse of this mean warp moves them to the sample's typical timing, and their average is $\hat\mu_n$.

\begin{remark}[Vocabulary.]\label{rem:vocab}
The output of Step~1 is the \emph{Fréchet mean shape} $\hat\shp_n$, $\tilde q$ is the \emph{reference representative}, $\bar\gamma_n$ is the \emph{mean warp}, the curves $f_i\circ\gamma^\star_{q_i\to\tilde q}\circ\bar\gamma_n^{-1}$ are the \emph{phase-centered registered curves}, and their average $\hat\mu_n$ is the \emph{phase-centered elastic mean}, or \emph{elastic mean} for short. It is the ``aligned mean'' of \citet{tucker_generative_2013}, the signal estimator of \citet{srivastava_registration_2011} and \citet{kurtek_signal_2011}, and an elastic version of the \emph{structural mean} of \citet{kneip_statistical_1992} and \citet{ramsay_curve_1998}. The word \emph{template} is kept for the template model (Proposition~\ref{prop:propsB}(c)) and the template bias (Section~\ref{sec:noise}).
\end{remark}

\begin{remark}[The elastic mean versus the Karcher mean.]\label{rem:karcher}
In Step~3, $\hat\mu_n$ averages curves in $\F$: each phase-centered registered SRSF is mapped back to its associated curve, which keeps its own starting value, and these curves are averaged in $\F$. Another option is to average the phase-centered registered SRSFs themselves and to map this average back to $\F$, starting it at $\frac1n\sum_i f_i(0)$. This gives the \emph{Karcher mean function} $\hat\mu^{\mathrm K}_n$, whose SRSF is $(\tilde q,\bar\gamma_n^{-1})$. It is the ``Karcher mean'' of \citet{tucker_generative_2013} and of the \texttt{fdasrsf} \citep{tucker_fdasrsf_2020} and \texttt{scikit-fda} \citep{ramos_carreno_scikitfda_2024} packages. The two curves coincide when the phase-centered registered curves differ only by vertical shifts, but not in general (Appendix~\ref{app:twomeans}). Following \citet{srivastava_registration_2011} and \citet{kurtek_signal_2011}, we average the curves rather than their SRSFs: the resulting mean keeps the average amplitude of the curves (Proposition~\ref{prop:propsB}(c)) and reduces to the pointwise mean when every registration is the identity.
\end{remark}

\begin{figure}[htbp]
\centering
\includegraphics[width=\textwidth]{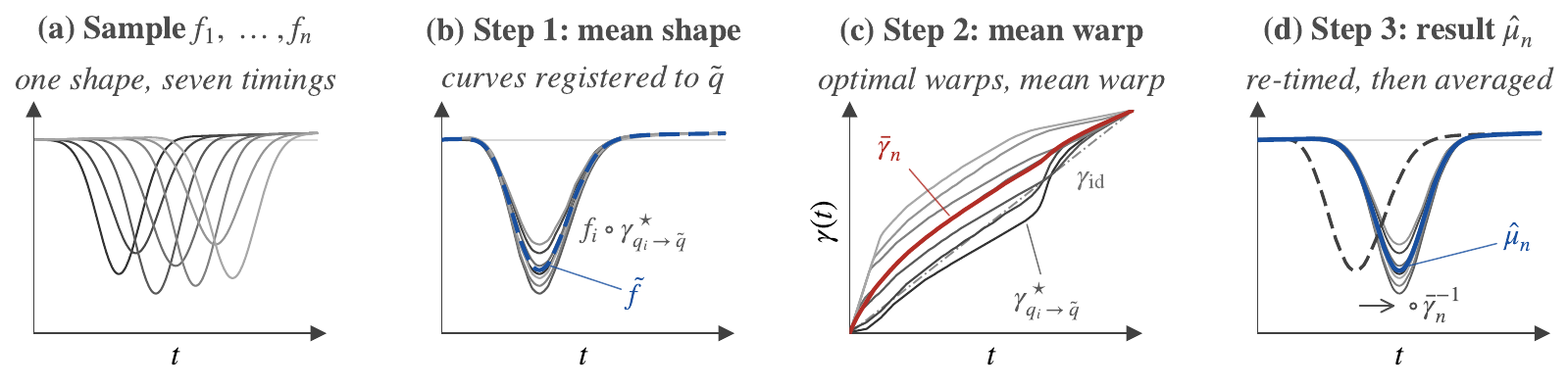}
\caption{\textbf{The phase-centered elastic mean registers the curves to a representative of their mean shape, re-times them by the inverse mean warp, and averages them.} Every panel shows functions of time (curves or warps); the SRSFs are used only in the computations. \emph{(a)} Seven curves of one shape at different timings and depths. \emph{(b)} Step 1: the registered curves $f_i\circ\gamma^\star_{q_i\to\tilde q}$, whose SRSFs $(q_i,\gamma^\star_{q_i\to\tilde q})$ are aligned with the reference representative $\tilde q$ of the mean shape, and the curve $\tilde f$ whose SRSF is $\tilde q$, started at the average initial value (blue, dashed). \emph{(c)} Step 2: the optimal warps $\gamma^\star_{q_i\to\tilde q}$ (grey), their Fréchet mean $\bar\gamma_n$ (red), computed from the SRSFs of the warps on the sphere $\Sphere(\sqrt T)$, and the identity $\gid$ (dash-dotted). \emph{(d)} Step 3: the registered curves re-timed by $\bar\gamma_n^{-1}$, $f_i\circ\gamma^\star_{q_i\to\tilde q}\circ\bar\gamma_n^{-1}$ (grey), and their average $\hat\mu_n$ (blue, solid), at the sample's central timing.}
\label{fig:karchersteps}
\end{figure}

A visual comparison of this elastic mean with the cross-sectional mean and other registration-based averages is deferred to Appendix~\ref{app:meansgrid} (Figure~\ref{fig:meansgrid}). 

\section{Shape-preserving elastic causal estimands} \label{sec:estimand}

We introduce in this section our three shape-preserving causal estimands, leveraging the elastic mean definition of Section~\ref{sec:background}. However, as the latter was only defined for a finite set of curves, we first need to extend it to a functional of the data-generating law and prove it is well posed under specific geometric conditions. This lift is one of our main contributions, and ensures our causal estimands are well-defined, exist and are unique under the same conditions.

We first fix the causal setting and notation (Section~\ref{sec:setting}), then rewrite the phase-centered elastic mean as a functional $P\mapsto\mu(P)$ of a Borel probability measure $P$ on curves and show it is well posed (Section~\ref{sec:lift}), and finally construct three distinct causal estimands using this lifted definition (Section~\ref{sec:three}).

\subsection{Causal setting and notation} \label{sec:setting}

We adopt the Neyman--Rubin potential-outcomes framework with a binary
treatment and a functional outcome. Each unit carries the variables
$(X,A,Y^{(0)},Y^{(1)})\sim\Pstar$ taking values in
$\X\times\{0,1\}\times\F\times\F$: covariates $X$ in a measurable  space $\X$, a treatment indicator $A$,
and two potential-outcome curves, the outcome trajectory $Y^{(1)}$ the unit
would exhibit under treatment and $Y^{(0)}$ without it. Potential outcomes
live in the space $\F\subset\Lt[0,T]$ of Section~\ref{sec:fda}, the absolutely continuous curves on $[0,T]$ (Assumption~\ref{ass:reg} below). We write $q^{(a)}=q_{Y^{(a)}}$ for their SRSFs. The joint law
$\Pstar$ is the \emph{full-data} law: every causal estimand
in this paper is a functional of $\Pstar$.

By the fundamental problem of causal inference, the two potential outcomes of
a unit are never observed together. The data are $n$ i.i.d.\ draws of $O=(X,A,Y)$, whose law $P$ is the \emph{observed-data} law induced by
$\Pstar$ through the consistency relation,
\[
  Y \;=\; Y^{(A)} \;=\; A\,Y^{(1)}+(1-A)\,Y^{(0)} .
\]
When the dependence on the full-data law matters we write $P(\Pstar)$ for this 
induced observed law.
We write $e_a(x)=\Prob(A=a\mid X=x)$ for the propensity of arm $a\in\{0,1\}$
and $q_Y$ for the SRSF of $Y$. An estimand defined on $\Pstar$ is \emph{identified} when
it coincides with a functional of $P$ on every full-data law compatible with
the assumptions made. 

\subsection{From a sample statistic to a functional of the law} \label{sec:lift}

In this subsection, $P$ denotes a generic Borel probability law on curves (not necessarily the observed-data law of Section~\ref{sec:setting}), and $Z\sim P$ a random outcome curve with SRSF $q_Z$. Lifting the sample definition of Section~\ref{sec:background} amounts to replacing the empirical average over $\{f_i\}_{i=1}^n$ by an expectation under $P$. 

\begin{resultbox}{The phase-centered elastic mean $\mu(P)$ of a probability law $P$}
The phase-centered elastic mean $\mu(P)\in\F$ is defined in three steps.
 
\smallskip
\emph{Step 1 --- mean shape.} The Fréchet mean shape in $(\Sw,\del)$,
\[
  \shp(P)\ =\ \argmin_{[q]\in\Sw}\ \E_{P}\!\big[\del^2\big([q],[q_Z]\big)\big].
\]

\smallskip
\emph{Step 2 --- mean warp.} Fix an admissible representative $\tilde q\in\shp(P)$ (any nowhere-flat representative, $\tilde q\neq0$ almost everywhere, at which Assumptions~\ref{ass:warp}(iii-b) and~\ref{ass:conc}(4b) below hold; these assumptions say that such a representative exists, and Theorem~\ref{thm:wellposed}(iv) shows that the result does not depend on which one is used) and let $\gamma^\star_{q_Z\to\tilde q}$ be the warp registering $q_Z$ to this fixed $\tilde q$. Take the Fréchet mean of these warps on the warp sphere $\Sphere(\sqrt T)$,
\[
  \bar\gamma\ =\ \argmin_{\gamma\in\Gt}\ \E_{P}\!\big[\dG^2\big(\gamma,\,\gamma^\star_{q_Z\to\tilde q}\big)\big].
\]
\smallskip
\emph{Step 3 --- phase-centering and averaging.} The phase-centered elastic mean $\mu(P)\in\F$ is the expectation of the registered curves re-timed by the inverse mean warp,
$$
  \forall t\in[0,T],\quad \mu(P)(t)\ =\ \E_{P}\big[\big(Z\circ\gamma^\star_{q_Z\to\tilde q}\circ\bar\gamma^{-1}\big)(t)\big].
$$
\end{resultbox}

\noindent We write $\eta_Z=\gamma^\star_{q_Z\to\tilde q}\circ\bar\gamma^{-1}$ and call $Z\circ\eta_Z$ the \emph{phase-centered registered curve}. By Theorem~\ref{thm:wellposed}, $\mu(P)$ does not depend on the admissible representative $\tilde q$ (point (iv)) and is a curve of $\F$ (point (v)); its SRSF is in general not in $\shp(P)$ (Proposition~\ref{prop:propsB}(c)). Three properties make it a natural average of curves.

\begin{proposition}[Three properties of the phase-centered elastic mean]\label{prop:propsB}
Let Assumptions~\ref{ass:reg}--\ref{ass:conc} hold for the law $P$.
\begin{enumerate}[leftmargin=2em,itemsep=2pt,topsep=2pt,parsep=0pt]
\item[(a)] \emph{End values.} $\mu(P)(0)=\E_P[Z(0)]$ and $\mu(P)(T)=\E_P[Z(T)]$.
\item[(b)] \emph{No phase variation.} If $\gamma^\star_{q_Z\to\tilde q}=\gid$ for $P$-almost every $Z$, then $\mu(P)=\E_P[Z]$, the pointwise mean.
\item[(c)] \emph{Template model.} If $Z=c_Z+s_Z\,(g\circ\gamma_Z)$ with $g\in\F$ nowhere flat, random $c_Z\in\R$, $s_Z>0$ and $\gamma_Z\in\G$, with $\E|c_Z|+\E s_Z<\infty$, then $\mu(P)=\E[c_Z]+\E[s_Z]\,(g\circ\bar\gamma^{-1})$, where $\bar\gamma$ is the Step-2 mean warp computed with the reference $\tilde q=\E[\sqrt{s_Z}]\,q_g$: the template shape, at the mean timing, with the average scale and the average vertical shift. 
\end{enumerate}
\end{proposition}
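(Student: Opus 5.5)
The three parts all follow from the Step~3 formula $\mu(P)(t)=\E_P[(Z\circ\eta_Z)(t)]$ with $\eta_Z=\gamma^\star_{q_Z\to\tilde q}\circ\bar\gamma^{-1}$, together with the well-posedness guarantees of Theorem~\ref{thm:wellposed}. I use the existence of $\bar\gamma$ and of the registrations, measurability, integrability of $Z\circ\eta_Z$, and independence from the admissible $\tilde q$.

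\emph{Part (a).} Every element of $\Gt$ is non-decreasing and maps $[0,T]$ onto itself, so it fixes $0$ and $T$. The inverse mean warp $\bar\gamma^{-1}$ exists by Theorem~\ref{thm:wellposed}, and it also fixes the endpoints. Hence $\eta_Z(0)=0$ and $\eta_Z(T)=T$, and therefore $(Z\circ\eta_Z)(0)=Z(0)$ and $(Z\circ\eta_Z)(T)=Z(T)$. Taking expectations, which are finite by the integrability part of Theorem~\ref{thm:wellposed}(v), gives (a).

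\emph{Part (b).} If $\gamma^\star_{q_Z\to\tilde q}=\gid$ almost surely, the Step~2 objective is $\dG^2(\gamma,\gid)$. Its unique minimiser is $\bar\gamma=\gid$, since $\dG$ is a distance on the sphere. Then $\eta_Z=\gid$ almost surely, and Step~3 returns $\E_P[Z]$ pointwise.

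\emph{Part (c)} is the main work. By the chain rule, $q_Z=\sqrt{s_Z}\,(q_g,\gamma_Z)$, so every $q_Z$ lies in the ray $\{c\,[q_g]\}$. Since $g$ is nowhere flat, Remark~\ref{rem:notquot} gives $[q_Z]=\sqrt{s_Z}[q_g]$. I would proceed in three steps.

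\begin{enumerate}
\item[(1)] Identify the mean shape. Using the norm invariance of Lemma~\ref{lem:iso}, $\del(c[q_g],\sqrt{s_Z}[q_g])$ should reduce to a function of the scalars. For shapes $[q]$ in the ray this function is $|c-\sqrt{s_Z}|\,\|q_g\|$. In general I would use $\del([q],[q_Z])^2=\|q\|^2+s_Z\|q_g\|^2-2\sqrt{s_Z}\sup_\gamma\langle q,(q_g,\gamma)\rangle$ and minimise over $[q]$. Writing $[q]=r[u]$ with $\|u\|=1$, the minimisation decouples into choosing $u$ to maximise $\sup_\gamma\langle u,(q_g,\gamma)\rangle$, which forces $[u]=[q_g/\|q_g\|]$ by Cauchy--Schwarz. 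The optimal $r$ is then $\E\sqrt{s_Z}\,\|q_g\|$. This gives $\shp(P)=\E[\sqrt{s_Z}]\,[q_g]$. Uniqueness I take from Theorem~\ref{thm:wellposed}.

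\item[(2)] Compute the registrations to the admissible representative $\tilde q=\E[\sqrt{s_Z}]\,q_g$. The warp $\gamma_Z^{-1}\in\G$ gives $(q_Z,\gamma_Z^{-1})=\sqrt{s_Z}\,q_g$. This is collinear with $\tilde q$ and attains the Cauchy--Schwarz bound, so $\gamma^\star_{q_Z\to\tilde q}=\gamma_Z^{-1}$. It is unique by Assumption~\ref{ass:warp}(iii), or as the $\Lt$-closest point (up to the nowhere-flat property).

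\item[(3)] Evaluate Step~3. The registered curve is $Z\circ\gamma_Z^{-1}=c_Z+s_Z\,g$. Re-timing by $\bar\gamma^{-1}$ gives $c_Z+s_Z\,(g\circ\bar\gamma^{-1})$. Linearity of expectation, justified by $\E|c_Z|+\E s_Z<\infty$, yields the formula in (c).
\end{enumerate}

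The main obstacle is step~(1): proving rigorously that no shape outside the ray beats the ray. The delicate point is the supremum over $\gamma$ in $\Gt$ when it is not attained. I would handle it with the inequality $\sup_\gamma\langle u,(q_g,\gamma)\rangle\le\|q_g\|$, which holds for unit $u$, with equality exactly when $[u]=[q_g]/\|q_g\|$ by closedness of shapes. Alternatively, I would invoke uniqueness from Theorem~\ref{thm:wellposed} after exhibiting the candidate as a minimiser.
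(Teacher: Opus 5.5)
Your proposal is correct and follows the paper's proof. Parts (a) and (b) are argued exactly as there: every warp in $\Gt$ and $\bar\gamma^{-1}$ fix $0$ and $T$, with integrability from Lemma~\ref{lem:alignavg}, and a point-mass warp law forces $\bar\gamma=\gid$. Part (c) follows the same three steps (mean shape $\E[\sqrt{s_Z}]\,[q_g]$, registrations $\gamma_Z^{-1}$, then re-timing and averaging), the only difference being that the paper obtains the first two steps by citing the population versions of Theorems~1--3 of \citet{srivastava_registration_2011}, whereas you re-derive them with the Cauchy--Schwarz and closedness-of-shapes argument that those proofs amount to.
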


\noindent The proof is in Appendix~\ref{app:propsB}. Point (c) is the population version of Theorems~1--3 of \citet{srivastava_registration_2011}.

Thus, alignment is part of the quantity being defined, not a preprocessing step performed in advance.  To the best of our knowledge, our definition is the first definition of the phase-centered elastic mean as a functional of a general law. Population versions of template estimation in quotient spaces already exist \citep{bigot_consistency_2011,devilliers_template_2017,miolane_template_2017}, but they define the mean shape only, as our Step~1 does, and not the typical timing of Steps~2 and~3. Population versions of the phase-centering step exist in other frameworks, where the warps are normalised, for instance to have the identity as their mean \citep{kneip_statistical_1992,dupuy_structural_2011}, but not for the elastic mean. The price of our definition is computational: every evaluation of the Step-1 objective requires solving a registration problem per curve, and the Step-1 mean is obtained by alternating registration and averaging until convergence, rather than by a single pass of alignment  followed by an ordinary average. 

\subsection{Geometric assumptions and well-posedness} \label{sec:wellposed}

The following assumptions allow us to write the phase-centered elastic mean $\mu(P)$ as a well-defined functional of a generic law $P$.

\begin{assumption}[Absolutely continuous curves]\label{ass:reg}
The outcome curves are absolutely continuous: $P$ is a Borel probability law on the space $\F$ of absolutely continuous curves of Section~\ref{sec:fda}, so that the transform $q_Z$ of $Z\sim P$ is defined.
\end{assumption} 

\begin{assumption}[Regular, compact warping family]\label{ass:warp}
Registrations and phase-warpings range over a family $\Gc\subseteq\Gt$ that is
\begin{enumerate}[leftmargin=2em,itemsep=3pt,topsep=3pt,parsep=0pt]
\item[(i)] compact: the set of warp-SRSFs $\Psi=\{\sqrt{\dot\gamma}:\gamma\in\mathcal G\}$ is compact in $(L^\infty([0,T]),\norm{\cdot}_\infty)$,
\item[(ii)] uniformly bi-Lipschitz, $0<c\le\dot\gamma\le C<\infty$, and
\item[(iii)] attaining: $\gid\in\Gc$, and the registrations used by the three steps are attained in $\Gc$:
\begin{itemize}[leftmargin=1.5em,itemsep=1pt,topsep=1pt]
\item[(iii-a)] \emph{(Step 1)} searching $\Gc$ instead of $\Gt$ in the registrations does not change the infimum value of the Step-1 objective $\inf_{[q]\in\Sw}M([q])$ over candidate shapes. Precisely, there are shapes $[p_1],[p_2],\dots$ with $M([p_k])\to\inf_{[q]\in\Sw}M([q])$, each with a representative $p_k$ such that $\min_{\gamma\in\Gc}\norm{p_k-(q_Z,\gamma)}=\del([p_k],[q_Z])$ for $P$-almost every $Z$ (if one shape attains the minimum, it is enough to take it);
\item[(iii-b)] \emph{(Steps 2 and 3)} there is a nowhere-flat representative $\tilde q\in\shp(P)$ ($\tilde q\neq0$ almost everywhere) at which the alignment minimiser
$\gamma^\star_{q_Z\to \tilde q}\ :=\ \operatorname*{arg\,inf}_{\gamma\in\Gt}\ \norm{\tilde q-(q_Z,\gamma)}$
exists, is unique and lies in $\Gc$, for $P$-almost every $Z$.
\end{itemize}
Consequently, at these representatives, the computable $\min_{\gamma\in\Gc}\norm{q-(q_Z,\gamma)}$
equals the true elastic distance $\del([q],[q_Z])=\inf_{\gamma\in\Gt}\norm{q-(q_Z,\gamma)}$.
\end{enumerate}
\end{assumption} 

A registration re-times a data curve $Z$ towards a fixed curve: in Step~1, a representative $q$ of a candidate mean shape; in Steps~2 and~3, the reference $\tilde q$. Clauses (i) and (ii) concern the method, not the data: they describe the set $\Gc$ of re-timings that the computer searches, whose speeds lie between $c$ and $C$, so that they never pause and can be undone, and in which a best re-timing always exists. Clause (iii) concerns the data: part (iii-a), for Step~1, asks that the computable objective, which searches $\Gc$, has the same infimum over candidate shapes as the true one, which searches $\Gt$. Part (iii-b), for Steps~2 and~3, asks that almost every data curve has a unique best re-timing lying in $\Gc$ towards a nowhere-flat reference. Appendix~\ref{app:flat} explains what clause (iii) excludes, and Appendix~\ref{app:assump} discusses each clause in detail.

\begin{remark}\label{rem:nogroup}
$\Gc$ is a set of warps, not a group: composing or inverting two of its elements may violate the bi-Lipschitz bounds. It is also the family the computation uses, since the dynamic-programming algorithm of \citet{srivastava_klassen_2016} searches piecewise-linear warps of bounded slope on the time grid, $[1/4,4]$ in our implementation. On a fixed grid, these warps are described by finitely many slopes that range over a closed and bounded set, so their warp-SRSFs form a compact set of $L^\infty$, as clause (i) requires. Searching a smaller set can only increase the residual, $\min_{\gamma\in\Gc}\norm{q_1-(q_2,\gamma)}\ge\del([q_1],[q_2])$, and clause~(iii) is exactly what makes the two equal. 
\end{remark}

\begin{assumption}[Finite second moment]\label{ass:mom}
$\E_{P}\norm{q_Z}^2=\E_{P}\int_0^T|\dot Z|<\infty$ and $\E_{P}|Z(0)|<\infty$.
\end{assumption}

\begin{assumption}[Concentration and uniqueness]\label{ass:conc}
One condition for each space:
\begin{enumerate}[leftmargin=2em,itemsep=3pt,topsep=3pt,parsep=0pt]
\item[(4a)] \emph{amplitude (in $\Sw$):} the Step-1 mean shape $\shp(P)$ is unique, i.e., the minimiser of $\E_P[\del^2([q],[q_Z])]$ over $[q]\in\Sw$ is a single shape.
\item[(4b)] \emph{phase (on $\Sphere(\sqrt T)$):} there is a nowhere-flat representative $\tilde q\in\shp(P)$ at which Assumption~\ref{ass:warp}(iii-b) holds and at which the optimal warps $\gamma^\star_{q_Z\to\tilde q}$ concentrate in a geodesic ball of radius $<\pi\sqrt T/8$ whose centre is itself in the feasible set $\{\psi_\gamma:\gamma\in\Gt\}$ (e.g.\ the identity, $\psi_{\gid}\equiv1$), and the resulting Step-2 mean belongs to $\Gc$, hence is bi-Lipschitz and invertible. 
\end{enumerate}
\end{assumption}

\begin{remark}[Clause (iii) versus Assumption 4b.]\label{rem:twoassump}
Clause~(iii) concerns the \emph{inner alignments that compute elastic distances}: it forces the optimal alignment warp into $\Gc$, so the computed $\min_{\Gc}$ equals the true $\del$. Assumption~4b instead concerns the \emph{solution of the Step-2 Fréchet optimization}: it forces the resulting mean warp to lie in $\Gc$, so it can be inverted in Step~3. 
\end{remark}

\begin{remark}[Are the geometric assumptions plausible?]\label{rem:plaus}
Assumptions~\ref{ass:reg} and~\ref{ass:mom} are mild and standard. The absolute continuity required in Assumption~\ref{ass:reg} is usual for biosignals and physiological trajectories (smoothing is a standard preprocessing for noisy data). Assumption~\ref{ass:mom} is the standard finite-second-moment hypothesis from Fréchet-mean theory: it makes the Fréchet function finite and the existence proof’s coercivity argument work, and it makes the Step-3 average of the phase-centered registered curves finite (Lemma~\ref{lem:alignavg}).

Assumption~\ref{ass:warp} is the strongest one. The first two clauses are choices about the method, not conditions on the data: they restrict the re-timings that the computer searches, and they can be enforced by a proper choice of $\Gc$ (Appendix~\ref{app:assump} details the regularity they assume, and Appendix~\ref{app:aa} shows how to construct such families concretely). The third clause, attainment, is the strongest one and concerns the data. It can be satisfied for instance when the curves share a common shape (they are re-timings of one nowhere-flat curve, up to a vertical shift and a positive scale) and differ by warps whose speeds stay in a known band: a bi-Lipschitz family $\Gc$ with a suitably wide speed band contains the optimal warps. Appendix~\ref{app:flat} gives more details on the kind of data that satisfies this condition. When clause (iii) fails, the computation still runs, since a best re-timing within $\Gc$ always exists, but Theorem~\ref{thm:wellposed} no longer guarantees that the quantity being estimated exists and is unique. The shapes and the elastic distance are defined for all curves, flat or not (Appendix~\ref{app:flat}). 

Regarding Assumption~\ref{ass:conc}, condition (4a) holds for most shape laws, but not for all. It fails when the shape law is symmetric, that is, when a shape $q$ and its mirror image $-q$ are equally likely (Appendix~\ref{app:bias}). Condition (4b) implies that the phase-variations must be moderate: the optimal warps must lie within $\pi\sqrt T/8$ on a sphere whose pole-to-antipode distance is $\pi\sqrt T$, that is, within an eighth of its full extent. This is reasonable when features occur approximately at consistent times, and is checkable from data as one can compute the optimal warps and measure their spread on the sphere. It can fail under extreme timing heterogeneity, where no single typical timing exists.
\end{remark}

\paragraph{Well-posedness.}
We now prove existence and well-definedness of the generic lifted mean, hence of every estimand below.

\begin{mainbox}{Well-posedness of the lifted elastic mean}
\begin{theorem}\label{thm:wellposed}
Let Assumptions~\ref{ass:reg}--\ref{ass:conc} hold for the law $P$. Then the phase-centered elastic mean $\mu(P)$ is well-defined in the following sense:
\begin{enumerate}[leftmargin=2em,itemsep=2pt,topsep=2pt]
\item[(i)] \emph{(Step 1, existence.)} If $P$ has finite second moment, the
Step-1 objective has a non-empty minimiser set over $\Sw$: a Step-1 mean shape
exists. (Only Assumptions~\ref{ass:reg}--\ref{ass:mom} are used for this point, and the
same conclusion holds for every bounded reweighting of $P$ for which Assumption~\ref{ass:warp}(iii-a) holds.)
\item[(ii)] \emph{(Step 1, uniqueness.)} Under Assumption~\ref{ass:conc}(4a), the
minimiser set of point (i) is a single shape, denoted $\shp(P)$: the Step-1 mean shape is
unique.
\item[(iii)] \emph{(Step 2, existence and uniqueness.)} Under
Assumption~\ref{ass:conc}(4b), the Step-2 objective has a unique minimiser: the
Step-2 phase mean exists and is unique.
\item[(iv)] \emph{(Independence of the representative.)} The phase-centered registered curves $Z\circ\gamma^\star_{q_Z\to\tilde q}\circ\bar\gamma^{-1}$, hence the recovered mean $\mu(P)$,
do not depend on the representative $\tilde q\in\shp(P)$ chosen in Step~2, within
the class of admissible representatives related to one another by invertible warps (any two admissible representatives are related in this way, Appendix~\ref{app:inv}).
\item[(v)] \emph{(The mean is a curve.)} The expectation defining $\mu(P)$ in Step~3 is finite for every $t$, and $\mu(P)\in\F$, with $\int_0^T|\dot\mu(P)|\le\E_P\norm{q_Z}^2$.
\end{enumerate}
Consequently $\mu(P)\in\F$ is well-defined, exists and is unique. 
\end{theorem}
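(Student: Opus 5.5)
I will prove the five points in order, each reducing to a standard argument once the assumptions are used to replace $\Gt$ by the compact family $\Gc$.

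\emph{Point (i).} I will use the direct method on $(\Sw,\del)$. Write $M([q])=\E_P[\del^2([q],[q_Z])]$.

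First, $M$ is finite. Take the shape of $0$: we have $\del([0],[q_Z])\le\norm{q_Z}$ by norm preservation (Lemma~\ref{lem:iso}), so Assumption~\ref{ass:mom} gives $M([0])<\infty$.

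Second, minimising sequences are bounded. The triangle inequality gives $\norm{p}\le\del([p],[q_Z])+\norm{q_Z}$, with $\norm{p}$ constant on shapes. Hence $M([p])\ge\tfrac12\norm{p}^2-\E\norm{q_Z}^2$, so it suffices to take representatives $p_k$ in a ball of $\Lt$.

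Third, I extract a limit. I take the sequence $p_k$ from Assumption~\ref{ass:warp}(iii-a), for which $\del([p_k],[q_Z])=\min_{\gamma\in\Gc}\norm{p_k-(q_Z,\gamma)}$. Since $\Lt$ is not locally compact, I pass to a weak limit $p_k\rightharpoonup p$. For each $Z$ I pick a minimising warp $\gamma_k$. By compactness of $\Psi$ in $L^\infty$ (clause (i)), a subsequence gives $\gamma_k\to\gamma$ uniformly with $\sqrt{\dot\gamma_k}\to\sqrt{\dot\gamma}$ in $L^\infty$. Together with the bi-Lipschitz bound (ii), this makes $(q_Z,\gamma_k)\to(q_Z,\gamma)$ strongly in $\Lt$: I approximate $q_Z$ by a continuous function and use the uniform change-of-variables bound. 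Weak lower semicontinuity of the norm then yields $\del([p],[q_Z])\le\norm{p-(q_Z,\gamma)}\le\liminf_k\del([p_k],[q_Z])$, and Fatou's lemma gives $M([p])\le\inf M$.

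The same argument runs for a bounded reweighting of $P$, since only integrability and clause (iii-a) are used. This weak-compactness and lower-semicontinuity step is the main obstacle, because the subsequence of warps depends on $Z$. I handle this by applying the lower-semicontinuity pointwise in $Z$ along the fixed sequence $p_k$, taking a liminf for each $Z$ separately, before invoking Fatou.

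\emph{Points (ii) and (iii).} Point (ii) is immediate from (i) and Assumption~\ref{ass:conc}(4a).

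For (iii) I map warps to the sphere $\Sphere(\sqrt T)$ through $\psi_\gamma$. Measurability of $Z\mapsto\gamma^\star_{q_Z\to\tilde q}$ comes from uniqueness in (iii-b) together with a measurable-selection argument over the compact set $\Gc$. The Step-2 objective is then the Fréchet function of a law supported in a geodesic ball of radius $r<\pi\sqrt T/8$. I will show that on a ball of radius below $\pi\sqrt T/4$ the squared great-circle distance is strictly geodesically convex. This follows from the Hessian comparison for the round sphere, which is intrinsic to the two-dimensional great-circle plane and therefore unaffected by infinite dimension.

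Existence needs separate care. Closed balls are not norm compact, so I use the extrinsic representation $\dG^2=T\arccos^2(T^{-1}\inner{\psi}{\cdot})$, which is weakly continuous in the projection. Alternatively, I show that minimising sequences are Cauchy by strict convexity with a uniform modulus, which is the standard CAT(1) argument for small radius. The minimiser stays in the ball and is unique. Feasibility, meaning it lies in $\{\psi_\gamma\}$ and in fact in $\Gc$, is assumed in (4b).

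\emph{Points (iv) and (v).} For (iv), take two admissible representatives with $\tilde q'=(\tilde q,\rho)$, $\rho$ invertible. Uniqueness of the registration gives $\gamma^\star_{q_Z\to\tilde q'}=\gamma^\star_{q_Z\to\tilde q}\circ\rho$, since $\norm{(\tilde q,\rho)-(q_Z,\gamma\circ\rho)}=\norm{\tilde q-(q_Z,\gamma)}$. Right composition by $\rho$ is an isometry of $\dG$, because $\psi_{\gamma\circ\rho}=(\psi_\gamma,\rho)$ and the action preserves inner products. Hence $\bar\gamma'=\bar\gamma\circ\rho$ by uniqueness, and the $\rho$ cancels in $\gamma^\star\circ\rho\circ\rho^{-1}\circ\bar\gamma^{-1}$.

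For (v), each $Z\circ\eta_Z$ is absolutely continuous with total variation $\int|\dot Z|=\norm{q_Z}^2$ and the same starting value. It follows that $|Z\circ\eta_Z(t)|\le|Z(0)|+\norm{q_Z}^2$, which is integrable. Fubini applied to $\E\int_0^t\frac{d}{ds}(Z\circ\eta_Z)$ then shows that $\mu(P)$ is absolutely continuous with $\int|\dot\mu|\le\E\norm{q_Z}^2$.
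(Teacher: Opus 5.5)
Your plan follows the paper's proof for (i), (ii) and (iv). In (i) the paper minimises the lifted objective $G(q)=\E[\min_{\gamma\in\Gc}\norm{q-(q_Z,\gamma)}^2]$ over $\Lt$, and then transfers the minimiser to $\Sw$ through $G\ge M$ and (iii-a). You instead run the same coercivity, weak-compactness and pointwise-liminf-then-Fatou argument directly on the (iii-a) sequence. This is the same proof in different packaging: that sequence minimises $G$, and your inequality $\del([p],[q_Z])\le\norm{p-(q_Z,\gamma)}$ is exactly $M\le G$. In (iv) you should add that $\psi\mapsto(\psi,\rho)$ maps $\Psi(\Gt)$ onto itself, which holds because $\rho$ is invertible; this is what lets the isometry transport the constrained minimiser.

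In (v) your route genuinely differs from the paper's. The paper builds the finite measure $\nu(B)=\E\int_B|\tfrac{d}{ds}(Z\circ\eta_Z)|\,ds$ and obtains absolute continuity of $\mu(P)$ from an $\varepsilon$--$\delta$ argument with Borel--Cantelli. Fubini gives $\dot\mu(P)=\E[\tfrac{d}{ds}(Z\circ\eta_Z)]$ almost everywhere, and the total-variation bound, at once. It does need a jointly measurable version of $(\omega,s)\mapsto\tfrac{d}{ds}(Z\circ\eta_Z)(s)$. The $\limsup$ of difference quotients of $(\omega,s)\mapsto(Z\circ\eta_Z)(s)$ works, since that map is measurable in $\omega$ (via the measurable selection of the registration) and continuous in $s$. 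The paper's construction avoids this but pays with a monotone-class step.

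The step that does not go through as written is the end of (iii). You produce a minimiser on (a ball of) the sphere, then say its membership in $\{\psi_\gamma:\gamma\in\Gt\}$ ``is assumed in (4b)''. It is not. Step~2 is posed over $\Psi(\Gt)$, and the clause of (4b) placing ``the resulting Step-2 mean'' in $\Gc$ concerns that constrained minimiser, whose existence is exactly what (iii) must establish. The paper runs the whole argument on $K=\bar B(\psi_0,2r_0)\cap\Psi(\Gt)$. The set $\Psi(\Gt)$, the non-negative part of the sphere, is closed in $\Lt$ (Lemma~\ref{lem:feasible}) and geodesically convex, because the great-circle formula \eqref{eq:slerp} has non-negative coefficients. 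So midpoints and $\Lt$-limits of minimising sequences stay feasible.

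Your claim that the minimiser ``stays in the ball'' also needs the localisation step, and this is where the feasibility of the centre in (4b) is used: $N(\psi_0)\le r_0^2<(d-r_0)^2\le N(\psi)$ for every feasible $\psi$ with $d=\dG(\psi,\psi_0)>2r_0$. Alternatively, you could show that the unconstrained spherical mean is non-negative: its first-order condition writes it as a positive combination of non-negative support points, as in Step~D of Appendix~\ref{app:nonident}.

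Two smaller points in (iii). First, the convexity computation is not intrinsic to the plane of the great circle, because the target $y$ generally lies off that plane. The relevant round sphere sits in $\mathrm{span}\{\psi(0),\psi'(0),y\}$, which is what the Bessel step of Lemma~\ref{lem:sphereconv} handles. Second, in your weak-continuity alternative, a weak limit of sphere points only has norm at most $\sqrt T$, so you must renormalise it radially. That is harmless: $\inner{\psi}{y}\ge0$ on the feasible set, so renormalising cannot increase $N$, and the limit is non-zero because its inner product with $\psi_0$ stays positive. Still, the Cauchy-sequence route you also mention is cleaner, and it is the paper's.
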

\end{mainbox}

Theorem~\ref{thm:wellposed} is what legitimates our estimands: without it, ``the'' population mean shape or ``the'' typical timing may not exist, or may not be a single well-defined object. It does not, however, follow from the classical theory of Fréchet means. Recall that for a Borel probability measure $\nu$ on a metric space $(\mathcal M,d)$ with finite second moment, the Fréchet function is $F(p)=\int d^2(p,y)\,d\nu(y)$ and the Fréchet mean set is $\argmin_p F$ \citep{frechet_elements_1948,ziezold_expected_1977}. The classical guarantees for that set are existence on complete, proper spaces or on complete non-positively curved (CAT(0)) spaces, uniqueness from non-positive curvature, and the local Karcher--Afsari theory. Our three-step construction includes two such Fréchet-mean problems, in two different spaces, and each of these guarantees fails in at least one of the two, essentially because both spaces are infinite-dimensional and neither is non-positively curved. Appendix~\ref{app:frechetgeo} reviews these classical conditions in detail and explains precisely which of them fails in each of our two spaces. Assumptions~\ref{ass:reg}--\ref{ass:conc} above recover existence and uniqueness for each of the two Fréchet problems.

\begin{proofenv}[Outline of the proof.]
Each part of the proof replaces the compactness that fails in infinite dimensions by another property, and each assumption intervenes at a precise point. \emph{Step-1 existence} (point (i)) uses the following reasoning: Assumption~\ref{ass:mom} makes the objective of the Fréchet function finite and coercive, compactness of the warping family (Assumption~\ref{ass:warp}(i)) makes it weakly lower-semicontinuous, and attainment (Assumption~\ref{ass:warp}(iii-a)) ensures that its infimum over representatives equals its infimum over shapes, so the shape of the weak limit of a minimising sequence is a minimiser. \emph{Step-1 uniqueness} (point (ii)) cannot be obtained from geometry and is therefore assumed in Assumption~\ref{ass:conc}(4a). \emph{Step-2 existence and uniqueness} (point (iii)) re-derive the Karcher--Afsari local argument on the infinite-dimensional sphere: under the concentration claim of Assumption~\ref{ass:conc}(4b) the Fréchet function is strongly geodesically convex, so a minimising sequence is Cauchy and converges by completeness of $\Lt$. \emph{Well-definedness} (point (iv)) shows that replacing the Step-2 representative $\tilde q$ by another admissible one re-warps every optimal warp by a fixed $\gamma_0$, which shifts the phase mean by the same $\gamma_0$, and undoing the mean warp in the phase-centered registered curves cancels it. \emph{Regularity} (point (v)) bounds every phase-centered registered curve by its initial value plus its total variation, which is integrable by Assumption~\ref{ass:mom}, and shows that the average of absolutely continuous curves with integrable total variation is absolutely continuous. The full proofs are in Appendix~\ref{app:exist} (points (i)--(iii)) and Appendix~\ref{app:inv} (points (iv), (v) and assembly). 
\end{proofenv}

\subsection{Three shape-preserving causal estimands} \label{sec:three}

We now leverage the lifted elastic mean to define shape-preserving causal estimands, by replacing the usual Euclidean mean with the phase-centered elastic mean in $\tau_{\mathrm{FATE}}(t) = \E[Y^{(1)}(t) - Y^{(0)}(t)]$ \eqref{eq:fate}. However, in the latter, the linearity of expectation makes two natural targets coincide, $\tau_{\mathrm{FATE}} = \E[Y^{(1)}]-\E[Y^{(0)}]=\E[Y^{(1)}-Y^{(0)}]$. With the phase-centered elastic mean, which is non separable, the equivalent two expressions diverge ($\mu(Y^{(1)})-\mu(Y^{(0)}) \neq \mu(Y^{(1)}-Y^{(0)})$), and a third, covariate-local one  can also be defined. We therefore define three different estimands, following \citet{even_rethinking_2025}. 

\begin{resultbox}{{Our three elastic causal estimands}}
\textit{Throughout, $\mu(Z)$ abbreviates $\mu(P)$ for $Z\sim P$, and corresponds to the phase-centered elastic mean of the law of $Z$. The three estimands are:}

\emph{Populational} --- typical treated shape minus typical control shape:
\[
  \tau_{\mathrm{pop}}(\Pstar)=\mu\big(Y^{(1)}\big)-\mu\big(Y^{(0)}\big).
\]
\emph{Individual} --- typical shape of the Individual Treatment Effect (ITE) curves $\Delta_i=Y_i^{(1)}-Y_i^{(0)}$, which is not identifiable from the observed data due to the fundamental problem of causal inference:
\[
  \tau_{\mathrm{ind}}(\Pstar)=\mu\big(Y^{(1)}-Y^{(0)}\big).
\] 
\emph{Local} --- typical shape of the local ITEs, which correct the non-identifiability of the coupling
$(Y^{(1)},Y^{(0)})$ by pairing each unit's observed outcome with an \emph{independent}
counterfactual drawn at the same covariate value. For $(X,A,Y^{(1)},Y^{(0)})\sim\Pstar$,
let $Y^{(X),(1)}\sim P_{Y^{(1)}\mid X}$ and $Y^{(X),(0)}\sim P_{Y^{(0)}\mid X}$ be drawn independently of each other and of $(A,Y^{(1)},Y^{(0)})$ given $X$.
The local ITE of a unit pairs its observed outcome with the counterfactual of the other arm:
\begin{equation}\label{eq:locallaw}
\Delta_{\mathrm{loc}}=
\begin{cases}
Y^{(1)}-Y^{(X),(0)} & \text{if } A=1,\\[0.3ex]
Y^{(X),(1)}-Y^{(0)} & \text{if } A=0.
\end{cases}
\end{equation}
The local estimand is the phase-centered elastic mean of these local ITEs:
\[
\tau_{\mathrm{loc}}(\Pstar)=\mu\big(\Delta_{\mathrm{loc}}\big).
\] 
\end{resultbox}

In these three estimands, registration is part of the causal quantity itself, through the phase-centered elastic mean $\mu$. This is the structural difference from \citet{raykov_kernel_2025}, who register the curves in a separate preprocessing step, before the causal analysis.

As highlighted previously, with a separable measure, the two natural definitions of an average effect coincide. For the pointwise mean, $\E[Y^{(1)}]-\E[Y^{(0)}]=\E[Y^{(1)}-Y^{(0)}]$: the effect on the population equals the average of the individual effects. This equality is often seen as desirable, since it gives the population-level effect an individual-level interpretation \citep{lin_causal_2023}. However, the elastic mean is not linear, so the equality fails, and $\tau_{\mathrm{pop}}$ and $\tau_{\mathrm{ind}}$ are two different ways of defining an average effect. Neither is preferable in general.

\smallskip
This situation is not specific to functional data. It arises for any non-linear effect measure, such as a ratio, where $\E[Y^{(1)}]/\E[Y^{(0)}]$ differs from $\E[Y^{(1)}/Y^{(0)}]$. The populational version is then the standard choice, because it is identifiable \citep[Ch.~1]{hernan_causal_2020}. There is, however, an intermediate choice: a local version, which coincides with the individual effect when the two potential outcomes are independent given the covariates (and treatment is ignorable given the covariates, Assumption~\ref{ass:exch}) and remains identifiable, as proposed for the win ratio by \citet{even_rethinking_2025}. We recommend computing both the populational and the local effects, since they reveal different behaviours and do not suffer from the same limitations (Section~\ref{sec:noise} and Appendix~\ref{app:bias}).

\begin{defbox}{Interpretation of each elastic causal estimand}

\smallskip
\emph{Populational.} The elastic mean of the treated curves minus that of the control curves. The two arm means $\mu(Y^{(1)})$ and $\mu(Y^{(0)})$ can be shown side by side, which helps clinical interpretation.

\smallskip
\emph{Individual.} The elastic mean of the individual effect curves $Y^{(1)}-Y^{(0)}$, closest to what a clinician means by the effect on a patient. It is not identifiable (Theorem~\ref{thm:nonident}) and serves as a reference in simulations only.

\smallskip
\emph{Local.} The elastic mean of the effect curves between treated and control units that share the same covariates. It is an identifiable proxy for the individual effect (Section~\ref{sec:ident}).
\end{defbox}

\begin{remark}[ATE, ATT and ATC forms.]\label{rem:armforms}
The three estimands above average over the whole population: they are the \emph{ATE} (Average Treatment Effect) forms, used throughout the paper. Each estimand also has an \emph{ATT} (Average Treatment effect on the Treated) form, which averages over the treated units ($A=1$), and an \emph{ATC} (Average Treatment effect on the Control) form, which averages over the control units ($A=0$). With $\mathcal A_{\mathrm{ATE}}=\{0,1\}$, $\mathcal A_{\mathrm{ATT}}=\{1\}$, $\mathcal A_{\mathrm{ATC}}=\{0\}$ and $\bullet\in\{\mathrm{ATE},\mathrm{ATT},\mathrm{ATC}\}$, the three forms are obtained by conditioning on the target population:
\[
\begin{aligned}
\tau_{\mathrm{pop}}^{\bullet}(\Pstar) &= \mu\big(Y^{(1)}\mid A\in\mathcal A_\bullet\big)-\mu\big(Y^{(0)}\mid A\in\mathcal A_\bullet\big),\\
\tau_{\mathrm{ind}}^{\bullet}(\Pstar) &= \mu\big(Y^{(1)}-Y^{(0)}\mid A\in\mathcal A_\bullet\big),\\
\tau_{\mathrm{loc}}^{\bullet}(\Pstar) &= \mu\big(\Delta_{\mathrm{loc}}\mid A\in\mathcal A_\bullet\big),
\end{aligned}
\]
so that $\tau^{\mathrm{ATE}}_{\mathrm{pop}}=\tau_{\mathrm{pop}}$, $\tau^{\mathrm{ATE}}_{\mathrm{ind}}=\tau_{\mathrm{ind}}$ and $\tau^{\mathrm{ATE}}_{\mathrm{loc}}=\tau_{\mathrm{loc}}$. The results of the paper hold for each form, with two changes. In the identification and estimation of the populational effect (Proposition~\ref{prop:objident} and Section~\ref{sec:popest}), the weight $1/e_a(x)$ of arm $a$ becomes $e_1(x)/\big(e_a(x)\Prob(A=1)\big)$ for the ATT form and $e_0(x)/\big(e_a(x)\Prob(A=0)\big)$ for the ATC form. The constants $\Prob(A=1)$ and $\Prob(A=0)$ do not change the minimisers of Steps~1 and~2, and the Hájek normalisation of the estimator absorbs them.  In the identification of the local effect (Theorem~\ref{thm:identloc} and its proof), the covariate law $P_X$ is replaced by the law of $X$ given $A\in\mathcal A_\bullet$. The matching estimator of Section~\ref{sec:locest} gives the three forms directly.

\noindent Unless treatment is randomised, effect modifiers are in general distributed differently across arms, so the ATE, ATT and ATC forms generally differ.  
\end{remark}

\begin{mainbox}{Well-posedness of the three estimands}
\begin{corollary}\label{cor:wp3}
Under Assumptions~\ref{ass:reg}--\ref{ass:conc} (holding for the relevant law), each of
$\tau_{\mathrm{pop}}$, $\tau_{\mathrm{ind}}$ and $\tau_{\mathrm{loc}}$ exists, is unique, and is well-defined as a functional of $\Pstar$.
\end{corollary}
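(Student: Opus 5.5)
The plan is to reduce each of the three estimands to finitely many applications of Theorem~\ref{thm:wellposed}, one for each law to which the functional $\mu$ is applied. We then check that each such law is a well-defined Borel probability measure on $\F$ determined by $\Pstar$ alone, so that the resulting curve is a functional of $\Pstar$.

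For $\tau_{\mathrm{pop}}$, the relevant laws are the marginals $P_{Y^{(1)}}$ and $P_{Y^{(0)}}$ of $\Pstar$. These are pushforwards of $\Pstar$ under the measurable coordinate projections $\X\times\{0,1\}\times\F\times\F\to\F$, so they are Borel laws on $\F$. Assuming Assumptions~\ref{ass:reg}--\ref{ass:conc} hold for each of them, Theorem~\ref{thm:wellposed} gives that $\mu(Y^{(1)})$ and $\mu(Y^{(0)})$ exist, are unique elements of $\F$, and do not depend on the admissible representative (point (iv)). Since $\F$ is a vector space, their difference is a unique element of $\F$.

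For $\tau_{\mathrm{ind}}$, the relevant law is the pushforward of $\Pstar$ under the map $(y_1,y_0)\mapsto y_1-y_0$. This map is continuous on $\F\times\F$, and differences of absolutely continuous curves are absolutely continuous, so Assumption~\ref{ass:reg} holds automatically. The remaining assumptions are imposed on this law by hypothesis ("holding for the relevant law"). One could note, though it is not needed, that Assumption~\ref{ass:mom} is inherited: $|\dot y_1-\dot y_0|\le|\dot y_1|+|\dot y_0|$, and the initial values are integrable. Theorem~\ref{thm:wellposed} then applies directly.

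For $\tau_{\mathrm{loc}}$, the main point, and the only real obstacle, is that the law of $\Delta_{\mathrm{loc}}$ must be a well-defined functional of $\Pstar$, even though it is built from auxiliary independent draws. I would proceed as follows.
\begin{enumerate}
\item Invoke the existence of regular conditional distributions $P_{Y^{(a)}\mid X}$. The space $\F$, identified with $\R\times\Lt$ through $f\mapsto(f(0),q_f)$ via the SRSF bijection, is Polish, so these kernels exist and are unique $P_X$-almost surely.
\item Write the law of $\Delta_{\mathrm{loc}}$ explicitly as a mixture: for a Borel set $B$, its probability is
\[\E_{\Pstar}\Big[A\,\textstyle\int\ind\{Y^{(1)}-y\in B\}\,P_{Y^{(0)}\mid X}(dy)+(1-A)\int\ind\{y-Y^{(0)}\in B\}\,P_{Y^{(1)}\mid X}(dy)\Big].\]
\item Check that this expression depends only on $\Pstar$. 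Changing the kernels on a $P_X$-null set does not change it, so the law is unambiguous even though the auxiliary draws are not unique as random variables.
\item Confirm that the law is supported on $\F$, again because differences of absolutely continuous curves are absolutely continuous.
\end{enumerate}
Theorem~\ref{thm:wellposed} applied to this law then yields existence and uniqueness of $\tau_{\mathrm{loc}}$.

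Finally, for the ATT and ATC forms of Remark~\ref{rem:armforms}, the same argument applies with conditioning on $A\in\mathcal A_\bullet$, provided that event has positive probability.
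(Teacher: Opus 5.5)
Your proposal is correct and follows the paper's proof. Each estimand is the phase-centered elastic mean of a law on $\F$ (the two arm laws, the difference law, the law of $\Delta_{\mathrm{loc}}$), and Theorem~\ref{thm:wellposed} is applied to each. The moment clause of Assumption~\ref{ass:mom} transfers to the difference laws through $\norm{q_\Delta}^2\le\norm{q^{(1)}}^2+\norm{q^{(0)}}^2$ and $|\Delta(0)|\le|Y^{(1)}(0)|+|Y^{(0)}(0)|$, exactly as you note.

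You also check, via regular conditional distributions on the Polish space $\F$, that the law of $\Delta_{\mathrm{loc}}$ is determined by $\Pstar$ despite the auxiliary draws. The paper's short proof takes this for granted, and your argument for it is sound.
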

\end{mainbox}

\begin{proofenv} 
Each estimand is the
phase-centered elastic mean of a law on $\F$: the arm laws for $\tau_{\mathrm{pop}}$,
the difference law for $\tau_{\mathrm{ind}}$, the law of $\Delta_{\mathrm{loc}}$ for
$\tau_{\mathrm{loc}}$. Theorem~\ref{thm:wellposed} applies to each as the finite-second-moment hypothesis transfers to the difference
laws via $\norm{q_\Delta}^2\le\norm{q^{(1)}}^2+\norm{q^{(0)}}^2$, and its baseline clause via $|\Delta(0)|\le|Y^{(1)}(0)|+|Y^{(0)}(0)|$.
\end{proofenv}

\section{Identification from observational data} \label{sec:ident}

Any causal estimand is a theoretical quantity that cannot be computed from observed data alone, since it contains potential outcomes that are never observed. In this section, we assess identifiability of our three elastic causal estimands: the populational and local ones are identifiable from observational data under standard causal assumptions, while the individual one is not. 

\subsection{Causal assumptions} \label{sec:causalass}

While the geometric assumptions of Section~\ref{sec:wellposed} make the causal estimands well defined, the following three make the populational and local ones identifiable from observational data. We keep the notation of Section~\ref{sec:setting}: $q_Y$ and $q^{(a)}=q_{Y^{(a)}}$ for the SRSFs of $Y$ and $Y^{(a)}$, and $e_a(x)=\Prob(A=a\mid X=x)$.

\begin{assumption}[Consistency (SUTVA)]\label{ass:sutva}
$Y=Y^{(A)}$: each unit's observed curve is the potential outcome of the arm it received. In particular $q_Y=q^{(A)}$ and $Y(0)=Y^{(A)}(0)$.
\end{assumption}

\begin{assumption}[Positivity]\label{ass:pos}
There is $c_e>0$ with $c_e\le e_1(X)\le1-c_e$ almost surely; equivalently, $e_a(X)\ge c_e$ for both arms $a\in\{0,1\}$.
\end{assumption}

\begin{assumption}[Conditional exchangeability at the curve level]\label{ass:exch}
$Y^{(a)}\perp\!\!\!\perp A\mid X$ for each $a\in\{0,1\}$.
\end{assumption}
 
\begin{remark}[Conditional independence transfers through the SRSF.]
Since conditional independence is preserved by measurable maps and $f\mapsto q_f$ is Borel (Proposition~\ref{prop:srsf}(iii)), Assumption~\ref{ass:exch} transfers to every quantity the proofs condition on: $q^{(a)}$, the baseline $Y^{(a)}(0)$, the squared elastic distance $\del^2([q],[q^{(a)}])$ for each fixed $[q]$, and the optimal warp aligning $q^{(a)}$ to a fixed reference representative, and, for a fixed reference representative, invertible warp $\beta$ and time $t$, the value $(Y^{(a)}\circ\gamma^\star_{q^{(a)}\to\tilde q_a}\circ\beta)(t)$.
\end{remark}

\begin{remark}[How demanding are these assumptions?]
Assumption~\ref{ass:exch} is the strongest one, and stronger than its scalar analogue: the entire curve-valued potential outcome, not a scalar summary, must be independent of treatment given $X$. It is untestable and must be backed up by domain knowledge. Assumption~\ref{ass:sutva} encompasses both the assumptions of no interference and of no hidden versions of treatment, and defining a target-trial is considered best practice to check  its validity \citep{hernan_causal_2020}. 
\end{remark}

\subsection{Identification of the populational effect} \label{sec:identpop}

The identification uses standard inverse-probability-weighting, but applied to the elastic distance inside the Fréchet objective. The proofs of this section are in Appendix~\ref{app:ident}. 

\begin{proposition}[Objective identification: weights inside the alignment]\label{prop:objident}
Under the geometric Assumptions~\ref{ass:reg}--\ref{ass:conc} for the laws of $Y^{(0)}$ and $Y^{(1)}$ and the causal Assumptions~\ref{ass:sutva}--\ref{ass:exch}, each $\Pstar$-level objective is equal to an observable ($P$-level) inverse-probability-weighted objective, for each arm $a\in\{0,1\}$:
\begin{enumerate}[leftmargin=2em,itemsep=3pt,topsep=3pt,parsep=0pt]
\item[(i)] \emph{Step-1 (amplitude) objective:}
\[
  \E_{\Pstar}\!\big[\del^2([q],[q^{(a)}])\big]
  =\E_{P}\!\Big[\tfrac{\ind_{A=a}}{e_a(X)}\,\del^2([q],[q_Y])\Big]
  \quad\text{for all }[q]\in\Sw;
\]
\item[(ii)] \emph{Step-2 (phase) objective:} for a fixed representative $\tilde q_a\in\shp^{(a)}:=\shp(Y^{(a)})$ of the Step-1 mean shape of arm $a$,
\[
  \E_{\Pstar}\!\big[\dG^2\big(\gamma,\gamma^\star_{q^{(a)}\to\tilde q_a}\big)\big]
  =\E_{P}\!\Big[\tfrac{\ind_{A=a}}{e_a(X)}\,\dG^2\big(\gamma,\gamma^\star_{q_Y\to\tilde q_a}\big)\Big]
  \quad\text{for all }\gamma\in\Gt;
\]
\item[(iii)] \emph{Step-3 average of the phase-centered registered curves:} with $\tilde q_a$ as in (ii), for every invertible warp $\beta\in\G$ and every $t\in[0,T]$,
\[
  \E_{\Pstar}\big[\big(Y^{(a)}\circ\gamma^\star_{q^{(a)}\to\tilde q_a}\circ\beta\big)(t)\big]
  =\E_{P}\!\Big[\tfrac{\ind_{A=a}}{e_a(X)}\,\big(Y\circ\gamma^\star_{q_Y\to\tilde q_a}\circ\beta\big)(t)\Big].
\]
\end{enumerate}
\end{proposition}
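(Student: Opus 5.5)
All three identities follow one template. We write each as $\E_{\Pstar}[h(Y^{(a)})]=\E_P[\ind_{A=a}h(Y)/e_a(X)]$, where $h:\F\to\R$ is a fixed, Borel, nonnegative or integrable map that does not depend on the law. We then prove this generic IPW identity once. The three maps are:
\begin{itemize}[leftmargin=1.6em,itemsep=1pt,topsep=2pt]
\item in (i), $h_1(f)=\del^2([q],[q_f])$, with $[q]$ fixed;
\item in (ii), $h_2(f)=\dG^2(\gamma,\gamma^\star_{q_f\to\tilde q_a})$, with $\gamma$ and $\tilde q_a$ fixed;
\item in (iii), $h_3(f)=(f\circ\gamma^\star_{q_f\to\tilde q_a}\circ\beta)(t)$, with $\tilde q_a$, $\beta$, $t$ fixed.
\end{itemize}

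\textbf{Generic identity.} Start from the right-hand side. By consistency (Assumption~\ref{ass:sutva}), $\ind_{A=a}h(Y)=\ind_{A=a}h(Y^{(a)})$. Conditioning on $X$ gives $\E[\ind_{A=a}h(Y^{(a)})\mid X]=e_a(X)\,\E[h(Y^{(a)})\mid X]$, by Assumption~\ref{ass:exch}, transferred to $h(Y^{(a)})$ through the remark on measurable maps. Positivity (Assumption~\ref{ass:pos}) lets us divide by $e_a(X)\ge c_e$. The tower property then yields $\E_{\Pstar}[h(Y^{(a)})]$. For nonnegative $h$ (cases (i), (ii)) the conditional expectations are legitimate without integrability, by monotone convergence. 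For (iii) we need $\E|h_3(Y^{(a)})|<\infty$ so that Fubini and the tower property apply, and with $1/e_a\le 1/c_e$ the weighted side is integrable too.

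\textbf{Measurability and integrability of each $h$.} For $h_1$: $f\mapsto q_f$ is Borel (Proposition~\ref{prop:srsf}(iii)). The map $q'\mapsto\del([q],[q'])$ is $1$-Lipschitz in $\Lt$, since $\norm{(q',\gamma)-(q'',\gamma)}=\norm{q'-q''}$ by Lemma~\ref{lem:iso}. Hence $h_1$ is Borel.

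For $h_2,h_3$ we need the optimal warp $f\mapsto\gamma^\star_{q_f\to\tilde q_a}$ to be a measurable selection, defined $P$-a.s. By Assumption~\ref{ass:warp}(iii-b) it exists, is unique and lies in $\Gc$ on a set of full measure. The set $\Gc$ is compact in the $L^\infty$ topology of warp-SRSFs, and the cost $(q',\gamma)\mapsto\norm{\tilde q_a-(q',\gamma)}$ is continuous on $\Lt\times\Gc$. Uniqueness of the argmin over a compact set then gives continuity, hence Borel measurability, of the argmin map on the set where it is unique (a standard Berge-type argument). This is the step I expect to be the main obstacle: I must check that the action is jointly continuous for the bi-Lipschitz family, with $\dot\gamma\in[c,C]$ giving uniform control. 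I must also check that the uniqueness set is itself measurable, or at least handle the null complement by completion of $P$ and $\Pstar$. Both nullsets must be controlled simultaneously: the nullset for $Y^{(a)}$ under $\Pstar$ transfers to $Y$ on $\{A=a\}$ because, under $\ind_{A=a}$, $Y=Y^{(a)}$.

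Once the selection is measurable, $h_2$ is a continuous function of $\psi_{\gamma^\star}$ (arccos of an inner product), bounded by $(\pi\sqrt T)^2$, hence integrable. For $h_3$, evaluation $(f,\eta)\mapsto f(\eta(t))$ is continuous for uniformly continuous $f$ and uniformly convergent $\eta$, so $h_3$ is Borel. Integrability follows from the bound $|f(\eta(t))|\le|f(0)|+\int_0^T|\dot f|=|f(0)|+\norm{q_f}^2$. This bound is integrable under Assumption~\ref{ass:mom} for the law of $Y^{(a)}$ (this is Lemma~\ref{lem:alignavg}).

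Finally, since $\tilde q_a$ is a fixed deterministic representative, it is chosen from $\Pstar$ but not re-estimated inside the expectation. The identities are therefore pointwise in the fixed arguments $[q]$, $\gamma$, $\beta$, $t$, as stated, and no uniformity is required.
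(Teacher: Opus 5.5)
Your proposal is correct and follows the paper's proof. Both run the same Horvitz--Thompson chain (consistency, curve-level exchangeability, positivity, tower rule) on the three integrands, with the same bounds ($\dG^2\le\pi^2T$, and $|h_3|\le|Y^{(a)}(0)|+\norm{q^{(a)}}^2$ via Lemma~\ref{lem:alignavg}); the one real difference is the measurability of the optimal warp. You get it from a Berge-type argument (a unique minimiser of a jointly continuous cost over the compact $\Gc$ is continuous on the uniqueness set), where the paper cites the Kuratowski--Ryll-Nardzewski theorem (T12), and either works here: the joint continuity you flag follows from Lemma~\ref{lem:compos} together with the $1$-Lipschitz bound in the data argument, uniform in $\gamma$, that Lemma~\ref{lem:iso} provides.
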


The intuition is the ordinary one for inverse-probability weighting. In an observational study, treated and control units are not exchangeable: a unit with covariates making treatment likely is over-represented among the treated. Re-weighting each observed unit by the inverse of its propensity $e_a(X)$ rebuilds the distribution that random assignment would have given. What is new here is the position where the weights are applied: reweighting inside the Step-1 and Step-2 objectives re-balances both Fréchet means, and the same weights then re-balance the Step-3 average of the phase-centered registered curves. The result is that confounding is corrected within the alignment itself contrary to \cite{raykov_kernel_2025}, where the registration is performed without adjustment        as a pre-processing step.

Writing $\mu^{(a)}_{\mathrm{IPW}}(P)$ for the curve produced by the three-step construction of
Section~\ref{sec:lift} run on the weighted objectives, with the weighted average of Proposition~\ref{prop:objident}(iii), taken at $\beta=\bar\gamma_a^{-1}$ (with $\bar\gamma_a$ the minimiser of the weighted Step-2 objective of Proposition~\ref{prop:objident}(ii)), as Step~3, and writing $\mu^{(a)}(\Pstar)=\mu\big(Y^{(a)}\big)$, we get the following identification result for the populational effect.

\begin{mainbox}{Identification of the populational effect}
\begin{theorem}\label{thm:ident}
Under the geometric Assumptions~\ref{ass:reg}--\ref{ass:conc} and the causal
Assumptions~\ref{ass:sutva}--\ref{ass:exch}, each arm mean is identified by its weighted
observable counterpart,
\[
  \mu^{(a)}(\Pstar)\ =\ \mu^{(a)}_{\mathrm{IPW}}(P),\qquad a\in\{0,1\},
\]
and consequently
\[
  \tau_{\mathrm{pop}}(\Pstar)\ =\ \mu^{(1)}_{\mathrm{IPW}}(P)-\mu^{(0)}_{\mathrm{IPW}}(P)
\]
is a functional of the observed law $P$.
\end{theorem}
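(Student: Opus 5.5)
The plan is to transport the three steps of the construction of Section~\ref{sec:lift} one at a time, using Proposition~\ref{prop:objident} as the basic tool. The principle throughout is that two optimisation problems with identical objective functions have identical minimiser sets. Fix an arm $a\in\{0,1\}$ and write $M_a([q])=\E_{\Pstar}[\del^2([q],[q^{(a)}])]$ for the full-data Step-1 objective and $M_a^{\mathrm{IPW}}([q])=\E_P[\ind_{A=a}e_a(X)^{-1}\del^2([q],[q_Y])]$ for its weighted counterpart.

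\emph{Step 1.} By Proposition~\ref{prop:objident}(i), $M_a=M_a^{\mathrm{IPW}}$ as functions on $\Sw$. Their minimiser sets therefore coincide. Theorem~\ref{thm:wellposed}(i)--(ii), applied to the law of $Y^{(a)}$, says this common set is the single shape $\shp^{(a)}$, so the weighted Step~1 returns exactly $\shp^{(a)}$.

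\emph{Step 2.} Both constructions may then use the same admissible reference $\tilde q_a\in\shp^{(a)}$, which exists by Assumptions~\ref{ass:warp}(iii-b) and~\ref{ass:conc}(4b) for the law of $Y^{(a)}$. Proposition~\ref{prop:objident}(ii) shows that the two Step-2 objectives agree at every $\gamma\in\Gt$. By Theorem~\ref{thm:wellposed}(iii), the full-data Step-2 objective has a unique minimiser $\bar\gamma_a$, which lies in $\Gc$ and is therefore invertible. Hence the weighted objective has the same unique minimiser.

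\emph{Step 3.} Apply Proposition~\ref{prop:objident}(iii) with $\beta=\bar\gamma_a^{-1}\in\G$ at each $t\in[0,T]$. This gives $\mu^{(a)}(\Pstar)(t)=\mu^{(a)}_{\mathrm{IPW}}(P)(t)$ for every $t$, hence equality as curves. Subtracting over $a=1,0$ yields the formula for $\tau_{\mathrm{pop}}$. The right-hand side depends only on $P$: the weights $e_a(X)$ are functions of the observed law, positivity (Assumption~\ref{ass:pos}) makes them bounded by $1/c_e$, and every other ingredient involves only $(X,A,Y)$.

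The main obstacle is the bookkeeping of the reference representative. $\mu^{(a)}_{\mathrm{IPW}}(P)$ must be a genuine functional of $P$, which means it cannot depend on a $\tilde q_a$ that was chosen from $\Pstar$. To settle this, I would argue as follows. The weighted Step~1 is defined purely from $P$ and returns a set of shapes that equals $\shp^{(a)}$. The set of admissible representatives is defined from that shape together with the attainment and concentration conditions. By the remark following Assumption~\ref{ass:exch}, these conditions are statements about the law of $q^{(a)}$, and that law coincides with the $P$-law of $q_Y$ reweighted by $\ind_{A=a}/e_a(X)$. Any admissible choice made from the $P$ side is therefore also admissible on the $\Pstar$ side. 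Theorem~\ref{thm:wellposed}(iv) then shows that the final curve does not depend on which admissible representative is used.

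A secondary point is that Proposition~\ref{prop:objident}(iii) is stated for fixed $\beta$, whereas here $\beta$ is itself determined by the data law. This causes no difficulty: $\bar\gamma_a$ is a deterministic functional of the law, not a random quantity, so the fixed-$\beta$ statement applies directly.
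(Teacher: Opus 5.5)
Your proposal is correct and follows the paper's proof essentially step for step: Proposition~\ref{prop:objident}(i)--(iii) combined with the argmin principle, the invertibility of $\bar\gamma_a$ for the Step-3 average, and Proposition~\ref{thm:inv} (Theorem~\ref{thm:wellposed}(iv)) to remove the dependence on the choice of $\tilde q_a$. The differences are cosmetic. You obtain existence and uniqueness of the weighted minimisers from the full-data side through the equality of objectives, whereas the paper cites the weighted variants (Corollary~\ref{cor:reweight} and step (5e) of Appendix~\ref{app:ka}); and your check that the admissibility of $\tilde q_a$ is itself determined by $P$ makes more explicit the paper's remark that no unidentified choice hides in the construction.
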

\end{mainbox}

\noindent The proof (Appendix~\ref{app:ident}) promotes the equality of the objectives in Proposition~\ref{prop:objident} to an equality of their minimisers, which are unique by Assumption~\ref{ass:conc}.

\subsection{Non-identification of the individual effect } \label{sec:identind}

 The law of the objective defining $\tau_{\mathrm{ind}}$ depends on the joint conditional law $P_{(Y^{(1)},Y^{(0)})\mid X}$, while the observed data identify only the two marginals $P_{Y^{(a)}\mid X}=P_{Y\mid A=a,X}$ under the causal assumptions mentioned above.

\begin{mainbox}{Non-identification of the individual effect}
\begin{theorem}\label{thm:nonident}
Under Assumptions~\ref{ass:sutva}--\ref{ass:exch}, the individual effect is not identified:
there is no functional $\Theta$ of the observed law such that
$\tau_{\mathrm{ind}}(\Pstar)=\Theta\big(P(\Pstar)\big)$ for every full-data law $\Pstar$
satisfying the assumptions.
\end{theorem}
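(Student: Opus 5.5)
The plan is the standard non-identification construction. I exhibit two full-data laws $\Pstar_1,\Pstar_2$ that both satisfy Assumptions~\ref{ass:sutva}--\ref{ass:exch}, induce the same observed law $P(\Pstar_1)=P(\Pstar_2)$, and yet give $\tau_{\mathrm{ind}}(\Pstar_1)\neq\tau_{\mathrm{ind}}(\Pstar_2)$. Any functional $\Theta$ would then have to take two different values at the same point $P$, which is impossible.

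\textbf{Construction.} Take $X$ degenerate (or independent of everything), $A\sim\mathrm{Bernoulli}(1/2)$ independent of $(Y^{(0)},Y^{(1)})$. Positivity holds with $c_e=1/2$, exchangeability holds trivially, and consistency holds by defining $Y=Y^{(A)}$. The observed law then depends only on the two marginals $P_{Y^{(0)}}$ and $P_{Y^{(1)}}$, so only the coupling of $(Y^{(1)},Y^{(0)})$ needs to change.

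\textbf{Marginals.} Fix a smooth nowhere-flat curve $g$, say $g(t)=t$ on $[0,T]=[0,1]$. Let $Y^{(0)}$ be equal to $0$ or $2g$ with probability $1/2$ each, and let $Y^{(1)}$ have the same two-point law, with the same values and probabilities.
\begin{itemize}
\item Under $\Pstar_1$ (comonotone coupling), $Y^{(1)}=Y^{(0)}$. Then $\Delta\equiv0$ almost surely, so $\tau_{\mathrm{ind}}(\Pstar_1)=0$.
\item Under $\Pstar_2$ (antitone coupling), $Y^{(1)}=2g-Y^{(0)}$. Then $\Delta=\pm 2g$ with probability $1/2$ each.
\end{itemize}

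This symmetric law is exactly the case where Assumption~4a can fail (Remark~\ref{rem:plaus}), since $[2q_g]$ and $[-2q_g]$ are equally likely. A cleaner choice avoids this problem. Take two values $a\neq b$ of the form $Y^{(0)}\in\{g,3g\}$ and $Y^{(1)}\in\{g,3g\}$, each with probability $1/2$. Under the comonotone coupling, $\Delta\equiv0$. Under the antitone coupling, $\Delta=\pm2g$, which is again symmetric.

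To avoid symmetry, use unequal probabilities or three-point laws. For example, let $Y^{(0)}\in\{0,g,3g\}$ uniformly and let $Y^{(1)}$ have the same law. Under the coupling $Y^{(1)}=Y^{(0)}$, $\Delta\equiv0$. Under a cyclic-shift coupling, $\Delta\in\{g,2g,-3g\}$ uniformly. All these differences are scalar multiples $s\,g$ of a single curve $g$ with $s\in\{1,2,-3\}$.

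\textbf{Computing the means.} For $\Delta=c\,g$ with random scalar $c$, every registration to a representative proportional to $q_g$ can be taken to be the identity, so Proposition~\ref{prop:propsB}(b) applies and gives $\mu=\E[\Delta]$. In the cyclic example this equals $\E[c]\,g=0$, which coincides with the first law. That is a bad choice, so I would instead pick a coupling whose ITE mean is nonzero and which still preserves the marginals.

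The cleanest way out is to make $\tau_{\mathrm{ind}}(\Pstar_1)$ equal to a multiple of $g$ whose shape differs from that of $\tau_{\mathrm{ind}}(\Pstar_2)$, rather than comparing means of the same sign structure. A concrete way to do this:
\begin{itemize}
\item Take $Y^{(0)}\equiv0$ deterministic. Then any coupling gives $\Delta=Y^{(1)}$, and $\tau_{\mathrm{ind}}$ is identified. So nondegenerate $Y^{(0)}$ is necessary.
\item Instead take $Y^{(0)}\in\{0,h\}$ and $Y^{(1)}\in\{0,h\}$, each with probability $1/2$, where $h$ is a nowhere-flat bump-free increasing curve.
\item The comonotone coupling gives $\Delta\equiv0$.
\item A coupling mixing with weight $\lambda$ gives $\Delta\in\{0,h,-h\}$ with probabilities $(1-\lambda,\lambda/2,\lambda/2)$, which is symmetric again.
\end{itemize}

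\textbf{Main obstacle.} The hardest step is to exhibit a pair of couplings for which both difference laws satisfy Assumptions~\ref{ass:reg}--\ref{ass:conc}, in particular 4a uniqueness and the attainment clause (iii), while giving distinct elastic means.

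My first try would use a non-antisymmetric marginal so the difference law is not sign-symmetric. Take $Y^{(0)}\equiv$ a fixed nowhere-flat $f_0$ deterministic, plus independent randomness in the form $Y^{(0)}=f_0+U g$ with $U\in\{0,1\}$, and $Y^{(1)}=f_1+Vg$ with $V\in\{0,1\}$, where $f_1-f_0=3g$. Then $\Delta=(3+V-U)g$, and $3+V-U\in\{2,3,4\}$ is always positive. Under the coupling $U=V$, $\Delta\equiv3g$. Under $U=1-V$, $\Delta\in\{2g,4g\}$.

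All differences are positive multiples of $g$, so they have a single shape $[q_g]$. Assumption~4a holds, the registrations are the identity (so clause (iii) and 4b hold with $\Gc\ni\gid$), and Proposition~\ref{prop:propsB}(b) or (c) gives $\mu=\E[\Delta]=3g$ in both cases. So the means coincide, and this pair fails as well. This reflects linearity within one shape class.

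The fix is to make the two couplings produce different shapes. Let $Y^{(1)}$ mix two templates $g_1,g_2$ with different feature timings, for instance one warp applied through $\gamma_Z\in\G$. Use the template model of Proposition~\ref{prop:propsB}(c) and choose the coupling so that under $\Pstar_1$ the difference is a pure time-shifted template $s\,(g\circ\gamma_Z)$, whose elastic mean preserves the sharp shape of $g$. Under $\Pstar_2$ the couplings are mismatched so the differences are sums of misaligned bumps, which have a different shape class. The value $\E[s]\,g\circ\bar\gamma^{-1}$ can then be contrasted with the second mean through, say, the number of local extrema or the $\Lt$ norm of the SRSF. I expect verifying Assumptions~\ref{ass:warp}(iii) and 4 for the second, mismatched law to be the delicate point. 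I would handle it by keeping the laws finitely supported and the warps in a small bi-Lipschitz neighbourhood of $\gid$, and I would argue only that $\tau_{\mathrm{ind}}(\Pstar_2)$ is not of the form $c+\E[s]\,g\circ\beta$.
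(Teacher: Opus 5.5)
You have the right framework, and it is the paper's: a degenerate $X$, a fair coin $A$ independent of $(Y^{(1)},Y^{(0)})$, fixed marginals and two couplings. But the proposal never exhibits a pair of couplings that works, so there is no proof yet. The missing piece is the obstacle you flag yourself: both difference laws must satisfy Assumptions~\ref{ass:reg}--\ref{ass:conc}, otherwise $\tau_{\mathrm{ind}}$ has no value at one of the two laws.

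Each of your concrete attempts fails, either on these assumptions or because the means coincide:
\begin{itemize}
\item The laws $\Delta=\pm2g$ and $\Delta\in\{0,\pm h\}$ are sign-symmetric, so (4a) fails.
\item A decreasing effect curve has no attained registration towards an increasing reference (Appendix~\ref{app:flat}). So your claim that all registrations in the cyclic example $\{g,2g,-3g\}$ are $\gid$ is false, and Proposition~\ref{prop:propsB}(b) does not apply there.
\item The comparison law $\Delta\equiv0$ has no nowhere-flat representative, so Theorem~\ref{thm:wellposed} does not cover it.
\item $\Delta=(3+V-U)g$ gives $3g$ under both couplings, as you found.
\end{itemize}
Your final sketch specifies no marginals, couplings or means, and it points the wrong way. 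Differences that are sums of misaligned bumps outside the template class are the mixture-of-patterns situation in which the paper notes that clause~(iii-b) fails. So $\tau_{\mathrm{ind}}(\Pstar_2)$ would not even be guaranteed to exist.

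The key idea you are missing is the correct reading of why $(3+V-U)g$ failed. It is not linearity within a shape class: the mean reduced to the identified $\E[\Delta]$ only because every registration was $\gid$. Within one template, Proposition~\ref{prop:propsB}(c) gives $\E[c_Z]+\E[s_Z]\,(g\circ\bar\gamma^{-1})$, and the spherical Fréchet mean $\bar\gamma$ of the warps depends non-linearly on the coupling. So keep both difference laws inside one increasing template and let the couplings change only the warp law.

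The paper does this with your last construction, taking $f_0=0$, $f_1=3t$ on $[0,1]$ and replacing the perturbations $Ug,Vg$ by $Ut^2,Vt^2$. That is, $Y^{(1)}\in\{3t,3t+t^2\}$ and $Y^{(0)}\in\{0,t^2\}$. The coupling $U=V$ gives $\Delta\equiv3t$. Independent $U,V$ give $\Delta\in\{3t,3t-t^2,3t+t^2\}$ with probabilities $\tfrac12,\tfrac14,\tfrac14$, each equal to $a\gamma_a$ with $a=\Delta(1)>0$ and $\gamma_a\in\G$ bi-Lipschitz.

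All assumptions can then be checked in closed form. One uses $\del^2([q],[c])=\norm q^2+c^2-2c\norm{q^+}$ for $c>0$ and $q^+=\max(q,0)$, which makes $[\E\sqrt a]$ the unique Step-1 mean shape. One also uses that the registrations to a constant reference are unique and equal to $\gamma_a^{-1}$. This gives $\tau_{\mathrm{ind}}(\Pstar_2)=\E[a]\,\bar\gamma^{-1}=3\bar\gamma^{-1}$ against $\tau_{\mathrm{ind}}(\Pstar_1)=3t$, although $\E[\Delta]=3t$ under both.

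One step is still non-trivial and any version of your plan would need it: showing $\bar\gamma\neq\gid$. The paper shows that the Step-2 objective has derivative $-\tfrac12\norm w<0$ at $\psi\equiv1$ along $w=\operatorname{Log}_1\psi_2+\operatorname{Log}_1\psi_4$. It then checks $w\neq0$ from explicit values of $\psi_2$ and $\psi_4$.
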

\end{mainbox}

The proof, by counterexample, is in Appendix~\ref{app:nonident}.

\subsection{Identification of the local effect} \label{sec:identloc}

The local estimand $\tau_{\mathrm{loc}}=\mu(\Delta_{\mathrm{loc}})$ was constructed as an identifiable proxy for the individual effect: the law of $\Delta_{\mathrm{loc}}$ depends on $\Pstar$ only through the conditional outcome marginals $P_{Y\mid A=1,X=x}$ and $P_{Y\mid A=0,X=x}$ and the covariate law $P_X$. The three-step construction of the elastic mean $\mu$ from Section~\ref{sec:lift}, applied to the law of $\Delta_{\mathrm{loc}}$, uses three quantities: a Step-1 objective, a Step-2 objective and the Step-3 average of the phase-centered registered curves. We now prove that each of them, and the overall local estimand, can be identified using the observed law. Throughout, $Y^{(X),(1)}$ and $Y^{(X),(0)}$ are, as in \eqref{eq:locallaw}, independent draws from the two observed conditional outcome laws $P_{Y\mid A=1,X=x}$ and $P_{Y\mid A=0,X=x}$ at the common covariate value $X$.
  
\smallskip
\emph{Step-1 objective.} For $[q]\in\Sw$, define
\begin{equation}\label{eq:locobj}
\begin{split}
  M_{\mathrm{loc}}([q])&=\E_{X\sim P_X}\big[m_{\mathrm{loc}}([q],X)\big],\\[0.4ex]
  m_{\mathrm{loc}}([q],x)&=\E_{P_{Y\mid A=1,X=x}\otimes P_{Y\mid A=0,X=x}}
  \!\big[\del^2\big([q],[q_{Y^{(X),(1)}-Y^{(X),(0)}}]\big)\big].
\end{split}
\end{equation}

\smallskip
\emph{Step-2 objective.} Step~2 registers each curve to a fixed representative of the Step-1 mean shape. Let $\tilde q\in\shp_{\mathrm{loc}}$ be such a representative, admissible in the sense of Section~\ref{sec:lift}, $\shp_{\mathrm{loc}}$ denoting the Step-1 mean shape of the law of $\Delta_{\mathrm{loc}}$. Theorem~\ref{thm:identloc}(i) below shows that under our working assumptions, this shape, which depends on the unobserved causal law, is the observed minimiser of $M_{\mathrm{loc}}$, so $\tilde q$ can be chosen from the latter. For $\gamma\in\Gt$, define
\begin{equation}\label{eq:locobjphase}
  N_{\mathrm{loc}}(\gamma)=\E_{X\sim P_X}\Big[\E_{P_{Y\mid A=1,X}\otimes P_{Y\mid A=0,X}}
  \!\big[\dG^2\big(\gamma,\gamma^\star_{q_{Y^{(X),(1)}-Y^{(X),(0)}}\to\tilde q}\big)\big]\Big].
\end{equation}
It has the same structure as $M_{\mathrm{loc}}$: only the integrand changes.

\smallskip
\emph{Step-3 average of the phase-centered registered curves.} For an invertible warp $\beta\in\G$ and $t\in[0,T]$, define
\begin{equation}\label{eq:locbase}
  L_{\mathrm{loc},\beta}(t)=\E_{X\sim P_X}\Big[\E_{P_{Y\mid A=1,X}\otimes P_{Y\mid A=0,X}}
  \!\big[\big(\Delta^{(X)}\circ\gamma^\star_{q_{\Delta^{(X)}}\to\tilde q}\circ\beta\big)(t)\big]\Big],
  \qquad \Delta^{(X)}=Y^{(X),(1)}-Y^{(X),(0)} .
\end{equation}
Unlike the FATE, this functional does not split into one term per arm: the registration warp of $\Delta^{(X)}$ depends on both curves.

\begin{mainbox}{Identification of the local effect}
\begin{theorem}\label{thm:identloc}
Under the geometric Assumptions~\ref{ass:reg}--\ref{ass:conc} for the law of $\Delta_{\mathrm{loc}}$ and the causal Assumptions~\ref{ass:sutva}--\ref{ass:exch}, the three quantities of the three-step construction of $\tau_{\mathrm{loc}}$ are functionals of the observed law:
\begin{enumerate}[leftmargin=2em,itemsep=3pt,topsep=3pt,parsep=0pt]
\item[(i)] \emph{Step-1 objective:} $\E_{\Pstar}\big[\del^2\big([q],[q_{\Delta_{\mathrm{loc}}}]\big)\big]=M_{\mathrm{loc}}([q])$ for all $[q]\in\Sw$, hence $\shp_{\mathrm{loc}}=\argmin_{[q]\in\Sw}M_{\mathrm{loc}}([q])$;
\item[(ii)] \emph{Step-2 objective:} for $\tilde q\in\shp_{\mathrm{loc}}$, $\E_{\Pstar}\big[\dG^2\big(\gamma,\gamma^\star_{q_{\Delta_{\mathrm{loc}}}\to\tilde q}\big)\big]=N_{\mathrm{loc}}(\gamma)$ for all $\gamma\in\Gt$;
\item[(iii)] \emph{Step-3 average of the phase-centered registered curves:} for $\tilde q$ as in (ii), every invertible warp $\beta\in\G$ and every $t$, $\E_{\Pstar}\big[\big(\Delta_{\mathrm{loc}}\circ\gamma^\star_{q_{\Delta_{\mathrm{loc}}}\to\tilde q}\circ\beta\big)(t)\big]=L_{\mathrm{loc},\beta}(t)$.
\end{enumerate}
Consequently, $\tau_{\mathrm{loc}}(\Pstar)$ is identified: it is the curve produced by the three-step construction of Section~\ref{sec:lift} run on $M_{\mathrm{loc}}$, $N_{\mathrm{loc}}$ and $L_{\mathrm{loc},\beta}$ with $\beta=\bar\gamma^{-1}$ ($\bar\gamma$ being the minimiser of $N_{\mathrm{loc}}$).
\end{theorem}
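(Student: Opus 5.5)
The whole argument rests on one observation: under the causal assumptions, the full-data law of $(X,\Delta_{\mathrm{loc}})$ is a functional of the observed law alone. All three integrands in (i)--(iii) are fixed measurable functions of $\Delta_{\mathrm{loc}}$ evaluated at a deterministic argument, so it is enough to show that the $\Pstar$-law of $\Delta_{\mathrm{loc}}$ is the mixture $\int P_{\Delta^{(x)}}\,dP_X(x)$. Here $P_{\Delta^{(x)}}$ is the law of $Y^{(x),(1)}-Y^{(x),(0)}$ under $P_{Y\mid A=1,X=x}\otimes P_{Y\mid A=0,X=x}$. Equivalently, it suffices that $\E_{\Pstar}[g(\Delta_{\mathrm{loc}})]=\E_{X\sim P_X}\E_{P_{Y\mid A=1,X}\otimes P_{Y\mid A=0,X}}[g(\Delta^{(X)})]$ for every nonnegative or integrable Borel $g$ on $\F$. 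The three integrands are $g_1=\del^2([q],[q_\cdot])$, $g_2=\dG^2(\gamma,\gamma^\star_{q_\cdot\to\tilde q})$ and $g_3=(\cdot\circ\gamma^\star_{q_\cdot\to\tilde q}\circ\beta)(t)$. They are measurable because $f\mapsto q_f$ is Borel (Proposition~\ref{prop:srsf}(iii)) and because of the measurability already used in the remark on transfer of conditional independence.

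To prove the mixture identity, I will condition on $(X,A)$ and split into the two arms. On $\{A=1\}$, $\Delta_{\mathrm{loc}}=Y^{(1)}-Y^{(X),(0)}$. By construction $Y^{(X),(0)}\sim P_{Y^{(0)}\mid X}$ is independent of $(A,Y^{(1)})$ given $X$. By Assumption~\ref{ass:exch}, the conditional law of $Y^{(1)}$ given $(X,A=1)$ equals $P_{Y^{(1)}\mid X}$. By Assumption~\ref{ass:sutva} this is $P_{Y\mid A=1,X}$, and Assumption~\ref{ass:pos} makes the conditioning well defined $P_X$-a.s. Similarly, $P_{Y^{(0)}\mid X}=P_{Y^{(0)}\mid X,A=0}=P_{Y\mid A=0,X}$. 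Hence the conditional law of $\Delta_{\mathrm{loc}}$ given $(X=x,A=1)$ is $P_{\Delta^{(x)}}$. The symmetric computation on $\{A=0\}$, using $Y^{(X),(1)}$, gives the same conditional law. So $\Delta_{\mathrm{loc}}$ is independent of $A$ given $X$ with conditional law $P_{\Delta^{(X)}}$, and integrating over $A$ and then over $P_X$ by the tower property yields the mixture identity. This gives (i) pointwise in $[q]$, (ii) pointwise in $\gamma$ for the fixed $\tilde q$, and (iii) pointwise in $(\beta,t)$. Integrability of $g_1,g_2$ is automatic: they are nonnegative, and $\dG\le\pi\sqrt T$. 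For $g_3$, I will use the bound from Theorem~\ref{thm:wellposed}(v): $|g_3(\Delta)|\le|\Delta(0)|+\norm{q_\Delta}^2$, together with the moment transfer in the proof of Corollary~\ref{cor:wp3}.

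The last step promotes the equalities of objectives to equalities of estimands, following the same route as Theorem~\ref{thm:ident}. Since $M_{\mathrm{loc}}$ coincides with the Step-1 objective for the law of $\Delta_{\mathrm{loc}}$, the two have the same minimiser set. By Theorem~\ref{thm:wellposed}(i)--(ii) and Assumption~\ref{ass:conc}(4a), that set is the single shape $\shp_{\mathrm{loc}}$. Next, I choose an admissible $\tilde q\in\shp_{\mathrm{loc}}$; admissibility is a property of the law of $\Delta_{\mathrm{loc}}$, hence observable. Then $N_{\mathrm{loc}}$ equals the Step-2 objective, so by Theorem~\ref{thm:wellposed}(iii) its unique minimiser is $\bar\gamma$, which lies in $\Gc$ and is invertible. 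Plugging $\beta=\bar\gamma^{-1}\in\G$ into (iii) recovers $\tau_{\mathrm{loc}}(t)$ for every $t$. Theorem~\ref{thm:wellposed}(iv) ensures the result does not depend on which admissible $\tilde q$ is chosen.

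The main obstacle I expect is the measure-theoretic core of the second paragraph: rigorously using the regular conditional laws $P_{Y\mid A=a,X=x}$ on the Polish-type space $\F$, and making precise the joint construction of the independent draws $Y^{(X),(a)}$ so that the conditional law of $\Delta_{\mathrm{loc}}$ given $(X,A)$ is computed legitimately. I would handle this by working with a disintegration of $\Pstar$ over $X$ and applying Fubini to product kernels. The optimisation steps are routine once the Step-1 through Step-3 objectives are known to coincide as functions.
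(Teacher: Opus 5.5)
Your proposal is correct and follows essentially the paper's own route. The paper's Step~A likewise shows that, given $X$ and whatever the value of $A$, $\Delta_{\mathrm{loc}}$ is a difference of independent draws from $P_{Y^{(1)}\mid X}\otimes P_{Y^{(0)}\mid X}$; it identifies these marginals with $P_{Y\mid A=a,X}$ via exchangeability, consistency and positivity, applies the tower rule to each of the three integrands, and promotes the objective identities to minimisers by the argmin principle under Assumption~\ref{ass:conc}. Your packaging of the core step as an identity of laws for all Borel $g$ is slightly cleaner, since it makes the observability of the admissible $\tilde q$ immediate, but it is the same argument.
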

\end{mainbox}
 
\noindent The proof is in Appendix~\ref{app:identloc}.

\begin{defbox}{The gap between the individual and local effects.}
The individual causal estimand carries the richest information as it involves the joint realisation $(Y^{(1)},Y^{(0)})$, and is unidentifiable for that same reason. The local causal estimand replaces the unobserved joint by the product of identified conditional marginals. Following \citet{even_rethinking_2025}, under $Y^{(1)}\perp\!\!\!\perp Y^{(0)}\mid X$, the two estimands coincide, but this independence is extremely strong and rarely holds in practice.
\end{defbox} 

\begin{remark}[Why no exact pairwise reweighting for the local effect.]\label{rem:noipw}
The local effect can be identified (Theorem~\ref{thm:identloc}), but not with IPW-like reweighting as in the populational case. The reason is that the objective
$M_{\mathrm{loc}}$ integrates a function of a \emph{pair} of curves drawn at a
common covariate value, whereas the data supply pairs of distinct units, whose covariates often differ. Appendix~\ref{app:noipw} expands this point.
\end{remark}

\section{Estimation} \label{sec:est}  

After identification comes estimation, or how to estimate a statistical quantity defined at the scale of the observed data distribution from a sample. We develop an IPW estimator for the populational effect and a matching estimator for the local effect. The individual effect is not identifiable, so it cannot be estimated from observed data.

\subsection{Estimating the populational effect by inverse-probability weighting} \label{sec:popest}

\noindent The IPW elastic populational estimator is computed by following the steps below.
\begin{itemize}[leftmargin=2em,itemsep=2pt,topsep=3pt]
\item \emph{Propensity.} Fit the propensity $\hat e_a$ on the whole sample.
\item \emph{For each arm} $a\in\{0,1\}$:
  \begin{itemize}[leftmargin=1.5em,itemsep=1pt,topsep=1pt]
  \item \emph{Mean shape (Step~1).} Solve the weighted Step-1 problem
  $\hat\shp^{\mathrm{IPW}}_{a,n}=\argmin_{[q]\in\Sw}\frac1n\sum_i\frac{\ind_{A_i=a}}{\hat e_a(X_i)}\del^2([q],[q_{Y_i}])$ by the alternating registration/averaging algorithm of \citet{srivastava_registration_2011} with IPW weights.
  \item \emph{Mean warp (Step~2).} Register each observed curve of arm $a$ to a nowhere-flat representative $\tilde q_a$ of the estimated shape, and average the resulting optimal warps on the sphere with the same weights,
  $\hat{\bar\gamma}_a=\argmin_{\gamma\in\Gt}\frac1n\sum_i\frac{\ind_{A_i=a}}{\hat e_a(X_i)}\dG^2\big(\gamma,\gamma^\star_{q_{Y_i}\to\tilde q_a}\big)$.
  \item \emph{Arm curve (Step~3).} Re-time each registered curve of arm $a$ by the inverse mean warp and average them with the same weights,
  \[
  \hat\mu^{\mathrm{IPW}}_{a,n}(t)=\frac{\sum_i w_{a,i}\,\big(Y_i\circ\gamma^\star_{q_{Y_i}\to\tilde q_a}\circ\hat{\bar\gamma}_a^{-1}\big)(t)}{\sum_i w_{a,i}},
  \qquad w_{a,i}=\frac{\ind_{A_i=a}}{\hat e_a(X_i)} .
  \]
  The weights are normalised to sum to one (Hájek form, Remark~\ref{rem:hajek}).
  \end{itemize}
\item \emph{Effect.} Difference the arms, $\hat\tau_n=\hat\mu^{\mathrm{IPW}}_{1,n}-\hat\mu^{\mathrm{IPW}}_{0,n}$.
\end{itemize} 

\begin{remark}[Horvitz--Thompson and Hájek forms.]\label{rem:hajek}
Write $g_{a,i}=Y_i\circ\gamma^\star_{q_{Y_i}\to\tilde q_a}\circ\hat{\bar\gamma}_a^{-1}$ for the registered curves of Step~3. The \emph{Horvitz--Thompson} form \citep{horvitz_generalization_1952} divides by the sample size, $\frac1n\sum_iw_{a,i}\,g_{a,i}$. It is simple and, in the scalar case, unbiased when the propensities are known. But its weights do not sum to one in a finite sample, so it rescales the curve: if every $g_{a,i}$ equals one curve $g$, it returns $\big(\frac1n\sum_iw_{a,i}\big)\,g$ instead of $g$. The \emph{Hájek} form \citep{hajek_comment_1971} divides by the sum of the weights, $\sum_iw_{a,i}\,g_{a,i}/\sum_iw_{a,i}$. It has a small finite-sample bias, but usually a lower variance, especially when some weights are large \citep{lunceford_stratification_2004}. Both forms estimate the same quantity $\mu^{(a)}_{\mathrm{IPW}}(P)$: they differ only by the factor $\frac1n\sum_iw_{a,i}$, which tends to $\E[\ind_{A=a}/e_a(X)]=1$ as $n$ grows, and the normalisation does not change the minimisers of Steps~1 and~2. We use the Hájek form because it keeps the average end values of each arm.
\end{remark}

\begin{remark}[Computation.]\label{rem:optim}
The Step-1 objective is not convex, because of the registration, so the alternating algorithm may stop at a local minimiser. Our shape space does not meet the classical conditions that guarantee a unique Fréchet mean (Appendix~\ref{app:frechetgeo}).
\end{remark}

\subsection{Estimating the local effect by matching} \label{sec:locest}

Let $\mathcal N_a=\{i:A_i=a\}$, $n_a=|\mathcal N_a|$, and for a unit $j$ let $\sigma_{a,k}(j)\subset\mathcal N_a$ index its $k$ nearest neighbours in the opposite arm by covariate distance. We consider the pairwise difference curves $\{Y_j-Y_\ell\}_{\ell\in\sigma_{a,k}(j)}$ (not the difference with a pointwise-averaged neighbour). The matched Step-1 objectives are 
\begin{gather*}
  \hat M^{\mathrm{ATT}}_{\mathrm{loc},n}([q])=\tfrac{1}{n_1k}\!\sum_{j\in\mathcal N_1}\sum_{\ell\in\sigma_{0,k}(j)}\!\del^2\big([q],[q_{Y_j-Y_\ell}]\big),\\
  \hat M^{\mathrm{ATC}}_{\mathrm{loc},n}([q])=\tfrac{1}{n_0k}\!\sum_{i\in\mathcal N_0}\sum_{\ell\in\sigma_{1,k}(i)}\!\del^2\big([q],[q_{Y_\ell-Y_i}]\big),\\
  \hat M_{\mathrm{loc},n}=\tfrac{n_1}{n}\,\hat M^{\mathrm{ATT}}_{\mathrm{loc},n}+\tfrac{n_0}{n}\,\hat M^{\mathrm{ATC}}_{\mathrm{loc},n}.
\end{gather*}
The objective $\hat M_{\mathrm{loc},n}$ pairs every unit with its $k$ nearest neighbours and estimates $M_{\mathrm{loc}}$. Its two parts average over the treated units only and over the control units only: they estimate the objectives of the ATT and ATC forms (Remark~\ref{rem:armforms}).
The mean warp (Step~2) and the final curve (Step~3) then proceed as in the populational case with these matched difference curves, giving $\hat\tau_{\mathrm{loc},n}$, and, with the weights of $\hat M^{\mathrm{ATT}}_{\mathrm{loc},n}$ or $\hat M^{\mathrm{ATC}}_{\mathrm{loc},n}$, the estimators $\hat\tau^{\mathrm{ATT}}_{\mathrm{loc},n}$ and $\hat\tau^{\mathrm{ATC}}_{\mathrm{loc},n}$. 

\begin{remark}[Pairwise contrasts, not averaged neighbours.]\label{rem:pairwise}
Because the SRSF is non-linear, the matched proxy must be the collection of pairwise differences $Y_j-Y_\ell$, not the single difference $Y_j-\bar Y$ with a pointwise-averaged neighbour curve: averaging the neighbours first would re-inject amplitude flattening into the matching step. With exact matches, $X_\ell=X_j$, and since the two units are distinct and independent, the pair $(Y_j,Y_\ell)$ is a draw from the product of the two conditional outcome marginals at that covariate value,
$(Y_j,Y_\ell)\ \sim\ P_{Y\mid A=1,X=X_j}\otimes P_{Y\mid A=0,X=X_j}$. This is the curve analogue of the pairwise comparisons of \citet{even_rethinking_2025}.
\end{remark}

\section{Simulations} \label{sec:sims}

This section has three parts. We first compare the causal estimands themselves,
to see which of them correctly captures the shape carried by the simulated units. We then compare estimators, to see how closely each recovers the target it aims at. Finally, we show how noise affects the elastic estimands (Section~\ref{sec:noise}). The simulation model is given below, with more details in Appendix~\ref{app:sims}. The code implementing the methods and reproducing the simulations will be made available in a later version of this preprint.

\subsection{The simulation model} \label{sec:simmodel}

Time since treatment is rescaled to $[0,1]$ on a uniform grid of $m=81$
points, and each simulated study observes $(X_i,A_i,Y_i)_{i\le n}$ with
$n=250$. A single confounder $X\sim\mathrm{Unif}[0,1]$ drives the treatment,  
$A\mid X\sim\mathrm{Bernoulli}(e(X))$ with
$e(X)=\operatorname{expit}(-1.1+2.2X)\in[0.249,0.751]$, and it also drives the
control trajectory, the depth of the treated response and its timing. Control
potential outcomes decline slowly from their baseline value. Treated potential
outcomes drop to a nadir at a unit-specific time and then follow one of two
shapes: a share $p_{\mathrm{rebound}}$ of units return to their baseline
level, and the others stay near the level they have reached. This mimics body
weight during and after a course of a GLP-1 receptor agonist.

Write $\varphi_{c,w}(t)=\exp\{-(t-c)^2/(2w^2)\}$ for a Gaussian bump and
$\sigma(z)=\operatorname{expit}(z)$. Let $U_1,U_2\sim N(0,1)$ be unit-level
latent variables, let $B\sim\mathrm{Bernoulli}(p_{\mathrm{rebound}})$ indicate
a rebounding unit, and let $\varepsilon^{(1)}$ and $\varepsilon^{(0)}$ be two independent smooth mean-zero noise curves,
independent across units, of scale $\sigma_\varepsilon=0.03$. The potential
outcomes are
\begin{align*} 
Y^{(1)}(t) &= \begin{cases}
  -h\,\varphi_{c,\,0.10}(t)+\varepsilon^{(1)}(t), & B=1 \text{ (rebound)},\\[2pt]
  -h\,\sigma\!\big(\tfrac{t-c}{0.03}\big)\,
      \big(1+0.05\,(t-c)_+\big)+\varepsilon^{(1)}(t), & B=0 \text{ (plateau)},
\end{cases}
& c &= 0.40+\alpha\,\big(0.34\,(X-\tfrac12)+0.08\,U_1\big),\\
&& h &= 0.90+0.55\,X+0.06\,U_2,\\
Y^{(0)}(t) &= -(0.05+0.12X)\,t^{3/2}+\varepsilon^{(0)}(t). &&
\end{align*}
The phase-variation scale $\alpha$ multiplies both timing terms, so it sets
how much the nadir time $c$ varies from one unit to the next. We report five scenarios, all with $\alpha=1$, numbered by decreasing rebound
share: Scenarios~1 to~5 have $p_{\mathrm{rebound}}=100\%$, $75\%$, $50\%$, $25\%$
and $0\%$. In Scenario~1 every treated unit rebounds, and in Scenario~5 every
treated unit plateaus: these two scenarios have homogeneous shapes, while
Scenarios~2 to~4 carry a heterogeneity of shapes. We also sweep $\alpha$ from $0$
to $1$ in Scenario~1. Positivity (Assumption~\ref{ass:pos}) holds
since $e(X)\in[0.249,0.751]$, and conditional exchangeability (Assumption~\ref{ass:exch})
holds at the curve level by construction. In this model $Y^{(0)}$ depends only on $X$ and on its own noise, so the two potential outcomes are independent given $X$, and $\tau_{\mathrm{ind}}=\tau_{\mathrm{loc}}$ (Section~\ref{sec:identloc}): the simulations do not separate these two estimands. Appendix~\ref{app:sims} gives the
remaining details.

\subsection{Comparison of the causal estimands} \label{sec:simestimands}

Figure~\ref{fig:estgrid} shows the five scenarios, from Scenario~1 down to Scenario~5. Across scenarios, the elastic
estimands recover the shape present in the simulation much better than the
FATE. In Scenarios~1 and~5 there is a single shape to recover, and both
$\tau_{\mathrm{pop}}$ and $\tau_{\mathrm{ind}}$ reproduce it in depth and
timing, whereas $\tau_{\mathrm{FATE}}$ flattens it. In Scenarios~2 to~4 the elastic estimands return the corresponding intermediate shape (reflecting the proportions of plateauing and rebounding curves): they drop to the typical depth at the typical time, then return part of the way towards baseline, to the average final level. That level is also the final level of the FATE, since both end values of an elastic mean are averages (Proposition~\ref{prop:propsB}), while the
FATE stays flattened throughout. Finally, $\tau_{\mathrm{pop}}$ and
$\tau_{\mathrm{FATE}}$ are each built from two arm-level means, shown in
column~(c). In clinical applications these two trajectories are often easier
to read than their difference.

\begin{figure}[p]
\centering 
\includegraphics[width=0.93\textwidth]{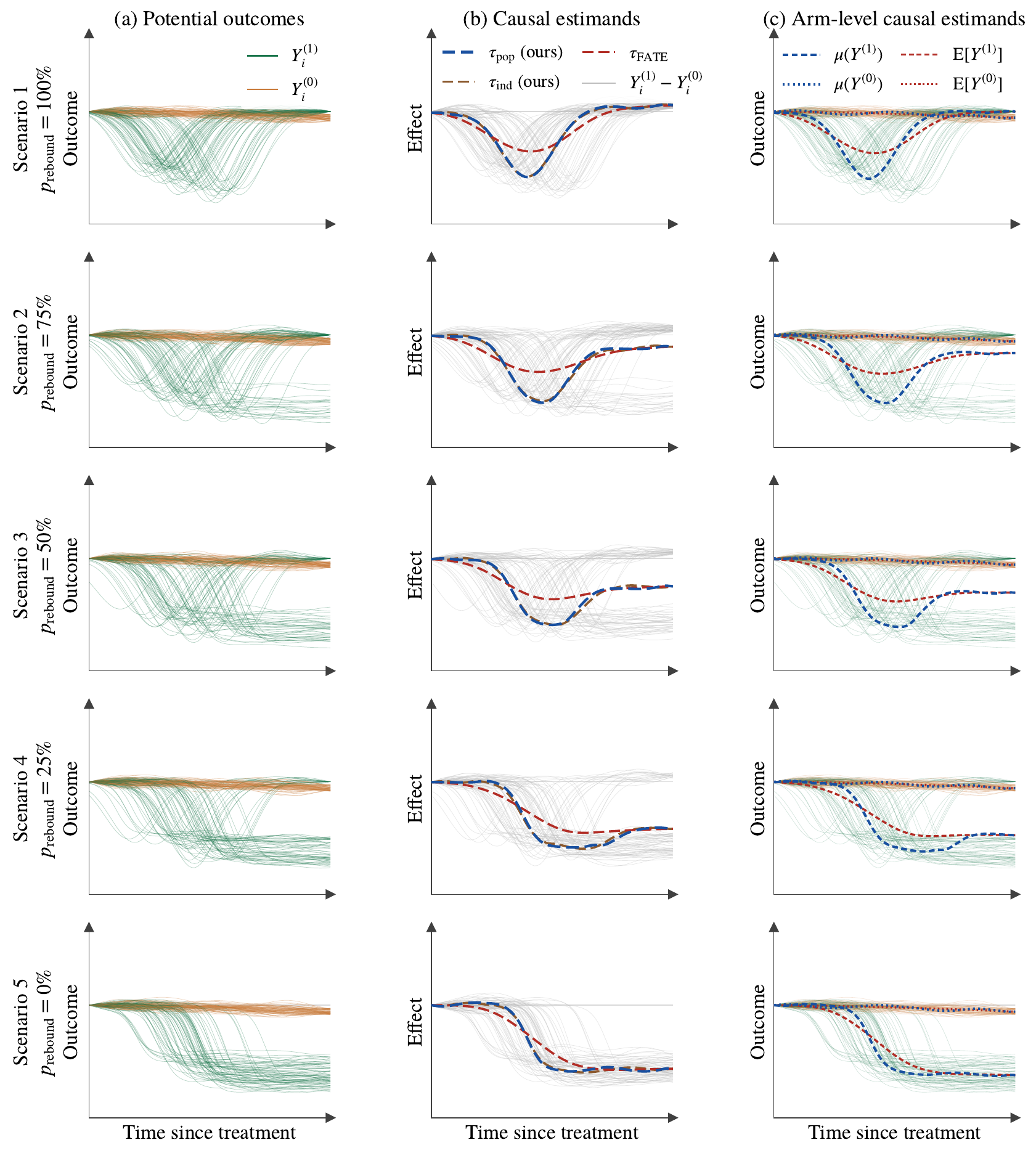}
\caption{\textbf{The causal estimands, across scenarios} (one row per
scenario, from Scenario~1, $p_{\mathrm{rebound}}=100\%$, down to Scenario~5,
$p_{\mathrm{rebound}}=0\%$). \emph{(a)} Potential
outcomes of a subsample of units: treated $Y^{(1)}_i$ (green), control
$Y^{(0)}_i$ (orange). \emph{(b)} The causal estimands (dashed):
$\tau_{\mathrm{pop}}$ (blue), $\tau_{\mathrm{ind}}$ (brown) and
$\tau_{\mathrm{FATE}}$ (red), over the differences $Y^{(1)}_i-Y^{(0)}_i$
(light grey). \emph{(c)} The arm-level estimands: phase-centered elastic means
$\mu(Y^{(a)})$ (blue) and pointwise means $\E[Y^{(a)}]$ (red); short dashes:
treated arm, dots: control arm.}
\label{fig:estgrid}
\end{figure}

\subsection{Comparison of the estimators} \label{sec:simestimators}

The comparison above is between targets, not between performances. What it shows is that the elastic estimands capture the shape of the effect, whereas the FATE flattens it. The question now is whether that shape can be estimated, and how closely
each estimator approaches the target it aims at. We compare four estimators: our IPW estimator of $\tau_{\mathrm{pop}}$
(Section~\ref{sec:popest}), our matched local estimator
(Section~\ref{sec:locest}), the doubly robust FATE estimator DR-FoS of
\citet{testa_doubly_2025}, and the SRVF operator-kernel ATE of
\citet{raykov_kernel_2025}, their best-performing estimator and the closest
competing method, denoted ``SRVF kernel ATE'' below. This last method
registers the curves before the causal analysis and has no well-defined causal
estimand (Section~\ref{sec:intro}), so it enters the estimator comparison
only.

The four estimators do not target the same causal estimand, so we score them against the typical shape of the effect: the $L^2$ distance of the estimated curve to $\tau_{\mathrm{pop}}$, and the absolute error of its depth, onset time and drop speed relative to the median of these metrics over the individual effects (Appendix~\ref{app:sims}, Figure~\ref{fig:metricschema}). Figure~\ref{fig:mcgrid} shows the Monte-Carlo distribution of the estimated effect curves over $50$ replications, in Scenarios~1 and~3. All five scenarios are in Appendix~\ref{app:sims}, Figure~\ref{fig:mcgridfull}, together with the error distributions for the shape evaluation metrics mentioned above (Figure~\ref{fig:boxgrid}).  

Three findings emerge, similar across the five scenarios. First, both elastic estimators recover the shape of the effect, with small errors to the populational causal estimand and to the three shape metrics. Second, DR-FoS estimates its own target accurately, but that target is the flattened FATE under timing heterogeneity: measured against the typical shape it is systematically biased, with errors on the depth and the drop speed several times those of the elastic estimators in every scenario, and a large error on the onset time in Scenarios~1 to~3. This gap comes from the choice of causal estimand, not from the estimator. Third, the SRVF kernel ATE fails to recover the shape in Scenarios~1 to~4, and is less accurate than the elastic estimators in Scenario~5. 

Neither competitor comes closer to the elastic target as the sample grows. In
the sample-size sweep of Appendix~\ref{app:sens} their $L^2$ error stays
between about $0.17$ and $0.21$ from $n=20$ to $n=1000$, while the error of the two
elastic estimators falls to about $0.03$. For DR-FoS, what separates it from the elastic estimators is the estimand and
not the sample size.

\begin{figure}[htbp]
\centering
\includegraphics[width=0.92\textwidth]{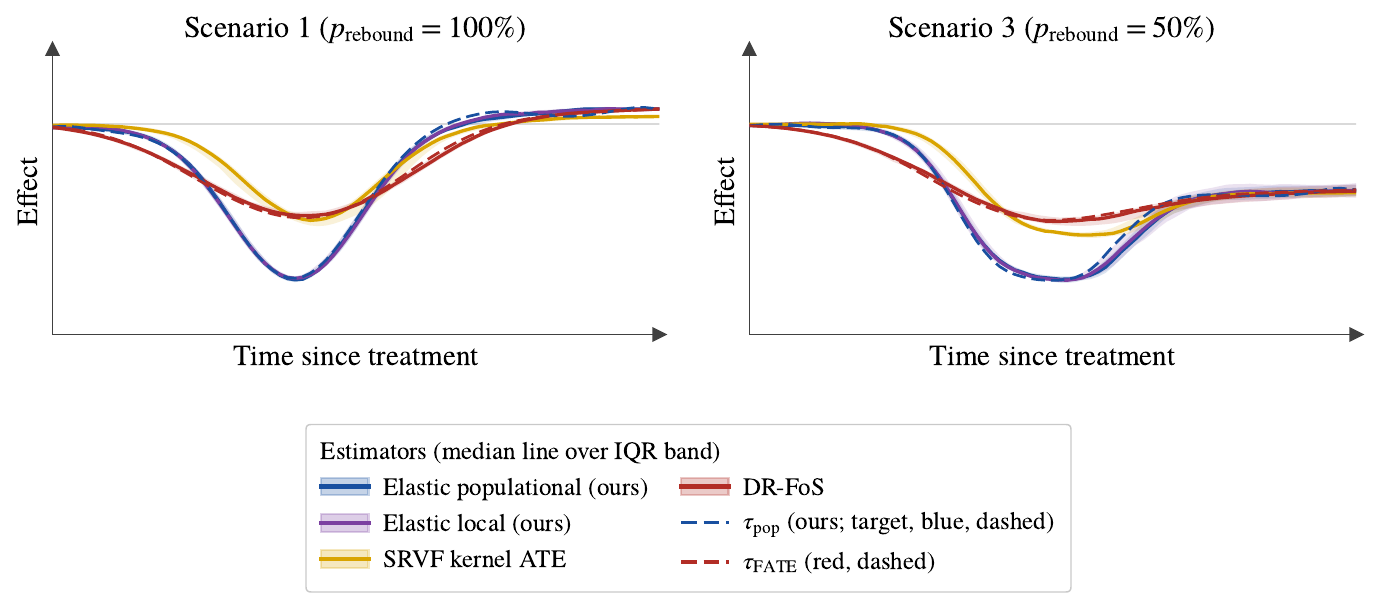}
\caption{\textbf{Monte-Carlo distribution of the four estimators in two
scenarios}, Scenario~1 ($p_{\mathrm{rebound}}=100\%$, left) and Scenario~3
($p_{\mathrm{rebound}}=50\%$, right); $50$ replications per scenario,
$n=250$. The three remaining scenarios are in
Figure~\ref{fig:mcgridfull}. Each panel: pointwise median (solid) and
pointwise interquartile band of the four estimated effect curves, with the
causal estimands $\tau_{\mathrm{pop}}$ (blue, dashed) and
$\tau_{\mathrm{FATE}}$ (red, dashed).}
\label{fig:mcgrid}
\end{figure}

\paragraph{The role of phase variation.}
The phase-variation scale $\alpha$ controls how much timing heterogeneity the
data carry. We sweep it from $0$ to $1$ in Scenario~1
(Figure~\ref{fig:phase}). At
$\alpha=0$ the estimands nearly coincide ($\|\tau_{\mathrm{FATE}}-\tau_{\mathrm{pop}}\|_{L^2}=0.03$) and the
pointwise DR-FoS estimator is as accurate for $\tau_{\mathrm{pop}}$ as our
estimators. As $\alpha$ grows, the estimand gap widens, the FATE flattens, and the errors of both competing estimators rise accordingly, while the errors of the elastic estimators remain
small. In Scenarios~2 to~4 the gap is large already at $\alpha=0$, because the heterogeneity of shapes alone separates the causal estimands (Appendix~\ref{app:mix}). Thus, the elastic estimands better capture the shape of the effect not only when phase variation is present, but also when the data-generating process carries a heterogeneity of shapes.

\begin{figure}[t]
\centering
\includegraphics[width=0.95\textwidth]{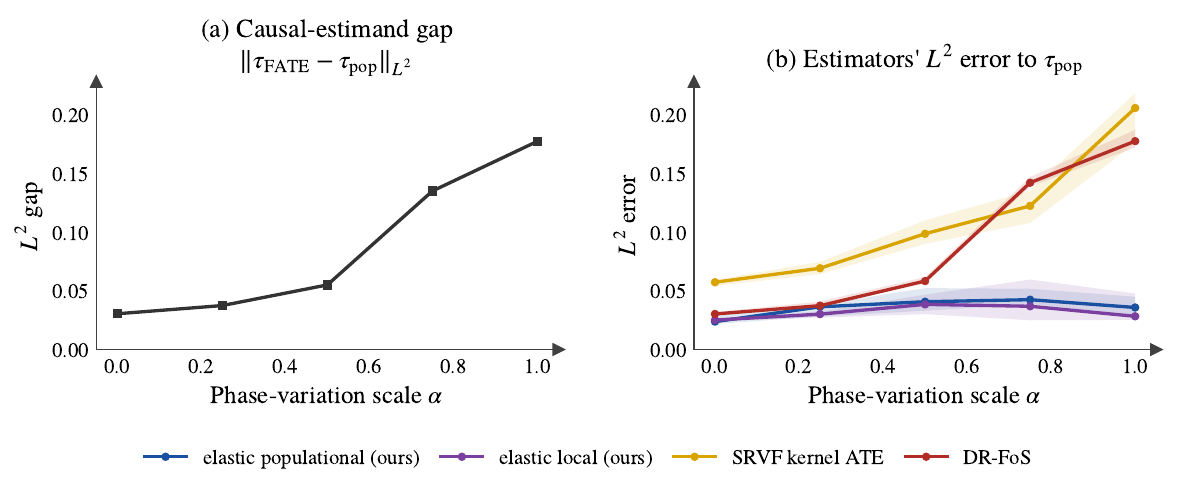}
\caption{\textbf{Effect of the phase-variation scale} (Scenario~1,
$p_{\mathrm{rebound}}=100\%$; $n=250$, $10$ replications per level). \emph{(a)} The causal-estimand gap
$\|\tau_{\mathrm{FATE}}-\tau_{\mathrm{pop}}\|_{L^2}$ as a function of
$\alpha$. \emph{(b)} Estimators' median $L^2$ error to $\tau_{\mathrm{pop}}$
(bands: interquartile range).}
\label{fig:phase}
\end{figure}

\paragraph{Sensitivity analyses.}
Appendix~\ref{app:sens} tests the robustness of these results to the sample size (Figure~\ref{fig:sensn}), the propensity model (Figure~\ref{fig:sensmis}) and the covariate dimension (Figure~\ref{fig:sensdim}). In the baseline design and across sample sizes the two elastic estimators perform very similarly, and both recover the shape of the effect far better than the competitors, whose error mainly comes from an estimand gap. Each elastic estimator has its own weak point: misspecification of the propensity model affects mainly the populational estimator, and a large covariate dimension mainly the local one. Their one shared weakness is structural: a target aimed at the typical shape reads noisy oscillations as signal, so noisy curves must be smoothed first. This is the noise dependence of the elastic mean, discussed in Section~\ref{sec:noise}.

\paragraph{Computational cost.}
The elastic mean alternates two operations: register every curve to the
current candidate shape, then average the registered curves. Each registration is
a dynamic programme on an $m\times m$ grid, so one alternation over $N$ curves
costs of the order of $N\,m^2$ operations, and $N$ determines the cost of a fit.
For the populational estimator and the SRVF kernel ATE, $N=n$. For DR-FoS, which
registers nothing, the cost is negligible. For the local estimator, $N$ is the
number of matched difference curves, $k\,n$ for $\hat\tau_{\mathrm{loc},n}$. A uniform random
subsample of these differences gives an unbiased estimate of the matched
objective, so $N$ may be capped at any $N_{\max}<k\,n$ without changing the
estimand. Three further savings leave the estimand unchanged. The $N$ registrations inside one alternation are independent, since they all align to the same fixed candidate shape,
and they parallelise directly. Besides, the early alternations can run on a coarser grid,
since the cost grows with $m^2$---the number of alternations trades cost for
accuracy. In our simulations, we stop after two alternations and one final registration, which we found sufficient at the noise level of the simulations (Appendix~\ref{app:sims}, paragraph on the estimands).
 
\subsection{When noise dominates the signal: the template bias} \label{sec:noise}

In the case where all curves are one template curve re-timed by random warps, possibly rescaled and shifted by a constant, the elastic mean recovers the template shape \citep{kurtek_signal_2011}.  However, as soon as the curves also carry noise, or any other variability between units, the elastic mean is biased with respect to the template \citep{devilliers_template_2017,devilliers_inconsistency_2017, miolane_template_2017}. This bias is generally assessed in terms of the total variation of the curve $f$, $\int_0^T|\dot f|=\norm{q_f}^2$, the absolute total distance it travels on the ordinate axis. Indeed, this bias inflates the total variation of the elastic mean above that of the template, as Figure~\ref{fig:biasgrid} shows in Scenario~1 (Appendix~\ref{app:bias} gives the numbers). The reason is that aligning a noisy curve to a candidate shape also aligns part of its noise with that shape, and these aligned parts add up in the average instead of cancelling out. The template bias grows linearly with the noise level for large noise, and more slowly for small noise. It vanishes completely only when the noise reduces to random constants, which is the template model. It can however be reduced by pre-smoothing noisy curves before the analysis \citep{bigot_consistency_2011}.

This template bias is the price of correcting phase variation. The pointwise mean has no such template bias, but it does not align the curves, so it flattens their misaligned features, as seen previously. The elastic mean aligns the curves, and the same alignment reads part of the noise as signal. Which of the two is closer to the typical shape depends on how much phase variation there is compared to noise. A trade-off can be found by pre-smoothing the curves before the analysis \citep{bigot_consistency_2011}, which reduces the bias at the cost of leaving some phase variation uncorrected.

\paragraph{When does it matter?}
The bias is a problem when the noise is more important than the signal inside the elastic mean, that is, when the total variation added by the noise is of the order of the total variation of the curve being averaged, or larger. This happens in three situations, detailed in Appendix~\ref{app:bias}, the first two of which are illustrated in Figure~\ref{fig:biasgrid}.

\begin{figure}[t]
\centering
\includegraphics[width=0.93\textwidth]{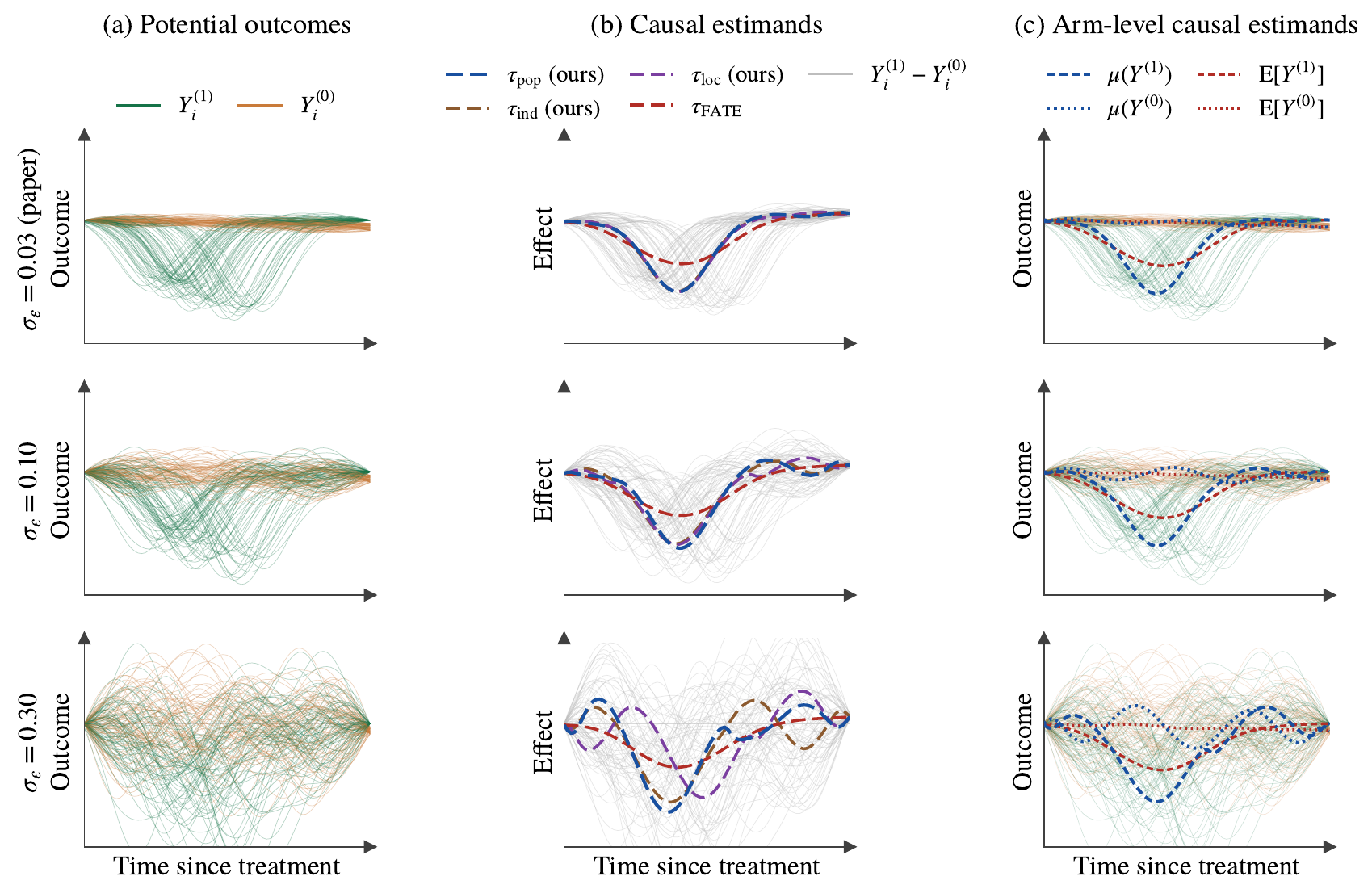}
\caption{\textbf{The causal estimands as the noise level grows} (Scenario~1, the same $800$ units at three noise scales; the first row is the noise level of all the other simulations). Columns as in Figure~\ref{fig:estgrid}: \emph{(a)} potential outcomes of a subsample of units (treated: green; control: orange); \emph{(b)} the causal estimands (dashed) $\tau_{\mathrm{pop}}$ (blue), $\tau_{\mathrm{ind}}$ (brown), $\tau_{\mathrm{loc}}$ (purple) and $\tau_{\mathrm{FATE}}$ (red) over the individual differences (grey); \emph{(c)} the arm-level estimands, elastic means $\mu(Y^{(a)})$ (blue) and pointwise means $\E[Y^{(a)}]$ (red); short dashes: treated arm, dots: control arm.}
\label{fig:biasgrid}
\end{figure}

\paragraph{What to do.}
The template bias is not specific to our estimands: it affects any analysis
that registers noisy curves, and the literature corrects for it by smoothing
the curves first \citep{ramsay_silverman_2005,bigot_consistency_2011}. We
follow this practice. The amount of smoothing should remove noise but keep signal, which requires domain
knowledge to avoid erasing the features we want
to compare, such as weekly patterns or weight cycling. In the Withings data of Section~\ref{sec:realdata},
the knowledge of the type of noise in weight measurements guided the choice of smoothing parameters.

\clearpage

\section{Real data application: longitudinal impact of GLP-1 RA
medications on weight}
\label{sec:realdata}

GLP-1 receptor agonists (GLP-1 RA) produce a large weight loss at first, often
followed by a plateau, a rebound, or several rebounds, the yo-yo effect
\citep{wilding_weight_2022,rubino_effect_2021}. Patients reach these phases at
their own times (some rebounding early, others later), so the response displays strong phase variability, the regime
our elastic estimands target. We apply $\hat\tau_{\mathrm{pop}}$ and
$\hat\tau_{\mathrm{loc}}$ to weight trajectories following initiation of an
anti-obesity GLP-1 RA, with DR-FoS \citep{testa_doubly_2025} and the
SRVF-registered kernel estimator \citep{raykov_kernel_2025} as competitors, as
in Section~\ref{sec:sims}. All four estimators are reported as an effect on
the treated, the clinically relevant population here.
 
\paragraph{Cohort, covariates and outcome curves.}\label{sec:cohort_covariates_outcome} 

We link two private data sources at the patient level from 2020 to 2026:
administrative medical claims give the treatment $A$, Withings connected-scale
measurements give the outcome $Y$, and both give the covariates $X$. The
treated are new users of an anti-obesity GLP-1 RA, the controls are non-users,
and the index date $t_0$ is the first fill for the treated and a random
eligible date for the controls. After the eligibility filters, the cohort has
544 treated and $5\,000$ controls, followed over 24 months. The
covariates $X_i\in\mathbb{R}^{14}$ encode 9 variables measured before
$t_0$: socio-demographics, baseline health status and motivation.

Scale measurements are self-recorded, so they happen at irregular and
patient-specific times. We turn them into curves before any causal
analysis. For each patient, we fit a penalised cubic smoothing
spline to the weights, take its value at $t_0$ as the baseline weight, express the
spline as a relative change from this baseline, and finally evaluate it on a common
weekly grid. Each patient therefore gets one outcome curve $Y_i\in\mathbb{R}^{104}$, displayed in
column~(a) of Figure~\ref{fig:rd_main}. Appendix~\ref{app:realdata} details the
cohort construction, the covariate definitions, the baseline weight assessment and the
choice of the penalty parameter for the cubic splines.

\paragraph{Comparison of the estimators.}
The four
estimators agree on a substantial weight reduction after treatment, but not on its shape.
DR-FoS gives the smoothest effect curve of the four, almost monotone 
and with no rebound, suggesting that the weight lost is maintained.
This is the amplitude flattening of Figure~\ref{fig:introfate}, now observed on
real data: as rebounds happen at patient-specific times, averaging at fixed
time points cancels them. DR-FoS therefore recovers the level of the effect
without recovering its shape, exactly as in the simulations of
Section~\ref{sec:sims}.

This application also suggests that aligning the curves
as an unadjusted preprocessing step, as in the SRVF-registered kernel estimator,
recovers the shape of the effect less faithfully than our elastic estimators.
While it captures the overall decrease, it exhibits a small early peak and reaches
its minimum only after about $19$ months, two features rarely observed in the
individual trajectories of column~(a), which mostly decrease from $t_0$ and reach
their minimum earlier. Its curve also departs substantially from DR-FoS, so it
does not reflect the pointwise average effect either.

Our two estimators, in contrast, both recover a typical trajectory
with a rebound, of the kind visible in the curves of column~(a). They both end near the level 
of DR-FoS at $24$ months: the FATE recovers the net loss,
but not the rebound that led to it. The elastic estimates
therefore describe how the effect of a GLP-1 RA unfolds over time, and not
only its size.
Unlike in the simulations of Section~\ref{sec:sims}, where they
behaved almost identically, the local estimate reaches a slightly deeper nadir and a larger 
rebound than
the populational one, with a small phase shift. This difference is likely due to the
template bias described in Section~\ref{sec:noise}, which affects them unequally (Appendix~\ref{app:stability_estimations}). The local estimand is
the identifiable target closest to the individual effect, which makes it
the more clinically relevant of the two.

\paragraph{Clinical insights enabled by the elastic estimators.}
The maximal effect of the typical trajectory, about $-12.5$ 
percentage points (pp)
after $14$ months for the local estimate, is close in both size and timing to the
pivotal STEP~1 trial of semaglutide, where weight loss reaches its nadir at week 60,
about $14$ months, for a placebo-adjusted effect of $-12.4$ pp at week 68
\citep{wilding_once-weekly_2021}. But this trial effect is a difference of pointwise
means, like the FATE, so it should strictly be compared with DR-FoS, about
$-8.5$ pp after one year, maintained thereafter.
In a trial, however, most patients stay on
treatment and follow the same dose schedule, so their weight losses are more synchronised and the pointwise mean is likely less flattened than with our real-world data. Our cohort keeps every
patient who went on weighing themselves, whether or not they continued the
treatment. It measures the effect of initiating a GLP-1 RA in real life,
where adherence, molecules and dosage, and hence the timing of weight loss, vary significantly  between
patients. The FATE flattens this effect, which is why the elastic
estimate, and not DR-FoS, matches the trial. This DR-FoS value
is in turn in line with the mean weight loss of $8.7\%$ after one year reported in
clinical practice \citep{gasoyan_changes_2025}.

The rebound of the typical trajectory also has a clinical explanation:
treatment discontinuation. Most patients without type~2 diabetes stop GLP-1
RAs within a year \citep{rodriguez_discontinuation_2025}, and weight comes
back once they stop: in the extension of STEP~1, participants regained two
thirds of their weight loss within one year of withdrawal
\citep{wilding_weight_2022}. Only part of our cohort stops, which is
consistent with a smaller rebound, about two fifths of the
maximal loss for the local estimate. 
A longer follow-up, beyond our 24-month horizon, would tell
whether this regain goes on or whether the weight stabilises.
\begin{figure}[h]
\centering
\includegraphics[width=1\textwidth]{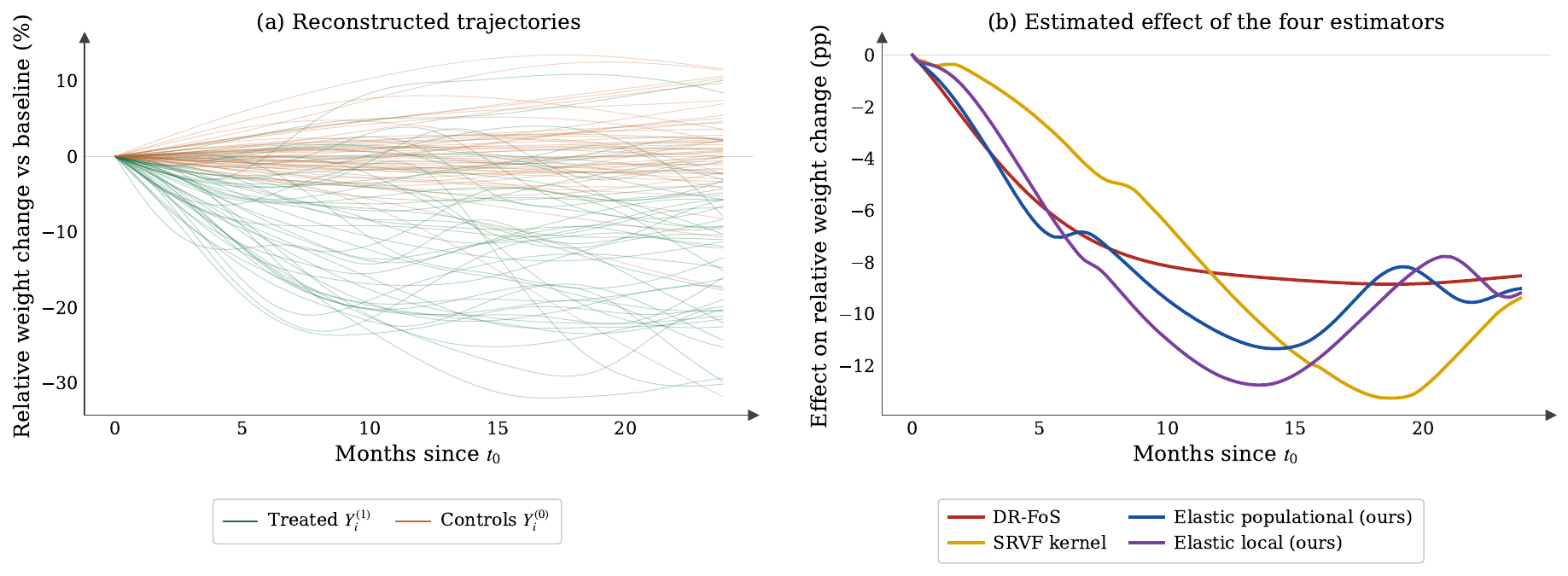}
\caption{\textbf{The elastic estimators recover a loss and a
regain that DR-FoS flattens.} \emph{(a)} Reconstructed relative 
weight change of a sample of $50$ patients per arm. \emph{(b)} Estimated effect on the
treated for each of the four estimators.}
\label{fig:rd_main}
\end{figure} 
\section{Discussion and conclusion} \label{sec:disc}

We defined causal estimands and estimators that capture, non-parametrically,
the shape of a treatment effect on a functional outcome. To do so, we lifted the
phase-centered elastic mean, so far a sample statistic, to a functional of a
probability measure on curves, and proved that it exists and is unique under
explicit geometric assumptions. Built on this mean, our estimands keep the
depth, timing and shape of the individual trajectories, instead of smearing them
out when the same features occur at different times across individuals. Our two identifiable estimands answer complementary
questions: the populational effect describes each arm separately, through the 
typical treated and control trajectories, and the local effect compares treated and control units that share the same covariates. 

In
simulations they recover the shape of the effect where competing methods do
not, and on real weight trajectories after GLP-1 receptor agonist initiation
they recover a rebound pattern that the pointwise estimator smooths away. Aligning curves has a price: the elastic mean can read part of the noise as
shape, which biases it (Section~\ref{sec:noise}). In practice, we therefore
recommend smoothing the curves, guided by domain knowledge, before applying
elastic methods.

These
results are promising for the real-world evidence that high frequency
data from connected devices,
wearables and health applications now make possible. Beyond the average level
of an effect, this evidence can then tell clinicians when the effect starts,
how deep it goes, how fast it settles, and whether it holds or gives way to a
rebound in the long run---paving the way for a richer form of real-world evidence.

This work opens several directions. First, we have not studied the statistical properties of our estimators: their consistency, their convergence rates and the construction of confidence bands remain open. These questions are not standard here, because the empirical Fréchet means may be non-unique and the alternating algorithm may stop at a local minimiser (Remark~\ref{rem:optim}). Our estimands summarise the effect by a
single typical shape, whereas patients may respond with different shapes, so
capturing the heterogeneity of treatment-effect shapes is one possible
extension. Besides, the causal assumptions
(Assumptions~\ref{ass:sutva}--\ref{ass:exch}) are strong in this functional
context as they must hold for whole trajectories rather than for a single
scalar. The development of suitable sensitivity analyses for these assumptions, in particular for unmeasured confounding, would
help use the methods with confidence. 

\section*{Acknowledgments}
This work was funded by Withings and by the Association Nationale de la
Recherche et de la Technologie (ANRT) through a CIFRE convention. We thank the
Withings users who contributed their data to this research. We also thank
Farès Abdelli, Pierre Binon, Aline Criton, Ariane Cwiling, Mathieu Even, Nicolas Nghe and Benjamin Vittrant for insightful discussions.

\subsection*{Author contributions}
A.C. conceived the methodology, developed the theory, implemented the methods, carried out the simulations, supervised the real-data analysis and wrote the manuscript. J.O. prepared the real data and applied the methodology to them. J.J. and E.S. supervised the work and validated the methodology. All authors read and approved the final manuscript.

\subsection*{Competing interests}
A.C.\ is a doctoral candidate employed by Withings under a CIFRE contract funded by Withings and
the ANRT, in a collaboration between Withings, Inria and Sorbonne Universit\'e. Withings manufactures the connected devices whose data are analysed in Section~\ref{sec:realdata}. J.O.\ was an intern at Withings when this work was carried
out. J.J.\ and E.S.\ declare no
competing interests.

\subsection*{Ethics statement}
The real-data application is a retrospective observational analysis of
de-identified data. The study was conducted in adherence with the General Data
Protection Regulation (GDPR) and the Health Insurance Portability and
Accountability Act (HIPAA) security standards. The data linkage process was
the subject of an expert determination under the HIPAA Privacy Rule
(45~CFR~\S164.514(b)(1)), which attested that the risk is very small that the
records could be used, alone or in combination with other reasonably
available information, to identify any individual. The analysis dataset is
therefore de-identified health information and not Protected Health
Information (PHI), so its use requires no individual HIPAA authorization.
Its secondary research use falls under the exemption for secondary research
with non-identifiable information (45~CFR~\S46.104(d)(4)) and did not require
review by an institutional review board (IRB).

\subsection*{Data and code availability}
Individual participant data, including the linked insurance claims and the
high-frequency device measurements, will not be made available owing to GDPR
regulations and commercial confidentiality. The code implementing the methods
and reproducing the simulations will be made available in a later version of this preprint.

\appendix
\bigskip\hrule
\section*{Annex: assumptions in detail, fully detailed proofs, and auxiliary results}
\addcontentsline{toc}{section}{Annex}
\setcounter{theorem}{0}\renewcommand{\thetheorem}{A.\arabic{theorem}}\renewcommand{\theHtheorem}{A.\arabic{theorem}}%

This annex gives additional detail on the geometric assumptions and collects all the proofs of the main text.
The standard results we use are stated in the toolbox of Appendix~\ref{app:toolbox} and are cited as
(T1), (T2), \dots{}, (T14) throughout. 

\section{A toolbox of standard facts} \label{app:toolbox}

\paragraph{Notations.} The following objects appear throughout the annex.

\begin{itemize}[leftmargin=1.6em,itemsep=1pt]
\item $\F=\{f:[0,T]\to\R \text{ absolutely continuous}\}$: outcome curves. Absolute
continuity means exactly that $f(t)=f(0)+\int_0^t \dot f(s)\,ds$ for an integrable
derivative $\dot f\in L^1[0,T]$ (fact~\ref{T:ftc} below). We equip $\F$ with the
\emph{total-variation metric} $d_{\F}(f_1,f_2)=|f_1(0)-f_2(0)|+\int_0^T|\dot f_1-\dot f_2|$.
\item $\Lt=L^2([0,T],\R)$: the Hilbert space of square-integrable functions, with
inner product $\inner{u}{v}=\int_0^T u(t)v(t)\,dt$ and norm $\norm u=\inner{u}{u}^{1/2}$.
\item $q_f=\operatorname{sign}(\dot f)\sqrt{|\dot f|}\in\Lt$ for $f\in\F$: the
square-root slope function of $f$, with $\operatorname{sign}(0):=0$. The SRSF map
is written $f\mapsto q_f$.

\item $\G$: warps, i.e.\ absolutely continuous bijections $\gamma:[0,T]\to[0,T]$ with
$\gamma(0)=0$, $\gamma(T)=T$ and $\dot\gamma>0$ almost everywhere. $\Gt$: the monoid of
\emph{non-decreasing} absolutely continuous surjections of $[0,T]$ ($\overline\Gamma$ in \citealp{bruveris_optimal_2016}, and also the closure of $\G$ among absolutely continuous functions); its elements may pause (stay constant on a time interval), and then cannot be inverted. $\Gc\subseteq\Gt$:
the compact, uniformly bi-Lipschitz working family of Assumption~\ref{ass:warp}
($0<c\le\dot\gamma\le C$).
\item $(q,\gamma)=(q\circ\gamma)\sqrt{\dot\gamma}$: the warp action on $\Lt$;
$[q]$: the shape of $q$, that is, the closure in $\Lt$ of its $\G$-orbit $\{(q,\gamma):\gamma\in\G\}$, whose elements are the \emph{representatives} of the shape; $(\Sw,\del)$, with $\Sw=\Lt/\!\sim$: the shape space with the elastic distance (Section~\ref{sec:shape}).
\item Norms: an unsubscripted $\norm{\cdot}$ always denotes the $\Lt$ norm
$\norm{\cdot}_2$; $\norm{\cdot}_\infty$ is the sup norm on $[0,T]$.
\item $\Sphere(\sqrt T)=\{\psi\in\Lt:\norm{\psi}=\sqrt T\}$: the Hilbert sphere of
radius $\sqrt T$; $\psi_\gamma=\sqrt{\dot\gamma}$: the warp-SRSF of $\gamma$;
$\Psi(\Gt)=\{\psi_\gamma:\gamma\in\Gt\}$: the feasible set of Step~2;
$\dG(\psi_1,\psi_2)=\sqrt T\arccos(\inner{\psi_1}{\psi_2}/T)$: the great-circle
(geodesic) distance.
\item $h(q;y)=\min_{\gamma\in\Gc}\norm{q-(y,\gamma)}^2$: the registration residual of a
data transform $y$ against a candidate $q$; $M([q])=\E[\del^2([q],[q_Z])]$: the Step-1 objective; $G(q)=\E[h(q;q_Z)]$: its lift to $\Lt$
(the lifted Step-1 objective); $N(\psi)=\E[\dG^2(\psi,\psi^\star_{q_Z\to\tilde q})]$: the Step-2 (phase) objective, where
$\psi^\star_{q_Z\to\tilde q}$ is the SRSF of the random optimal warp registering $q_Z$ to
the reference representative $\tilde q$.
\item $(X,A,Y)$: covariates, binary treatment, observed outcome curve $Y=Y^{(A)}$;
$e_a(X)=\Prob(A{=}a\mid X)$: the propensity; $W=\ind_{A=a}/e_a(X)$: the IPW weight;
$\pto$: convergence in probability.
\end{itemize}

\paragraph{Statements used in the proofs.} Throughout the proofs, we use the following known results.

\begin{enumerate}[label=\textup{(T\arabic*)}, ref=T\arabic*, leftmargin=3.4em, itemsep=4pt]

\item \label{T:ftc}\textbf{Absolute continuity and the fundamental theorem of
calculus.} $f:[0,T]\to\R$ is absolutely continuous iff there is $g\in L^1[0,T]$ with
$f(t)=f(0)+\int_0^t g(s)\,ds$ for all $t$; then $f$ is differentiable a.e.\ with
$\dot f=g$ a.e. \citep[Ch.~3]{leoni_first_2017}.

\item \label{T:cov}\textbf{Chain rule and change of variables for monotone AC maps.}
Let $\gamma:[0,T]\to[0,T]$ be non-decreasing, absolutely continuous and surjective
(i.e.\ $\gamma\in\Gt$). \emph{(a)}~For every $g\in L^1([0,T])$,
\[
\int_0^T g(\gamma(t))\,\dot\gamma(t)\,dt=\int_0^T g(s)\,ds ,
\]
with the convention that the integrand on the left is $0$ wherever $\dot\gamma(t)=0$
(regardless of whether $g(\gamma(t))$ is defined there); no injectivity of $\gamma$ is
required. \emph{(b)}~If $f$ is absolutely continuous then so is $f\circ\gamma$, and
$(f\circ\gamma)^{\textstyle\cdot}=(\dot f\circ\gamma)\,\dot\gamma$ a.e., with the same
convention on $\{\dot\gamma=0\}$. The difficulty these results handle is that
$\dot f(\gamma(t))$ may be undefined on a set where $\dot\gamma=0$. The convention
on $\{\dot\gamma=0\}$ makes both sides well defined and the identities true
\citep{serrin_varberg_1969,leoni_first_2017}.

\item \label{T:weier}\textbf{Weierstrass extreme-value theorem.} A continuous
real-valued function on a (sequentially) compact metric space attains its minimum and
maximum \citep[Thm.~4.16]{rudin_principles_1976}.

\item \label{T:aa}\textbf{Arzel\`a--Ascoli theorem.} A family $S\subset C^0([0,T])$ is
\emph{relatively} compact for the sup-norm iff it is uniformly bounded and
equicontinuous; it is compact iff, in addition, it is closed
\citep[Thms.~7.24--7.25]{rudin_principles_1976}.

\item \label{T:heine}\textbf{Heine--Cantor theorem.} A continuous function on a compact metric
space is uniformly continuous \citep[Thm.~4.19]{rudin_principles_1976}.

\item \label{T:dense}\textbf{Density.} $C([0,T])$ is dense in $\Lt$: for every
$y\in\Lt$ and $\varepsilon>0$ there is a continuous $g$ with $\norm{y-g}<\varepsilon$
\citep[Cor.~4.23]{brezis_functional_2011}.

\item \label{T:weakcpt}\textbf{Weak sequential compactness.} Every bounded sequence in
a Hilbert space (more generally, a reflexive Banach space) has a weakly convergent
subsequence \citep[Thm.~3.18]{brezis_functional_2011}. ($q_n\rightharpoonup q$ means
$\inner{q_n}{v}\to\inner{q}{v}$ for every $v$.)

\item \label{T:weaklsc}\textbf{Weak lower semicontinuity of the norm; weak--strong
pairing.} If $q_n\rightharpoonup q$ then $\norm q\le\liminf_n\norm{q_n}$; if moreover
$v_n\to v$ strongly, then $q_n+v_n\rightharpoonup q+v$ and
$\inner{q_n}{v_n}\to\inner{q}{v}$ \citep[Prop.~3.5]{brezis_functional_2011}.

\item \label{T:fatou}\textbf{Fatou's lemma.} If $h_n\ge0$ are measurable then
$\E[\liminf_n h_n]\le\liminf_n\E[h_n]$ \citep[Lem.~4.1]{brezis_functional_2011}.

\item \label{T:dct}\textbf{Dominated convergence theorem.} If $h_n\to h$ pointwise a.e.\ and
$|h_n|\le g$ with $\E g<\infty$, then $\E h_n\to\E h$
\citep[Thm.~4.2]{brezis_functional_2011}.

\item \label{T:tower}\textbf{Conditional expectation: tower rule, taking out what is
known, and the propensity identity.} For integrable $V$:
$\E[V]=\E\big[\E(V\mid X)\big]$ and $\E[g(X)V\mid X]=g(X)\,\E[V\mid X]$ for
$X$-measurable $g$. Consequently, for $A\in\{0,1\}$, integrable $V$, and
$e_a(X)=\Prob(A{=}a\mid X)$,
\[
\E\big[\ind_{A=a}\,V\bigm|X\big]
=e_a(X)\,\E\big[V\bigm|A{=}a,X\big]
\quad\text{a.s.\ on }\{e_a(X)>0\},
\]
which is the identity behind Horvitz--Thompson/IPW reweighting
\citep{horvitz_generalization_1952}; \citep{imbens_rubin_2015}.

\item \label{T:krn}\textbf{Measurable selection (Kuratowski--Ryll-Nardzewski theorem).} If
$\mathcal K$ is a compact metric space, $(\omega,\kappa)\mapsto\Phi(\omega,\kappa)$ is
measurable in $\omega$ and continuous in $\kappa$, and the minimiser
$\kappa^\star(\omega)=\argmin_\kappa\Phi(\omega,\kappa)$ is unique for every $\omega$ outside a null set (on which $\kappa^\star$ is set to a fixed $\kappa_0\in\mathcal K$; measurability is then for the completed $\sigma$-field),
then $\omega\mapsto\kappa^\star(\omega)$ is measurable
(see \citealp[Thm.~18.19]{aliprantis_infinite_2006}).

\item \label{T:midpt}\textbf{Second-derivative criterion and midpoint inequality.} Let
$g:[0,L]\to\R$ be continuous, and twice differentiable with $g''\ge\lambda$ at every
interior point. Then
\[
g\big(\tfrac L2\big)\ \le\ \tfrac12g(0)+\tfrac12g(L)-\tfrac{\lambda}{8}\,L^2 .
\]
(Proof: $h(s)=g(s)-\tfrac\lambda2 s^2$ has $h''\ge0$, hence is convex, hence
mid-point convex: $h(L/2)\le\tfrac12h(0)+\tfrac12h(L)$; rearranging and using
$\tfrac12\cdot0^2+\tfrac12L^2-(L/2)^2=L^2/4$ gives the display, since
$\tfrac{\lambda}{2}\cdot\tfrac{L^2}{4}=\tfrac{\lambda}{8}L^2$.)

\item \label{T:isom}\textbf{Isometry equivariance of Fréchet means.} Let
$(\mathcal M,d)$ be a metric space, $C\subseteq\mathcal M$, $R:\mathcal M\to\mathcal M$
a bijective isometry with $R(C)=C$, and $\nu$ a law on $\mathcal M$ with
$F_\nu(\psi)=\int d^2(\psi,y)\,d\nu(y)$. Then $F_{R\#\nu}(R\psi)=F_\nu(\psi)$ for all
$\psi$, where $R\#\nu$ is the image law of $\nu$ under $R$. Hence $\bar\psi$ is the
unique minimiser of $F_\nu$ over $C$ iff $R\bar\psi$ is the unique minimiser of
$F_{R\#\nu}$ over $C$. (Proof: $F_{R\#\nu}(R\psi)=\int d^2(R\psi,Ry)\,d\nu(y)
=\int d^2(\psi,y)\,d\nu(y)$ by the isometry property and the image-measure
substitution; minimising over $R(C)=C$ transports minimisers by $R$. See
\citealp{sturm_probability_2003}.)

\end{enumerate}
\section{Two points of rigour behind Assumptions~\ref{ass:warp} and~\ref{ass:conc}} \label{app:assump}

This section gives the material behind the two most technical assumptions. For
Assumption~\ref{ass:warp}, we show how to construct a family satisfying the
compactness required by clause~(i) (Appendix~\ref{app:aa}), and we give an example
of what the bi-Lipschitz restriction of clause~(ii) excludes
(Appendix~\ref{app:exline}). For Assumption~\ref{ass:conc}(4b), we describe the
geometry of the warp sphere, where the Step-2 mean lives
(Appendix~\ref{app:sphere}), and we prove the two convexity lemmas used in
Appendix~\ref{app:exist} (Appendix~\ref{app:convexity}). We also explain in simple terms what curves with flat pieces change, and what clause~(iii) of Assumption~\ref{ass:warp} requires of the data (Appendix~\ref{app:flat}).

\subsection{How to get compactness of $\Psi$ concretely} \label{app:aa}

Assumption~\ref{ass:warp}(i) requires the set of warp-SRSFs
$\Psi=\{\sqrt{\dot\gamma}:\gamma\in\Gc\}$ to be compact in
$(L^\infty([0,T]),\norm{\cdot}_\infty)$. For a family of continuous warp-SRSFs, as below, this is compactness in $(C^0([0,T]),\norm{\cdot}_\infty)$. Compactness of a family of continuous
functions is checked with the Arzel\`a--Ascoli theorem (\ref{T:aa}): $\Psi$ is
compact if and only if it is (a) uniformly bounded, (b) equicontinuous and
(c) closed. We go through these three conditions and check what the bi-Lipschitz
bounds of Assumption~\ref{ass:warp}(ii) give.

\paragraph{(a) Uniform boundedness.} If $\dot\gamma\le C$ then
$\norm{\sqrt{\dot\gamma}}_\infty\le\sqrt C$.

\paragraph{(b) Equicontinuity can fail.} Bounded derivatives do not rule out
faster and faster oscillations. Take $T$ arbitrary and define, for
$n\in\mathbb N$,
\[
\dot\gamma_n(t)\ =\ 1+\tfrac12\sin\!\Big(\tfrac{2\pi n t}{T}\Big),
\qquad
\gamma_n(t)=\int_0^t\dot\gamma_n(s)\,ds .
\]
Each $\gamma_n$ is a valid warp: $\dot\gamma_n\in[\tfrac12,\tfrac32]$, $\gamma_n(0)=0$, and
$\gamma_n(T)=\int_0^T\dot\gamma_n=T$.
So every $\gamma_n$ satisfies the bi-Lipschitz bounds of Assumption~\ref{ass:warp}(ii) with the same
constants $c=\tfrac12,\ C=\tfrac32$, and $\{\psi_n\}$ ($\psi_n=\sqrt{\dot\gamma_n}$) is uniformly
bounded. It is however not relatively compact. In
$(C^0([0,T]),\norm{\cdot}_\infty)$, relative compactness means that every sequence has a
uniformly convergent subsequence, so it suffices to show that $(\psi_n)$ has none. A uniformly convergent subsequence $\psi_{n_k}$ would make $\dot\gamma_{n_k}=\psi_{n_k}^2$ converge uniformly, hence equicontinuous. But equicontinuity
fails: at $t_n=\tfrac{T}{4n}$ and $s_n=\tfrac{3T}{4n}$ we have $|t_n-s_n|=\tfrac{T}{2n}\to0$ while
$|\dot\gamma_n(t_n)-\dot\gamma_n(s_n)|=1$ for every $n$. Hence
no uniformly convergent subsequence exists. Bounded derivatives give uniform boundedness but not
equicontinuity, and Arzel\`a--Ascoli needs both.

Figure~\ref{fig:aacounter} illustrates the counterexample.

\begin{figure}[htbp]
\centering
\begin{tikzpicture}[>=Stealth, scale=1.15]
  \draw[->] (0,0) -- (3.45,0) node[right, font=\small] {$t$};
  \draw[->] (0,0) -- (0,2.75) node[above, font=\small] {$\dot\gamma_n(t)$};
  \node[below, font=\small] at (3,0) {$T$};
  \draw[black!45, densely dashed] (0,0.5) -- (3,0.5); \node[left, font=\scriptsize, black!60] at (0,0.5) {$\tfrac12$};
  \draw[black!45, densely dashed] (0,1.5) -- (3,1.5); \node[left, font=\scriptsize, black!60] at (0,1.5) {$\tfrac32$};
  \draw[black!45, densely dashed] (0,1) -- (3,1);
  \node[left, font=\scriptsize, black!60] at (0,1) {$1$};
  \draw[figblue!60, thick] plot[domain=0:3, samples=200] (\x,{1+0.5*sin(240*\x)});
  \draw[thick, figred] plot[domain=0:3, samples=500] (\x,{1+0.5*sin(960*\x)});
  \fill[figorange] (0.09375,1.5) circle (1.7pt);
  \fill[figorange] (0.28125,0.5) circle (1.7pt);
  \draw[->, figorange] (0.09375,1.72) -- (0.28125,0.72);
  \node[font=\scriptsize, anchor=west, figorange!85!black] at (0.15,2.5) {$|\dot\gamma_n(t_n)-\dot\gamma_n(s_n)|=1$ while $|t_n-s_n|=\tfrac{T}{2n}\to0$};
  \node[font=\scriptsize, anchor=west] at (0.15,2.1) {{\color{figblue!80}$\dot\gamma_2$}\quad and\quad {\color{figred}$\dot\gamma_8$}};
\end{tikzpicture}
\caption{\textbf{Bounded derivative bands do not give equicontinuity.} The family $\dot\gamma_n(t)=1+\tfrac12\sin(2\pi nt/T)$, drawn for $n=2$ (blue) and $n=8$ (red), stays in the band $[\tfrac12,\tfrac32]$; within every window of width $T/(2n)$ it travels the full gap $1$ (orange dots).}
\label{fig:aacounter}
\end{figure}

\paragraph{A sufficient condition for equicontinuity.} The simplest sufficient
condition is a uniform bound on the second derivative. If $|\ddot\gamma|\le C_2$
and $\dot\gamma\ge c>0$, then wherever the derivative exists,
\[
\Big|\tfrac{d}{dt}\sqrt{\dot\gamma(t)}\Big|
=\frac{|\ddot\gamma(t)|}{2\sqrt{\dot\gamma(t)}}
\le\frac{C_2}{2\sqrt c},
\]
so all the $\psi=\sqrt{\dot\gamma}$ share the Lipschitz constant
$C_2/(2\sqrt c)$, and a family with a common Lipschitz constant is equicontinuous.

\paragraph{(c) Closedness must be imposed.} Uniform boundedness and equicontinuity
only give relative compactness (\ref{T:aa}): the family must also contain its
uniform limits. This holds when $\Gc$ is defined through closed constraints. For
instance, the family
\[
\Psi_{c,C,C_2}
=\Big\{\psi\in C^0([0,T])\ :\ \sqrt c\le\psi\le\sqrt C,\
\operatorname{Lip}(\psi)\le\tfrac{C_2}{2\sqrt c},\ \int_0^T\psi^2=T\Big\}
\]
is uniformly bounded and equicontinuous by construction. It is also closed under
uniform convergence: the two-sided bounds and the Lipschitz constant pass to
pointwise limits, and the integral constraint passes to uniform limits. It is
therefore compact by Arzel\`a--Ascoli (\ref{T:aa}). Every $\psi\in\Psi_{c,C,C_2}$ is the
warp-SRSF of the warp $\gamma(t)=\int_0^t\psi^2$, which satisfies
$c\le\dot\gamma\le C$, so the corresponding $\Gc$ is a concrete family satisfying
Assumption~\ref{ass:warp}(i)--(ii). Note that $\sqrt{\dot\gamma}=\psi\in C^0$
forces $\dot\gamma=\psi^2$ to be continuous, so every $\gamma\in\Gc$ is
automatically $C^1$.

\subsection{What the bi-Lipschitz restriction excludes: an example} \label{app:exline}

The following example shows why the group quotient $\Lt/\G$ only carries a
pseudo-metric, why the shape space treats curves that differ only by a flat piece as one shape, and which warps Assumption~\ref{ass:warp} rules out.

\begin{example}[A line and a flat-then-line]\label{ex:line}
Take $T=1$ and consider the two curves
\[
f_2(t)=t
\qquad\text{and}\qquad
f_1(t)=\begin{cases}0,&t\in[0,\tfrac12],\\ 2t-1,&t\in[\tfrac12,1].\end{cases}
\]
Their SRSFs (positive slopes, so
$q=\operatorname{sign}(\dot f)\sqrt{|\dot f|}=\sqrt{\dot f}$) are
\[
q_2\equiv1,
\qquad
q_1=\begin{cases}0,&t\in[0,\tfrac12),\\ \sqrt2,&t\in(\tfrac12,1].\end{cases}
\]

\smallskip
\noindent\emph{(a) The two curves lie in distinct $\G$-orbits.}
The $\G$-orbit of $q_2$ is
$\{(q_2,\gamma)=1\cdot\sqrt{\dot\gamma}=\sqrt{\dot\gamma}:\gamma\in\G\}$. Every
$\gamma\in\G$ has $\dot\gamma>0$ almost everywhere, so every member of this orbit is
$>0$ almost everywhere. But $q_1$ vanishes on $[0,\tfrac12)$, a set of positive
measure, so $q_1$ is not a member: $q_1$ and $q_2$ lie in different $\G$-orbits, so they are different points of $\Lt/\G$.

\smallskip
\noindent\emph{(b) Yet their elastic distance is zero.}
For $\varepsilon\in(0,2)$, define the piecewise-linear warp
\[
\dot\gamma_\varepsilon=\begin{cases}\varepsilon,&t\in[0,\tfrac12),\\
2-\varepsilon,&t\in(\tfrac12,1],\end{cases}
\qquad
\gamma_\varepsilon(t)=\int_0^t\dot\gamma_\varepsilon .
\]
Each $\gamma_\varepsilon$ is a legitimate element of $\G$ as defined in
Section~\ref{sec:fda}. Now warp $q_2$ by $\gamma_\varepsilon$:
\[
(q_2,\gamma_\varepsilon)=(q_2\circ\gamma_\varepsilon)\sqrt{\dot\gamma_\varepsilon}
=\sqrt{\dot\gamma_\varepsilon}
=\begin{cases}\sqrt\varepsilon,&t\in[0,\tfrac12),\\
\sqrt{2-\varepsilon},&t\in(\tfrac12,1],\end{cases}
\]
because $q_2\equiv1$ is unaffected by composition. Hence
\begin{flalign*}
\norm{q_1-(q_2,\gamma_\varepsilon)}^2
&=\int_0^{1/2}\big|0-\sqrt\varepsilon\big|^2
+\int_{1/2}^1\big|\sqrt2-\sqrt{2-\varepsilon}\big|^2
\why{split the integral at $\tfrac12$}\\
&=\tfrac12\,\varepsilon+\tfrac12\big(\sqrt2-\sqrt{2-\varepsilon}\big)^2 .
\why{integrands are constant on each half}
\end{flalign*}
The second term is $O(\varepsilon^2)$: by the conjugate trick,
$\sqrt2-\sqrt{2-\varepsilon}
=\varepsilon/(\sqrt2+\sqrt{2-\varepsilon})\le\varepsilon/\sqrt2$, so its square is
$\le\varepsilon^2/2$. Hence
\[
\norm{q_1-(q_2,\gamma_\varepsilon)}^2\le\tfrac12\varepsilon+\tfrac12\varepsilon^2
\ \xrightarrow[\varepsilon\downarrow0]{}\ 0 ,
\]
and therefore $\del([q_1],[q_2])=\inf_{\gamma}\norm{q_1-(q_2,\gamma)}=0$ although
the two $\G$-orbits differ: on $\Lt/\G$, the elastic ``distance'' does not separate these two
orbits, so it is only a pseudo-metric.

\smallskip
\noindent\emph{(c) The infimum is attained, but only in the monoid $\Gt$.}
The limit of the $\gamma_\varepsilon$ is the map $\gamma^\star_{q_2\to q_1}=f_1$ itself (it is the optimal warp registering $q_2$ to $q_1$, as shown below):
$\dot\gamma^\star_{q_2\to q_1}=0$ on $[0,\tfrac12)$ and $=2$ on $(\tfrac12,1]$, which is
non-decreasing, absolutely continuous, and surjective onto $[0,1]$
($\gamma^\star_{q_2\to q_1}(1)=1$), hence $\gamma^\star_{q_2\to q_1}\in\Gt$. But it is not invertible
(constant on $[0,\tfrac12]$), hence $\gamma^\star_{q_2\to q_1}\notin\G$. For this warp,
\[
(q_2,\gamma^\star_{q_2\to q_1})=\psi^\star_{q_2\to q_1}
=\begin{cases}0,&t\in[0,\tfrac12),\\ \sqrt2,&t\in(\tfrac12,1],\end{cases}
\ =\ q_1\quad\text{exactly},
\]
so the residual is $0$ and the infimum is attained in $\Gt$. 
In the shape space $\Sw=\Lt/\!\sim$ the two curves have the same shape, since both are re-timings of the line $f_2$: $f_1=f_2\circ\gamma^\star_{q_2\to q_1}$ and $f_2=f_2\circ\gid$. The reverse registration is not attained: for every $\gamma\in\Gt$, $\gamma(t)<\tfrac12$ near $t=0$, so $(q_1,\gamma)$ vanishes there and differs from $q_2$, which is only a limit of the $(q_1,\gamma)$ with $\gamma\in\G$. So $\{(q_1,\gamma):\gamma\in\Gt\}$ is strictly smaller than the orbit closure $[q_1]$. 

\smallskip
\noindent\emph{(d) What the bi-Lipschitz lower bound changes.}
Under Assumption~\ref{ass:warp}(ii), $\dot\gamma\ge c>0$, the degenerate warps are
excluded: every $\gamma_\varepsilon$ with $\varepsilon<c$ is ruled out. The
alignment can then no longer erase the flat piece: for any admissible $\gamma$ 
(with $\dot\gamma\ge c$),
\[
\norm{q_1-(q_2,\gamma)}^2
\ \ge\ \int_0^{1/2}\big|0-\sqrt{\dot\gamma}\big|^2
=\int_0^{1/2}\dot\gamma\ \ge\ \frac c2\ >\ 0 ,
\]
so the residual computed over the restricted family is at least $\sqrt{c/2}$, while the elastic distance is $0$. So, with $q_1$ as the fixed curve, the best re-timing within $\Gc$ does not attain the elastic distance: restricting the search to $\Gc$ overestimates it. Clause (iii) of Assumption~\ref{ass:warp} asks precisely that the search within $\Gc$ attains the elastic distance at the representatives that the construction uses, and therefore excludes laws whose registrations need such pausing warps. 
\end{example}

\begin{remark}[One-sided registration can fail.]\label{rem:onesided}
In Example~\ref{ex:line}, re-timing the line $f_2$ towards $f_1$ using a warp $\gamma \in \Gt$ works exactly: pausing at the start reproduces the flat piece of $f_1$. The reverse fails. Re-timing $f_1$ towards the line $f_2$ can shorten its flat start as much as we like, but never remove it, so the mismatch gets as close to $0$ as we want without reaching it: the best re-timing of $f_1$ towards $f_2$ does not exist, although the elastic distance is $0$. Smoothness does not help, the flat piece is the cause: the same happens with the $C^1$ curves $f_1(t)=\max(0,t-\tfrac12)^2$ and $f_2(t)=t^2/4$. 
\end{remark}

Figure~\ref{fig:exampleA1} illustrates the example.

\begin{figure}[htbp]
\centering
\begin{tikzpicture}[>=Stealth]
\begin{scope}[xscale=3.4, yscale=2.6]
  \node[font=\footnotesize, black!70] at (0.5,1.28) {\textbf{(a) two different $\G$-orbits}};
  \draw[->] (0,0) -- (1.13,0) node[right, font=\scriptsize] {$t$};
  \draw[->] (0,0) -- (0,1.13);
  \draw[black!40, densely dotted] (0.5,0) -- (0.5,0.02);
  \node[below, font=\scriptsize] at (0.5,0) {$\tfrac T2$};
  \node[below, font=\scriptsize] at (1,0) {$T$};
  \node[left, font=\scriptsize] at (0,1) {$1$};
  \draw[semithick, figblue] (0,0) -- (1,1);
  \node[font=\scriptsize, figblue, rotate=37, anchor=south] at (0.34,0.36) {$f_2(t)=t$};
  \draw[very thick, figred] (0,0.006) -- (0.5,0.006) -- (1,1);
  \node[font=\scriptsize, figred, anchor=north west] at (0.80,0.56) {$f_1$};
\end{scope}
\begin{scope}[xshift=5.3cm, xscale=3.4, yscale=1.62]
  \node[font=\footnotesize, black!70] at (0.5,2.02) {\textbf{(b) yet $\del([q_1],[q_2])=0$}};
  \draw[->] (0,0) -- (1.13,0) node[right, font=\scriptsize] {$t$};
  \draw[->] (0,0) -- (0,1.70);
  \node[below, font=\scriptsize] at (0.5,0) {$\tfrac T2$};
  \node[below, font=\scriptsize] at (1,0) {$T$};
  \node[left, font=\scriptsize] at (0,1) {$1$};
  \node[left, font=\scriptsize] at (0,1.414) {$\sqrt2$};
  \foreach \lo/\hi/\op in {0.707/1.225/45, 0.5/1.323/70, 0.316/1.378/95}{
    \draw[figorange!\op, semithick] (0,\lo) -- (0.5,\lo);
    \draw[figorange!\op, semithick] (0.5,\hi) -- (1,\hi);
    \draw[figorange!\op, densely dotted, thin] (0.5,\lo) -- (0.5,\hi);}
  \draw[thick, figblue] (0,1) -- (1,1);
  \draw[very thick, densely dashed, figred] (0,0.012) -- (0.5,0.012);
  \draw[very thick, densely dashed, figred] (0.5,1.414) -- (1,1.414);
  \node[font=\scriptsize, figblue, anchor=south east] at (0.17,1.01) {$q_2$};
  \node[font=\scriptsize, figred, anchor=south] at (0.78,1.43) {$q_1$};
  \node[font=\scriptsize, figorange!90!black] at (0.75,0.86) {$(q_2,\gamma_\varepsilon)$};
  \draw[->, black!55] (0.27,0.66) -- (0.27,0.36);
  \node[font=\scriptsize, black!55, anchor=west] at (0.30,0.51) {$\varepsilon\downarrow0$};
  \draw[->, black!55] (0.88,1.25) -- (0.88,1.395);
\end{scope}
\begin{scope}[xshift=10.6cm, xscale=3.4, yscale=2.6]
  \node[font=\footnotesize, black!70] at (0.5,1.28) {\textbf{(c) the warps degenerate: $\gamma^\star_{q_2\to q_1}\in\Gt\setminus\G$}};
  \draw[->] (0,0) -- (1.13,0) node[right, font=\scriptsize] {$t$};
  \draw[->] (0,0) -- (0,1.13);
  \node[below, font=\scriptsize] at (0.5,0) {$\tfrac T2$};
  \node[below, font=\scriptsize] at (1,0) {$T$};
  \node[left, font=\scriptsize] at (0,1) {$1$};
  \draw[semithick, figblue] (0,0) -- (1,1);
  \node[font=\scriptsize, figblue, rotate=37, anchor=south] at (0.55,0.585) {$\gid$};
  \foreach \m/\op in {0.25/45, 0.125/70, 0.05/95}{
    \draw[figorange!\op, semithick] (0,0) -- (0.5,\m) -- (1,1);}
  \node[font=\scriptsize, figorange!90!black] at (0.33,0.215) {$\gamma_\varepsilon$};
  \draw[very thick, densely dashed, figred] (0,0.006) -- (0.5,0.006) -- (1,1);
  \node[font=\scriptsize, figred, anchor=west] at (1.03,0.86) {$\gamma^\star_{q_2\to q_1}$};
  \draw[figred!40] (1.02,0.86) -- (0.94,0.876);
  \draw[->, black!55] (0.62,0.42) -- (0.62,0.26);
  \node[font=\scriptsize, black!55, anchor=east] at (0.60,0.34) {$\varepsilon\downarrow0$};
\end{scope}
\end{tikzpicture}
\caption{\textbf{Example~\ref{ex:line}: zero elastic distance between two different $\G$-orbits.}
\emph{(a)} The curves $f_2(t)=t$ and the flat-then-line $f_1$.
\emph{(b)} Their SRSFs: $q_1$ (dashed), jumping from $0$ to $\sqrt2$ at $T/2$, and $q_2\equiv1$, with the warped $(q_2,\gamma_\varepsilon)$ for small $\varepsilon$.
\emph{(c)} The aligning warps $\gamma_\varepsilon$ and their non-invertible limit $\gamma^\star_{q_2\to q_1}$ (dashed).}
\label{fig:exampleA1}
\end{figure}

\subsection{\texorpdfstring{Flat pieces and what clause (iii) of Assumption~\ref{ass:warp} requires}{Flat pieces and what clause (iii) of Assumption 2 requires}} \label{app:flat}
Shapes and the elastic distance are defined for all curves (Remark~\ref{rem:notquot}). Flat pieces, intervals on which a curve is constant, only cause trouble in the registrations.

\emph{Non-uniqueness.} If a data curve is constant on an interval, a re-timing can cross that interval at any speed without changing the mismatch, so the best re-timing, which Step~2 averages, is not unique. Clause (iii-b) therefore excludes laws that give positive probability to curves with a flat piece. For the same reason, the reference is taken nowhere flat.

\emph{Moving against the reference.} If the reference goes up while the re-timed data curve goes down, crossing that stretch faster and then pausing reduces the mismatch. For $T=1$, $f_1(t)=t$ and $f_2(t)=-t$, $\norm{q_1-(q_2,\gamma)}^2=2+2\int_0^1\sqrt{\dot\gamma}$ tends to $2$ without reaching it, so registering $f_2$ to $f_1$ has no solution. As the re-timings of $\Gc$ never pause, (iii-b) requires the data curves to rise and fall in the same order as the reference.

\emph{Consequence.} Clause (iii-b) holds when the data curves are re-timings $f\circ\gamma_Z$ of the nowhere-flat reference curve $f$, up to a vertical shift and a positive scale, by invertible warps with $\gamma_Z^{-1}\in\Gc$, but fails for mixtures of patterns or noisy curves, as in our simulations. The computation still runs, but Theorem~\ref{thm:wellposed} no longer guarantees its target.

\subsection{The warp sphere and its feasible set} \label{app:sphere}

We now set up the space on which the Step-2 mean lives.

\paragraph{Why warps live on a sphere.} A warp $\gamma\in\Gt$ is encoded by its
warp-SRSF $\psi_\gamma=\sqrt{\dot\gamma}$. Because $\gamma$ is absolutely continuous
with $\gamma(0)=0$ and $\gamma(T)=T$ (\ref{T:ftc}),
\[
\norm{\psi_\gamma}^2=\int_0^T\big(\sqrt{\dot\gamma}\big)^2
=\int_0^T\dot\gamma=\gamma(T)-\gamma(0)=T:
\]
\emph{every} warp-SRSF has the same $\Lt$-norm $\sqrt T$. The warp-SRSFs therefore all
lie on the sphere $\Sphere(\sqrt T)=\{\psi\in\Lt:\norm\psi=\sqrt T\}$. Distances on a
sphere are measured along geodesics (great circles). For $\psi_1,\psi_2\in\Sphere(\sqrt T)$ the
geodesic distance is
\[
\dG(\psi_1,\psi_2)=\sqrt T\,\arccos\!\Big(\frac{\inner{\psi_1}{\psi_2}}{T}\Big),
\]
which is well defined by the Cauchy--Schwarz inequality. The sphere has
constant sectional curvature $1/T$ (the curvature of a sphere of radius $r$ is
$1/r^2$, here $r=\sqrt T$) and injectivity radius $\pi\sqrt T$, the largest arc
length below which geodesics remain minimizing: a great-circle arc is minimizing
until it reaches the antipode of its starting point, at arc length
$\pi\times(\text{radius})=\pi\sqrt T$. Minimizing geodesics between non-antipodal points
(angle $\vartheta=\dG(\psi_1,\psi_2)/\sqrt T$, $\vartheta<\pi$) are unique and given by the great-circle arc
\begin{equation}\label{eq:slerp}
\psi(s)=\frac{\sin\big((1-s)\vartheta\big)}{\sin\vartheta}\,\psi_1
+\frac{\sin(s\vartheta)}{\sin\vartheta}\,\psi_2,
\qquad s\in[0,1],
\end{equation}
which stays on the sphere and joins $\psi_1$ to $\psi_2$ (at $s=0,1$ the
coefficients are $(1,0)$ and $(0,1)$). This is the classical geometry of the Hilbert
sphere, used for warp functions in \citet{srivastava_klassen_2016}. The two facts
we need about geodesics, convexity of the squared distance along them and convexity
of small balls, are proved below
(Lemmas~\ref{lem:sphereconv} and~\ref{lem:ballconv}).

\paragraph{The feasible set is a quarter of the sphere.} Step~2 involves solving a minimization problem over warps,
that is over the set $\Psi(\Gt)$ of warp-SRSFs (the feasible set), not over the
whole sphere. This set has an explicit description.

\begin{lemma}[The feasible set]\label{lem:feasible}
$\Psi(\Gt)=\{\psi\in\Lt:\ \psi\ge0\ \text{a.e.},\ \norm\psi^2=T\}$, and this set is
closed in $\Lt$. Moreover the map $\gamma\mapsto\psi_\gamma$ is a bijection from
$\Gt$ onto $\Psi(\Gt)$, with inverse $\psi\mapsto\gamma(t)=\int_0^t\psi^2$.
\end{lemma}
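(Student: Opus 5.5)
We prove the two inclusions separately, then closedness, then the bijection, all from the fundamental theorem of calculus for absolutely continuous functions (\ref{T:ftc}).

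\emph{Inclusion $\Psi(\Gt)\subseteq\{\psi\ge0,\ \norm\psi^2=T\}$.} For $\gamma\in\Gt$, $\gamma$ is non-decreasing and absolutely continuous, so $\dot\gamma\ge0$ a.e. Hence $\psi_\gamma=\sqrt{\dot\gamma}$ is a.e.\ defined and nonnegative. Its squared norm is $\int_0^T\dot\gamma=\gamma(T)-\gamma(0)=T$, exactly as computed in the paragraph above the lemma.

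\emph{Reverse inclusion.} Given $\psi\in\Lt$ with $\psi\ge0$ a.e.\ and $\norm\psi^2=T$, set $\gamma(t)=\int_0^t\psi^2$. Since $\psi^2\in L^1$, $\gamma$ is absolutely continuous by (\ref{T:ftc}), with $\dot\gamma=\psi^2$ a.e. It is non-decreasing because $\psi^2\ge0$. It satisfies $\gamma(0)=0$ and $\gamma(T)=T$, so by continuity and monotonicity it is a surjection onto $[0,T]$, hence $\gamma\in\Gt$. Moreover $\psi_\gamma=\sqrt{\psi^2}=|\psi|=\psi$ a.e., using $\psi\ge0$.

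\emph{Bijection.} The construction above shows that $\psi\mapsto\int_0^\cdot\psi^2$ is a right inverse of $\gamma\mapsto\psi_\gamma$. For a left inverse, take $\gamma\in\Gt$. Then $\int_0^t\psi_\gamma^2=\int_0^t\dot\gamma=\gamma(t)$, again by (\ref{T:ftc}) and $\gamma(0)=0$. This gives injectivity: two warps with the same SRSF (equal a.e.) have equal integrals and are therefore equal. The only subtlety is that elements of $\Lt$ are a.e.\ classes, so all identities are stated a.e.

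\emph{Closedness.} Let $\psi_n\to\psi$ in $\Lt$ with each $\psi_n$ in the set. Norm continuity gives $\norm\psi^2=T$. Passing to a subsequence converging a.e. preserves $\psi\ge0$ a.e. Alternatively, $\norm{\psi^-}\le\norm{\psi-\psi_n}\to0$, since $\psi_n\ge0$ implies $|\psi^-|\le|\psi-\psi_n|$ pointwise.

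No step is a real obstacle. The point needing the most care is the surjectivity of $\gamma\mapsto\psi_\gamma$, specifically that the constructed $\gamma$ lies in $\Gt$ (non-decreasing surjection) rather than $\G$: $\psi$ may vanish on sets of positive measure, which produces pauses. That is exactly why the feasible set is described through $\Gt$.
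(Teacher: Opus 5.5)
Your proposal is correct and follows the paper's proof essentially step for step: both inclusions via the fundamental theorem of calculus, closedness via norm continuity plus an a.e.-convergent subsequence, and the bijection via the mutually inverse maps $\gamma\mapsto\sqrt{\dot\gamma}$ and $\psi\mapsto\int_0^{\cdot}\psi^2$. Your alternative closedness argument, $\norm{\psi^-}\le\norm{\psi-\psi_n}\to0$, is also valid and avoids extracting a subsequence.
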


\begin{proofenv}[Proof.]
The representation is known (references in the proof). Our only addition is the
closedness check at the end, which the existence argument of
Appendix~\ref{app:exist} uses. We must show three facts: $\Psi(\Gt)$ equals the
non-negative part of the sphere (two inclusions), this set is closed, and
$\gamma\mapsto\psi_\gamma$ is a bijection.

\smallskip\noindent
\emph{($\subseteq$).} If $\gamma\in\Gt$ then $\psi_\gamma=\sqrt{\dot\gamma}\ge0$
pointwise (a square root), and $\norm{\psi_\gamma}^2=T$ was computed above.

\smallskip\noindent
\emph{($\supseteq$).} Let $\psi\ge0$ a.e.\ with $\int_0^T\psi^2=T$, and set
$\gamma(t)=\int_0^t\psi(s)^2\,ds$. Since $\psi^2\in L^1$ (its integral is $T$),
$\gamma$ is absolutely continuous with $\dot\gamma=\psi^2\ge0$ a.e.\ (\ref{T:ftc}).
It is non-decreasing (its derivative is non-negative), with $\gamma(0)=0$ and
$\gamma(T)=T$ by construction. A continuous non-decreasing function from $0$ to
$T$ takes every intermediate value, so $\gamma$ is onto $[0,T]$. Hence $\gamma\in\Gt$
and $\psi_\gamma=\sqrt{\psi^2}=|\psi|=\psi$ a.e.

\smallskip\noindent
\emph{(Closedness.)} Suppose $\psi_n\to\psi$ in $\Lt$ with $\psi_n\ge0$ a.e.\ and
$\norm{\psi_n}^2=T$. Norm convergence gives $\norm\psi^2=T$ (the norm is continuous).
$\Lt$-convergence yields a subsequence converging a.e., and a.e.\ limits of
non-negative functions are non-negative a.e. So $\psi$ is in the set.

\smallskip\noindent
\emph{(Bijection.)} The maps $\Phi:\gamma\mapsto\sqrt{\dot\gamma}$ and
$\Xi:\psi\mapsto\int_0^{\cdot}\psi^2$ are mutually inverse:
\[
  \Xi\big(\Phi(\gamma)\big)=\int_0^{\cdot}\dot\gamma{=}\gamma 
  \quad(\gamma(0)=0),
  \qquad
  \Phi\big(\Xi(\psi)\big)=\sqrt{\psi^2}=\psi\ \text{a.e.}\ (\psi\ge0).
\]
Hence $\Phi$ is a bijection with inverse $\Xi$. \hfill$\square$
\end{proofenv}

\begin{remark}[The feasible set has diameter $\pi\sqrt T/2$.]\label{rem:quarter}
For feasible $\psi_1,\psi_2\ge0$, the inner product
$\inner{\psi_1}{\psi_2}=\int\psi_1\psi_2\ge0$, so the angle between any two
warp-SRSFs is at most $\pi/2$. The feasible set therefore sits inside a quarter of
the sphere, of diameter $\pi\sqrt T/2$, and does not include antipodal warps. In
particular, minimizing geodesics between distinct feasible points exist and are
unique. Assumption~\ref{ass:conc}(4b), which confines the optimal warps to a ball
of radius $\pi\sqrt T/8$, is thus a concentration requirement on a set of total
extent $\pi\sqrt T/2$.
\end{remark}

\begin{remark}[How the bi-Lipschitz family sits inside the feasible set.]\label{rem:feasvsG}
Lemma~\ref{lem:feasible} describes the feasible set of the full monoid $\Gt$, over
which Step~2 is posed. The compact family $\Gc$ of Assumption~\ref{ass:warp}
(constructed in Appendix~\ref{app:aa}) is the smaller set of warps the method
actually uses. For $\gamma\in\Gc$ the bi-Lipschitz bounds give
$\sqrt c\le\psi_\gamma\le\sqrt C$ pointwise, so $\Psi(\Gc)$ is a compact subset of
$\Psi(\Gt)$, bounded away from the warp-SRSFs that vanish on a set of positive
measure (see Example~\ref{ex:line}). The pointwise and compactness statements proved below for $\Psi(\Gt)$ therefore also apply on $\Psi(\Gc)$. Geodesic convexity does not: the great-circle arc between two warp-SRSFs that both equal $\sqrt C$ at some time exceeds $\sqrt C$ there.  The bounds also force concentration: for $\gamma\in\Gc$,
$\inner{\psi_\gamma}{\psi_{\gid}}=\int_0^T\sqrt{\dot\gamma}\ge\sqrt c\,T$, so
$\dG(\gamma,\gid)\le\sqrt T\arccos(\sqrt c\,)$. A tight enough speed band
($\sqrt c>\cos(\pi/8)$, that is $c\gtrsim0.854$) therefore confines all warps in
$\Gc$ to a geodesic ball of radius $\pi\sqrt T/8$ around the identity, which is one
way to obtain the concentration required by Assumption~\ref{ass:conc}(4b).
\end{remark}

\subsection{Two convexity lemmas on the sphere} \label{app:convexity}

The key ingredient of the Step-2 existence and uniqueness theory is one fact: the
squared distance $\psi\mapsto\dG^2(\psi,y)$ is convex along geodesics, as long as
one stays within a quarter-turn ($\pi\sqrt T/2$) of the reference point $y$. On an
abstract manifold, proving this requires Riemannian tools. The warp sphere is explicit, with
great circles in $\Lt$ as geodesics, so we can prove the two statements we need
directly from inner products and one derivative: convexity of $\dG^2(\cdot,y)$
along geodesics within a quarter-turn of $y$ (Lemma~\ref{lem:sphereconv}), and
geodesic convexity of small geodesic balls (Lemma~\ref{lem:ballconv}). The radius
$\pi\sqrt T/8$ of Assumption~\ref{ass:conc}(4b) is chosen so that every distance
arising in the existence proof stays inside this quarter-turn: with
$r_0<\pi\sqrt T/8$ the radius of the ball on which the optimal-warp law
concentrates, the distances in play are bounded by
$3r_0<3\pi\sqrt T/8<\pi\sqrt T/2$ (see Appendix~\ref{app:exist}).

Figure~\ref{fig:sphereconv} shows the objects of the lemma.

\begin{figure}[htbp]
\centering
\begin{tikzpicture}[>=Stealth, scale=1.0]
  \draw[black!45] (0,0) circle (2.6);
  \node[font=\small, black!70, anchor=east] at (-1.95,2.0) {$S_R\subset H$};
  \fill (0,0) circle (1.4pt); \node[font=\scriptsize, anchor=north east] at (-0.03,-0.06) {$0$};
  \draw[->, black!45] (0,0) -- (215:2.6) node[midway, above, sloped, font=\footnotesize] {$R$};
  \draw[black!45] plot[domain=180:360, samples=60] ({2.6*cos(\x)},{0.75*sin(\x)});
  \draw[black!30, densely dashed] plot[domain=0:180, samples=60] ({2.6*cos(\x)},{0.75*sin(\x)});
  \draw[black!35, densely dashed] (0,-2.6) -- (0,2.6);
  \fill[figblue] (0,2.6) circle (1.7pt); \node[above, font=\small, figblue] at (0,2.62) {$y$};
  \draw[figblue!60, densely dashed] plot[domain=180:360, samples=60] ({1.992*cos(\x)},{1.671+0.574*sin(\x)});
  \draw[figblue!30, densely dashed] plot[domain=0:180, samples=60] ({1.992*cos(\x)},{1.671+0.574*sin(\x)});
  \node[font=\scriptsize, figblue!85, anchor=west] at (2.8,2.02) {$\{\dG(\cdot,y)=R\,\rho(s)\}$};
  \draw[figblue!40] (2.77,2.00) -- (1.97,1.70);
  \draw[figblue!35, dashed] plot[domain=180:360, samples=60] ({2.472*cos(\x)},{0.803+0.713*sin(\x)});
  \node[font=\scriptsize, figblue!70, anchor=east] at (-2.56,0.85) {$\{\rho=\rho_{\max}\}$};
  \draw[black!40] plot[domain=0:180, samples=80] ({1.116*sin(\x)},{2.6*cos(\x)});
  \draw[very thick, figblue] plot[domain=0:63.16, samples=40] ({1.116*sin(\x)},{2.6*cos(\x)});
  \node[font=\footnotesize, figblue, anchor=west] at (0.90,2.20) {$R\,\rho(s)$};
  \node[font=\scriptsize, figblue!85, anchor=west] at (2.8,2.52) {$g(s)=\dG^{2}(\psi(s),y)$};
  \draw[figblue!40] (2.77,2.50) -- (0.74,2.34);
  \draw[black!40] (0,0) -- (0.996,1.174);
  \draw[figblue] (0,0.52) arc[start angle=90, end angle=49.7, radius=0.52];
  \node[font=\footnotesize, figblue, fill=white, inner sep=1pt] at (68:0.58) {$\rho(s)$};
  \draw[figorange!60, densely dashed] plot[domain=60:103, samples=40]
    ({0.899*2.6*cos(\x)+0.438*1.51*sin(\x)},{-0.438*2.6*cos(\x)+0.899*1.51*sin(\x)});
  \fill[figorange] (0.996,1.174) circle (1.8pt);
  \node[font=\small, figorange, anchor=north east, fill=white, inner sep=1pt] at (0.97,1.06) {$\psi(s)$};
  \draw[->, very thick, figorange] (0.996,1.174) -- (1.775,0.787);
  \node[font=\footnotesize, figorange, anchor=west] at (2.8,0.95) {$\psi'(s)$: unit speed};
  \draw[figorange!40] (2.77,0.94) -- (1.82,0.80);
  \draw[->, densely dashed, semithick, figred] (0.996,1.174) -- (1.176,0.343);
  \node[font=\scriptsize, figred, anchor=west] at (2.8,0.42) {radial part: signed speed $R\,\rho'(s)$};
  \draw[figred!40] (2.77,0.42) -- (1.22,0.36);
  \draw[->, densely dashed, semithick, figgreen] (0.996,1.174) -- (1.835,1.314);
  \node[font=\scriptsize, figgreen, anchor=west] at (2.8,1.48) {transverse part: speed $\sqrt{1-(R\rho'(s))^{2}}$};
  \draw[figgreen!40] (2.77,1.47) -- (1.87,1.33);
\end{tikzpicture}
\caption{\textbf{The geometry of Lemma~\ref{lem:sphereconv}}, in the lemma's
notation. The point $\psi(s)$ travels at unit speed along a great circle
(orange, dashed) on the sphere $S_R\subset H$; the geodesic to the fixed
target $y$ (bold blue) runs along a meridian, of length
$\dG(\psi(s),y)=R\,\rho(s)$ with $\rho(s)$ the angle at the centre (blue
arc). The velocity $\psi'(s)$ (orange arrow) splits into a radial part of
signed speed $R\rho'(s)$ (red) and a transverse part of speed
$\sqrt{1-(R\rho'(s))^{2}}$ (green); the outer dashed circle is the cap
boundary $\{\rho=\rho_{\max}\}$.}
\label{fig:sphereconv}
\end{figure}

\begin{lemma}[Convexity of the squared distance along great circles]\label{lem:sphereconv}
Let $H$ be a Hilbert space, $R>0$, and $S_R=\{\psi\in H:\norm\psi=R\}$. Fix
$y\in S_R$ and let $s\mapsto\psi(s)$ be a unit-speed great-circle arc on $S_R$
(arc-length parametrisation). Write
\[
\rho(s)=\frac{\dG(\psi(s),y)}{R}\in[0,\pi]
\]
for the \emph{angular} distance from $\psi(s)$ to $y$, and $g(s)=\dG^2(\psi(s),y)$
for the squared geodesic distance. Then, at every $s$ where $\rho(s)\in(0,\pi)$, $g$
is twice differentiable with
\begin{equation}\label{eq:gsecond}
g''(s)\;=\;2\Big[\underbrace{\big(R\,\rho'(s)\big)^2}_{\text{radial part}}
\;+\;\Big(1-\big(R\,\rho'(s)\big)^2\Big)\,\rho(s)\cot\rho(s)\Big], 
\qquad \big(R\,\rho'(s)\big)^2\le1 ,
\end{equation}
a convex combination (weights $(R\rho')^2$ and $1-(R\rho')^2$) of the two values $1$
and $\rho\cot\rho$. Consequently, if the whole arc keeps angular distance
$\rho(s)\le\rho_{\max}$ for some $\rho_{\max}<\tfrac\pi2$, then
\[
g''(s)\ \ge\ 2\,\rho_{\max}\cot\rho_{\max}\ >\ 0
\qquad\text{along the arc,}
\]
and the same lower bound holds (in the classical sense, $g$ being smooth there) at
isolated points where $\rho=0$.
\end{lemma}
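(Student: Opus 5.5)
We reduce everything to a two-dimensional computation inside the plane of the great circle, plus one orthogonal direction. Write the arc explicitly as $\psi(s)=\cos(s/R)\,u+\sin(s/R)\,v$ with $u,v\in S_R$ orthogonal, so $\norm{\psi'(s)}=1$, $\psi''(s)=-\psi(s)/R^2$. Set $c(s)=\inner{\psi(s)}{y}/R^2=\cos\rho(s)$. Then $c$ is smooth (a trigonometric polynomial in $s$) with $c''=-c/R^2$, and $\rho=\arccos c$ is smooth wherever $c\in(-1,1)$, i.e. wherever $\rho\in(0,\pi)$. This gives twice differentiability of $g=R^2\rho^2$ at those points.

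Next we differentiate. From $c=\cos\rho$ we get $c'=-\sin\rho\,\rho'$ and $c''=-\cos\rho\,(\rho')^2-\sin\rho\,\rho''$. Combining with $c''=-\cos\rho/R^2$ gives
\[
\rho''=\frac{\cos\rho}{\sin\rho}\Big(\frac1{R^2}-(\rho')^2\Big).
\]
Then $g''=2R^2\big((\rho')^2+\rho\rho''\big)=2\big[(R\rho')^2+(1-(R\rho')^2)\rho\cot\rho\big]$, which is \eqref{eq:gsecond}.

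The bound $(R\rho')^2\le1$ is the one step needing care. We would obtain it from $R^2c'=\inner{\psi'}{y}$. Since $\psi'\perp\psi$, only the component of $y$ orthogonal to $\psi$ contributes. That component has norm $\sqrt{R^2-R^4c^2/R^2}=R\sin\rho$, so Cauchy--Schwarz gives $|R^2c'|\le R\sin\rho$. Hence $R|\rho'|=R|c'|/\sin\rho\le1$. Equivalently, $\rho$ is $1/R$-Lipschitz by the triangle inequality for $\dG$, but the explicit computation is cleaner. This also justifies the radial/transverse picture of Figure~\ref{fig:sphereconv}.

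For the consequence, note that $x\mapsto x\cot x$ is decreasing on $(0,\pi/2)$ with values in $(0,1)$. So if $\rho\le\rho_{\max}<\pi/2$, both values $1$ and $\rho\cot\rho$ are at least $\rho_{\max}\cot\rho_{\max}$, and so is any convex combination. This yields $g''\ge2\rho_{\max}\cot\rho_{\max}>0$.

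The remaining obstacle is the points where $\rho=0$, i.e. $\psi(s_0)=y$, where $\arccos$ is not differentiable. There we would argue directly. Near $s_0$ the arc passes through $y$, so the component of the arc's plane perpendicular to $y$ vanishes and $y$ lies in the plane of the circle. Then $\rho(s)=|s-s_0|/R$ exactly, so $g(s)=(s-s_0)^2$ is smooth with $g''=2\ge2\rho_{\max}\cot\rho_{\max}$. This is consistent with the limit of \eqref{eq:gsecond}, since $x\cot x\to1$.
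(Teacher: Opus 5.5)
Your proposal is correct and follows essentially the paper's proof: the explicit great-circle parametrisation giving $c''=-c/R^2$, differentiation of $\rho=\arccos c$ to get $\rho''=\cot\rho\,(R^{-2}-(\rho')^2)$, the monotonicity of $x\cot x$, and the direct computation $g(s)=(s-s_0)^2$ at points where the arc meets $y$. The one difference is cosmetic, and it concerns the speed bound $(R\rho')^2\le1$. You obtain it pointwise in $s$, by Cauchy--Schwarz against the component of $y$ orthogonal to $\psi(s)$. The paper instead applies Bessel's inequality once, to the orthonormal pair $\{\psi(0)/R,\psi'(0)\}$, and propagates it through the conserved quantity $c^2+(Rc')^2$; this is the same inequality in a different form. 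At $\rho=0$, your claim is most cleanly justified by noting that $\psi'(s_0)\perp\psi(s_0)=y$, which gives $c(s)=\cos((s-s_0)/R)$ directly.
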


\begin{proofenv}[Proof.]
On finite-dimensional spheres the statement is classical and follows from
Jacobi-field comparison \citep{karcher_riemannian_1977,afsari_riemannian_2011}. The
proof below is dimension-free, so the conclusion also holds on the
infinite-dimensional warp sphere, where the classical references do not apply.

\emph{What we must show.} The formula \eqref{eq:gsecond} for $g''$, the speed bound
$(R\rho')^2\le1$, and the lower bound $g''\ge2\rho_{\max}\cot\rho_{\max}$. The whole
proof takes place on $S_R$, with the lemma's own objects: the arc $\psi(s)$, the
target $y$, the angular distance $\rho(s)$, and $g(s)=\dG^{2}(\psi(s),y)=(R\rho(s))^2$.

\emph{Notation for the proof:}
\begin{itemize}[leftmargin=1.6em,itemsep=2pt,topsep=2pt]
\item $u(s):=\inner{\psi(s)}{y}/R^{2}$ is the cosine of the angle between the
moving point and the target: by the definition of the spherical distance,
$\dG(\psi(s),y)=R\arccos\!\big(\inner{\psi(s)}{y}/R^{2}\big)$, so
\[
u(s)=\cos\rho(s):
\]
$u$ carries the same information as $\rho$, in a form that is easy to differentiate.
\item $\alpha:=\inner{\psi(0)}{y}/R^{2}$ and $\beta:=\inner{\psi'(0)}{y}/R$ are two
constants fixed by the initial data: $\alpha=u(0)=\cos\rho(0)$ is the
starting cosine, and $\beta$ measures how much the initial velocity points towards
the target.
\end{itemize}
Primes always denote $d/ds$ (arc length on $S_R$). We work at parameters where
$\rho(s)\in(0,\pi)$, that is $u(s)\in(-1,1)$. The boundary case $\rho=0$ is
treated at the very end. The proof has four steps. (A)~Write the great circle explicitly,
which turns $u$ into a single-frequency oscillation with coefficients
$\alpha,\beta$. (B)~Prove $\alpha^2+\beta^2\le1$ and deduce the speed bound.
(C)~Differentiate $\rho=\arccos u$ twice. (D)~Assemble.

\smallskip\noindent
\emph{Step A (the great circle, explicitly).} On the sphere the position and the
velocity are always orthogonal, $\inner{\psi}{\psi'}=\tfrac12(\norm{\psi}^2)'\equiv0$.
So $\psi(0)$ (norm $R$) and $\psi'(0)$ (norm $1$, unit speed) are orthogonal, and the
arc stays in the plane they span:
\[
\psi(s)=\cos\!\big(\tfrac sR\big)\,\psi(0)+R\,\sin\!\big(\tfrac sR\big)\,\psi'(0).
\]
Two checks confirm this is the right parametrisation. Expanding with
$\inner{\psi(0)}{\psi'(0)}=0$,
\[
\norm{\psi(s)}^{2}
=R^{2}\cos^{2}\!\big(\tfrac sR\big)+R^{2}\sin^{2}\!\big(\tfrac sR\big)=R^{2}
\qquad\text{(the arc stays on $S_R$)},
\]
and differentiating,
$\psi'(s)=-\sin(\tfrac sR)\,\tfrac{\psi(0)}{R}+\cos(\tfrac sR)\,\psi'(0)$, whose norm
is $1$ by the Pythagorean theorem, $\psi(0)$ and $\psi'(0)$ being orthonormal: $\norm{\psi'(s)}^2=\sin^2(\tfrac sR)+\cos^2(\tfrac sR)=1$. Now take the inner product of $\psi(s)$
with $y/R^{2}$: by the definitions of $u,\alpha,\beta$,
\[
u(s)=\alpha\cos\!\big(\tfrac sR\big)+\beta\sin\!\big(\tfrac sR\big),
\qquad\text{hence}\qquad
u''=-\tfrac1{R^{2}}\,u .
\]

\smallskip\noindent
\emph{Step B (the speed bound $(R\rho')^2\le1$).}
\emph{First, $\alpha^2+\beta^2\le1$.} The vectors $\psi(0)/R$ and $\psi'(0)$ are
orthonormal, and $\alpha,\beta$ are the coordinates of the unit vector $y/R$ in
the orthonormal system $\{\psi(0)/R,\psi'(0)\}$:
\[
\alpha=\inner{\psi(0)/R}{\,y/R},\qquad \beta=\inner{\psi'(0)}{\,y/R}.
\]
Bessel's inequality then gives $\alpha^{2}+\beta^{2}\le\norm{y/R}^{2}=1$.
Now, differentiate $u$ and multiply by $R$:
$Ru'(s)=-\alpha\sin(\tfrac sR)+\beta\cos(\tfrac sR)$, so
\begin{flalign}\label{eq:bessel}
u(s)^{2}+\big(Ru'(s)\big)^{2}
&=\alpha^{2}+\beta^{2}\ \le\ 1
\why{expand; the cross terms cancel; Bessel's inequality}
\end{flalign}
\emph{Finally, pass from $u$ to $\rho$.} From $u=\cos\rho$ with $\rho\in(0,\pi)$,
differentiating gives 
\begin{equation}\label{eq:rhoidentity}
u'=-\rho'\sin\rho.
\end{equation}
Therefore,
\begin{equation}\label{eq:radialspeed} 
R\rho'(s)=-\frac{Ru'(s)}{\sin\rho(s)}
\qquad\text{satisfies}\qquad
\big(R\rho'(s)\big)^{2}=\frac{(Ru')^{2}}{\sin^{2}\rho}
\le\frac{1-u^{2}}{\sin^{2}\rho}=1 ,
\end{equation}
using $(Ru')^{2}\le1-u^{2}$ \eqref{eq:bessel} and $1-u^{2}=\sin^{2}\rho$. The
quantity $R\rho'$ is the radial speed, the part of the unit velocity that moves
towards or away from $y$. By \eqref{eq:radialspeed} it cannot exceed the total
speed $1$.

\smallskip\noindent
\emph{Step C (second derivative of the angle).} Differentiate the identity
$\rho'\sin\rho=-u'$ \eqref{eq:rhoidentity} once more:
\begin{flalign*}
\rho''\sin\rho+(\rho')^{2}\cos\rho&=-u''=\tfrac1{R^{2}}\,u=\tfrac1{R^{2}}\cos\rho
\why{product rule; $u''=-u/R^{2}$; $u=\cos\rho$}
\end{flalign*}
\vspace{-2em}
\begin{flalign}\label{eq:step3}
\text{which implies}\qquad
\rho''&=\cot\rho\,\Big(\tfrac1{R^{2}}-(\rho')^{2}\Big).
\why{divide by $\sin\rho>0$}
\end{flalign}

\smallskip\noindent
\emph{Step D (assemble).} With $g=R^{2}\rho^{2}$:
\begin{flalign*}
g''=\big(2R^{2}\rho\rho'\big)'
&=2R^{2}(\rho')^{2}+2R^{2}\rho\rho''
\why{product rule}\\
&=2\big(R\rho'\big)^{2}+2\,\rho\cot\rho\,\Big(1-\big(R\rho'\big)^{2}\Big),
\why{\eqref{eq:step3}; distribute $R^{2}$}
\end{flalign*}
which is exactly \eqref{eq:gsecond}: a convex combination of the two values $1$ and
$\rho\cot\rho$, with weights $(R\rho')^{2}\in[0,1]$ and $1-(R\rho')^{2}$. For the
lower bound we need two properties of the function
$\rho\mapsto\rho\cot\rho$ on $[0,\pi)$: it extends continuously with value $1$ at
$\rho=0$, and it is strictly decreasing, since
\[
\frac{d}{d\rho}\big(\rho\cot\rho\big)=\cot\rho-\frac{\rho}{\sin^{2}\rho}
=\frac{\sin\rho\cos\rho-\rho}{\sin^{2}\rho}
=\frac{\tfrac12\sin(2\rho)-\rho}{\sin^{2}\rho}<0
\qquad(0<\rho<\pi),
\] 
because $\sin x<x$ for $x>0$. In particular $\rho\cot\rho\le1$ everywhere, and
$\rho\cot\rho\ge\rho_{\max}\cot\rho_{\max}>0$ for $\rho\le\rho_{\max}<\pi/2$. A
convex combination of $1$ and $\rho\cot\rho$ is then at least
$\min(1,\rho\cot\rho)=\rho\cot\rho\ge\rho_{\max}\cot\rho_{\max}$, which proves the
bound $g''\ge2\rho_{\max}\cot\rho_{\max}$ wherever $\rho\in(0,\rho_{\max}]$.

\smallskip\noindent
\emph{The exceptional case $\rho(s_0)=0$ (the arc passes through $y$).} Here the
formula for $g''$ does not apply (it divides by $\sin\rho=0$), so we check the
bound directly.
Write $u(s)=\alpha\cos(\tfrac sR)+\beta\sin(\tfrac sR)=A\cos\!\big(\tfrac sR-\varphi\big)$
with amplitude $A=\sqrt{\alpha^2+\beta^2}\le1$ (Step B). Since $u(s_0)=\cos\rho(s_0)=1$
and a cosine of amplitude $\le1$ can reach $1$ only at full amplitude and at its
peak, we get $A=1$ and $\varphi=s_0/R$, hence $u(s)=\cos\!\big(\tfrac{s-s_0}{R}\big)$ and
$\rho(s)=|s-s_0|/R$. The distance $\rho$ has a corner at $s_0$, but the squared
distance is smooth there:
\[
  g(s)=\big(R\rho(s)\big)^2=(s-s_0)^2,\qquad\text{so}\qquad g''(s_0)=2 .
\]
As $\rho\cot\rho\le1$ gives $2\rho_{\max}\cot\rho_{\max}\le2$, the bound
$g''\ge2\rho_{\max}\cot\rho_{\max}$ holds across $s_0$ as well. \hfill$\square$
\end{proofenv}

Figure~\ref{fig:ballconv} illustrates the statement.

\begin{figure}[htbp]
\centering
\begin{tikzpicture}[>=Stealth, scale=1.0]
  \draw[black!45] (0,0) circle (2.6);
  \node[font=\small, black!70, anchor=east] at (-1.95,2.0) {$S_R\subset H$};
  \fill (0,0) circle (1.4pt); \node[font=\scriptsize, anchor=north east] at (-0.03,-0.06) {$0$};
  \draw[->, black!45] (0,0) -- (215:2.6) node[midway, above, sloped, font=\footnotesize] {$R$};
  \draw[black!35, densely dashed] (0,-2.6) -- (0,2.6);
  \draw[figred!50] plot[domain=180:360, samples=60] ({2.6*cos(\x)},{0.75*sin(\x)});
  \draw[figred!30, densely dashed] plot[domain=0:180, samples=60] ({2.6*cos(\x)},{0.75*sin(\x)});
  \node[font=\scriptsize, figred!85, anchor=west] at (2.7,-0.30) {$\dG(\cdot,\psi_0)=\tfrac{\pi R}{2}$};
  \draw[figred!40] (2.67,-0.30) -- (2.36,-0.32);
  \fill[figblue] (0,2.6) circle (1.7pt); \node[above, font=\small, figblue] at (0,2.62) {$\psi_0$};
  \draw[figblue!70] plot[domain=180:360, samples=60] ({1.992*cos(\x)},{1.671+0.574*sin(\x)});
  \draw[figblue!35, densely dashed] plot[domain=0:180, samples=60] ({1.992*cos(\x)},{1.671+0.574*sin(\x)});
  \draw[figblue, semithick] plot[domain=0:61.74, samples=40] ({-1.4544*sin(\x)},{2.6*cos(\x)});
  \node[font=\footnotesize, figblue, anchor=east] at (-0.86,2.18) {$b$};
  \node[font=\scriptsize, figblue!85, anchor=west] at (2.5,2.45) {$\bar B(\psi_0,b)$};
  \draw[figblue!40] (2.47,2.43) -- (0.42,2.36);
  \fill[figorange] (-0.95,1.75) circle (1.7pt);
  \node[font=\small, figorange, anchor=east] at (-1.06,1.78) {$\psi_1$};
  \fill[figorange] (1.15,1.62) circle (1.7pt);
  \node[font=\small, figorange, anchor=west] at (1.26,1.64) {$\psi_2$};
  \draw[figorange, very thick] (-0.95,1.75) .. controls (-0.10,1.52) and (0.50,1.50) .. (1.15,1.62);
  \node[font=\small, figorange, anchor=north] at (0.18,1.47) {$\psi(s)$};
  \draw[black!30] (0,0) -- (-0.95,1.75);
  \draw[black!30] (0,0) -- (1.15,1.62);
  \draw[black!50] (54.6:0.5) arc[start angle=54.6, end angle=118.5, radius=0.5];
  \node[font=\footnotesize, black!60, fill=white, inner sep=1pt] at (81:0.76) {$\vartheta$};
\end{tikzpicture}
\caption{\textbf{The geometry of Lemma~\ref{lem:ballconv}}, in the lemma's
notation. The closed ball $\bar B(\psi_0,b)\subset S_R$ is the spherical cap
of geodesic radius $b$ (blue meridian arc) around $\psi_0$; two points
$\psi_1,\psi_2$ of the ball, at angle $\vartheta$ from the centre $0$, are
joined by the minimizing great-circle arc of \eqref{eq:slerp}; the red
equator lies at distance $\pi R/2$ from $\psi_0$.}
\label{fig:ballconv}
\end{figure}

\begin{lemma}[Small balls on the sphere are geodesically convex]\label{lem:ballconv}
Let $b<\pi R/2$ and $\psi_0\in S_R$. If $\psi_1,\psi_2\in\bar B(\psi_0,b) \subset S_R$ (a closed
geodesic ball), then the minimizing great-circle arc \eqref{eq:slerp} joining them
stays inside $\bar B(\psi_0,b)$.
\end{lemma}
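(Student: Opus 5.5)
I would reduce the statement to a one-variable inequality along the arc, using the explicit parametrisation already exploited in the proof of Lemma~\ref{lem:sphereconv}. Fix $\psi_1,\psi_2\in\bar B(\psi_0,b)$, let $\vartheta=\dG(\psi_1,\psi_2)/R$, and parametrise the minimizing arc \eqref{eq:slerp} as $\psi(s)$, $s\in[0,1]$. If $\psi_1=\psi_2$ there is nothing to prove. Otherwise $\vartheta<\pi$, because $\psi_1$ and $\psi_2$ are both within $b<\pi R/2$ of $\psi_0$, so the triangle inequality gives $\dG(\psi_1,\psi_2)<\pi R$ and they are not antipodal. The arc is therefore well defined. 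The quantity to control is the cosine of the angular distance to the centre,
\[
  u(s)=\frac{\inner{\psi(s)}{\psi_0}}{R^2}
  =\frac{\sin((1-s)\vartheta)}{\sin\vartheta}\,u_1+\frac{\sin(s\vartheta)}{\sin\vartheta}\,u_2,
\]
where $u_i=\inner{\psi_i}{\psi_0}/R^2=\cos(\dG(\psi_i,\psi_0)/R)$. Membership in the ball is equivalent to $u(s)\ge\cos(b/R)=:\kappa$. Since $b<\pi R/2$, we have $\kappa>0$, and by hypothesis $u_1,u_2\ge\kappa$.

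The key step is to show that $u(s)\ge\min(u_1,u_2)\ge\kappa$ on $[0,1]$. Two facts give this. First, the coefficients $a(s)=\sin((1-s)\vartheta)/\sin\vartheta$ and $c(s)=\sin(s\vartheta)/\sin\vartheta$ are nonnegative on $[0,1]$, since $\vartheta\in(0,\pi)$. Second, their sum is at least $1$, because
\[
  a(s)+c(s)=\frac{\cos\big((\tfrac12-s)\vartheta\big)}{\cos(\vartheta/2)}\ \ge 1,
\]
using the sum-to-product formula and the fact that $|(\tfrac12-s)\vartheta|\le\vartheta/2<\pi/2$. Hence $u(s)\ge(a+c)\min(u_1,u_2)\ge\min(u_1,u_2)$. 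This last inequality requires $\min(u_1,u_2)\ge0$, which is exactly where $\kappa>0$, that is $b<\pi R/2$, is used. It follows that $u(s)\ge\kappa$, so $\psi(s)\in\bar B(\psi_0,b)$ for every $s$.

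An alternative route would use $u''=-(\vartheta^2)u$ in the parameter $s$, together with a concavity argument on the set where $u>0$, as in Step~A of Lemma~\ref{lem:sphereconv}. The direct coefficient bound is cleaner, so I would use it.

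The only delicate point is the sign issue: the bound $a+c\ge1$ helps only when the $u_i$ are nonnegative. I expect this to be the main obstacle to state carefully, together with the non-antipodality needed for \eqref{eq:slerp} to be the minimizing geodesic. Both are settled by the hypothesis $b<\pi R/2$.
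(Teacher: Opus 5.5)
Your proof is correct and follows essentially the same route as the paper's: you write the cosine $u(s)$ of the angular distance to $\psi_0$ as a nonnegative combination of the endpoint cosines, show by sum-to-product that the two coefficients sum to at least $1$, and use $\cos(b/R)>0$ (i.e.\ $b<\pi R/2$) to conclude. Your intermediate bound $u(s)\ge\min(u_1,u_2)$ is only a slightly stronger repackaging of the paper's direct bound $u(s)\ge\cos(b/R)$.
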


\begin{proofenv}[Proof.]
The statement is classical in finite dimensions (balls below the convexity radius
are geodesically convex, see \citealp{afsari_riemannian_2011}). Our contribution is
the dimension-free proof below, which is valid on the
infinite-dimensional sphere. We must show that $\dG(\psi(s),\psi_0)\le b$ for every
$s\in[0,1]$. If
$\psi_1=\psi_2$ the arc is constant and there is nothing to prove, and two points of
$\bar B(\psi_0,b)$ are never antipodal (their distance is at most $2b<\pi R$), so
we may assume that the angle $\vartheta$ between $\psi_1$ and $\psi_2$ lies in
$(0,\pi)$, which is the range where formula \eqref{eq:slerp} is valid.

\emph{Notation.} As in the proof of Lemma~\ref{lem:sphereconv}, let
\[
u(s):=\frac{\inner{\psi(s)}{\psi_0}}{R^{2}}
\]
be the cosine of the angular distance from the moving point to the centre of
the ball: $\dG(\psi(s),\psi_0)=R\arccos u(s)$ (recall from \eqref{eq:slerp} that the
arc stays on $S_R$). Since $\arccos$ is decreasing, membership in the ball
translates into a lower bound on the cosine:
\[
\psi\in\bar B(\psi_0,b)
\qquad\iff\qquad
\frac{\inner{\psi}{\psi_0}}{R^{2}}\ \ge\ \cos\tfrac bR
\qquad\big(\psi\in S_R\big).
\]
So the hypothesis reads $u(0)\ge\cos\tfrac bR$ and $u(1)\ge\cos\tfrac bR$, and the
conclusion to be proved reads $u(s)\ge\cos\tfrac bR$ for every $s\in[0,1]$. The
hypothesis $b<\pi R/2$ enters through one sign only: it makes $\cos\tfrac bR>0$.

\smallskip\noindent
\emph{Step A ($u$ along the arc).} Take the inner product of \eqref{eq:slerp} with
$\psi_0/R^{2}$: by linearity of the inner product,
\[
u(s)=\frac{\sin\big((1-s)\vartheta\big)}{\sin\vartheta}\,u(0)
+\frac{\sin(s\vartheta)}{\sin\vartheta}\,u(1),
\]
and both coefficients are $\ge0$: the angles $s\vartheta$ and
$(1-s)\vartheta$ lie in $[0,\vartheta]\subset[0,\pi)$, where $\sin\ge0$, and
$\sin\vartheta>0$.

\smallskip\noindent
\emph{Step B (the coefficients add up to at least $1$).} By the sum-to-product
identity,
\[
\sin\big((1-s)\vartheta\big)+\sin(s\vartheta)
=2\,\sin\tfrac\vartheta2\,\cos\!\Big(\tfrac{(1-2s)\vartheta}{2}\Big)
\ \ge\ 2\,\sin\tfrac\vartheta2\,\cos\tfrac\vartheta2
=\sin\vartheta ,
\]
because $|1-2s|\le1$ for $s\in[0,1]$, so
$\big|\tfrac{(1-2s)\vartheta}{2}\big|\le\tfrac\vartheta2<\tfrac\pi2$, and on
$[0,\tfrac\pi2)$ the cosine decreases in the absolute value of its argument.
Dividing by $\sin\vartheta>0$: the two coefficients of Step A sum to at least $1$.

\smallskip\noindent
\emph{Step C (conclude).} Bound each endpoint cosine below by $\cos\tfrac bR$
(allowed because the coefficients are $\ge0$), then apply Step B:
\begin{flalign*}
u(s)
&\ge\Big(\frac{\sin\big((1-s)\vartheta\big)}{\sin\vartheta}
+\frac{\sin(s\vartheta)}{\sin\vartheta}\Big)\,\cos\tfrac bR
\why{Step A; $u(0),u(1)\ge\cos\tfrac bR$; coefficients $\ge0$}\\
&\ge\ \cos\tfrac bR .
\why{Step B; $\cos\tfrac bR>0$}
\end{flalign*}
Therefore, $\psi(s)\in\bar B(\psi_0,b)$ for every
$s\in[0,1]$: the arc stays in the ball. \hfill$\square$
\end{proofenv}

\begin{remark}
The same non-negative-coefficients observation proves that the feasible set
$\Psi(\Gt)$ of Lemma~\ref{lem:feasible} is geodesically convex.
\end{remark}

\section{Auxiliary lemmas} \label{app:lemmas} 

\noindent In this section, we first record the analytic properties of the SRSF transform, then three
facts about the warp action used throughout. Recall (toolbox, Appendix~\ref{app:toolbox})
that $\F$ is the space of absolutely continuous functions on $[0,T]$, equipped with
the total-variation metric
$d_{\F}(f_1,f_2)=|f_1(0)-f_2(0)|+\int_0^T|\dot f_1-\dot f_2|$, and that, by convention,
$\operatorname{sign}(0):=0$.

Figure~\ref{fig:orbitdist} illustrates the geometry behind the shape space of Section~\ref{sec:shape}: shapes are closures of orbits, and the elastic distance is the shortest $\Lt$-distance from a representative to the other shape, attained at the registration when it exists.

\begin{figure}[htbp]
\centering
\begin{tikzpicture}[>=Stealth, scale=1.0]
  \draw[black!35, fill=black!2, rounded corners=14pt] (-2.15,-0.6) rectangle (4.3,4.32);
  \node[font=\small, black!70, anchor=south west] at (-2.05,4.40) {the SRSF space $(\Lt,\norm{\cdot})$};
  \coordinate (O) at (0,0);
  \fill (O) circle (1.3pt); \node[below left=-2pt, font=\small] at (O) {$0$};
  \draw[figblue!35, dashed] (14:2.0) arc[start angle=14, end angle=80, radius=2.0];
  \draw[figred!35, dashed] (8:3.3) arc[start angle=8, end angle=80, radius=3.3];
  \draw[->, figblue!70] (O) -- (16:2.0) node[midway, above, sloped, font=\footnotesize] {$\norm{q_1}$};
  \draw[->, figred!70] (O) -- (8:3.3) node[pos=0.68, below, sloped, font=\footnotesize] {$\norm{q_2}$};
  \draw[line width=1.4pt, figblue] (28:2.0) arc[start angle=28, end angle=58, radius=2.0];
  \draw[line width=1.4pt, figred] (20:3.3) arc[start angle=20, end angle=70, radius=3.3];
  \node[font=\footnotesize, figblue] at (36:1.62) {$[q_1]$};
  \node[font=\footnotesize, figred, anchor=south east] at (74:3.55) {$[q_2]=\overline{\{(q_2,\gamma):\gamma\in\G\}}$};
  \draw[figred!40] (73:3.50) -- (67:3.34);
  \fill[figblue] (52:2.0) circle (1.5pt);
  \node[font=\footnotesize, figblue, anchor=north east] at (50.5:1.94) {$q_1$};
  \fill[figred] (24:3.3) circle (1.5pt);
  \node[font=\footnotesize, figred, anchor=west] at (23:3.42) {$q_2$};
  \fill[figred] (52:3.3) circle (1.5pt);
  \node[font=\footnotesize, figred, anchor=west] at (51:3.42) {$(q_2,\gamma^\star_{q_2\to q_1})$};
  \draw[->, figred!70] (28:3.14) arc[start angle=28, end angle=46, radius=3.14];
  \draw[densely dashed, thick, figorange] (52:2.0) -- (52:3.3);
  \node[font=\scriptsize, figorange, anchor=east] at (1.60,2.22) {$\del([q_1],[q_2])$};
  \draw[black!70] (0.3,-1.9) -- (4.9,-1.9);
  \node[font=\small, anchor=west] at (5.0,-1.9) {the shape space $(\Sw,\del)$};
  \fill[figblue] (1.7,-1.9) circle (1.4pt); \node[below, font=\footnotesize, figblue] at (1.7,-1.95) {$[q_1]$};
  \fill[figred] (3.5,-1.9) circle (1.4pt); \node[below, font=\footnotesize, figred] at (3.5,-1.95) {$[q_2]$};
  \draw[<->, figorange] (1.7,-1.58) -- (3.5,-1.58) node[midway, above, font=\scriptsize, figorange] {$\del([q_1],[q_2])$};
  \draw[->, black!40, dashed] (44:2.0) to[out=-75,in=95] (1.7,-1.80);
  \draw[->, black!40, dashed] (40:3.3) to[out=-55,in=80] (3.58,-1.80);
  \node[font=\scriptsize, black!50, anchor=west] at (4.0,-1.05) {$q\mapsto[q]$};
  \node[font=\footnotesize, align=left, text width=4.6cm, anchor=west] at (5.7,2.2)
    {the action is isometric, $\norm{(q,\gamma)}=\norm q$ (Lemma~\ref{lem:iso}), so each orbit stays on its own norm sphere (dashed arcs)};
\end{tikzpicture}
\caption{\textbf{Shapes are closures of orbits, and the elastic distance is a shape-to-shape distance.} \emph{Top: inside $\Lt$} (schematic $2$-D cross-section through $0$): the shapes $[q_1]$ and $[q_2]$, each on the sphere of radius $\norm{q_i}$. Registration slides $q_2$ within its shape $[q_2]$ to the point $(q_2,\gamma^\star_{q_2\to q_1})$ nearest $q_1$ (when such a point exists), and the dashed connector has length $\del([q_1],[q_2])$. \emph{Bottom:} the shape space $\Sw=\Lt/$$\!\sim$, one point per shape.}
\label{fig:orbitdist}
\end{figure}

The next proposition collects the properties of the SRSF the paper relies on.
Item (i) says the SRSF belongs to the Hilbert space $\Lt$. Item (ii) says that
passing from $f$ to $(f(0),q_f)$ loses no information, so that curves and transforms can be used interchangeably. Item (iii) says that the
SRSF is continuous, hence measurable, which makes the random transform $q_Z$, and
therefore the random shape $[q_Z]$, a well-defined random element
(Section~\ref{sec:estimand}).

\begin{proposition}[SRSF properties]\label{prop:srsf}
\leavevmode
\begin{enumerate}[leftmargin=2em,itemsep=1pt,topsep=2pt]
\item[(i)] For every $f\in\F$, $q_f\in\Lt$, with
$\norm{q_f}_2^2=\int_0^T|\dot f|$ (the total variation of $f$).
\item[(ii)] The map $f\mapsto(f(0),q_f)$ is a bijection from $\F$ onto
$\R\times\Lt$, with inverse $f(t)=f(0)+\int_0^t q(s)\,|q(s)|\,ds$.
\item[(iii)] $f\mapsto q_f$ is $\tfrac12$-H\"older from $(\F,d_\F)$ to $\Lt$:
$\norm{q_{f_1}-q_{f_2}}^2\le2\int_0^T|\dot f_1-\dot f_2|\le 2\,d_\F(f_1,f_2)$,
with the constant $2$ sharp. In particular $f\mapsto q_f$ is continuous, hence Borel
measurable.
\end{enumerate}
\end{proposition}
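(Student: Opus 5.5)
I prove the three items in order, working throughout with the pointwise definition $q_f=\operatorname{sign}(\dot f)\sqrt{|\dot f|}$. Here $\dot f\in L^1$ is the a.e.\ derivative given by (T1).

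For (i), squaring gives $q_f^2=|\dot f|$ a.e. Integrating, $\norm{q_f}^2=\int_0^T|\dot f|<\infty$, so $q_f\in\Lt$.

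For (ii), I first check that the inverse formula recovers $f$. Pointwise $q_f|q_f|=\operatorname{sign}(\dot f)|\dot f|=\dot f$ a.e., so by (T1), $f(0)+\int_0^t q_f|q_f|=f$. This gives injectivity of $f\mapsto(f(0),q_f)$. For surjectivity, take $(c,q)\in\R\times\Lt$. Then $q|q|\in L^1$, because $\int|q|q||=\norm q^2<\infty$. Set $f=c+\int_0^{\cdot}q|q|$, which is absolutely continuous by (T1) with $\dot f=q|q|$ a.e. It remains to show $q_f=q$ a.e. Where $q\neq0$, $\operatorname{sign}(q|q|)=\operatorname{sign}(q)$ and $\sqrt{|q|^2}=|q|$. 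Where $q=0$, both sides vanish by the convention $\operatorname{sign}(0)=0$.

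For (iii), everything reduces to a scalar inequality. With $\phi(x)=\operatorname{sign}(x)\sqrt{|x|}$, I claim
\[
|\phi(a)-\phi(b)|^2\le 2|a-b|\qquad\text{for all }a,b\in\R.
\]
Applying it with $a=\dot f_1(t)$, $b=\dot f_2(t)$ and integrating gives $\norm{q_{f_1}-q_{f_2}}^2\le2\int|\dot f_1-\dot f_2|$. This is at most $2d_\F(f_1,f_2)$ by the definition of $d_\F$. To prove the scalar inequality I split by signs.

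\emph{Same sign,} say $a\ge b\ge0$. Then
\[
(\sqrt a-\sqrt b)^2\le(\sqrt a-\sqrt b)(\sqrt a+\sqrt b)=a-b,
\]
so even constant $1$ holds here.

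\emph{Opposite signs,} say $a\ge0\ge b$. Then $|\phi(a)-\phi(b)|^2=(\sqrt a+\sqrt{|b|})^2\le2(a+|b|)=2|a-b|$, by $(x+y)^2\le2(x^2+y^2)$.

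For sharpness, take $a=1$, $b=-1$: the left side is $4$ and the right side is $4$. So the constant $2$ is attained by the constant-slope curves $f_1(t)=t$, $f_2(t)=-t$, which have equal initial values. Finally, $\tfrac12$-Hölder continuity from $(\F,d_\F)$ into the metric space $\Lt$ implies continuity, and continuous maps between metric spaces are Borel measurable.

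I expect no real obstacle. The only delicate points are bookkeeping ones: the $\operatorname{sign}(0)=0$ convention in (ii), and splitting by sign in (iii) so as to land on the sharp constant $2$ rather than a cruder one.
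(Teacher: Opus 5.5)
Your proof is correct and follows the paper's argument essentially step for step. It uses the identities $\phi(x)^2=|x|$ and $\phi(x)|\phi(x)|=x$ for (i)--(ii), and the same scalar bound $|\phi(a)-\phi(b)|^2\le 2|a-b|$ via the same same-sign/opposite-sign split, with sharpness at $a=1$, $b=-1$; your explicit pair $f_1(t)=t$, $f_2(t)=-t$ is a small bonus, showing the constant $2$ is attained for actual curves and not only in the scalar inequality.
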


The fourth classical property of the SRSF, the Fisher--Rao isometry
$\dFR(f_1,f_2)=\norm{q_{f_1}-q_{f_2}}_2$, is a known result stated as
Proposition~\ref{prop:friso} in the main text
(\citealp[Ch.~4]{srivastava_klassen_2016}; \citealp{bruveris_optimal_2016}); we do
not re-prove it.

\begin{proofenv}[Proof.]
Nothing here is new: every item is established in the literature, as referenced
below. We only adapt them to our notations and establish the remark that follows.
Throughout,
$\phi:\R\to\R$ denotes the scalar square-root-slope map
\[
\phi(x)=\operatorname{sign}(x)\sqrt{|x|},
\qquad\text{so that}\qquad q_f=\phi\circ\dot f .
\]

\smallskip
\noindent\emph{Proof of (i).} For every real $x$,
$\phi(x)^2=\operatorname{sign}(x)^2\,|x|=|x|$. Integrating this identity along $\dot f$,
\begin{flalign*}
\norm{q_f}^2
&=\int_0^T\phi\big(\dot f(t)\big)^2\,dt
\why{definition of $q_f$}\\
&=\int_0^T\big|\dot f(t)\big|\,dt\ <\ \infty .
\why{$\phi(x)^2=|x|$; $\dot f\in L^1$ since $f$ is AC (\ref{T:ftc})}
\end{flalign*}

\smallskip
\noindent\emph{Proof of (ii).} Since, for all $x\in\R$, $\phi(x)\,|\phi(x)|=x$, we have
$q_f|q_f|=\dot f$ almost everywhere, so by the fundamental
theorem of calculus for absolutely continuous functions (\ref{T:ftc}), for all $t$,
\[
f(0)+\int_0^t q_f(s)|q_f(s)|\,ds
=f(0)+\int_0^t\dot f(s)\,ds
=f(t) ,
\]
so the stated inverse recovers $f$ from $(f(0),q_f)$. Conversely, take any
$(b,q)\in\R\times\Lt$ and set $f(t)=b+\int_0^tq|q|$. The integrand satisfies
$\big|q|q|\big|=q^2\in L^1$ (its integral is $\norm q^2<\infty$), so $f$ is absolutely
continuous (\ref{T:ftc}) with $\dot f=q|q|$ a.e., and then
$\phi(\dot f)=\phi(q|q|)=\operatorname{sign}(q)\sqrt{q^2}
=\operatorname{sign}(q)\,|q|=q$ a.e., so $q_f=q$ and
$f(0)=b$. The two maps are mutually inverse.

\smallskip
\noindent\emph{Proof of (iii): the sharp pointwise bound.} We claim that, for all $u,v\in\R$,
\begin{equation}\label{eq:phibound}
|\phi(u)-\phi(v)|^2\ \le\ 2\,|u-v|,
\end{equation}
and that the constant $2$ cannot be improved. By symmetry, we analyze two different
cases: when $u$ and $v$ are of the same sign and when their signs differ.

\emph{Case A: same sign, say $u\ge v\ge0$.} We have
\begin{flalign*}
|\phi(u)-\phi(v)|^2
&=(\sqrt u-\sqrt v)^2
=u+v-2\sqrt{uv}
\why{expand the square}\\
&\le u+v-2v
\why{$\sqrt{uv}\ge\sqrt{v\cdot v}=v$ since $u\ge v\ge0$}\\
&=u-v=|u-v| \ \le\ 2|u-v| .
\end{flalign*}

\emph{Case B: opposite signs, say $u>0>v$, and write $a=u\ge0$, $b=-v\ge0$.} Then
$\phi(u)-\phi(v)=\sqrt a+\sqrt b$ and $|u-v|=a+b$, so
\begin{flalign*}
|\phi(u)-\phi(v)|^2
&=a+b+2\sqrt{ab}
\why{expand}\\
&\le 2(a+b)=2|u-v| .
\why{AM--GM inequality: $2\sqrt{ab}\le a+b$}
\end{flalign*}
\emph{Sharpness.} We have, at $u=1$, $v=-1$, $|\phi(1)-\phi(-1)|^2=(1-(-1))^2=4=2|1-(-1)|$,
so equality holds in \eqref{eq:phibound}.

\smallskip
\noindent Integrate \eqref{eq:phibound} along the two
derivatives:
\begin{flalign*}
\norm{q_{f_1}-q_{f_2}}^2
&=\int_0^T\big|\phi(\dot f_1)-\phi(\dot f_2)\big|^2
\why{definition}\\
&\le 2\int_0^T\big|\dot f_1-\dot f_2\big|
\ \le\ 2\,d_\F(f_1,f_2).
\why{\eqref{eq:phibound}; definition of $d_\F$}
\end{flalign*}
So $f\mapsto q_f$ is $\tfrac12$-H\"older, in particular continuous from $(\F,d_\F)$ to
$\Lt$. A continuous map between metric spaces is Borel measurable.
\end{proofenv}

\begin{remark}[Why $\F$ must be all of the absolutely continuous functions.]\label{rem:whyac}
If one restricted $\F$ to curves with $\dot f\in\Lt$, the image of $f\mapsto q_f$ would
be $\{q:\int q^4<\infty\}=L^4([0,T])\subsetneq\Lt$ (since $q_f^4=\dot f^2$, and the inclusion $L^4\subset L^2$ on the finite interval $[0,T]$ follows from the Cauchy--Schwarz inequality), and item
(ii) would fail: e.g.\ $q(t)=t^{-1/4}$ lies in $\Lt$ ($\int_0^1t^{-1/2}=2$) but its
preimage has $\dot f=q|q|=t^{-1/2}\notin\Lt$. A transform of $\Lt$, such as the phase-centered representative $(\tilde q,\bar\gamma^{-1})$, could then fail to correspond to a curve of the model
space. With $\F$ the set of all absolutely continuous functions, the setting of
\citet{bruveris_optimal_2016}, the correspondence is exact, and the Step-3 average of the phase-centered registered curves that defines $\mu(P)$ stays in $\F$ (Lemma~\ref{lem:alignavg}).
\end{remark}

The next lemma establishes three claims: composing with a warp keeps us inside
$\F$ and acts on transforms by an explicit formula, the action is linear, and it
preserves the $\Lt$ inner product, so a common warp changes neither norms nor
distances between transforms. Norm preservation is the key ingredient behind
coercivity (Proposition~\ref{thm:exist}) and the equivariance step
(Proposition~\ref{thm:inv}).

\begin{lemma}[Isometric action]\label{lem:iso}
For $\gamma\in\Gt$ and $f\in\F$, we have $f\circ\gamma\in\F$, and the SRSF of $f\circ\gamma$
is the warp action
\[
(q_f,\gamma):=(q_f\circ\gamma)\sqrt{\dot\gamma}
\qquad(\text{with the convention that the right side is }0\text{ on }\{\dot\gamma=0\}).
\]
For each fixed $\gamma\in\Gt$, the map $q\mapsto(q,\gamma)$ on $\Lt$ is linear and
preserves the inner product:
$\inner{(q_1,\gamma)}{(q_2,\gamma)}=\inner{q_1}{q_2}$ for all $q_1,q_2\in\Lt$; in
particular $\norm{(q,\gamma)}=\norm q$.
\end{lemma}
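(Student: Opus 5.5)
The plan is to prove the three claims of Lemma~\ref{lem:iso} in order: membership in $\F$ with the explicit SRSF formula, linearity, and preservation of the inner product. All three rest on the chain rule and the change of variables for monotone absolutely continuous maps (\ref{T:cov}).

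\emph{Membership and the SRSF formula.} By (\ref{T:cov})(b), $f\circ\gamma$ is absolutely continuous, and
\[
(f\circ\gamma)^{\textstyle\cdot}=(\dot f\circ\gamma)\,\dot\gamma \quad\text{a.e.},
\]
with the product set to $0$ on $\{\dot\gamma=0\}$. Apply the scalar map $\phi(x)=\operatorname{sign}(x)\sqrt{|x|}$ from the proof of Proposition~\ref{prop:srsf}. Since $\dot\gamma\ge0$, $\phi$ is multiplicative against non-negative factors: $\phi(xa)=\phi(x)\sqrt a$ for $a\ge0$. This gives
\[
q_{f\circ\gamma}=\phi(\dot f\circ\gamma)\sqrt{\dot\gamma}=(q_f\circ\gamma)\sqrt{\dot\gamma}
\]
on $\{\dot\gamma>0\}$. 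On $\{\dot\gamma=0\}$ both sides vanish, by the convention. One point needs a brief check: $q_f\circ\gamma$ must be well defined a.e. on $\{\dot\gamma>0\}$, that is, $\gamma$ must not map a positive-measure part of that set into the null set where $\dot f$ is undefined. This follows from (\ref{T:cov})(a) applied to the indicator of that null set, which yields $\int \ind_N(\gamma)\dot\gamma=|N|=0$.

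\emph{Linearity.} For fixed $\gamma$, the map $q\mapsto(q\circ\gamma)\sqrt{\dot\gamma}$ is composition followed by multiplication by a fixed function, so it is linear. Before that, I must check that it is well defined on $\Lt$ classes, i.e. that $q_1=q_2$ a.e. implies $(q_1,\gamma)=(q_2,\gamma)$ a.e. This is the same null-set argument as above.

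\emph{Inner product.} For $q_1,q_2\in\Lt$, the product $g=q_1q_2$ lies in $L^1$ by Cauchy--Schwarz. Then
\[
\inner{(q_1,\gamma)}{(q_2,\gamma)}=\int_0^T (q_1q_2)(\gamma(t))\,\dot\gamma(t)\,dt=\int_0^T q_1q_2=\inner{q_1}{q_2}
\]
by (\ref{T:cov})(a). Taking $q_1=q_2$ gives $\norm{(q,\gamma)}=\norm q$, and in particular $(q,\gamma)\in\Lt$. This justifies viewing the action as a map from $\Lt$ to itself.

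The main obstacle I expect is the measure-theoretic care on the set where $\dot\gamma=0$ and on the preimages of null sets. Everything else is a one-line application of (\ref{T:cov}). I would handle it entirely through the stated convention together with the null-set consequence of (\ref{T:cov})(a), rather than by any injectivity argument, since elements of $\Gt$ may pause.
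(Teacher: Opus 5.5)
Your proof is correct and follows the paper's argument: the monotone chain rule (\ref{T:cov}(b)) with the pointwise split on $\{\dot\gamma>0\}$ and $\{\dot\gamma=0\}$ gives the SRSF formula, linearity is read off the formula, and the change of variables (\ref{T:cov}(a)) with $g=q_1q_2\in L^1$ gives the inner product. Your extra check is a sound refinement that the paper leaves implicit in the convention of (\ref{T:cov}): using (\ref{T:cov}(a)) with $g=\ind_N$, you show that almost no point of $\{\dot\gamma>0\}$ is sent into a given null set $N$, which makes $q_f\circ\gamma$ and the action well defined on $\Lt$ classes.
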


One point needs care: elements of $\Gt$ may not be invertible and may have
$\dot\gamma=0$ on a set of positive measure (Example~\ref{ex:line}(c)), so the
elementary chain rule does not apply. The proof uses the chain rule for
compositions with monotone absolutely continuous maps (\ref{T:cov}(b)), whose
convention on $\{\dot\gamma=0\}$ exists precisely because $\dot f(\gamma(t))$ may be
undefined on that set.

\begin{proofenv}[Proof.]
All of this is established in the literature, as referenced below. We state it in
the exact form our proofs use.

\smallskip
\noindent\emph{Step A: the action formula is the monotone chain rule.} Let
$f\in\F$ and $\gamma\in\Gt$. By \ref{T:cov}(b), $f\circ\gamma$ is absolutely
continuous, so $f\circ\gamma\in\F$ and its SRSF is defined, with
$(f\circ\gamma)^{\textstyle\cdot}=(\dot f\circ\gamma)\,\dot\gamma$ a.e. Now evaluate the SRSF pointwise, splitting on
$\dot\gamma$:
\begin{itemize}[leftmargin=1.6em,itemsep=1pt]
\item On $\{\dot\gamma(t)>0\}$: the slope of $f\circ\gamma$ is $\dot f(\gamma(t))\dot\gamma(t)$, whose sign
is that of $\dot f(\gamma(t))$ (as $\dot\gamma(t)>0$), so
\begin{flalign*}
q_{f\circ\gamma}(t)
&=\operatorname{sign}\!\big(\dot f(\gamma(t))\,\dot\gamma(t)\big)
\sqrt{\big|\dot f(\gamma(t))\,\dot\gamma(t)\big|}
\why{definition of the SRSF}\\
&=\operatorname{sign}\!\big(\dot f(\gamma(t))\big)
\sqrt{\big|\dot f(\gamma(t))\big|}\ \sqrt{\dot\gamma(t)}
\why{$\dot\gamma(t)>0$: its sign is $1$; $\sqrt{ab}=\sqrt a\sqrt b$}\\
&=q_f(\gamma(t))\,\sqrt{\dot\gamma(t)} .
\why{definition of $q_f$}
\end{flalign*}
\item On $\{\dot\gamma(t)=0\}$: the slope of $f\circ\gamma$ is $0$ (see above), so
$q_{f\circ\gamma}(t)=\phi(0)=0=(q_f\circ\gamma)(t)\,\sqrt{\dot\gamma(t)}$.
\end{itemize}
Hence $q_{f\circ\gamma}=(q_f,\gamma)$ a.e. \emph{Linearity} of $q\mapsto(q,\gamma)$
is immediate from the formula: $(q\circ\gamma)\sqrt{\dot\gamma}$ is linear in $q$.

\smallskip
\noindent\emph{Step B: inner-product preservation.} Fix $q_1,q_2\in\Lt$. The product
$q_1q_2$ is integrable by the Cauchy--Schwarz inequality, so the change-of-variables
formula for monotone AC surjections (\ref{T:cov}(a)) applies to $g=q_1q_2$:
\begin{flalign*}
\inner{(q_1,\gamma)}{(q_2,\gamma)}
&=\int_0^T q_1(\gamma(t))\sqrt{\dot\gamma(t)}\;q_2(\gamma(t))\sqrt{\dot\gamma(t)}\,dt
\why{definition of the action}\\
&=\int_0^T q_1(\gamma(t))\,q_2(\gamma(t))\,\dot\gamma(t)\,dt
\why{$\sqrt{\dot\gamma}\,\sqrt{\dot\gamma}=\dot\gamma$}\\
&=\int_0^T q_1(s)\,q_2(s)\,ds=\inner{q_1}{q_2}.
\why{change of variables \ref{T:cov}(a); no injectivity needed}
\end{flalign*}
Taking $q_1=q_2=q$ gives $\norm{(q,\gamma)}^2=\norm q^2$. \hfill$\square$
\end{proofenv}

The existence proof needs the registration residual
$\gamma\mapsto\norm{q-(y,\gamma)}_2$ to be continuous in the warp. That reduces to
the following lemma.

\begin{lemma}[Composition continuity]\label{lem:compos}
Let $\gamma_k,\gamma_\star\in\Gc$ with $\sqrt{\dot\gamma_k}\to\sqrt{\dot\gamma_\star}$
uniformly on $[0,T]$. Then $(y,\gamma_k)\to(y,\gamma_\star)$ in $\Lt$, for every
$y\in\Lt$.
\end{lemma}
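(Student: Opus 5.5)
The plan is a density argument: approximate $y$ by a continuous function, prove the claim for continuous functions by uniform convergence, and control the approximation error with the isometry of Lemma~\ref{lem:iso}. Throughout, write $\psi_k=\sqrt{\dot\gamma_k}$ and $\psi_\star=\sqrt{\dot\gamma_\star}$. Because the $\gamma_k$ lie in $\Gc$, Assumption~\ref{ass:warp}(ii) gives $\psi_k,\psi_\star\le\sqrt C$ pointwise.

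First, the uniform convergence of the warp-SRSFs passes to the warps. Since $\gamma_k(t)=\int_0^t\psi_k^2$ and $\psi_k^2-\psi_\star^2=(\psi_k-\psi_\star)(\psi_k+\psi_\star)$, we get
\[
\norm{\gamma_k-\gamma_\star}_\infty\le T\cdot 2\sqrt C\,\norm{\psi_k-\psi_\star}_\infty\to0 .
\]

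Second, fix $\varepsilon>0$ and use the density fact (\ref{T:dense}) to choose a continuous $g$ with $\norm{y-g}<\varepsilon$. Linearity of the action and the isometry of Lemma~\ref{lem:iso} then give
\[
\norm{(y,\gamma_k)-(y,\gamma_\star)}\le\norm{(y-g,\gamma_k)}+\norm{(g,\gamma_k)-(g,\gamma_\star)}+\norm{(g-y,\gamma_\star)} .
\]
The first and last terms are each exactly $\norm{y-g}<\varepsilon$, for every $k$. This uniformity in $k$ is exactly what the isometry buys, and it is the only point where the proof uses more than continuity.

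Third, the middle term is bounded by splitting
\[
(g\circ\gamma_k)\psi_k-(g\circ\gamma_\star)\psi_\star=(g\circ\gamma_k-g\circ\gamma_\star)\psi_k+(g\circ\gamma_\star)(\psi_k-\psi_\star).
\]
By Heine--Cantor (\ref{T:heine}), $g$ is uniformly continuous on $[0,T]$. Combined with $\norm{\gamma_k-\gamma_\star}_\infty\to0$, this gives $\norm{g\circ\gamma_k-g\circ\gamma_\star}_\infty\to0$, and $\psi_k$ is bounded by $\sqrt C$, so the first piece tends to $0$ uniformly. The second piece is bounded by $\norm{g}_\infty\norm{\psi_k-\psi_\star}_\infty\to0$. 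Uniform convergence on a finite interval implies $\Lt$ convergence, so the middle term tends to $0$. Hence $\limsup_k\norm{(y,\gamma_k)-(y,\gamma_\star)}\le2\varepsilon$, and since $\varepsilon$ was arbitrary the claim follows.

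The only step that needs genuine care is the first reduction. A direct argument on $y\in\Lt$ fails, because composition with a moving warp is not continuous pointwise for a merely square-integrable $y$. The density approximation handles this only because the approximation error is controlled uniformly in $k$ by the norm-preserving action. Everything else is routine uniform-convergence bookkeeping, and the lower bound $c$ on the speeds is not even needed.
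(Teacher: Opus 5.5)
Your proof is correct, and it differs from the paper's in where the density approximation enters.

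The paper first splits $(y,\gamma_k)-(y,\gamma_\star)$ into two pieces:
\[
A_k=(y\circ\gamma_k)\big(\sqrt{\dot\gamma_k}-\sqrt{\dot\gamma_\star}\big),
\qquad
B_k=(y\circ\gamma_k-y\circ\gamma_\star)\sqrt{\dot\gamma_\star}.
\]
Only then does it approximate $y$ by a continuous $g$, inside $B_k$. This forces it to control bare compositions, $\norm{y\circ\gamma_k}$ and $\norm{(y-g)\circ\gamma_k}$. It does so by the change of variables $s=\gamma_k(t)$, which gives the bound $c^{-1/2}\norm{y}$ and uses the lower speed bound $c$ of Assumption~\ref{ass:warp}(ii) essentially.

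You approximate at the level of the action instead, keeping the factor $\sqrt{\dot\gamma}$ attached to the composition. Your two error terms are then exactly $\norm{y-g}$ by the isometry of Lemma~\ref{lem:iso}, and $c$ never appears.

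Your route is shorter, reuses a lemma already proved, and proves more. It holds for all $\gamma_k,\gamma_\star\in\Gt$. Moreover, if you replace the bound $\psi_k\le\sqrt C$ by Cauchy--Schwarz with $\norm{\psi_k}=\sqrt T$, it even works under mere $\Lt$ convergence of the warp-SRSFs.

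The paper's route yields two by-products that the lemma does not require. One is the explicit estimate $\norm{A_k}\le c^{-1/2}\norm{y}\,\norm{\sqrt{\dot\gamma_k}-\sqrt{\dot\gamma_\star}}_\infty$. The other is the $\Lt$ convergence $y\circ\gamma_k\to y\circ\gamma_\star$ of the bare compositions, which genuinely needs the speed bounded below.
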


\begin{proofenv}[Proof.]
We did not find this statement, for this mode of convergence, in the literature
(joint-continuity statements for the action in other topologies appear in
\citealp{bruveris_optimal_2016}), so we prove it from the toolbox facts. The plan:
first upgrade the hypothesis to uniform convergence of the warps themselves, then
split the difference $(y,\gamma_k)-(y,\gamma_\star)$ into two terms, $A_k$ (the
square-root factors differ) and $B_k$ (the compositions differ), and show that
each goes to $0$. All unsubscripted norms below are $\Lt$ norms.

\smallskip
\noindent\emph{Step A: $\gamma_k\to\gamma_\star$ uniformly.} We have
\begingroup\setlength{\whywidth}{0.27\textwidth}%
\begin{flalign*}
\big|\dot\gamma_k-\dot\gamma_\star\big|
&=\big|\sqrt{\dot\gamma_k}-\sqrt{\dot\gamma_\star}\big|
\cdot\big(\sqrt{\dot\gamma_k}+\sqrt{\dot\gamma_\star}\big)
\le2\sqrt C\,\norm{\sqrt{\dot\gamma_k}-\sqrt{\dot\gamma_\star}}_\infty
\why{$a^2-b^2=(a-b)(a+b)$; $\dot\gamma\le C$}\\
\big|\gamma_k(t)-\gamma_\star(t)\big|
&=\Big|\int_0^t(\dot\gamma_k-\dot\gamma_\star)\Big|
\le T\,\norm{\dot\gamma_k-\dot\gamma_\star}_\infty\ \to\ 0
\why{$\gamma(t)=\int_0^t\dot\gamma$ (\ref{T:ftc}); bound the integrand}
\end{flalign*}\endgroup
uniformly in $t$.

\smallskip
\noindent\emph{Step B: the decomposition.} Simple calculations show that
\begin{align*}
(y,\gamma_k)-(y,\gamma_\star)
&=(y\circ\gamma_k)\sqrt{\dot\gamma_k}-(y\circ\gamma_\star)\sqrt{\dot\gamma_\star}\\
&=\underbrace{(y\circ\gamma_k)\big(\sqrt{\dot\gamma_k}-\sqrt{\dot\gamma_\star}\big)}_{A_k}
+\underbrace{\big(y\circ\gamma_k-y\circ\gamma_\star\big)\sqrt{\dot\gamma_\star}}_{B_k}.
\end{align*}
By the triangle inequality, to complete the proof, we just have to show that
$\norm{A_k}\to0$ and $\norm{B_k}\to0$.

\smallskip
\noindent\emph{Step C: $\norm{A_k}\to0$.} Recall from Assumption~\ref{ass:warp}(ii)
that members of $\Gc$ satisfy $0<c\le\dot\gamma\le C<\infty$. In particular each
$\gamma\in\Gc$ is a bi-Lipschitz bijection, and its inverse $\gamma^{-1}$ is
absolutely continuous with
$(\gamma^{-1})^{\textstyle\cdot}=1/\dot\gamma(\gamma^{-1})\in[1/C,1/c]$. Pull out the
uniform factor and control the composed norm by a change of variables:
\begingroup\setlength{\whywidth}{0.29\textwidth}%
\begin{flalign*}
\norm{A_k}
&\le\norm{\sqrt{\dot\gamma_k}-\sqrt{\dot\gamma_\star}}_\infty\,\norm{y\circ\gamma_k}
\why{$\norm{gh}\le\norm g_\infty\norm h$}\\
\norm{y\circ\gamma_k}^2
&=\int_0^T|y(\gamma_k(t))|^2\,dt
=\int_0^T|y(s)|^2\,\frac{ds}{\dot\gamma_k(\gamma_k^{-1}(s))}
\le\frac1c\,\norm y^2
\why{substitute $s=\gamma_k(t)$ (\ref{T:cov}(a) applied to $\gamma_k^{-1}$); $\dot\gamma_k\ge c$}\\
\text{Thus,}\quad
\norm{A_k}&\le c^{-1/2}\,\norm y\,
\norm{\sqrt{\dot\gamma_k}-\sqrt{\dot\gamma_\star}}_\infty\ \to\ 0 .
\why{hypothesis}
\end{flalign*}\endgroup

\smallskip
\noindent\emph{Step D: $\norm{B_k}\to0$.} Since
$\norm{\sqrt{\dot\gamma_\star}}_\infty\le\sqrt C$, it suffices that
$\norm{y\circ\gamma_k-y\circ\gamma_\star}\to0$. Fix $\varepsilon>0$.
\begin{itemize}[leftmargin=1.6em,itemsep=1pt]
\item[\emph{(i)}] Pick a continuous $g$ with $\norm{y-g}<\varepsilon$ (density
\ref{T:dense}).
\item[\emph{(ii)}] By the same change-of-variables bound as in Step C (valid for
$\gamma_k$ and for $\gamma_\star$, both in $\Gc$),
$\norm{y\circ\gamma_k-g\circ\gamma_k}=\norm{(y-g)\circ\gamma_k}\le
c^{-1/2}\norm{y-g}<c^{-1/2}\varepsilon$, and likewise with $\gamma_\star$.
\item[\emph{(iii)}] $g$ is uniformly continuous on the compact $[0,T]$
(\ref{T:heine}). Since $\gamma_k\to\gamma_\star$ uniformly (Step A), the moduli give
$\norm{g\circ\gamma_k-g\circ\gamma_\star}_\infty\to0$, hence
$\norm{g\circ\gamma_k-g\circ\gamma_\star}\le\sqrt T\,
\norm{g\circ\gamma_k-g\circ\gamma_\star}_\infty\to0$ (Cauchy--Schwarz inequality with $v\equiv1$,
applied to the constant bound).
\end{itemize}
Combining the three by the triangle inequality, for all $\varepsilon>0$, we have
\[
\limsup_k\ \norm{y\circ\gamma_k-y\circ\gamma_\star}
\ \le\ c^{-1/2}\varepsilon+0+c^{-1/2}\varepsilon
\ =\ 2c^{-1/2}\varepsilon ,
\]
which concludes the proof. \hfill$\square$
\end{proofenv}

For $q,y\in\Lt$, recall the \emph{registration residual} and the \emph{lifted
Step-1 objective}
\[
h(q;y)=\min_{\gamma\in\Gc}\norm{q-(y,\gamma)}_2^2,
\qquad
G(q)=\E\big[h(q;q_Z)\big]. 
\]
The next lemma shows that $h$ is well defined (the minimum is attained) and that
it is never smaller than the squared elastic distance, with equality when the optimal warp lies in $\Gc$. Together with the attainment clause of Assumption~\ref{ass:warp}(iii-a), this makes the two infima equal and turns any minimiser of $G$ over $\Lt$ into a minimiser of the Step-1 objective over $\Sw$.

\begin{lemma}[The registration residual: attainment and comparison with the elastic distance]\label{lem:attain}
Let Assumptions~\ref{ass:reg}--\ref{ass:warp} hold.
\begin{enumerate}[leftmargin=2em,itemsep=2pt,topsep=2pt]
\item[(a)] For every $q,y\in\Lt$, the map $\gamma\mapsto\norm{q-(y,\gamma)}_2^2$
attains its minimum over $\Gc$; hence $h(q;y)$ is well defined.
\item[(b)] For all $q,y\in\Lt$,
$h(q;y)\ge\inf_{\gamma\in\Gt}\norm{q-(y,\gamma)}_2^2=\del^2([q],[y])$, with equality whenever a minimiser of $\gamma\mapsto\norm{q-(y,\gamma)}_2$ over $\Gt$ lies in $\Gc$. In particular, $G(q)\ge M([q])$ for every $q\in\Lt$, with equality at the representatives given by Assumption~\ref{ass:warp}(iii-a).
\end{enumerate}
\end{lemma}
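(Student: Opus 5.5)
For part (a), I would fix $q,y\in\Lt$ and apply the Weierstrass theorem (\ref{T:weier}) to $\gamma\mapsto\norm{q-(y,\gamma)}^2$ on $\Gc$. To do this I would parametrise $\Gc$ by its warp-SRSFs $\Psi=\{\sqrt{\dot\gamma}:\gamma\in\Gc\}$. This set is a compact metric space for $\norm{\cdot}_\infty$ by Assumption~\ref{ass:warp}(i). The map $\psi_\gamma\mapsto\gamma$ is a bijection: $\gamma(t)=\int_0^t\psi_\gamma^2$, as in Lemma~\ref{lem:feasible}. So it is enough to show that $\psi_\gamma\mapsto\norm{q-(y,\gamma)}^2$ is continuous for the sup norm.

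That continuity is exactly Lemma~\ref{lem:compos}. If $\sqrt{\dot\gamma_k}\to\sqrt{\dot\gamma_\star}$ uniformly with all warps in $\Gc$, then $(y,\gamma_k)\to(y,\gamma_\star)$ in $\Lt$. Continuity of the norm then gives $\norm{q-(y,\gamma_k)}^2\to\norm{q-(y,\gamma_\star)}^2$. Weierstrass yields a minimiser in $\Gc$, so $h(q;y)$ is a well-defined minimum. The only subtle point is that compactness is stated for $\Psi$ and not for $\Gc$ itself. I handle it by working throughout with sequences in $\Psi$ and using sequential compactness.

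For part (b), the inequality follows from $\Gc\subseteq\Gt$: a minimum over a subset is at least the infimum over the larger set. So $h(q;y)\ge\inf_{\gamma\in\Gt}\norm{q-(y,\gamma)}^2$. The infimum over $\Gt$ equals $\del^2([q],[y])$ by the definition of the elastic distance, which does not depend on the representatives (the Bruveris facts recalled in Section~\ref{sec:shape}). For equality, suppose a minimiser $\gamma^\circ$ over $\Gt$ lies in $\Gc$. Then
\[
h(q;y)\le\norm{q-(y,\gamma^\circ)}^2=\del^2([q],[y])\le h(q;y).
\]

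Taking expectations with $y=q_Z$ gives $G(q)=\E[h(q;q_Z)]\ge\E[\del^2([q],[q_Z])]=M([q])$, by monotonicity of expectation. This requires measurability of $Z\mapsto h(q;q_Z)$. I would obtain it by writing $h$ as an infimum over a countable dense subset of the compact $\Psi$, which is valid by the continuity in $\gamma$ established above, together with Borel measurability of $f\mapsto q_f$ (Proposition~\ref{prop:srsf}(iii)) and continuity of $y\mapsto\norm{q-(y,\gamma)}$ (Lemma~\ref{lem:iso}). Finally, at the representatives $p_k$ of Assumption~\ref{ass:warp}(iii-a), that assumption states $\min_{\Gc}\norm{p_k-(q_Z,\gamma)}=\del([p_k],[q_Z])$ for $P$-a.e.\ $Z$. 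Hence $h(p_k;q_Z)=\del^2([p_k],[q_Z])$ almost surely, and $G(p_k)=M([p_k])$.

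I expect the measurability step to be the main obstacle, because the rest is an immediate consequence of earlier lemmas.
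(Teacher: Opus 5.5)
Your proposal is correct and follows the paper's proof essentially step by step. For (a) you use Weierstrass on the compact set $\Psi(\Gc)$, with continuity in the warp supplied by Lemma~\ref{lem:compos}; for (b) you use $\Gc\subseteq\Gt$, take expectations and invoke Assumption~\ref{ass:warp}(iii-a), exactly as the paper does. The only difference is measurability: the paper does not treat it inside this proof but takes it from the $1$-Lipschitz continuity of $y\mapsto\sqrt{h(q;y)}$ in Lemma~\ref{lem:hcont}, which is shorter than your countable-dense-subset argument, although yours is also valid.
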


\begin{proofenv}[Proof.]
\emph{(a) The inner minimum is attained.} Fix $q,y\in\Lt$ and write
$f(\gamma)=\norm{q-(y,\gamma)}_2^2$. We show that $f$ attains its minimum over
$\Gc$ by the Weierstrass theorem (\ref{T:weier}). We verify its two hypotheses in
the topology of Assumption~\ref{ass:warp}(i), namely uniform convergence of the
warp-SRSFs $\sqrt{\dot\gamma}$.

\emph{$f$ is continuous.} If $\sqrt{\dot\gamma_k}\to\sqrt{\dot\gamma_\star}$ uniformly
(all warps in $\Gc$), then $(y,\gamma_k)\to(y,\gamma_\star)$ in $\Lt$
(Lemma~\ref{lem:compos}). The reverse triangle inequality turns this
convergence of the arguments into convergence of the distances,
\[
\big|\,\norm{q-(y,\gamma_k)}_2-\norm{q-(y,\gamma_\star)}_2\,\big|
\ \le\ \norm{(y,\gamma_k)-(y,\gamma_\star)}_2\ \to\ 0,
\]
so $f(\gamma_k)\to f(\gamma_\star)$.

\emph{$\Gc$ is compact.} By Assumption~\ref{ass:warp}(i), $\Psi(\Gc)=\{\sqrt{\dot\gamma}:
\gamma\in\Gc\}$ is compact in $L^\infty$. In a metric space this means every sequence of
warps in $\Gc$ has a subsequence whose SRSFs converge uniformly, the limit again lying
in $\Gc$.

\emph{Conclusion (the two combine).} Let $m=\inf_{\gamma\in\Gc}f(\gamma)$ and take a
minimising sequence $f(\gamma_k)\to m$. Compactness gives a subsequence
$\sqrt{\dot\gamma_{k_j}}\to\sqrt{\dot\gamma_\star}$ uniformly with $\gamma_\star\in\Gc$.
Continuity gives $f(\gamma_\star)=\lim_j f(\gamma_{k_j})=m$. Thus the infimum $m$
is attained at $\gamma_\star$.

\smallskip\noindent
\emph{(b) Comparison with the elastic distance.} Since $\Gc\subseteq\Gt$, a minimum over $\Gc$ is at least the infimum over $\Gt$, which is the squared elastic distance:
\[
h(q;y)=\min_{\gamma\in\Gc}\norm{q-(y,\gamma)}_2^2
\ \ge\ \inf_{\gamma\in\Gt}\norm{q-(y,\gamma)}_2^2=\del^2([q],[y]).
\]
If a minimiser over $\Gt$ lies in $\Gc$, the inequality is an equality. The right-hand side depends on $(q,y)$ only through the pair of shapes
$([q],[y])$: this is the statement that $\del$ is a well-defined function on
$\Sw\times\Sw$, guaranteed by the construction of the shape space (Section~\ref{sec:shape})
\citep{bruveris_optimal_2016}. Taking expectations gives $G(q)\ge M([q])$, with equality when $h(q;q_Z)=\del^2([q],[q_Z])$ for $P$-almost every $Z$, which is what Assumption~\ref{ass:warp}(iii-a) provides at the representatives it names.  \hfill$\square$
\end{proofenv}

\begin{lemma}[The registration residual is $1$-Lipschitz in each argument]\label{lem:hcont}
Let $h$ be the registration residual (the minimum is attained by
Lemma~\ref{lem:attain}(a)). Then for all $q,q',y,y'\in\Lt$:
\[
\big|\sqrt{h(q;y)}-\sqrt{h(q;y')}\big|\le\norm{y-y'}
\qquad\text{and}\qquad
\big|\sqrt{h(q;y)}-\sqrt{h(q';y)}\big|\le\norm{q-q'} .
\]
In particular $h$ is continuous on $\Lt\times\Lt$, and for fixed $q$ the map
$\omega\mapsto h(q;q_Z(\omega))$ is measurable whenever $q_Z$
is an $\Lt$-valued random element.
\end{lemma}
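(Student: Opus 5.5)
The plan is to write $\sqrt{h(q;y)}$ as a minimum over $\gamma\in\Gc$ of the distances $\norm{q-(y,\gamma)}$. By Lemma~\ref{lem:attain}(a) this minimum is attained, so the square root passes inside the minimum. Both inequalities then follow from the fact that an infimum of $1$-Lipschitz functions is $1$-Lipschitz. No compactness beyond Lemma~\ref{lem:attain}(a) is needed.

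\emph{Lipschitz in $y$.} Fix $q$, and let $\gamma'\in\Gc$ attain the minimum defining $h(q;y')$. Using $\gamma'$ as a competitor for $h(q;y)$, the triangle inequality, linearity of the action, and the isometry of Lemma~\ref{lem:iso} give
\[
\sqrt{h(q;y)}\le\norm{q-(y,\gamma')}\le\norm{q-(y',\gamma')}+\norm{(y'-y,\gamma')}=\sqrt{h(q;y')}+\norm{y-y'}.
\]
Exchanging $y$ and $y'$ yields the first inequality. The isometry property $\norm{(y'-y,\gamma')}=\norm{y'-y}$ is the one substantive input here, and it holds for every $\gamma\in\Gt\supseteq\Gc$.

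\emph{Lipschitz in $q$.} This case is even simpler. Take $\gamma'$ optimal for $h(q';y)$; then $\sqrt{h(q;y)}\le\norm{q-(y,\gamma')}\le\norm{q-q'}+\sqrt{h(q';y)}$, and we conclude by symmetry.

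\emph{Continuity and measurability.} Combining the two bounds gives $|\sqrt{h(q;y)}-\sqrt{h(q';y')}|\le\norm{q-q'}+\norm{y-y'}$. Hence $\sqrt h$, and therefore $h$ as its square, is jointly continuous on $\Lt\times\Lt$. For fixed $q$, the map $\omega\mapsto h(q;q_Z(\omega))$ is the continuous function $h(q;\cdot)$ composed with the measurable map $\omega\mapsto q_Z(\omega)$. It is thus measurable; $q_Z$ is measurable because $Z\mapsto q_Z$ is Borel by Proposition~\ref{prop:srsf}(iii).

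I expect no real obstacle. The only point needing care is that the competitor warp must lie in $\Gc$ and that the isometry must apply to it. Both hold since $\Gc\subseteq\Gt$, and Lemma~\ref{lem:iso} covers all of $\Gt$.
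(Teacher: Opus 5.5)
Your proof is correct and follows the paper's argument essentially verbatim: you take a competitor warp in $\Gc$, apply the triangle inequality and the linearity and norm preservation of the action (Lemma~\ref{lem:iso}) to get each one-sided bound, and conclude by symmetry, then by continuity and composition with $q_Z$ for measurability. The only cosmetic differences are that you plug in the optimal warp for $h(q;y')$ instead of minimising over all $\gamma$ afterwards, and that you write out the combined bound $\norm{q-q'}+\norm{y-y'}$, which makes the joint continuity slightly more explicit than the paper's ``Lipschitz in each argument, hence continuous''.
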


\begin{proofenv}[Proof.]
We prove this result because we found no reference stating
it for the constrained minimum over $\Gc$, and both the measurability and the
continuity arguments of the paper rest on it.

\emph{Step A: $1$-Lipschitz in the second argument.} Fix any $\gamma\in\Gc$. The warp
action is linear and norm-preserving (Lemma~\ref{lem:iso}), so
$(y',\gamma)-(y,\gamma)=(y'-y,\gamma)$ and $\norm{(y'-y,\gamma)}=\norm{y'-y}$. Hence, by the
triangle inequality,
\[
\norm{q-(y,\gamma)}\ \le\ \norm{q-(y',\gamma)}+\norm{(y'-y,\gamma)}
\ =\ \norm{q-(y',\gamma)}+\norm{y-y'} .
\]
This holds for every $\gamma$. Taking the minimum over $\gamma$ on both sides yields
\[
\sqrt{h(q;y)}
\ \le\ \sqrt{h(q;y')}+\norm{y-y'} .
\]
Swapping $y\leftrightarrow y'$ gives the reverse bound, hence
$\big|\sqrt{h(q;y)}-\sqrt{h(q;y')}\big|\le\norm{y-y'}$.

\emph{Step B: $1$-Lipschitz in the first argument.} The same argument applies by
exchanging $q$ and $y$. In this case the triangle inequality suffices and leads to
$\big|\sqrt{h(q;y)}-\sqrt{h(q';y)}\big|\le\norm{q-q'}$.

\emph{Step C: continuity and measurability.} By Steps~A--B, $\sqrt h$ is $1$-Lipschitz in
each argument, hence continuous. So $h=(\sqrt h)^2$ is continuous, and in particular
$y\mapsto h(q;y)$ is Borel. Therefore $\omega\mapsto h(q;q_Z(\omega))$, the
composition of this Borel map with the random element $q_Z$
(Proposition~\ref{prop:srsf}(iii)), is measurable. \hfill$\square$
\end{proofenv}
\section{Fréchet means in metric spaces: classical guarantees} \label{app:frechetgeo}

This section reviews the classical existence and uniqueness guarantees for Fréchet means (referred to in the discussion following Theorem~\ref{thm:wellposed}), which of them fail in the two spaces our construction uses. Recall that for a Borel probability measure $\nu$ on a metric space $(\mathcal M,d)$ with finite second moment, the Fréchet function is $F(p)=\int d^2(p,y)\,d\nu(y)$ and the Fréchet mean set is $\argmin_p F$.

\begin{itemize}[leftmargin=1.4em,itemsep=1pt]
  \item Existence holds in either of two cases: $(\mathcal M,d)$
  complete and proper (closed bounded sets are compact, or equivalently, for complete
  length spaces, locally compact, by Hopf--Rinow\footnote{Here proper means that closed and bounded sets are compact, the property that guarantees a continuous function attains its minimum. The Hopf--Rinow theorem ensures it for any complete, locally compact length space. Local compactness is automatic in finite dimensions but fails on infinite-dimensional normed spaces, whose closed balls are not compact. Properness is therefore a finite-dimensional property.} \citealp{burago_2001,bridson_haefliger_1999};
  \citealp{ziezold_expected_1977}, manifolds \citealp{bhattacharya_large_2003}), or
  $(\mathcal M,d)$ complete and non-positively curved
  \citep[][Prop.~4.3]{sturm_probability_2003}, a method that, unlike properness, can work in infinite-dimensional spaces, since infinite-dimensional CAT(0) spaces (e.g.\ $\Lt$ itself) qualify.
  \item Uniqueness, by contrast, comes from non-positive curvature
  (CAT(0)) alone: in a
  CAT(0) space the metric is convex along geodesics \citep[][Prop.~II.2.2]{bridson_haefliger_1999}
  and $p\mapsto d^2(p,y)$ is in fact strictly convex, so $F$ is strictly convex with at most
  one minimiser \citep[][Prop.~4.3]{sturm_probability_2003}. When uniqueness fails, one works instead with mean
  sets \citep{evans_jaffe_2024}.
  
  \item Locally, both properties hold together: if $\supp\nu$ lies in a geodesic ball of
  radius $<r^\star=\tfrac12\min(\inj,\pi/\sqrt{\kappa_{\max}})$, where $\inj$ is the
  injectivity radius and $\kappa_{\max}$ an upper curvature bound, the Fréchet mean
  exists and is unique in that ball
  \citep{karcher_riemannian_1977,afsari_riemannian_2011}, which on a sphere is an open
  hemisphere. However, these Karcher--Afsari results are proved on
  finite-dimensional complete manifolds, where the closed ball is compact by
  Hopf--Rinow. The existence half relies on this compactness, which fails in
  infinite dimensions.
\end{itemize}
Our construction poses two distinct Fréchet-mean problems, in two different
spaces. We state which of existence and uniqueness each space gives us directly.
\begin{itemize}[leftmargin=1.4em,itemsep=3pt]
  \item \emph{Step~1, the amplitude mean in the shape space $\Sw$: directly, we have
  neither.} $\Sw$ fails the two existence conditions: it is infinite-dimensional,
  hence not proper (no local compactness), and it is not non-positively curved
  either. For uniqueness the situation is worse: we know of no curvature bound for the SRSF shape space, so neither the
  CAT(0) convexity argument nor a finite Karcher--Afsari radius applies, and no
  geometric criterion for uniqueness is available. We prove existence by a direct
  variational argument on $\Sw$ that replaces compactness
  (Theorem~\ref{thm:wellposed}(i), proved in Appendix~\ref{app:exist}), and
  uniqueness is imposed as a regularity condition on the law of shapes.
  \item \emph{Step~2, the phase mean on the warp sphere $\Sphere(\sqrt T)$: neither holds unconditionally, but both hold locally.} The sphere is infinite-dimensional (not proper) and positively curved (not CAT(0)), so neither global route applies, and, being infinite-dimensional, it lies outside the Karcher--Afsari theorem. We therefore prove local existence and uniqueness directly, under a concentration assumption, by reproducing the Karcher--Afsari local argument in our infinite-dimensional setting (Appendix~\ref{app:exist}).
\end{itemize}

\section{Proof of Theorem~\ref{thm:wellposed}, points (i)--(iii): the two Fréchet means} \label{app:exist}

Points (i)--(iii) of Theorem~\ref{thm:wellposed} are the content of the following
proposition, stated at the level at which it is used (a generic law $P$). Its
clause (i) is the theorem's point (i) and, since uniqueness is exactly
Assumption~\ref{ass:conc}(4a), also settles point (ii). Its clause (ii) is the
theorem's point (iii).

\begin{proposition}[Step-1 existence; Step-2 existence and uniqueness]\label{thm:exist}
Let Assumptions~\ref{ass:reg}--\ref{ass:mom} hold.
\begin{enumerate}[leftmargin=2em,itemsep=2pt,topsep=2pt]
\item[(i)] \emph{(Step 1, existence.)} For every law $P$ of finite second moment for which Assumption~\ref{ass:warp}(iii-a) holds,
the Step-1 objective has a non-empty minimiser set over
$\Sw$; hence a Step-1 mean shape exists. Uniqueness is not asserted; it is imposed
separately as Assumption~\ref{ass:conc}(4a). (Corollary~\ref{cor:reweight} extends
existence to every bounded reweighting of $P$, in particular the IPW objective of
Section~\ref{sec:ident}.)
\item[(ii)] \emph{(Step 2, existence and uniqueness.)} If in addition
Assumption~\ref{ass:conc}(4b) holds, the Step-2 objective has a unique minimiser;
hence the Step-2 (phase) mean exists and is unique.
\end{enumerate}
\end{proposition}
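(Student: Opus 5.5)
The plan is to treat the two clauses separately. Clause (i) is a direct-method argument on the lifted objective $G(q)=\E[h(q;q_Z)]$ over $\Lt$, using weak compactness in place of the missing local compactness. Clause (ii) transcribes the Karcher--Afsari local convexity argument onto the warp sphere, using only Lemmas~\ref{lem:sphereconv}, \ref{lem:ballconv} and~\ref{lem:feasible}.

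\emph{Step 1.} First, $h(q;y)$ is well defined by Lemma~\ref{lem:attain}(a), and $\omega\mapsto h(q;q_Z(\omega))$ is measurable by Lemma~\ref{lem:hcont}, so $G$ is a well-defined $[0,\infty]$-valued function. By Lemma~\ref{lem:attain}(b), $G\ge M$ pointwise. By Assumption~\ref{ass:warp}(iii-a) there are representatives $p_k$ with $G(p_k)=M([p_k])\to m:=\inf_{\Sw}M$. Hence $\inf_{\Lt}G=m$ and $(p_k)$ minimises $G$.

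\emph{Coercivity.} By norm preservation (Lemma~\ref{lem:iso}), $h(q;y)\ge(\norm q-\norm y)^2$, so
\[
G(q)\ge\norm q^2-2\norm q\,\E\norm{q_Z}.
\]
The right-hand side is finite by Assumption~\ref{ass:mom}. Therefore $(p_k)$ is bounded, and by (\ref{T:weakcpt}) it has a weakly convergent subsequence $p_k\rightharpoonup q^\ast$.

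\emph{Weak lower semicontinuity (the main obstacle).} I would expand
\[
h(q;y)=\norm q^2+\norm y^2-2\max_{\gamma\in\Gc}\inner{q}{(y,\gamma)}.
\]
The set $K_y=\{(y,\gamma):\gamma\in\Gc\}$ is compact in $\Lt$, since it is the continuous image (Lemma~\ref{lem:compos}) of the compact $\Psi(\Gc)$ of Assumption~\ref{ass:warp}(i). A weakly convergent sequence converges uniformly on norm-compact sets: cover $K_y$ by finitely many small balls and use the uniform bound on $\norm{p_k}$. Hence the max term converges to $\max_{\gamma}\inner{q^\ast}{(y,\gamma)}$. Combined with $\norm{q^\ast}\le\liminf\norm{p_k}$ from (\ref{T:weaklsc}), this gives $h(q^\ast;y)\le\liminf_k h(p_k;y)$ for each $y$. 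Fatou (\ref{T:fatou}) then yields $G(q^\ast)\le m$. Finally $M([q^\ast])\le G(q^\ast)\le m$, so $[q^\ast]$ is a minimiser. The same argument runs verbatim with a bounded weight inside the expectation, which gives the reweighting remark.

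\emph{Step 2.} Let $\psi_Z$ denote the SRSF of the optimal warp. It is measurable by (\ref{T:krn}), since the minimiser is unique and lies in the compact $\Gc$ by Assumption~\ref{ass:warp}(iii-b). By Assumption~\ref{ass:conc}(4b), $\psi_Z\in\bar B(\psi_0,r_0)$ almost surely with $r_0<\pi\sqrt T/8$ and $\psi_0$ feasible.

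\emph{Localisation.} Since $N(\psi_0)\le r_0^2$, any $\psi$ with $\dG(\psi,\psi_0)>2r_0$ has $\dG(\psi,\psi_Z)>r_0$ almost surely, hence $N(\psi)>r_0^2$. So $\inf N$ over the feasible set $\Psi(\Gt)$ equals the infimum over $C:=\Psi(\Gt)\cap\bar B(\psi_0,2r_0)$. The set $C$ is closed by Lemma~\ref{lem:feasible}, and it is geodesically convex by Lemma~\ref{lem:ballconv} (valid because $2r_0<\pi\sqrt T/2$) together with the remark after it. Every distance from a point of $C$ to a point $\psi_Z$ is at most $3r_0<\pi\sqrt T/2$.

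\emph{Strong convexity.} For $\psi,\psi'\in C$, the minimising arc stays in $C$. Along it, Lemma~\ref{lem:sphereconv} with $\rho_{\max}=3r_0/\sqrt T$ gives $g''\ge2\lambda$ with $\lambda=\rho_{\max}\cot\rho_{\max}>0$, pointwise for each fixed $\psi_Z$. The midpoint inequality (\ref{T:midpt}) applied pointwise and then integrated, so no differentiation under $\E$ is needed, gives
\[
N(\mathrm{mid})\le\tfrac12N(\psi)+\tfrac12N(\psi')-\tfrac{\lambda}{4}\,\dG^2(\psi,\psi').
\]

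\emph{Existence and uniqueness.} Applied to a minimising sequence in $C$, this inequality forces $\dG(\psi_k,\psi_l)\to0$. Since $\dG$ dominates the chordal $\Lt$ distance, the sequence is Cauchy in $\Lt$ and converges to some limit in the closed set $C$. The objective $N$ is continuous on $C$ by dominated convergence (\ref{T:dct}), the integrands being bounded by $(3r_0)^2$, so the limit attains the infimum. Uniqueness follows because two distinct minimisers would have a strictly better midpoint. Membership of the minimiser in $\Gc$ is part of (4b). The delicate point here is only to verify that every distance invoked stays below the quarter-turn $\pi\sqrt T/2$, which the bound $3r_0<\pi\sqrt T/2$ guarantees.
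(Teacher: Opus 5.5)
Your proposal is correct and follows the paper's proof essentially step by step. Step~1 is the direct method on the lifted objective $G$: coercivity from norm preservation, weak lower semicontinuity, Fatou, then transfer to $\Sw$ through Assumption~\ref{ass:warp}(iii-a). Step~2 uses the same localisation to $\bar B(\psi_0,2r_0)\cap\Psi(\Gt)$, midpoint strong convexity from Lemma~\ref{lem:sphereconv} with every distance below $3r_0$, and Cauchy-plus-completeness in place of compactness.

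The one difference is your weak-lower-semicontinuity step. You expand $h(q;y)=\norm{q}^2+\norm{y}^2-2\max_{\gamma\in\Gc}\inner{q}{(y,\gamma)}$ and use that a bounded weakly convergent sequence converges uniformly on the norm-compact orbit $\{(y,\gamma):\gamma\in\Gc\}$. The paper instead extracts a uniformly convergent subsequence of the optimal warps. Both rest on the same inputs: compactness of $\Psi(\Gc)$, Lemma~\ref{lem:compos}, and weak lower semicontinuity of the norm.

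Your boundedness step also silently uses that $m$ is finite. This holds because $M([0])=\E\norm{q_Z}^2<\infty$, and you should state it.
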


The two clauses of the proposition are proved separately, each in its own
subsection. Clause~(i), existence of the Step-1 mean shape, is proved by the
direct method of the calculus of variations (Appendix~\ref{app:existi}).
Clause~(ii), existence and uniqueness of the Step-2 phase mean, is proved by
re-deriving the Karcher--Afsari theorem on the infinite-dimensional warp sphere
(Appendix~\ref{app:ka}). That argument uses the geometric preliminaries of
Appendix~\ref{app:assump}: the feasible set (Lemma~\ref{lem:feasible}), convexity
of the squared distance (Lemma~\ref{lem:sphereconv}), and convexity of small
balls (Lemma~\ref{lem:ballconv}).

\subsection{Clause (i) of Proposition~\ref{thm:exist}: the Step-1 mean shape exists (the direct method)} \label{app:existi}

The task is to produce a shape minimising
$M([q])=\E[\del^2([q],[q_Z])]$ over $\Sw$, for any law of a curve $Z$ satisfying Assumption~\ref{ass:warp}(iii-a) and
satisfying Assumption~\ref{ass:mom} (an arm law, a difference law, or a local
difference law: Corollary~\ref{cor:wp3}). The difficulty is that in the
infinite-dimensional $\Sw$, closed bounded sets
are not compact, so the elementary route ``a continuous function on a compact set
attains its minimum'' (\ref{T:weier}) is unavailable. We instead run the
\emph{direct method of the calculus of variations} \citep{dacorogna_direct_2008} on
the Hilbert space $\Lt$ in which shapes are represented:
\begin{quote}\small
A functional that is (a) \emph{coercive} (its sublevel sets are bounded) and (b)
\emph{weakly lower-semicontinuous} (it cannot jump down along weakly
convergent sequences) attains its minimum on a Hilbert space, because a minimising
sequence is bounded (by (a)), hence has a weak limit (\ref{T:weakcpt}), at which the
functional is no larger than the limiting values (by (b)).
\end{quote}
The functional in question is the lifted objective $G(q)=\E[h(q;q_Z)]$, where
$h$ is the registration residual and $q_Z\in\Lt$ is the random transform
(Proposition~\ref{prop:srsf}(iii)). The groundwork is already done in
Appendix~\ref{app:lemmas}: $h$ is well defined and bounded below by the squared elastic distance
(Lemma~\ref{lem:attain}), and for fixed $q$ the map
$\omega\mapsto h(q;q_Z(\omega))$ is a non-negative random variable
(Lemma~\ref{lem:hcont}), so $G(q)\in[0,\infty]$ is well defined,
$\inf_{\Lt}G=\inf_{\Sw}M$, and if $q^\star$ minimises $G$ then $[q^\star]$
minimises $M$. Indeed, $G(q)\ge M([q])$ for every $q$ (Lemma~\ref{lem:attain}(b)), so $\inf_{\Lt}G\ge\inf_{\Sw}M$. Conversely, Assumption~\ref{ass:warp}(iii-a) gives representatives $q_k$ with $G(q_k)=M([q_k])\to\inf_{\Sw}M$, so $\inf_{\Lt}G\le\inf_{\Sw}M$. Finally, if $q^\star$ minimises $G$, then $M([q^\star])\le G(q^\star)=\inf_{\Lt}G=\inf_{\Sw}M$.

\begin{proofenv}[Proof of Proposition~\ref{thm:exist}(i).]
The result is new: existence of the Step-1 mean for the lifted, population-level
elastic objective is one of this paper's contributions
(Theorem~\ref{thm:wellposed}(i)). What is ours is the check that the
elastic objective survives the inner registration: coercivity via the isometric
action (Lemma~\ref{lem:iso}), and weak lower semicontinuity via
Assumption~\ref{ass:warp}(i) and Lemma~\ref{lem:compos}. All unsubscripted norms
are $\Lt$ norms.

\smallskip
\noindent\emph{Step A (coercivity).} For every
$\gamma\in\Gc\subset\Gt$, norm preservation (Lemma~\ref{lem:iso}) gives
$\norm{(y,\gamma)}=\norm y$, so by the reverse triangle inequality
\[
\norm{q-(y,\gamma)}\ \ge\ \big|\norm q-\norm{(y,\gamma)}\big|=\big|\norm q-\norm y\big| ,
\]
which implies
\[
h(q;y)\ \ge\ \big(\norm q-\norm y\big)^2 .
\]
Take $y=q_Z$ and expectations:
\begin{flalign*}
G(q)
&\ge\E\big[(\norm q-\norm{q_Z})^2\big]
\why{monotonicity of $\E$}\\
&=\norm q^2-2\norm q\,\E\norm{q_Z}+\E\norm{q_Z}^2 ,
\why{expand; $\norm q$ is a constant}
\end{flalign*}
where $\E\norm{q_Z}\le(\E\norm{q_Z}^2)^{1/2}<\infty$ by the Cauchy--Schwarz inequality and
Assumption~\ref{ass:mom}. The right-hand side is a quadratic in $\norm q$ with
leading coefficient $1$: it tends to $+\infty$ as $\norm q\to\infty$. Hence every
sublevel set $\{q:G(q)\le c\}$ is bounded in $\Lt$.

\smallskip
\noindent\emph{Step B (weak lower-semicontinuity).} We must show: if
$q_n\rightharpoonup q$ weakly in $\Lt$, then $\liminf_nG(q_n)\ge G(q)$. We first
prove that, for each fixed $y\in\Lt$,
\begin{equation}\label{eq:ptwiselsc}
L\ :=\ \liminf_n\,h(q_n;y)\ \ge\ h(q;y),
\end{equation}
where the liminf is along the full sequence. Choose a
subsequence $(n_j)$ along which $h(q_{n_j};y)\to L$, and let $\gamma_{n_j}\in\Gc$ be
corresponding minimisers (Lemma~\ref{lem:attain}(a)):
\[
h(q_{n_j};y)=\norm{q_{n_j}-(y,\gamma_{n_j})}^2 .
\]
By compactness of $\Psi(\Gc)$
(Assumption~\ref{ass:warp}(i)), pass to a further subsequence (not relabelled) with
$\sqrt{\dot\gamma_{n_j}}\to\sqrt{\dot\gamma_\star}$ uniformly, $\gamma_\star\in\Gc$.
Then:
\begin{flalign*}
(y,\gamma_{n_j})&\to(y,\gamma_\star)\ \text{strongly in }\Lt
\why{Lemma~\ref{lem:compos}}\\
q_{n_j}-(y,\gamma_{n_j})&\rightharpoonup q-(y,\gamma_\star)
\why{weak $+$ strong $\Rightarrow$ weak (\ref{T:weaklsc})}\\
L=\lim_j\norm{q_{n_j}-(y,\gamma_{n_j})}^2
&\ge\norm{q-(y,\gamma_\star)}^2
\why{weak lower-semicontinuity of $\norm\cdot^2$ (\ref{T:weaklsc})}\\
&\ge\min_{\gamma\in\Gc}\norm{q-(y,\gamma)}^2=h(q;y).
\why{$\gamma_\star\in\Gc$ is one admissible competitor}
\end{flalign*}
This proves \eqref{eq:ptwiselsc}. Now integrate: 
\begin{flalign*}
\liminf_n G(q_n)
&=\liminf_n\E\big[h(q_n;q_Z)\big]
\why{definition of $G$}\\
&\ge\E\big[\liminf_n h(q_n;q_Z)\big]
\why{Fatou's lemma (\ref{T:fatou}); integrands $h(q_n;q_Z)\ge0$}\\
&\ge\E\big[h(q;q_Z)\big]
\why{\eqref{eq:ptwiselsc} at $y=q_Z(\omega)$ for a.e.\ $\omega$, then monotonicity of $\E$}\\
&=G(q).
\why{definition of $G$}
\end{flalign*}

\smallskip
\noindent\emph{Step C (assemble the direct method).} First, the infimum is finite:
taking the candidate $q=0$,
\[
G(0)=\E\Big[\min_{\gamma\in\Gc}\norm{(q_Z,\gamma)}^2\Big]
=\E\norm{q_Z}^2<\infty ,
\]
using norm preservation (Lemma~\ref{lem:iso}) and Assumption~\ref{ass:mom}. Set
$m:=\inf_{\Lt}G\in[0,\infty)$ and take a minimising sequence $G(q_n)\to m$.
Eventually $G(q_n)\le m+1$, so the sequence sits in the bounded sublevel set
$\{G\le m+1\}$ (Step A). By weak sequential compactness (\ref{T:weakcpt}) extract
$q_{n_k}\rightharpoonup q^\star$. By Step B,
\[
G(q^\star)\ \le\ \liminf_kG(q_{n_k})\ =\ m,
\]
and $G(q^\star)\ge m$ by definition of the infimum: $G(q^\star)=m$, the minimum is
attained. By the argument above the proof, $[q^\star]$ minimises $M$ over $\Sw$: the Step-1 mean
shape exists. (Uniqueness is not asserted here: it is imposed as
Assumption~\ref{ass:conc}(4a).)

\hfill$\square$
\end{proofenv}

The causal estimands of Section~\ref{sec:ident} minimise a reweighted objective.
The following corollary checks that a bounded reweighting changes nothing.

\begin{corollary}[Bounded reweightings, in particular IPW]\label{cor:reweight}
Let Assumptions~\ref{ass:reg}--\ref{ass:mom} hold and let $W\ge0$ be a bounded random
weight with $\E W=1$. Then the reweighted Step-1 objective
$G_W(q)=\E[W\,h(q;q_Z)]$ attains its minimum over $\Lt$. If moreover Assumption~\ref{ass:warp}(iii-a) holds for the reweighted law $W\,dP$, the shape of any minimiser of $G_W$ minimises the reweighted Step-1 objective $M_W([q])=\E[W\,\del^2([q],[q_Z])]$ over $\Sw$.
\end{corollary}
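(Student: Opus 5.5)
The plan is to rerun the direct-method proof of Proposition~\ref{thm:exist}(i) with $G_W$ in place of $G$, checking that each of its three steps survives a bounded, non-negative, mean-one weight. Write $W\le \bar w<\infty$.

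\emph{Finiteness and coercivity.} At $q=0$, norm preservation (Lemma~\ref{lem:iso}) gives
\[
G_W(0)=\E[W\norm{q_Z}^2]\le\bar w\,\E\norm{q_Z}^2<\infty
\]
by Assumption~\ref{ass:mom}, so $m_W:=\inf_{\Lt}G_W$ is finite. For coercivity I keep the pointwise bound $h(q;y)\ge(\norm q-\norm y)^2$ from Step~A. Multiplying by $W\ge0$ and taking expectations gives
\[
G_W(q)\ \ge\ \E W\,\norm q^2-2\norm q\,\E[W\norm{q_Z}]+\E[W\norm{q_Z}^2].
\]
Here $\E W=1$ is used so that the leading coefficient is exactly $1$, and $\E[W\norm{q_Z}]\le\bar w\,(\E\norm{q_Z}^2)^{1/2}<\infty$. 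The sublevel sets of $G_W$ are therefore bounded.

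\emph{Weak lower semicontinuity.} The pointwise inequality \eqref{eq:ptwiselsc}, $\liminf_n h(q_n;y)\ge h(q;y)$, is a deterministic statement about a fixed $y$ and does not involve the law, so it carries over unchanged. Measurability of $\omega\mapsto W\,h(q;q_Z)$ follows from Lemma~\ref{lem:hcont} and the measurability of $W$. Fatou's lemma (\ref{T:fatou}) then applies to the non-negative integrands $W\,h(q_n;q_Z)$, and since $\liminf_n W a_n=W\liminf_n a_n$ for $W\ge0$, we get $\liminf_n G_W(q_n)\ge G_W(q)$.

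\emph{Existence of a minimiser of $G_W$.} Take a minimising sequence. Coercivity keeps it bounded, so weak compactness (\ref{T:weakcpt}) gives a weak limit $q^\star$. Weak lower semicontinuity then yields $G_W(q^\star)=m_W$, which proves the first claim.

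\emph{Transfer to shapes.} I use Lemma~\ref{lem:attain}(b) pointwise: $h\ge\del^2$, and since $W\ge0$ this gives $G_W(q)\ge M_W([q])$ for every $q$, hence $\inf_{\Lt} G_W\ge\inf_{\Sw}M_W$. For the reverse inequality I invoke Assumption~\ref{ass:warp}(iii-a) for the reweighted law $W\,dP$. It supplies representatives $p_k$ with $h(p_k;q_Z)=\del^2([p_k],[q_Z])$ for $W\,dP$-almost every $Z$, and $M_W([p_k])\to\inf M_W$.

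This is the step I expect to need the most care. The almost-sure clause holds only under $W\,dP$, so the equality may fail on $\{W=0\}$. On that set, however, the integrand $W\,h$ vanishes anyway, so the expectations still agree and $G_W(p_k)=M_W([p_k])$. Consequently $\inf G_W\le\inf M_W$, and for any minimiser $q^\star$,
\[
M_W([q^\star])\le G_W(q^\star)=\inf_{\Lt} G_W=\inf_{\Sw} M_W,
\]
so $[q^\star]$ minimises $M_W$ over $\Sw$.

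Finally, normalisation of $\E W$ is not essential: any positive constant factor rescales both objectives and leaves their minimisers unchanged. Boundedness of $W$ is what transfers the finite second moment to $W\,dP$, so that $G_W(0)<\infty$ and the linear term in the coercivity bound is finite.
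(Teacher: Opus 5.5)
Your proposal is correct and follows the paper's proof essentially step for step. You rerun Steps~A--C of the direct method with the weight multiplied into the coercivity bound and into Fatou's lemma, then transfer to shapes through Lemma~\ref{lem:attain}(b) and Assumption~\ref{ass:warp}(iii-a) for $W\,dP$; your observation that the $W\,dP$-almost-sure equality can fail only where $W=0$, where the integrand vanishes anyway, spells out a point the paper leaves implicit.
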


\begin{proofenv}[Proof.]
\emph{Why the IPW weight qualifies.} It is $W=\ind_{A=a}/e_a(X)$, with
$e_a(X)=\Prob(A=a\mid X)$ the propensity score. Assumption~\ref{ass:pos} supplies $c_e>0$ with
$e_a(X)\ge c_e$ a.s., so $W\in[0,1/c_e]$ is non-negative and bounded. Its mean is $1$ by the
Horvitz--Thompson identity,
\[
  \E W=\E_{X}\!\Big[\tfrac{1}{e_a(X)}\,\E_{A\mid X}[\ind_{A=a}]\Big]
  =\E_{X}\!\Big[\tfrac{e_a(X)}{e_a(X)}\Big]=1 
  \qquad\big(\text{tower rule};\ \E_{A\mid X}[\ind_{A=a}]=e_a(X)\big),
\]
so $W\,dP$ is a probability measure (the reweighting to the target population) and $G_W$
is a genuine Fréchet function.

\emph{The direct method survives.} Non-negativity and boundedness of $W$ preserve every
ingredient of Steps~A--C of the proof above, together with Lemmas~\ref{lem:attain} and~\ref{lem:hcont}. The integrand $W\,h(q;q_Z)$ is measurable (a product of random
variables). Multiplying $h(q;q_Z)\ge(\norm q-\norm{q_Z})^2$ by $W$ and using $\E W=1$,
\[
  G_W(q)\ \ge\ \norm q^2-2\norm q\,\E[W\norm{q_Z}]+\E[W\norm{q_Z}^2],
\]
whose weighted moments are finite since $W\le\norm{W}_\infty$ bounds them by $\norm{W}_\infty$ times the unweighted ones
(Assumption~\ref{ass:mom}), so $G_W$ is coercive. Multiplying inequality
\eqref{eq:ptwiselsc} (at $y=q_Z(\omega)$) by $W(\omega)\ge0$ and applying Fatou (\ref{T:fatou}) gives weak
lower-semicontinuity, and $G_W(0)=\E[W\norm{q_Z}^2]<\infty$. The direct method
(Steps~B--C of the proof above) then runs directly, so the weighted Step-1 objective
attains its minimum. The transfer from $G_W$ to $M_W$ is the argument above the proof of Proposition~\ref{thm:exist}(i), run on the law $W\,dP$; Lemma~\ref{lem:attain}(b), on which it relies, is unaffected by the weight. This covers the IPW objective of Section~\ref{sec:ident}.
\hfill$\square$
\end{proofenv}

\subsection{Clause (ii) of Proposition~\ref{thm:exist}: the Step-2 phase mean exists and is unique (Karcher--Afsari re-derived)} \label{app:ka}

\paragraph{Why a re-derivation is needed.} The Karcher--Afsari theory
\citep{karcher_riemannian_1977,afsari_riemannian_2011} gives existence and uniqueness
of Fréchet means of concentrated laws on finite-dimensional complete manifolds. Its
uniqueness half rests on convexity of the squared distance and is dimension-free.
Its existence half extracts a convergent subsequence of a minimising sequence
through Hopf--Rinow compactness of closed balls, and this fails on the
infinite-dimensional sphere $\Sphere(\sqrt T)$, whose closed balls are not compact.
We therefore prove both halves from first principles, replacing compactness by
completeness combined with strong convexity, using only the geometry of
Appendix~\ref{app:assump} (Lemmas~\ref{lem:feasible}, \ref{lem:sphereconv}
and~\ref{lem:ballconv}).

\paragraph{Setup and notation.} The Step-2 average is the Fréchet mean,
on $\Sphere(\sqrt T)$, of the law of the optimal aligning warp:
\begin{itemize}[leftmargin=1.6em,itemsep=1pt]
\item $\psi^\star_{q_Z\to\tilde q}$ is a random element of the feasible set $\Psi(\Gt)$: for each
sample outcome $\omega$,
\[
  \psi^\star_{q_Z(\omega)\to\tilde q}\ =\ \sqrt{\dot\gamma^\star_{q_Z(\omega)\to\tilde q}},
\]
the warp-SRSF of the optimal warp $\gamma^\star_{q_Z(\omega)\to\tilde q}$ registering the
random data transform $q_Z(\omega)$ to the reference representative $\tilde q$.
The reference representative $\tilde q$ is a fixed representative of the population Step-1 mean
shape $\shp(P)$, a deterministic functional of the law $P$, the same for every
$\omega$, so the only randomness in $\psi^\star_{q_Z\to\tilde q}$ enters through $q_Z(\omega)$:
writing $\Phi_{\tilde q}:y\mapsto\psi^\star_{y\to\tilde q}$ for the
(fixed) registration map, $\psi^\star_{q_Z\to\tilde q}=\Phi_{\tilde q}\circ q_Z$. This map is well
defined and measurable: for the fixed $\tilde q$ the optimal warp is the unique
minimiser (Assumption~\ref{ass:warp}(iii)) of a jointly continuous objective over
the compact $\Gc$, hence a measurable function of the data by the
Kuratowski--Ryll-Nardzewski selection theorem (\ref{T:krn}; see
Appendix~\ref{app:ident}). So $\psi^\star_{q_Z\to\tilde q}$ is a $\Psi(\Gt)$-valued random
variable.
\item The objective is the \emph{Fréchet function}, that is, the Step-2 objective $N$ of Section~\ref{sec:lift} (notation of Appendix~\ref{app:toolbox}) written as a function of the warp-SRSF $\psi=\psi_\gamma$, for a fixed reference representative and without weights (weights are treated in step (5e)):
\[
N(\psi)=\E\big[\dG^2(\psi,\psi^\star_{q_Z\to\tilde q})\big],\qquad\psi\in\Psi(\Gt),
\]
to be minimised over the feasible set $\Psi(\Gt)$ (not over the whole sphere: Step~2
must return a warp).
\item Assumption~\ref{ass:conc}(4b) supplies a point $\psi_0\in\Psi(\Gt)$ (the centre
is itself in the feasible set $\Psi(\Gt)$; see Remark~\ref{rem:centre} for what happens without this) and
a radius $r_0<\pi\sqrt T/8$ such that $\psi^\star_{q_Z\to\tilde q}\in\bar B(\psi_0,r_0)$ almost surely,
where $\bar B(\psi_0,r)=\{\psi\in\Sphere(\sqrt T):\dG(\psi,\psi_0)\le r\}$. We call
any point of $\bar B(\psi_0,r_0)\cap\Psi(\Gt)$ a \emph{support point}.
\item Throughout, $r^\star=\pi\sqrt T/2$ denotes the quarter-turn radius: within
angular distance $<\pi/2$ of a point, the squared distance to that point is strongly
convex (Lemma~\ref{lem:sphereconv}), and balls of radius $<r^\star$ are geodesically
convex (Lemma~\ref{lem:ballconv}).
\end{itemize}
We show that $N$ has exactly one minimiser over $\Psi(\Gt)$, in four steps:
localisation, convexity of the reduced problem, uniqueness, existence.

Figure~\ref{fig:kalocal} illustrates the argument on a cross-section of the sphere.
Figure~\ref{fig:kalocal3d} shows the same localisations on the warp sphere itself,
together with the convexity modulus they give.

\begin{figure}[htbp]
\centering
\begin{tikzpicture}[>=Stealth, scale=0.98]
  \coordinate (O) at (0,0);
  \draw[black!35] (O) circle (3);
  \fill (O) circle (1.3pt); \node[below left=-2pt, font=\small] at (O) {$0$};
  \draw[->, black!45] (O) -- (215:3) node[midway, above, sloped, font=\footnotesize] {$\sqrt T$};
  \draw[line width=1.5pt, figgreen] (3,0) arc[start angle=0, end angle=90, radius=3];
  \node[font=\small, anchor=south, figgreen] at (90:3.14) {$\Psi(\Gt)$};
  \draw[black!50, densely dotted] (O) -- (135:3);
  \node[font=\scriptsize, black!60, anchor=south east, align=right] at (137:3.05)
    {$\dG(\cdot,\psi_0)=r^\star=\tfrac{\pi\sqrt T}{2}$:\\ convexity is lost here};
  \fill[figblue] (45:3) circle (1.8pt);
  \node[font=\footnotesize, figblue] at (45:2.72) {$\psi_0$};
  \filldraw[fill=white, draw=figblue, thin] (38.5:3) circle (1.5pt); \filldraw[fill=white, draw=figblue, thin] (41:3) circle (1.5pt); \filldraw[fill=white, draw=figblue, thin] (48:3) circle (1.5pt);
  \filldraw[fill=white, draw=figblue, thin] (50.5:3) circle (1.5pt); \filldraw[fill=white, draw=figblue, thin] (52.5:3) circle (1.5pt);
  \draw[figblue!85] (37:3.42) arc[start angle=37, end angle=53, radius=3.42];
  \draw[figblue!85] (37:3.34) -- (37:3.50); \draw[figblue!85] (53:3.34) -- (53:3.50);
  \node[font=\footnotesize, rotate=-45, figblue] at (45:3.68) {$\bar B(\psi_0,r_0)$};
  \draw[figorange!90!black] (29:3.92) arc[start angle=29, end angle=61, radius=3.92];
  \draw[figorange!90!black] (29:3.84) -- (29:4.00); \draw[figorange!90!black] (61:3.84) -- (61:4.00);
  \node[font=\footnotesize, rotate=-45, figorange!90!black] at (45:4.16) {$\bar B(\psi_0,2r_0)$};
  \fill[figred] (6:3) circle (1.8pt);
  \node[font=\footnotesize, anchor=west, figred] at (5:3.16) {$\psi$};
  \draw[<->, figred] (7.5:2.45) arc[start angle=7.5, end angle=43, radius=2.45];
  \node[font=\footnotesize, figred] at (24:2.08) {$d>2r_0$};
  \draw[figred!60, densely dashed] (7:2.82) arc[start angle=7, end angle=37.5, radius=2.82];
  \node[font=\scriptsize, figred!85] at (2:2.46) {$\ge d-r_0$};
  \draw[<->, figorange!90!black] (38:1.75) arc[start angle=38, end angle=60.5, radius=1.75];
  \node[font=\scriptsize, figorange!90!black] at (49:1.42) {$\le 3r_0$};
  \node[anchor=west, align=left, font=\footnotesize, text width=5.7cm] at (4.35,1.0) {%
    \emph{(5a) localisation:} {\color{figblue}$N(\psi_0)\le r_0^2$}, while {\color{figred}any feasible $\psi$ with $d=\dG(\psi,\psi_0)>2r_0$ has $N(\psi)\ge(d-r_0)^2>r_0^2$}; so every minimiser lies in {\color{figorange!90!black}$K=\bar B(\psi_0,2r_0)\cap\Psi(\Gt)$}.\\[4pt]
    \emph{(5b) convexity:} on $K$, every in-play distance to a support point stays {\color{figorange!90!black}$\le3r_0$}$\,<r^\star$ (as $r_0<\pi\sqrt T/8$), so $N$ is $\lambda$-strongly convex with $\lambda=2\theta_{\max}\cot\theta_{\max}$, $\theta_{\max}=3r_0/\sqrt T$.\\[4pt]
    \emph{(5c)--(5d):} strong convexity forces uniqueness; Cauchy $+$ completeness force existence.};
\end{tikzpicture}
\caption{\textbf{The localisation step of the Karcher--Afsari re-derivation}, on a cross-section of the warp sphere. The optimal-warp law is supported in the ball $\bar B(\psi_0,r_0)$ around the feasible centre $\psi_0$ (dots: support points of $\psi^\star_{q_Z\to\tilde q}$); a candidate $\psi$ at distance $d>2r_0$ from $\psi_0$ lies at distance $\ge d-r_0$ from every support point (dashed arc); inside $K$ all distances in play stay below $3r_0$, within the quarter-turn $r^\star=\pi\sqrt T/2$.}
\label{fig:kalocal}
\end{figure}

\begin{figure}[htbp]
\centering
\input{fig_warpsphere3d_bias}
\caption{\textbf{The localisations of the Step-2 re-derivation, on the warp
sphere itself} (3D companion to Figure~\ref{fig:kalocal}). Inside the
feasible set $\Psi(\Gt)$ (green): the support ball $\bar B(\psi_0,r_0)$
around a feasible centre $\psi_0$ (e.g.\ the identity), the coercivity ball $\bar B(\psi_0,2r_0)$
containing every minimiser, two candidates $\psi_1,\psi_2$ with midpoint
$\psi_m$, and the quarter-turn $r^\star=\pi\sqrt T/2$ (red), where
$\rho\cot\rho$ (inset) vanishes.}
\label{fig:kalocal3d}
\end{figure}

\begin{proofenv}[Re-derivation of Karcher--Afsari on the warp sphere (Step-2 existence and uniqueness).]
Steps (5a) to (5d) are the four steps announced above; step (5e) extends them to bounded reweightings.

\emph{(5a) Localisation: the problem happens inside a small ball.}
\emph{Intuition.} A candidate far from the data cloud is far from every support
point at once, so its average squared distance is large and it cannot beat the
cloud's own centre $\psi_0$, which is available as a competitor because
Assumption~\ref{ass:conc}(4b) makes it feasible. First bound the centre's own
objective value: since $\dG(\psi_0,\psi^\star_{q_Z\to\tilde q})\le r_0$ almost surely,
\begin{flalign*}
N(\psi_0)&=\E\big[\dG^2(\psi_0,\psi^\star_{q_Z\to\tilde q})\big]\ \le\ r_0^2 .
\why{monotonicity of $\E$; $\dG\le r_0$ a.s.}
\end{flalign*}
Next, let $\psi\in\Psi(\Gt)$ be any feasible candidate with
$d:=\dG(\psi,\psi_0)>2r_0$. For almost every realisation of $\psi^\star_{q_Z\to\tilde q}$,
\begin{flalign*}
\dG(\psi,\psi^\star_{q_Z\to\tilde q})&\ge \dG(\psi,\psi_0)-\dG(\psi_0,\psi^\star_{q_Z\to\tilde q})
\why{reverse triangle inequality}\\
&\ge d-r_0\ >\ r_0\ >\ 0 .
\why{$\psi^\star_{q_Z\to\tilde q}$ in the $r_0$-ball; $d>2r_0$}
\end{flalign*}
Both sides being non-negative, we may square and take expectations:
\begin{flalign*}
N(\psi)&\ge (d-r_0)^2\ >\ r_0^2\ \ge\ N(\psi_0).
\why{monotonicity of $\E$; then the two bounds above}
\end{flalign*}
The inequality $N(\psi)>N(\psi_0)$ is strict: no feasible point outside
$\bar B(\psi_0,2r_0)$ can minimise $N$ over $\Psi(\Gt)$, because the feasible
competitor $\psi_0$ already does strictly better. Therefore
\[
\inf_{\Psi(\Gt)}N=\inf_{K}N,
\qquad
K:=\bar B(\psi_0,2r_0)\cap\Psi(\Gt)\ \ni\ \psi_0 ,
\]
and every minimiser of $N$ over $\Psi(\Gt)$ (if one exists) lies in $K$. On $K$
all the distances that appear are small: for $\psi\in K$ and a.e.\ realisation of
$\psi^\star_{q_Z\to\tilde q}$,
\[
\dG(\psi,\psi^\star_{q_Z\to\tilde q})\le\dG(\psi,\psi_0)+\dG(\psi_0,\psi^\star_{q_Z\to\tilde q})\le2r_0+r_0=3r_0,
\]
by the triangle inequality. In particular $N\le9r_0^2<\infty$ on $K$.

\smallskip
\noindent\emph{(5b) The reduced problem is convex.} The set $K$ is the intersection
of two geodesically convex sets: the ball $\bar B(\psi_0,2r_0)$, convex by
Lemma~\ref{lem:ballconv} since its radius satisfies
$2r_0<2\cdot\pi\sqrt T/8=\pi\sqrt T/4<r^\star$, and the feasible set $\Psi(\Gt)$,
convex by the remark after Lemma~\ref{lem:ballconv}. Hence $K$ is geodesically
convex: the minimizing arc between any two of its points stays in $K$.
$K$ is also closed ($\Psi(\Gt)$ is closed by Lemma~\ref{lem:feasible}) and non-empty
($\psi_0\in K$).

Now fix two points $\psi_1,\psi_2\in K := \bar B(\psi_0,2r_0)\cap\Psi(\Gt)$ and let $\{\psi(s)\}_{s\in[0,L]}$ be the
unit-speed minimizing arc between them, $L=\dG(\psi_1,\psi_2)\le4r_0<\pi\sqrt T$. The
arc stays in $K$, so along it every in-play angular distance obeys
\[
\rho(s)=\frac{\dG(\psi(s),y)}{\sqrt T}\ \le\ \theta_{\max}:=\frac{3r_0}{\sqrt T}
\ <\ \frac{3\pi}{8}\ <\ \frac\pi2
\qquad\text{for every support point }y,
\]
using $r_0<\pi\sqrt T/8$. Lemma~\ref{lem:sphereconv} then gives, for each fixed
support point $y$, writing $g_y(s)=\dG^2(\psi(s),y)$:
\[
g_y''(s)\ \ge\ 2\,\theta_{\max}\cot\theta_{\max}\ =:\ \lambda\ >\ 0
\qquad\text{along the whole arc,}
\]
and the midpoint inequality (\ref{T:midpt}) turns this second-derivative bound into
\[
g_y\big(\tfrac L2\big)\ \le\ \tfrac12g_y(0)+\tfrac12g_y(L)-\tfrac{\lambda}{8}L^2 .
\]
This holds for every support point $y$, that is for almost every realisation
$y=\psi^\star_{q_Z(\omega)\to\tilde q}$. Taking expectations (allowed since all the $g_y$ values are
bounded by $9r_0^2$) yields the midpoint strong-convexity inequality for $N$ on
$K$:
\begin{equation}\label{eq:strongconv}
N(\psi_m)\ \le\ \tfrac12N(\psi_1)+\tfrac12N(\psi_2)-\tfrac{\lambda}{8}\,\dG^2(\psi_1,\psi_2),
\qquad \psi_m:=\text{midpoint of }\psi_1,\psi_2 .
\end{equation}
With $\theta_{\max}<3\pi/8$ and
$\rho\cot\rho$ decreasing (Lemma~\ref{lem:sphereconv}),
\[
\lambda\ \ge\ 2\cdot\frac{3\pi}{8}\cot\frac{3\pi}{8}
=\frac{3\pi}{4}(\sqrt2-1)\ \approx\ 0.976 ,
\]
a uniform, explicitly computable modulus. (We used
$\cot(3\pi/8)=\tan(\pi/8)=\sqrt2-1$.)

\smallskip
\noindent\emph{(5c) Uniqueness.} Suppose $\psi_1\ne\psi_2$ both minimised $N$ over
$\Psi(\Gt)$. By (5a) both lie in $K$, and their midpoint $\psi_m$ lies in $K$ (as
$K$ is geodesically convex), hence is feasible. Then \eqref{eq:strongconv} gives
\[
N(\psi_m)\le\min_{\Psi(\Gt)}N-\tfrac{\lambda}{8}\dG^2(\psi_1,\psi_2)
<\min_{\Psi(\Gt)}N ,
\]
a contradiction ($\psi_m$ would be a feasible point strictly below the minimum). So
the minimiser, if it exists, is unique.

\smallskip
\noindent\emph{(5d) Existence: completeness replaces compactness.}
\emph{Intuition.} In finite dimensions one extracts a convergent subsequence of a
minimising sequence by compactness, which is not available here. But strong
convexity forces any two near-minimisers to be close to each other, so a minimising
sequence must be Cauchy, and completeness then provides the limit. Let $N^\star=\inf_KN\ (\le N(\psi_0)\le r_0^2<\infty$, and $\ge0)$ and pick
$\psi_n\in K$ with $N(\psi_n)\to N^\star$. For indices $m,n$, let $p_{mn}\in K$ be
the midpoint of $\psi_m,\psi_n$ (in $K$ by geodesic convexity, hence feasible, hence
$N(p_{mn})\ge N^\star$). By \eqref{eq:strongconv},
\begin{flalign*}
\tfrac{\lambda}{8}\,\dG^2(\psi_m,\psi_n)
&\le\tfrac12N(\psi_m)+\tfrac12N(\psi_n)-N(p_{mn})
\why{midpoint strong-convexity \eqref{eq:strongconv}, rearranged}\\
&\le\tfrac12N(\psi_m)+\tfrac12N(\psi_n)-N^\star
\ \xrightarrow[m,n\to\infty]{}\ 0 ,
\why{$N(p_{mn})\ge N^\star$; $N(\psi_n)\to N^\star$}
\end{flalign*}
so $(\psi_n)$ is a Cauchy sequence for $\dG$. Since the angle between sphere points
is comparable to their chordal ($\Lt$) distance
($\norm{\psi-\varphi}\le\dG(\psi,\varphi)$, the chord being shorter than the arc),
$(\psi_n)$ is Cauchy in $\Lt$, hence converges in $\Lt$ to some $\psi^\dagger$.
Since $K$ is a closed subset of the complete space $\Lt$ and $\psi^\dagger$ is a
limit of points of $K$, we have $\psi^\dagger\in K$. Convergence also holds in $\dG$
($\dG(\psi_n,\psi^\dagger)=\sqrt T\arccos(\inner{\psi_n}{\psi^\dagger}/T)\to0$ by
continuity of the inner product and of $\arccos$). Finally $N$ is continuous on
$K$, in fact Lipschitz: for $\psi,\varphi\in K$,
\begingroup\setlength{\whywidth}{0.25\textwidth}%
\begin{flalign*}
|N(\psi)-N(\varphi)|
&\le\E\Big[\big|\dG^2(\psi,\psi^\star_{q_Z\to\tilde q})-\dG^2(\varphi,\psi^\star_{q_Z\to\tilde q})\big|\Big]
\why{Jensen/triangle for $\E$}\\
&=\E\Big[\big(\dG(\psi,\psi^\star_{q_Z\to\tilde q})+\dG(\varphi,\psi^\star_{q_Z\to\tilde q})\big)\\
&\qquad\times\big|\dG(\psi,\psi^\star_{q_Z\to\tilde q})-\dG(\varphi,\psi^\star_{q_Z\to\tilde q})\big|\Big]
\why{$|a^2-b^2|=(a+b)|a-b|$, $a,b\ge0$}\\
&\le 6r_0\,\dG(\psi,\varphi),
\why{in-play distances $\le3r_0$ (5a); reverse triangle inequality}
\end{flalign*}\endgroup
so as $\psi_n\to\psi^\dagger$, $N(\psi_n)\to N(\psi^\dagger)$. And as $\psi_n$ was defined so that $N(\psi_n)\to N^\star$, we have $N(\psi^\dagger)=N^\star$: the infimum is attained at
$\psi^\dagger\in K$. Combined with (5a) ($K$-minimisers are global feasible
minimisers, and no feasible minimiser exists outside $K$) and (5c) (uniqueness), $N$
has exactly one minimiser over $\Psi(\Gt)$. This is the Step-2 mean. By
Assumption~\ref{ass:conc}(4b) it moreover lies in $\Gc$, hence is bi-Lipschitz and invertible, which
is what the phase-centering step (undoing the mean warp)  uses.

\smallskip
\noindent\emph{(5e) Bounded reweightings (the IPW objective).} Let $W\ge0$ be a
bounded weight with $\E W=1$ (the IPW weight $W=\ind_{A=a}/e_a(X)\in[0,1/c_e]$
qualifies: $\E W=1$ by \ref{T:tower}), and consider
$N_W(\psi)=\E[W\,\dG^2(\psi,\psi^\star_{q_Z\to\tilde q})]$. All four moves go through directly:
\begin{itemize}[leftmargin=1.6em,itemsep=1pt]
\item the reweighted law of $\psi^\star_{q_Z\to\tilde q}$ (density $W$) is supported inside the
original support: if a set carries no mass, it carries no $W$-weighted mass; so the
concentration ball is unchanged;
\item localisation: $N_W(\psi_0)\le r_0^2\,\E W=r_0^2$ and
$N_W(\psi)\ge(d-r_0)^2\,\E W=(d-r_0)^2$ for $d>2r_0$, by the same pointwise bounds
multiplied by $W\ge0$ before taking expectations;
\item strong convexity: multiply the per-realisation midpoint inequality by
$W(\omega)\ge0$ and integrate; $\E W=1$ reproduces \eqref{eq:strongconv} with the
same modulus $\lambda$;
\item uniqueness and existence then repeat word for word.
\end{itemize}
\hfill$\square$
\end{proofenv}

\begin{remark}[If the ball centre is infeasible.]\label{rem:centre}
Assumption~\ref{ass:conc}(4b) centres the concentration ball at a feasible point
$\psi_0\in\Psi(\Gt)$ (naturally the identity, $\psi_0\equiv1$). This can be dropped.
Suppose only $\psi^\star_{q_Z\to\tilde q}\in\bar B(\psi_0,r_0)$ a.s.\ for an arbitrary
$\psi_0\in\Sphere(\sqrt T)$, still with $r_0<\pi\sqrt T/8$. Support points are feasible and
exist ($\psi^\star_{q_Z\to\tilde q}\in\Psi(\Gt)$ a.s., $\Psi(\Gt)$ closed by Lemma~\ref{lem:feasible}), so
pick one, $y_0$, as competitor in place of $\psi_0$. Being $r_0$ off-centre, $y_0$ inflates
every bound by one $r_0$: localisation lands in $\bar B(\psi_0,3r_0)$ and the largest
 distance in play rises to $4r_0$, so the convexity requirement becomes
$4r_0<r^\star=\pi\sqrt T/2$, that is $r_0<\pi\sqrt T/8$, which is exactly the
radius of (4b). Steps (5b)--(5e) then run verbatim with
$\theta_{\max}=4r_0/\sqrt T<\pi/2$. The price is only quantitative:
$\lambda=2\theta_{\max}\cot\theta_{\max}$ stays positive but decays to $0$ as
$r_0\uparrow\pi\sqrt T/8$, whereas a feasible centre keeps $\lambda\ge0.976$
uniformly.
\end{remark}

\begin{remark}[The radius constants: $\pi\sqrt T/6$ versus $\pi\sqrt T/8$.]\label{rem:radii}
With the feasible centre of (4b), the proof only needs the largest distance in
play to stay inside the quarter-turn: $3r_0<r^\star=\pi\sqrt T/2$, that is
$r_0<\pi\sqrt T/6$. This is sharp for this argument: beyond it the distance in
play reaches the hemisphere, where $\rho\cot\rho$ (hence $\lambda$) vanishes.
Assumption~(4b) takes the stronger $\pi\sqrt T/8$ for two reasons, both visible
here: it gives a uniform modulus
$\lambda\ge\tfrac{3\pi}{4}(\sqrt2-1)\approx0.976$, bounded away from $0$ (at
$r_0\uparrow\pi\sqrt T/6$ the modulus decays to $0$), and it is exactly the
threshold under which the result survives an infeasible ball centre
(Remark~\ref{rem:centre}).
\end{remark}

\section{Proofs of Theorem~\ref{thm:wellposed}, points (iv)--(v) and assembly, and of Proposition~\ref{prop:propsB}} \label{app:inv}

Point (iv) of Theorem~\ref{thm:wellposed} is the content of the following
proposition, and point (v) that of Lemma~\ref{lem:alignavg} below. The section closes with the assembly of the five points into the
theorem, and with the proof of Proposition~\ref{prop:propsB}.

\begin{proposition}\label{thm:inv}
Under Assumptions~\ref{ass:reg},~\ref{ass:warp} and~\ref{ass:conc}(4b), the phase-centered registered curve $Z\circ\gamma^\star_{q_Z\to\tilde q}\circ\bar\gamma^{-1}$ (for every realisation of $Z$), hence the recovered mean $\mu(P)$, does not depend on the admissible representative $\tilde q$ chosen in Step~2, within the class of representatives related to one another by invertible warps, which contains all the admissible representatives (see below).
\end{proposition}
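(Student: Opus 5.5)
The plan is the equivariance argument sketched in the outline of Theorem~\ref{thm:wellposed}. Let $\tilde q$ and $\tilde q'$ be two admissible representatives, related by an invertible warp: $\tilde q'=(\tilde q,\gamma_0)$ for some $\gamma_0\in\G$. I will show three things in turn. First, the optimal warps transform as $\gamma^\star_{q_Z\to\tilde q'}=\gamma^\star_{q_Z\to\tilde q}\circ\gamma_0$ for almost every $Z$. Second, the Step-2 means transform as $\bar\gamma'=\bar\gamma\circ\gamma_0$. Third, in Step~3 the factor $\gamma_0$ cancels. I will also need to justify the claim that every admissible representative is of this form. Both representatives are nowhere flat and lie in the same shape $[\tilde q]$, so I expect to use \citet[Cor.~9]{bruveris_optimal_2016} to write each as a $\Gt$-retiming of the other. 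Nowhere-flatness of both sides should then force the connecting warp to have $\dot\gamma_0>0$ a.e., hence to lie in $\G$.

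\emph{Step 1 (registrations).} By Lemma~\ref{lem:iso}, the action is isometric and satisfies the composition rule $((q,\beta),\gamma_0)=(q,\beta\circ\gamma_0)$. For $\gamma_0\in\G$ this gives
\[
\norm{(\tilde q,\gamma_0)-(q_Z,\beta\circ\gamma_0)}=\norm{\tilde q-(q_Z,\beta)}.
\]
The isometry is exact here because $\gamma_0$ is invertible, via the change of variables \ref{T:cov}(a). Since $\beta\mapsto\beta\circ\gamma_0$ is a bijection of $\Gt$ (with inverse $\beta\mapsto\beta\circ\gamma_0^{-1}$), the minimisers correspond. The uniqueness in Assumption~\ref{ass:warp}(iii-b) at both representatives then gives $\gamma^\star_{q_Z\to\tilde q'}=\gamma^\star_{q_Z\to\tilde q}\circ\gamma_0$ a.s.

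\emph{Step 2 (phase mean).} I will apply \ref{T:isom} with $R(\psi_\beta)=\psi_{\beta\circ\gamma_0}=(\psi_\beta\circ\gamma_0)\sqrt{\dot\gamma_0}$, that is, $R=(\cdot,\gamma_0)$ acting on warp-SRSFs. By Lemma~\ref{lem:iso}, $R$ is linear and preserves inner products, hence it is a $\dG$-isometry of $\Sphere(\sqrt T)$. It is bijective, with inverse $(\cdot,\gamma_0^{-1})$, and it maps $\Psi(\Gt)$ onto itself by Lemma~\ref{lem:feasible}. By Step~1, the law of the optimal warp-SRSFs at $\tilde q'$ is the push-forward by $R$ of the law at $\tilde q$. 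Proposition~\ref{thm:exist}(ii) gives uniqueness at both representatives, so \ref{T:isom} yields $\bar\gamma'=\bar\gamma\circ\gamma_0$.

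\emph{Step 3 (cancellation).} By Assumption~\ref{ass:conc}(4b), both means lie in $\Gc$ and are invertible, so
\[
\gamma^\star_{q_Z\to\tilde q'}\circ\bar\gamma'^{-1}=\gamma^\star_{q_Z\to\tilde q}\circ\gamma_0\circ\gamma_0^{-1}\circ\bar\gamma^{-1}=\gamma^\star_{q_Z\to\tilde q}\circ\bar\gamma^{-1}.
\]
The phase-centered registered curves $Z\circ\eta_Z$ therefore coincide for almost every realisation, and so do their expectations $\mu(P)$.

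The main obstacle is the fine print in Step~1. The composition rule and the isometry must hold on $\Gt$, where warps may pause, and $\beta\circ\gamma_0$ must stay in $\Gt$ (and in $\Gc$ where that matters). Note that $\Gc$ is not closed under composition (Remark~\ref{rem:nogroup}). This is harmless because admissibility is assumed separately at each representative, so the argument only needs the minimisers over $\Gt$. The remaining delicate point is showing that the connecting warp between two nowhere-flat representatives is genuinely in $\G$. I would try to get it directly from $\tilde q'=(\tilde q\circ\gamma_0)\sqrt{\dot\gamma_0}\neq0$ a.e.
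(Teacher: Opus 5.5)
Your proposal is correct and follows the paper's proof essentially step for step. Both arguments use the one-$\G$-orbit claim (Bruveris's Cor.~9 plus $(\tilde q,\gamma_0)\neq0$ a.e.\ forcing $\dot\gamma_0>0$ a.e.), the re-indexing $\beta\mapsto\beta\circ\gamma_0$ of $\Gt$ for the registrations, \ref{T:isom} with $R=(\cdot,\gamma_0)$ for the mean warp, and the final cancellation. The only differences are minor: the paper obtains uniqueness at $\tilde q'$ by transporting it through \ref{T:isom} and checking that the concentration ball maps to a ball with feasible centre, where you invoke (4b) at both representatives; and the norm identity in your Step~1 holds for every warp of $\Gt$ by Lemma~\ref{lem:iso}, so invertibility of $\gamma_0$ is needed only for the bijection of $\Gt$.
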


\noindent\emph{Scope of the statement.} Step~2 of the construction picks a
representative $\tilde q$ of the Step-1 mean shape $\shp(P)$, and the phase-centered registered curves, which Step~3 averages, must not depend on this arbitrary choice. The proposition
asserts this independence within the class of representatives related to one
another by invertible warps: we compare $\tilde q$ with
$\tilde q'=(\tilde q,\gamma_0)$ where $\gamma_0$ is an invertible warp
(an element of $\G$, so $\gamma_0^{-1}$ exists and is again a warp).  This class contains all the admissible representatives. Indeed, an admissible representative is nowhere flat ($\tilde q\neq0$ almost everywhere), and the nowhere-flat representatives of one shape form exactly one $\G$-orbit: by \citet[Cor.~9 and Prop.~12]{bruveris_optimal_2016}, the shape is $\{(r,\gamma):\gamma\in\Gt\}$ for any nowhere-flat representative $r$, and $(r,\gamma)\neq0$ almost everywhere exactly when $\dot\gamma>0$ almost everywhere, that is, when $\gamma\in\G$. A representative with a flat piece is never admissible, since Assumption~\ref{ass:warp}(iii-b) requires the reference to be nowhere flat. So the scope of the statement matches the
scope of the construction. We first isolate two facts about the action.

\begin{lemma}[Right action; warp-SRSF equivariance]\label{lem:rightaction}
For $q\in\Lt$ and $\alpha,\beta\in\Gt$:
\emph{(a)}~$\alpha\circ\beta\in\Gt$ and $\big((q,\alpha),\beta\big)=(q,\alpha\circ\beta)$;
\emph{(b)}~the warp-SRSF of a composition is the warped warp-SRSF:
$\psi_{\alpha\circ\beta}=(\psi_\alpha,\beta)$;
\emph{(c)}~for fixed $\beta\in\Gt$ and any $u,v\in\Lt$,
$\norm{(u,\beta)-(v,\beta)}=\norm{u-v}$.
\end{lemma}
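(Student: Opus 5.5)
The plan is to work directly from the definitions: the warp action $(q,\gamma)=(q\circ\gamma)\sqrt{\dot\gamma}$ (with the convention that it vanishes on $\{\dot\gamma=0\}$) and the warp-SRSF $\psi_\gamma=\sqrt{\dot\gamma}$. The tools are the monotone chain rule (\ref{T:cov}(b)) and Lemma~\ref{lem:iso}.

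\emph{Part (a).} First I check that $\alpha\circ\beta\in\Gt$. It is a composition of non-decreasing maps, so it is non-decreasing. It fixes $0$ and $T$, so it is onto by the intermediate value theorem. Its absolute continuity follows from \ref{T:cov}(b) applied with $f=\alpha$. The same rule gives $(\alpha\circ\beta)^{\textstyle\cdot}=(\dot\alpha\circ\beta)\,\dot\beta$ a.e., with that product set to $0$ on $\{\dot\beta=0\}$.

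For the action identity I expand both sides pointwise a.e.:
\[
\big((q,\alpha),\beta\big)=(q\circ\alpha\circ\beta)\,\sqrt{\dot\alpha\circ\beta}\,\sqrt{\dot\beta},
\qquad
(q,\alpha\circ\beta)=(q\circ\alpha\circ\beta)\,\sqrt{(\dot\alpha\circ\beta)\,\dot\beta}.
\]
Where $\dot\beta(t)>0$ and $\dot\alpha(\beta(t))$ is defined, the two sides agree because $\sqrt{ab}=\sqrt a\sqrt b$. Where $\dot\beta(t)=0$, both sides vanish by the convention.

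The one delicate point is the set $\{t:\dot\beta(t)>0,\ \beta(t)\in E\}$, where $E$ is the null set on which $\dot\alpha$ or $q$ is undefined. I will show this set is null. By \ref{T:cov}(a) applied to $g=\ind_E$, we have $\int\ind_E(\beta)\dot\beta=|E|=0$, so $\dot\beta=0$ a.e.\ on $\beta^{-1}(E)$. This is also what makes $q\circ\beta$ well defined up to null sets wherever $\dot\beta>0$. I expect this measure-theoretic bookkeeping to be the main (though mild) obstacle; the rest is algebra.

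\emph{Part (b).} This follows by the same computation with $\alpha$ in place of $q$. We have $\psi_{\alpha\circ\beta}=\sqrt{(\dot\alpha\circ\beta)\dot\beta}=(\sqrt{\dot\alpha}\circ\beta)\sqrt{\dot\beta}=(\psi_\alpha,\beta)$ a.e., with the same convention on $\{\dot\beta=0\}$ and the same null-set argument.

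\emph{Part (c).} This is immediate from Lemma~\ref{lem:iso}. Linearity gives $(u,\beta)-(v,\beta)=(u-v,\beta)$, and norm preservation then gives $\norm{(u-v,\beta)}=\norm{u-v}$.
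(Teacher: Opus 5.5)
Your proposal is correct and follows the paper's proof essentially step for step. You show $\alpha\circ\beta\in\Gt$ via \ref{T:cov}(b), obtain the pointwise identity from $\sqrt{ab}=\sqrt a\sqrt b$ with both sides vanishing on $\{\dot\beta=0\}$, derive (c) from linearity and Lemma~\ref{lem:iso}, and your direct computation for (b) is the paper's. The paper phrases (b) as the case $q\equiv1$ of (a), using $(1,\alpha)=\psi_\alpha$; your ``$\alpha$ in place of $q$'' would, taken literally, give a different identity, though nothing in your argument depends on it. Your extra check via \ref{T:cov}(a) with $g=\ind_E$, that $\dot\beta=0$ a.e.\ on $\beta^{-1}(E)$, makes explicit a null-set point that the paper's ``almost everywhere'' leaves implicit.
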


\begin{proofenv}[Proof.]
The group case is standard in the SRSF literature
\citep[Ch.~4]{srivastava_klassen_2016}, and the monoid case only needs the chain
rule and change of variables of \ref{T:cov}. We prove it for completeness.

\emph{(a)} $\alpha\circ\beta$ is non-decreasing (composition of non-decreasing maps),
surjective onto $[0,T]$ (composition of surjections), and absolutely continuous:
$\alpha$ is absolutely continuous and $\beta$ is monotone absolutely continuous, so
$\alpha\circ\beta$ is absolutely continuous with
$(\alpha\circ\beta)^{\textstyle\cdot}=(\dot\alpha\circ\beta)\,\dot\beta$ a.e.\
(\ref{T:cov}(b)). Hence $\alpha\circ\beta\in\Gt$. For the action identity, compute
pointwise:
\begingroup\setlength{\whywidth}{0.37\textwidth}%
\begin{flalign*}
\big((q,\alpha),\beta\big)(t)
&=\big[(q\circ\alpha)\sqrt{\dot\alpha}\big](\beta(t))\cdot\sqrt{\dot\beta(t)}
\why{definition of the action, applied twice}\\
&=q\big(\alpha(\beta(t))\big)\,\sqrt{\dot\alpha(\beta(t))\,\dot\beta(t)}
\why{$\sqrt a\sqrt b=\sqrt{ab}$, $a,b\ge0$}\\
&=q\big((\alpha\circ\beta)(t)\big)\,\sqrt{(\alpha\circ\beta)^{\textstyle\cdot}(t)}
=(q,\alpha\circ\beta)(t)
\why{chain rule above (\ref{T:cov}(b))}
\end{flalign*}\endgroup
almost everywhere (on $\{\dot\beta=0\}$ both sides vanish, by the conventions of
\ref{T:cov}).

\emph{(b)} Take $q\equiv1$ in the computation above, or directly:
$\psi_{\alpha\circ\beta}=\sqrt{(\alpha\circ\beta)^{\textstyle\cdot}}
=\sqrt{\dot\alpha(\beta)}\sqrt{\dot\beta}
=(\psi_\alpha\circ\beta)\sqrt{\dot\beta}=(\psi_\alpha,\beta)$.

\emph{(c)} By linearity of the action and norm preservation (Lemma~\ref{lem:iso}):
$\norm{(u,\beta)-(v,\beta)}=\norm{(u-v,\beta)}=\norm{u-v}$. \hfill$\square$
\end{proofenv}

Figure~\ref{fig:invcomm} summarises the proof: the change of reference representative propagates down the construction and cancels in Step~3, when the registered curves are re-timed.

\begin{figure}[htbp]
\centering
\begin{tikzpicture}[>=Stealth, every node/.style={font=\small}]
  \node[figblue] (T1) at (0,0)     {$\tilde q$};
  \node[figred] (T2) at (8.8,0)   {$\tilde q'=(\tilde q,\gamma_0)$};
  \node[figblue] (R1) at (0,-2.0)  {$\gamma^\star_{q_Z\to\tilde q}$};
  \node[figred] (R2) at (8.8,-2.0){$\gamma^\star_{q_Z\to\tilde q'}=\gamma^\star_{q_Z\to\tilde q}\circ\gamma_0$};
  \node[figblue] (M1) at (0,-4.0)  {$\bar\gamma$};
  \node[figred] (M2) at (8.8,-4.0){$\bar\gamma'=\bar\gamma\circ\gamma_0$};
  \node (Q)  at (4.4,-5.9){$Z\circ\gamma^\star_{q_Z\to\tilde q}\circ\bar\gamma^{-1}=Z\circ\gamma^\star_{q_Z\to\tilde q'}\circ(\bar\gamma')^{-1}$};
  \draw[->] (T1) -- (T2) node[midway, above=2pt, font=\footnotesize, figorange] {$(\,\cdot\,,\gamma_0)$}
                         node[midway, below=2pt, font=\scriptsize\itshape] {change of reference representative ($\gamma_0$ invertible)};
  \draw[->] (R1) -- (R2) node[midway, above=2pt, font=\footnotesize, figorange] {$\circ\,\gamma_0$}
                         node[midway, below=2pt, font=\scriptsize\itshape] {step (1): re-index the registration over $\Gt$};
  \draw[->] (M1) -- (M2) node[midway, above=2pt, font=\footnotesize, figorange] {$\circ\,\gamma_0$}
                         node[midway, below=2pt, font=\scriptsize\itshape] {step (2): equivariance under the isometry $R_{\gamma_0}$};
  \draw[->] (T1) -- (R1) node[midway, left=2pt, font=\scriptsize\itshape] {register $q_Z$};
  \draw[->] (T2) -- (R2) node[midway, right=2pt, font=\scriptsize\itshape] {register $q_Z$};
  \draw[->] (R1) -- (M1) node[midway, left=2pt, font=\scriptsize\itshape, align=right] {Fréchet mean\\ on $\Sphere(\sqrt T)$};
  \draw[->] (R2) -- (M2) node[midway, right=2pt, font=\scriptsize\itshape, align=left] {Fréchet mean\\ on $\Sphere(\sqrt T)$};
  \draw[->] (M1) -- (Q) node[pos=0.68, sloped, above=1pt, font=\scriptsize\itshape] {re-time: $\circ\,\bar\gamma^{-1}$};
  \draw[->] (M2) -- (Q) node[pos=0.68, sloped, above=1pt, font=\scriptsize\itshape] {re-time: $\circ\,(\bar\gamma')^{-1}$};
  \node[font=\footnotesize, black!70, align=center] at (4.4,-6.9)
    {step (3), the cancellation:\quad
     $Z\circ\gamma^\star_{q_Z\to\tilde q'}\circ(\bar\gamma')^{-1}
      =Z\circ\gamma^\star_{q_Z\to\tilde q}\circ{\color{figorange}\gamma_0\circ\gamma_0^{-1}}\circ\bar\gamma^{-1}
      =Z\circ\gamma^\star_{q_Z\to\tilde q}\circ\bar\gamma^{-1}$};
\end{tikzpicture}
\caption{\textbf{Independence of the recovered mean from the reference representative} (Proposition~\ref{thm:inv}). Replacing the reference representative $\tilde q$ by $\tilde q'=(\tilde q,\gamma_0)$ shifts every optimal warp, and their Fréchet mean, by $\gamma_0$ (horizontal arrows); re-timing the registered curve by the inverse mean warp cancels the two copies of $\gamma_0$ (bottom), and both columns end at the same phase-centered registered curve, for every realisation of $Z$.}
\label{fig:invcomm}
\end{figure}

\begin{proofenv}[Proof of Proposition~\ref{thm:inv}.]
The result is new: representative-invariance of the lifted three-step mean is one
of this paper's contributions (Theorem~\ref{thm:wellposed}(iv)). The only outside
input is the equivariance fact \ref{T:isom}; everything else comes from
Lemma~\ref{lem:rightaction} and the results of this annex.

\emph{What we must show.} Let $\tilde q$ be the representative used in
Step~2, and $\tilde q'=(\tilde q,\gamma_0)$ another admissible representative,
$\gamma_0$ an invertible warp of $\G$, with $\gamma_0^{-1}\in\G\subset\Gt$. Write
$\bar\gamma$ (resp.\ $\bar\gamma'$) for the Step-2 warp mean computed from the
reference representative $\tilde q$ (resp.\ $\tilde q'$). We must show that, for every realisation of $Z$,
$Z\circ\gamma^\star_{q_Z\to\tilde q'}\circ(\bar\gamma')^{-1}=Z\circ\gamma^\star_{q_Z\to\tilde q}\circ\bar\gamma^{-1}$. The plan is to track
the change of reference representative through (1)~the registrations and (2)~the warp mean, and to
see the two copies of $\gamma_0$ cancel in (3).

\smallskip
\noindent\emph{(1) Registrations transform by right-composition.} Fix a realisation
of the data transform $q_Z$. For any $\gamma\in\Gt$,
\begin{flalign*}
\norm{\tilde q'-(q_Z,\gamma)}
&=\norm{(\tilde q,\gamma_0)-(q_Z,\gamma)}
\why{definition of $\tilde q'$}\\
&=\norm{\big((\tilde q,\gamma_0),\gamma_0^{-1}\big)
-\big((q_Z,\gamma),\gamma_0^{-1}\big)}
\why{Lemma~\ref{lem:rightaction}(c) with $\beta=\gamma_0^{-1}$}\\
&=\norm{\tilde q-(q_Z,\gamma\circ\gamma_0^{-1})}.
\why{Lemma~\ref{lem:rightaction}(a); $\gamma_0\circ\gamma_0^{-1}=\gid$}
\end{flalign*}
Now minimise both sides over $\gamma\in\Gt$. On the right, substitute
$\eta=\gamma\circ\gamma_0^{-1}$. Since $\gamma_0$ is invertible, both $\gamma_0$ and
$\gamma_0^{-1}$ are in $\Gt$, so $\gamma\mapsto\gamma\circ\gamma_0^{-1}$ is a bijection of
$\Gt$ onto itself (composition of warps stays in $\Gt$, Lemma~\ref{lem:rightaction}(a)).
The substitution only relabels the index set, so the minimum is unchanged:
\[
\min_{\gamma\in\Gt}\norm{\tilde q'-(q_Z,\gamma)}=\min_{\eta\in\Gt}\norm{\tilde q-(q_Z,\eta)} .
\]
The right-hand minimum is attained at the unique $\eta^\star=\gamma^\star_{q_Z\to\tilde q}$
(Assumption~\ref{ass:warp}(iii)). Translating back via $\gamma=\eta\circ\gamma_0$ gives the
unique left-hand optimum
\[
\gamma^\star_{q_Z\to\tilde q'}=\gamma^\star_{q_Z\to\tilde q}\circ\gamma_0 .
\]
(This is why Step~2 is posed over the monoid $\Gt$, not $\Gc$: the composition
$\gamma^\star_{q_Z\to\tilde q}\circ\gamma_0$ may leave the band of $\Gc$, which is not a
group (Remark~\ref{rem:nogroup}). Only uniqueness of the $\Gt$-optimum is used here, and
that is what clause~(iii) provides.)

\smallskip
\noindent\emph{(2) The warp mean is equivariant.} Pass to warp-SRSFs: by
Lemma~\ref{lem:rightaction}(b), the optimal warp-SRSF for the new reference representative is
\[
\psi^\star_{q_Z\to\tilde q'}
=\psi_{\gamma^\star_{q_Z\to\tilde q}\circ\gamma_0}
=\big(\psi^\star_{q_Z\to\tilde q},\gamma_0\big)
=R_{\gamma_0}\,\psi^\star_{q_Z\to\tilde q},
\qquad
R_{\gamma_0}:\psi\mapsto(\psi,\gamma_0).
\]
So the law of the new optimal warp-SRSF is the image (push-forward) under
$R_{\gamma_0}$ of the old one. The map $R_{\gamma_0}$ is a bijective isometry of the
warp sphere that preserves the feasible set:
\begin{itemize}[leftmargin=1.6em,itemsep=1pt]
\item \emph{isometry:} it preserves inner products (Lemma~\ref{lem:iso}), hence
norms, hence the spherical distance
$\dG(\psi_1,\psi_2)=\sqrt T\arccos(\inner{\psi_1}{\psi_2}/T)$;
\item \emph{bijective on $\Psi(\Gt)$:} $R_{\gamma_0}\psi_\gamma=\psi_{\gamma\circ\gamma_0}$
maps $\Psi(\Gt)$ into itself and has inverse $R_{\gamma_0^{-1}}$
($R_{\gamma_0^{-1}}R_{\gamma_0}\psi=(\psi,\gamma_0\circ\gamma_0^{-1})=\psi$ by
Lemma~\ref{lem:rightaction}(a)); note that $\Psi(\Gc)$ would not be preserved
(the band constraint above), which is another reason Step~2 is posed on $\Gt$;
\item \emph{the concentration hypothesis is preserved:} the image of the ball
$\bar B(\psi_0,r_0)$ is the ball $\bar B(R_{\gamma_0}\psi_0,r_0)$ of the same radius
(isometry), and its centre $R_{\gamma_0}\psi_0=(\psi_0,\gamma_0)$ is again in the feasible set $\Psi(\Gt)$. So Assumption~\ref{ass:conc}(4b) holds for the transformed law, and the
transformed Step-2 mean is also unique (Appendix~\ref{app:ka}).
\end{itemize}
By isometry-equivariance of unique Fréchet means (\ref{T:isom}, applied with
$\mathcal M=\Sphere(\sqrt T)$, $C=\Psi(\Gt)$, $R=R_{\gamma_0}$), the new Fréchet mean is the
image of the old one:
\[
\psi_{\bar\gamma'}=R_{\gamma_0}\psi_{\bar\gamma}=\psi_{\bar\gamma\circ\gamma_0}
\qquad\Longrightarrow\qquad
\bar\gamma'=\bar\gamma\circ\gamma_0 ,
\]
where the last implication uses warps that are recovered from their SRSF
($\gamma=\int_0^\cdot\psi_\gamma^2$, Lemma~\ref{lem:feasible}: the correspondence is
bijective).

\smallskip
\noindent\emph{(3) Cancellation.} The mean warp $\bar\gamma$ is invertible
(Assumption~\ref{ass:conc}(4b)), and so is $\bar\gamma'=\bar\gamma\circ\gamma_0$
(composition of invertibles), with
$(\bar\gamma\circ\gamma_0)^{-1}=\gamma_0^{-1}\circ\bar\gamma^{-1}$. Now re-time the curve registered to the new reference representative, for a fixed realisation of $Z$:
\begin{flalign*}
Z\circ\gamma^\star_{q_Z\to\tilde q'}\circ(\bar\gamma')^{-1}
&=Z\circ\big(\gamma^\star_{q_Z\to\tilde q}\circ\gamma_0\big)\circ\big(\gamma_0^{-1}\circ\bar\gamma^{-1}\big)
\why{Steps (1)--(2)}\\
&=Z\circ\gamma^\star_{q_Z\to\tilde q}\circ\big(\gamma_0\circ\gamma_0^{-1}\big)\circ\bar\gamma^{-1}
\why{composition of maps is associative}\\
&=Z\circ\gamma^\star_{q_Z\to\tilde q}\circ\bar\gamma^{-1},
\why{$\gamma_0\circ\gamma_0^{-1}=\gid$}
\end{flalign*}
which is the phase-centered registered curve obtained from the original reference representative. The change of reference representative has cancelled out
exactly, for every realisation of $Z$, so the law of the phase-centered registered curve, and its expectation $\mu(P)$, are unchanged. The same computation on transforms, with Lemma~\ref{lem:rightaction}(a) in place of associativity, shows that the phase-centered representative $(\tilde q,\bar\gamma^{-1})$ is unchanged as well. \hfill$\square$
\end{proofenv}

\begin{lemma}[The average of the phase-centered registered curves is a curve]\label{lem:alignavg}
Let Assumptions~\ref{ass:reg}--\ref{ass:mom} and~\ref{ass:conc}(4b) hold for the law $P$ of $Z$, let $\tilde q$ be an admissible representative of the Step-1 mean shape, $\bar\gamma$ the Step-2 mean warp, and $\eta_Z=\gamma^\star_{q_Z\to\tilde q}\circ\bar\gamma^{-1}$. Then:
\begin{enumerate}[leftmargin=2em,itemsep=2pt,topsep=2pt]
\item[(a)] for every realisation, $\eta_Z\in\Gt$, $Z\circ\eta_Z\in\F$, $(Z\circ\eta_Z)(0)=Z(0)$, $(Z\circ\eta_Z)(T)=Z(T)$, $\int_0^T|(Z\circ\eta_Z)^{\textstyle\cdot}|=\norm{q_Z}^2$, and
$\sup_{t}|(Z\circ\eta_Z)(t)|\le|Z(0)|+\norm{q_Z}^2$;
\item[(b)] for every $t$, $(Z\circ\eta_Z)(t)$ is a measurable function of $Z$;
\item[(c)] $\mu(P)(t)=\E_P[(Z\circ\eta_Z)(t)]$ is finite for every $t$, $\mu(P)\in\F$, $\mu(P)(0)=\E_P[Z(0)]$, $\mu(P)(T)=\E_P[Z(T)]$, and $\int_0^T|\dot\mu(P)|\le\E_P\norm{q_Z}^2$.
\end{enumerate}
\end{lemma}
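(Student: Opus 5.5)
Part (a) comes first and rests on facts about single warps. Under Assumption~\ref{ass:warp}(iii-b) the optimal warp $\gamma^\star_{q_Z\to\tilde q}$ lies in $\Gc\subseteq\Gt$. By Assumption~\ref{ass:conc}(4b), $\bar\gamma\in\Gc$ is bi-Lipschitz, so $\bar\gamma^{-1}$ is absolutely continuous with derivative in $[1/C,1/c]$, and in particular $\bar\gamma^{-1}\in\G\subset\Gt$. Lemma~\ref{lem:rightaction}(a) then gives $\eta_Z=\gamma^\star_{q_Z\to\tilde q}\circ\bar\gamma^{-1}\in\Gt$.

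Since $\eta_Z\in\Gt$, Lemma~\ref{lem:iso} gives $Z\circ\eta_Z\in\F$ with SRSF $(q_Z,\eta_Z)$. Norm preservation then yields
\[
\int_0^T|(Z\circ\eta_Z)^{\textstyle\cdot}|=\norm{(q_Z,\eta_Z)}^2=\norm{q_Z}^2,
\]
using Proposition~\ref{prop:srsf}(i). The endpoint identities hold because every element of $\Gt$ fixes $0$ and $T$. The sup bound follows from (\ref{T:ftc}):
\[
|(Z\circ\eta_Z)(t)|\le|Z(0)|+\int_0^t|(Z\circ\eta_Z)^{\textstyle\cdot}|\le|Z(0)|+\norm{q_Z}^2 .
\]

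Part (b) is where I expect the real work. The map $y\mapsto\gamma^\star_{y\to\tilde q}$ is measurable from $\Lt$ into the compact $\Gc$, with the topology of uniform convergence of warp-SRSFs. This follows from Kuratowski--Ryll-Nardzewski (\ref{T:krn}), using the joint continuity of $(y,\gamma)\mapsto\norm{\tilde q-(y,\gamma)}$ (Lemma~\ref{lem:compos} together with Lemma~\ref{lem:hcont}) and a.s.\ uniqueness from (iii-b); the paper already invokes this in Appendix~\ref{app:ka}. Composing with $Z\mapsto q_Z$, which is Borel by Proposition~\ref{prop:srsf}(iii), makes $Z\mapsto\gamma^\star_{q_Z\to\tilde q}$ measurable. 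It then remains to show that the evaluation map $(Z,\gamma)\mapsto Z(\gamma(\bar\gamma^{-1}(t)))$ is jointly continuous on $\F\times\Gc$. Here $\F$ carries $d_\F$, which dominates the sup norm, and uniform convergence of $\sqrt{\dot\gamma}$ implies uniform convergence of $\gamma$ (Step A of Lemma~\ref{lem:compos}). So if $Z_k\to Z$ and $\gamma_k\to\gamma$, the bound
\[
|Z_k(\gamma_k(s))-Z(\gamma(s))|\le\norm{Z_k-Z}_\infty+|Z(\gamma_k(s))-Z(\gamma(s))|
\]
tends to $0$ by uniform continuity of $Z$. The value $(Z\circ\eta_Z)(t)$ is the composition of a measurable map with a continuous one, so it is measurable. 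The null set where uniqueness fails is absorbed by completing the $\sigma$-field.

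For (c), the sup bound in (a) gives integrability, since $\E|Z(0)|+\E\norm{q_Z}^2<\infty$ by Assumption~\ref{ass:mom}; hence $\mu(P)(t)$ is finite. Taking expectations of the endpoint identities gives $\mu(P)(0)=\E_P[Z(0)]$ and $\mu(P)(T)=\E_P[Z(T)]$. To see that $\mu(P)\in\F$, I write
\[
(Z\circ\eta_Z)(t)=Z(0)+\int_0^t D_Z(s)\,ds,\qquad D_Z:=(Z\circ\eta_Z)^{\textstyle\cdot}.
\]
Since $\E\int_0^T|D_Z|=\E\norm{q_Z}^2<\infty$, Fubini applies, provided $(\omega,s)\mapsto D_Z(s)$ is jointly measurable. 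To secure this, I would define $D_Z$ as the a.e.\ limit of the difference quotients $n\bigl[(Z\circ\eta_Z)(s+1/n)-(Z\circ\eta_Z)(s)\bigr]$, which are jointly measurable by (b) and continuity in $s$. Fubini then gives
\[
\mu(P)(t)=\E[Z(0)]+\int_0^t\E[D_Z(s)]\,ds
\]
with $\E D_Z\in L^1$, so $\mu(P)$ is absolutely continuous by (\ref{T:ftc}) with $\dot\mu(P)=\E D_Z$ a.e. Finally,
\[
\int_0^T|\E D_Z|\le\E\int_0^T|D_Z|=\E\norm{q_Z}^2 .
\]
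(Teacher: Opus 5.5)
Your proof is correct. Parts (a) and (b) follow the paper. In (a) you get $\int_0^T|(Z\circ\eta_Z)^{\textstyle\cdot}|=\norm{q_Z}^2$ from the isometry of Lemma~\ref{lem:iso} and Proposition~\ref{prop:srsf}(i), whereas the paper applies the change of variables (\ref{T:cov}) directly with $g=|\dot Z|$; underneath this is the same computation. In (b) you use Step~A of Lemma~\ref{lem:compos} to explain why uniform convergence of warp-SRSFs makes the evaluation map continuous, a point the paper leaves implicit.

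The genuine difference is in (c). You combine measurability in $\omega$ with continuity in $s$ to get joint measurability of $(\omega,s)\mapsto(Z\circ\eta_Z)(s)$. You then realise the derivative as a jointly measurable limit of difference quotients and apply Fubini, which gives $\mu(P)(t)=\E[Z(0)]+\int_0^t\E[D_Z(s)]\,ds$ with $\E D_Z\in L^1$.

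The paper never builds a jointly measurable derivative process. It defines the finite measure $\nu(B)=\E_P\big[\int_B|(Z\circ\eta_Z)^{\textstyle\cdot}|\big]$. Measurability of the integrand on intervals comes from total variation over rational partitions, and a monotone-class argument extends it to Borel sets. A Borel--Cantelli argument gives the $\varepsilon$--$\delta$ absolute continuity of $\nu$. The paper then checks the definition of absolute continuity of $\mu(P)$ directly on finite families of disjoint intervals.

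By Fubini, the paper's $\nu$ is just $B\mapsto\int_B\E|D_Z(u)|\,du$, so its argument reproves by hand what your Fubini step gives at once. Your route is shorter and yields the extra identity $\dot\mu(P)=\E\big[(Z\circ\eta_Z)^{\textstyle\cdot}\big]$ almost everywhere, i.e.\ differentiation under the expectation. The paper's route needs only measurability of the values $(Z\circ\eta_Z)(t)$ and touches the derivative only through its pathwise integrals.

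When you write yours up, make two small points explicit. Extend $Z\circ\eta_Z$ by its value at $T$, so the forward quotients are defined near $s=T$. Set $D_Z$ to $0$ (or take the $\limsup$) where the limit fails to exist, so that $D_Z$ is a genuine jointly measurable function.
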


\begin{proofenv}[Proof.]
\emph{(a)} The registration $\gamma^\star_{q_Z\to\tilde q}$ lies in $\Gt$ (in $\Gc$ for the representative to which Assumption~\ref{ass:warp}(iii) applies; only membership in $\Gt$ is used here) and $\bar\gamma^{-1}$ is a bi-Lipschitz warp (Assumption~\ref{ass:conc}(4b)), so $\eta_Z\in\Gt$ by Lemma~\ref{lem:rightaction}(a). By \ref{T:cov}(b), $Z\circ\eta_Z$ is absolutely continuous with derivative $(\dot Z\circ\eta_Z)\,\dot\eta_Z$, and by \ref{T:cov}(a) with $g=|\dot Z|$,
\[
\int_0^T\big|(Z\circ\eta_Z)^{\textstyle\cdot}\big|=\int_0^T|\dot Z\circ\eta_Z|\,\dot\eta_Z=\int_0^T|\dot Z|=\norm{q_Z}^2 ,
\]
the last equality by Proposition~\ref{prop:srsf}(i). Since $\eta_Z(0)=0$ and $\eta_Z(T)=T$, the end values are those of $Z$, and for every $t$,
$|(Z\circ\eta_Z)(t)|\le|Z(0)|+\int_0^t|(Z\circ\eta_Z)^{\textstyle\cdot}|\le|Z(0)|+\norm{q_Z}^2$ by \ref{T:ftc}.

\emph{(b)} The map $Z\mapsto\gamma^\star_{q_Z\to\tilde q}$ is measurable into the compact $\Gc$, equipped with the sup norm (the measurable selection argument of Appendix~\ref{app:ident}, \ref{T:krn}). The map $(f,\gamma)\mapsto f\big(\gamma(\bar\gamma^{-1}(t))\big)$ is continuous from $(C^0([0,T]),\norm{\cdot}_\infty)\times(\Gc,\norm{\cdot}_\infty)$ to $\R$, since $f$ is uniformly continuous (\ref{T:heine}), and $Z\mapsto Z$ is continuous from $(\F,d_\F)$ to $C^0([0,T])$, since $\norm{f_1-f_2}_\infty\le d_\F(f_1,f_2)$. The composition is measurable.

\emph{(c)} By (a), $|(Z\circ\eta_Z)(t)|\le|Z(0)|+\norm{q_Z}^2$, which is integrable by Assumption~\ref{ass:mom}, so $\mu(P)(t)$ is finite, and the end values follow from (a) by taking expectations. For absolute continuity, define, for Borel sets $B\subseteq[0,T]$,
\[
\nu(B)=\E_P\Big[\int_B\big|(Z\circ\eta_Z)^{\textstyle\cdot}(u)\big|\,du\Big].
\]
For an interval $B=[s,t]$, the inner integral is the total variation of the continuous curve $Z\circ\eta_Z$ on $[s,t]$, a supremum of sums of increments over partitions with rational interior points, hence a measurable function of $Z$ by (b); a monotone-class argument extends measurability to every Borel $B$, and monotone convergence makes $\nu$ a measure. It is finite, $\nu([0,T])=\E_P\norm{q_Z}^2$ by (a), and absolutely continuous with respect to the Lebesgue measure, since the inner integral vanishes on every Lebesgue-null set. Hence, for every $\varepsilon>0$ there is $\delta>0$ with $\nu(U)<\varepsilon$ whenever $U$ has Lebesgue measure less than $\delta$: otherwise there would be Borel sets $U_k$ of Lebesgue measure less than $2^{-k}$ with $\nu(U_k)\ge\varepsilon$, and their upper limit $\bigcap_j\bigcup_{k\ge j}U_k$ would be Lebesgue-null (Borel--Cantelli) with $\nu$-measure at least $\varepsilon$ (continuity from above of the finite measure $\nu$), a contradiction. For pairwise disjoint intervals $[s_k,t_k]$ of total length less than $\delta$,
\[
\sum_k\big|\mu(P)(t_k)-\mu(P)(s_k)\big|
\le\sum_k\E_P\big|(Z\circ\eta_Z)(t_k)-(Z\circ\eta_Z)(s_k)\big|
\le\nu\Big(\textstyle\bigcup_k[s_k,t_k]\Big)<\varepsilon ,
\]
so $\mu(P)$ is absolutely continuous, i.e.\ $\mu(P)\in\F$. The same inequality over a partition of $[0,T]$ bounds its total variation, $\int_0^T|\dot\mu(P)|\le\nu([0,T])=\E_P\norm{q_Z}^2$. \hfill$\square$
\end{proofenv}

\begin{proofenv}[Proof of Theorem~\ref{thm:wellposed} (the five points together).]
We put together points (i)--(iii) from Appendix~\ref{app:exist}, point (iv)
from the proposition above, and point (v) from Lemma~\ref{lem:alignavg} above. No outside input is needed beyond the assumptions.
``Well-posed'' means three things: the lifted mean $\mu(P)$ exists, is unique, and
does not depend on the arbitrary choice made in its construction (the reference representative). 

\smallskip
\emph{Existence and uniqueness of the Step-1 mean shape.} Proposition~\ref{thm:exist}(i)
(proved in Appendix~\ref{app:exist}) gives a non-empty set of minimising shapes,
and Assumption~\ref{ass:conc}(4a) says this set is a single shape $\shp(P)$.

\emph{Existence and uniqueness of the Step-2 mean.} Fix any admissible representative
$\tilde q\in\shp(P)$. The optimal warps $\gamma^\star_{q_Z\to\tilde q}$ exist, are
unique (Assumption~\ref{ass:warp}(iii)) and are measurable in the data
(\ref{T:krn}; see Appendix~\ref{app:ident}), so their law on the warp sphere is
well defined. Under Assumption~\ref{ass:conc}(4b) the Fréchet mean $\bar\gamma$ of
that law exists, is unique, and is invertible (Proposition~\ref{thm:exist}(ii),
proved in Appendix~\ref{app:ka}).

\emph{Determination of the curve.} The phase-centered registered curve $Z\circ\gamma^\star_{q_Z\to\tilde q}\circ\bar\gamma^{-1}$ is thereby determined for every realisation of $Z$. By Lemma~\ref{lem:alignavg} above, it is a random element with an integrable bound, its expectation $\mu(P)(t)$ is finite for every $t$, and $\mu(P)\in\F$.

\emph{Independence of the reference representative.} By Proposition~\ref{thm:inv} (just proved),
replacing $\tilde q$ by any other admissible representative leaves every phase-centered registered curve, hence
$\mu(P)$, unchanged.

Assembling: $\mu(P)$ exists, is unique, and is well-defined as a functional of the
law $P$. \hfill$\square$
\end{proofenv}

\subsection*{Proof of Proposition~\ref{prop:propsB}}\label{app:propsB}

\emph{(a)} is Lemma~\ref{lem:alignavg}(c).

\emph{(b)} If $\gamma^\star_{q_Z\to\tilde q}=\gid$ almost surely, the law of the optimal warp-SRSF is the point mass at $\psi_{\gid}\equiv1$, whose Fréchet function $\psi\mapsto\dG^2(\psi,1)$ has the unique minimiser $1$, so $\bar\gamma=\gid$. The phase-centered registered curve is then $Z$ itself, and $\mu(P)=\E_P[Z]$.

\emph{(c)} This is the population version of Theorems~1--3 of \citet{srivastava_registration_2011}, whose model $c_i\,(g\circ\gamma_i)+e_i$ is ours with $s_Z$ and $c_Z$ in place of $c_i$ and $e_i$, and whose proofs carry over with expectations in place of sample averages. Write $q_g$ for the transform of $g$ and $\kappa=\E[\sqrt{s_Z}]$. Their Theorem~1 gives the Step-1 mean shape $[\kappa\,q_g]$. For the representative $\tilde q=\kappa\,q_g$, the proof of their Theorem~2 gives $\gamma^\star_{q_Z\to\tilde q}=\gamma_Z^{-1}$, and the proof of their Theorem~3 gives the phase-centered registered curves $c_Z+s_Z\,(g\circ\bar\gamma^{-1})$, with $\bar\gamma$ the Fréchet mean of the warps $\gamma_Z^{-1}$. Taking expectations gives $\mu(P)$. They assume that this mean warp is the identity, which we do not, so the template keeps the mean timing $\bar\gamma^{-1}$. Finally, the Karcher mean function has transform $(\kappa\,q_g,\bar\gamma^{-1})$, hence the scale $\kappa^2=(\E\sqrt{s_Z})^2\le\E[s_Z]$ by Jensen's inequality. \hfill$\square$

\section{Proofs of Proposition~\ref{prop:objident} and Theorem~\ref{thm:ident} (identification of $\tau_{\mathrm{pop}}$)} \label{app:ident}

The identification has two layers. First, the $\Pstar$-level and observed-data
objectives are equal (Proposition~\ref{prop:objident}). Then the argmin principle
below promotes that equality to the minimisers. Throughout, fix an arm
$a\in\{0,1\}$ and a candidate shape $[q]\in\Sw$, and write
\[
M^{(a)}([q])=\E_{\Pstar}\!\big[\del^2([q],[q^{(a)}])\big], 
\qquad
M^{\mathrm{IPW}}_a([q])
=\E_{P}\!\Big[\tfrac{\ind_{A=a}}{e_a(X)}\,\del^2([q],[q_Y])\Big].
\]
Assumptions~\ref{ass:sutva}--\ref{ass:exch} enter in the following forms. Consistency
(Assumption~\ref{ass:sutva}): on the event $\{A=a\}$ the observed curve $Y$ is the
potential curve $Y^{(a)}$. Positivity (Assumption~\ref{ass:pos}): $e_a(X)\ge c_e>0$
a.s., so conditioning on $\{A=a,X\}$ is well defined and the weight is bounded by
$1/c_e$. Conditional exchangeability at the curve level (Assumption~\ref{ass:exch}):
$Y^{(a)}\perp\!\!\!\perp A\mid X$. The last one extends to every measurable function
of $Y^{(a)}$: the SRSF $q^{(a)}$, the squared distance $\del^2([q],[q^{(a)}])$, the
optimal warp of $q^{(a)}$ to a fixed reference representative and the baseline $Y^{(a)}(0)$ and, for a fixed reference representative, a fixed invertible warp $\beta$ and a fixed $t$, the value $(Y^{(a)}\circ\gamma^\star_{q^{(a)}\to\tilde q_a}\circ\beta)(t)$ are all
conditionally independent of $A$ given $X$. 

\smallskip
\noindent\emph{Preliminaries.} Every expectation written below is finite.
Since $\gid\in\Gt$, the elastic distance is dominated by the unwarped $\Lt$ distance,
$\del([q],[y])\le\norm{q-y}\le\norm q+\norm y$ by the triangle inequality, so
\[
D^{(a)}:=\del^2\big([q],[q^{(a)}]\big)\ \le\ \big(\norm q+\norm{q^{(a)}}\big)^2
\ \le\ 2\norm q^2+2\norm{q^{(a)}}^2 ,
\]
which is integrable by Assumption~\ref{ass:mom} for the law of $Y^{(a)}$. The
weighted version is integrable because the weight is bounded by $1/c_e$. The Step-2
integrand $\dG^2$ is bounded by $(\pi\sqrt T)^2$, the diameter of the sphere, and the
baseline $|Y^{(a)}(0)|$ is integrable by the baseline clause of
Assumption~\ref{ass:mom}. Finally $D^{(a)}$ is a measurable function of $q^{(a)}$,
because $y\mapsto\del^2([q],[y])$ is continuous ($y\mapsto\del([q],[y])$ is $1$-Lipschitz, by the argument of Lemma~\ref{lem:hcont} with $\Gt$ in place of $\Gc$) and $q^{(a)}$ is a random element of
$\Lt$ (Proposition~\ref{prop:srsf}(iii)).

\begin{proofenv}[Proof of Proposition~\ref{prop:objident} (objective identification).]
The conditioning argument is the classical inverse-probability identity of \citet{horvitz_generalization_1952} (\ref{T:tower}, see also \citealp{imbens_rubin_2015}). What is new is where we apply it: inside the three components of the phase-centered elastic mean, the Step-1 amplitude objective, the Step-2 phase objective and the Step-3 average of the phase-centered registered curves.

\emph{The Step-1 (amplitude) objective.}
The objective $M^{(a)}$ is written with the unobservable potential outcome
$q^{(a)}$, and we must re-express it with observable quantities only: $(X,A,q_Y)$ and
the propensity $e_a$. The computation is the classical Horvitz--Thompson one, with the integrand $\del^2([q],\cdot)$ in place of a scalar
outcome. Write $D^{(a)}=\del^2([q],[q^{(a)}])$ for the potential transform and
$D_Y=\del^2([q],[q_Y])$ for the observed one. Three facts let the chain below run.
\emph{(Measurability.)} $D^{(a)}$ is a measurable function of $q^{(a)}$
(Preliminaries), so Assumption~\ref{ass:exch} applies to it.
\emph{(Integrability.)} $D^{(a)}$ and its weighted version are integrable
(Preliminaries). \emph{(Consistency.)} On $\{A=a\}$ we have $q_Y=q^{(a)}$, so
$D_Y=D^{(a)}$. Then:
\begin{flalign*}
\E_{\Pstar}\big[D^{(a)}\big]
&=\E_X\Big[\E\big(D^{(a)}\bigm|X\big)\Big]
\why{tower rule \ref{T:tower}}\\
&=\E_X\Big[\E\big(D^{(a)}\bigm|A{=}a,X\big)\Big]
\why{$D^{(a)}\perp\!\!\!\perp A\mid X$ (function of $Y^{(a)}$; Assumption~\ref{ass:exch}); conditioning legitimate as $e_a(X)>0$}\\
&=\E_X\Big[\E\big(D_Y\bigm|A{=}a,X\big)\Big]
\why{consistency: on $\{A{=}a\}$, $Y=Y^{(a)}$, hence $q_Y=q^{(a)}$ and $D_Y=D^{(a)}$}\\
&=\E_X\Big[e_a(X)^{-1}\,\E\big(\ind_{A=a}\,D_Y\bigm|X\big)\Big]
\why{propensity identity \ref{T:tower}: $\E[\ind_{A=a}g\mid X]=e_a(X)\,\E[g\mid A{=}a,X]$}\\
&=\E\Big[\tfrac{\ind_{A=a}}{e_a(X)}\,D_Y\Big]
=M^{\mathrm{IPW}}_a([q]).
\why{$e_a(X)^{-1}$ is $X$-measurable (take it inside); tower rule \ref{T:tower}}
\end{flalign*}

\emph{The Step-2 (phase) objective.} Fix the reference representative $\tilde q_a$. The
integrand is now $\dG^2(\gamma,\gamma^\star_{q^{(a)}\to\tilde q_a})$ for a candidate
warp $\gamma$, where $\gamma^\star_{y\to\tilde q_a}$ is the optimal warp registering a
transform $y$ to $\tilde q_a$. The same three facts let the chain above run with this
integrand. \emph{(Measurability.)} The map $y\mapsto\gamma^\star_{y\to\tilde q_a}$ is
a measurable selection: the objective
$(\gamma,y)\mapsto\norm{\tilde q_a-(y,\gamma)}^2$ is continuous in $\gamma$ on the
compact $\Gc$ (Lemma~\ref{lem:compos}) and continuous in $y$ (by the argument of
Lemma~\ref{lem:hcont}), and the minimiser is unique
(Assumption~\ref{ass:warp}(iii)), so \ref{T:krn} applies. Hence
$\dG^2(\gamma,\gamma^\star_{q^{(a)}\to\tilde q_a})$ is a measurable function of
$q^{(a)}$ and Assumption~\ref{ass:exch} applies to it. \emph{(Integrability.)}
$\dG^2$ is bounded by $\pi^2T$ (Preliminaries). \emph{(Consistency.)} On $\{A=a\}$ we
have $q_Y=q^{(a)}$, so the two optimal warps coincide. The same chain then gives
\[
\E_{\Pstar}\big[\dG^2\big(\gamma,\gamma^\star_{q^{(a)}\to\tilde q_a}\big)\big]
=\E_P\Big[\tfrac{\ind_{A=a}}{e_a(X)}\,
\dG^2\big(\gamma,\gamma^\star_{q_Y\to\tilde q_a}\big)\Big]
\qquad\text{for every candidate }\gamma .
\]

\emph{The average of the phase-centered registered curves.} Fix the reference representative $\tilde q_a$, an invertible warp $\beta\in\G$ and $t\in[0,T]$, and apply the same chain to the scalar integrand $V=\big(Y^{(a)}\circ\gamma^\star_{q^{(a)}\to\tilde q_a}\circ\beta\big)(t)$.
\emph{(Measurability.)} $V$ is a measurable function of the curve $Y^{(a)}$, by the argument of Lemma~\ref{lem:alignavg}(b) (the measurable selection of the Step-2 paragraph above, then a continuous evaluation), so
Assumption~\ref{ass:exch} applies to it. This is where the curve-level form of
Assumption~\ref{ass:exch} is needed, and not merely its SRSF-level form: $V$ depends on the curve, not only on its transform.
\emph{(Integrability.)} By Lemma~\ref{lem:alignavg}(a), whose proof uses only that $\gamma^\star_{q^{(a)}\to\tilde q_a}\circ\beta$ is a non-decreasing absolutely continuous surjection, $|V|\le|Y^{(a)}(0)|+\norm{q^{(a)}}^2$, which is integrable by Assumption~\ref{ass:mom}; the weighted version is integrable because the weight is bounded by $1/c_e$.
\emph{(Consistency.)} On $\{A=a\}$ we have $Y=Y^{(a)}$, hence $q_Y=q^{(a)}$, the two registrations coincide and the two values of $V$ coincide. Then
\[
\E_{\Pstar}\big[\big(Y^{(a)}\circ\gamma^\star_{q^{(a)}\to\tilde q_a}\circ\beta\big)(t)\big]=\E_P\Big[\tfrac{\ind_{A=a}}{e_a(X)}\,\big(Y\circ\gamma^\star_{q_Y\to\tilde q_a}\circ\beta\big)(t)\Big].
\]
At $t=0$ it identifies the mean initial value $\E_{\Pstar}[Y^{(a)}(0)]$.
\hfill$\square$
\end{proofenv}

\paragraph{The argmin principle.}\label{app:argmin}
Two functions that agree on a set have the same infimum and the same minimisers,
even when neither is attained. Proposition~\ref{prop:objident} gives
$M^{(a)}=M^{\mathrm{IPW}}_a$ on all of $\Sw$, so their argmin
sets coincide. Under Assumption~\ref{ass:conc}(4a) this common set is the single
shape $\{\shp^{(a)}\}$, the Step-1 mean shape of the law of $Y^{(a)}$, and it is a functional of the observed law $P$,
since $M^{\mathrm{IPW}}_a$ involves $P$ only. Both objectives are
deterministic functions of the candidate shape, which is what makes the identity
promote from objectives to minimisers. It would not for random sample objectives.

\begin{proofenv}[Proof of Theorem~\ref{thm:ident}.]
The proof assembles Proposition~\ref{prop:objident}, the argmin principle above and the well-posedness results of Appendices~\ref{app:exist}--\ref{app:inv}.

\emph{Step A (shape).} By the argmin principle, the Step-1 mean shape is
$\shp^{(a)}=\argmin_{[q]\in\Sw}M^{\mathrm{IPW}}_a([q])$, a
functional of $P$. Existence of the minimiser is Proposition~\ref{thm:exist}(i), in
its weighted variant Corollary~\ref{cor:reweight}.

\emph{Step B (phase).} Fix an admissible representative $\tilde q_a\in\shp^{(a)}$. By
the second part of Proposition~\ref{prop:objident}, the Step-2 objective equals its
IPW-observable counterpart for every candidate warp. The common objective has a
unique minimiser $\bar\gamma_a$ by Assumption~\ref{ass:conc}(4b) and
Proposition~\ref{thm:exist}(ii), whose proof covers the weighted objective
(Appendix~\ref{app:ka}, step (5e)). By the argmin principle again, $\bar\gamma_a$ is a
functional of $P$.

\emph{Assembly, average, difference.} The mean warp $\bar\gamma_a$ is a functional of $P$ (Step~B) and is invertible (Assumption~\ref{ass:conc}(4b)). The third part of Proposition~\ref{prop:objident}, applied with $\beta=\bar\gamma_a^{-1}$, gives for every $t$
\[
\mu^{(a)}(\Pstar)(t)=\E_{\Pstar}\big[\big(Y^{(a)}\circ\gamma^\star_{q^{(a)}\to\tilde q_a}\circ\bar\gamma_a^{-1}\big)(t)\big]
=\E_P\Big[\tfrac{\ind_{A=a}}{e_a(X)}\,\big(Y\circ\gamma^\star_{q_Y\to\tilde q_a}\circ\bar\gamma_a^{-1}\big)(t)\Big]
=\mu^{(a)}_{\mathrm{IPW}}(P)(t).
\] Each arm mean
$\mu^{(a)}_{\mathrm{IPW}}(P)$ is therefore a functional of $P$, and by
Proposition~\ref{thm:inv} the choice of $\tilde q_a$ does not matter, so no
unidentified choice hides in the construction. Differencing the two
arms,
\[
\tau_{\mathrm{pop}}(\Pstar)
=\mu^{(1)}(\Pstar)-\mu^{(0)}(\Pstar)
=\mu^{(1)}_{\mathrm{IPW}}(P)-\mu^{(0)}_{\mathrm{IPW}}(P),
\]
a functional of the observed law $P$. \hfill$\square$
\end{proofenv}
\section{Proof of Theorem~\ref{thm:nonident} (the individual effect is not identified)} \label{app:nonident}

\begin{proofenv}[Proof.]
The construction and the computations below are ours. The
logic of the counterexample follows the Bernoulli counterexample of
\citet{even_rethinking_2025}. Everything specific to shape (the SRSF
non-separability and the closed-form elastic distances) is new.

\emph{What we must show.} Non-identifiability means that no functional of the
observed law can equal $\tau_{\mathrm{ind}}$ on all data-generating laws satisfying
Assumptions~\ref{ass:sutva}--\ref{ass:exch}. The standard way to prove this is to
construct two full-data laws $P^{*}_1,P^{*}_2$ that (a)~induce the same observed
law, $P(P^{*}_1)=P(P^{*}_2)$, (b)~both satisfy all the assumptions, and (c)~assign
different values to the target. 

\smallskip
\noindent\emph{Step A (the construction).} Take $T=1$. Let $X$ be degenerate (a
constant, so conditioning on $X$ has no effect), let $A$ be a fair coin,
$\Prob(A{=}1)=\tfrac12$, drawn independently of everything else, and give the
potential-outcome curves the two-point marginal laws
\[
Y^{(1)}=\begin{cases}3t&\text{w.\,p. }\tfrac12\\ 3t+t^2&\text{w.\,p. }\tfrac12\end{cases}
\qquad\qquad
Y^{(0)}=\begin{cases}0&\text{w.\,p. }\tfrac12\\ t^2&\text{w.\,p. }\tfrac12 .\end{cases}
\]
The two full-data laws differ only in how the two arms are coupled within a unit:
\begin{itemize}[leftmargin=1.6em,itemsep=2pt]
\item \emph{$P_1^{*}$ (matched coupling): } the pair $(Y^{(1)},Y^{(0)})$ equals
$(3t,\,0)$ or $(3t+t^2,\,t^2)$, each with probability $\tfrac12$. Then
$\Delta=Y^{(1)}-Y^{(0)}\equiv3t$ almost surely.
\item \emph{$P_2^{*}$ (independent coupling): } $Y^{(1)}$ and $Y^{(0)}$ are drawn
independently, the four pairs each have probability $\tfrac14$, and
\[
\Delta=\begin{cases}
\Delta_3=3t&\text{w.\,p. }\tfrac12\quad(\text{pairs }(3t,0)\text{ and }(3t+t^2,t^2)),\\
\Delta_2=3t-t^2&\text{w.\,p. }\tfrac14,\\
\Delta_4=3t+t^2&\text{w.\,p. }\tfrac14 .
\end{cases}
\]
\end{itemize}
Every individual effect curve is increasing, starts at $0$, and has the increment
$a=\Delta(1)\in\{3,2,4\}$, the index of $\Delta_a$. Write
$\gamma_a=\Delta_a/a$: $\gamma_3=\gid$, $\gamma_2(t)=(3t-t^2)/2$ and
$\gamma_4(t)=(3t+t^2)/4$ are warps of $\G$, with speeds
$\dot\gamma_2=(3-2t)/2\in[\tfrac12,\tfrac32]$ and
$\dot\gamma_4=(3+2t)/4\in[\tfrac34,\tfrac54]$, so their inverses have speeds in
$[\tfrac23,2]$ and $[\tfrac45,\tfrac43]$.

\emph{(a) Same observed law.} The observed data are $(X,A,Y)$ with $Y=Y^{(A)}$. $X$
is constant and $A$ is a fair coin under both laws. Given $A=a$, the observed curve
is $Y^{(a)}$, whose marginal law is the same under both couplings. The coupling,
which is the only difference between $P_1^{*}$ and $P_2^{*}$, is invisible. Hence
both full-data laws induce the same observed law $P$.

\emph{(b) The assumptions hold under both.} Consistency holds by construction
($Y=Y^{(A)}$). Positivity: $e_1(X)=\tfrac12$. Exchangeability: $A$ is independent of
the pair $(Y^{(1)},Y^{(0)})$. The moment Assumption~\ref{ass:mom} holds for every law
in play (every curve has speed at most $5$ and starts at $0$). The registrations that
the construction uses (Steps C and D below) are unique and are the inverses of
$\gid,\gamma_2,\gamma_4$. Their warp-SRSFs take values in $[\sqrt{2/3},\sqrt2]$ and have Lipschitz constants at most $2\sqrt2$ (for $C^2$ warps $\gamma\in\G$, $\tfrac{d}{ds}\psi_{\gamma^{-1}}=-\tfrac12\big(\ddot\gamma\,\dot\gamma^{-5/2}\big)\circ\gamma^{-1}$, with $|\ddot\gamma_2|=1$, $\dot\gamma_2\ge\tfrac12$, $|\ddot\gamma_4|=\tfrac12$, $\dot\gamma_4\ge\tfrac34$), so they lie in the compact family $\Gc$ of Appendix~\ref{app:aa} with $c=\tfrac14$, $C=4$ and $C_2=4$ (warp-SRSFs in $[\tfrac12,2]$ with Lipschitz constant at most $4$), which satisfies Assumption~\ref{ass:warp}(i)--(ii). Assumption~\ref{ass:warp}(iii) is used, as in the rest of the paper, for the registrations that the construction performs: here the registrations to the reference representative, which are unique (Step~B); clause (iii-a) holds with the single shape $[\sqrt3]$ (resp.\ $[c^\star]$), whose nowhere-flat representative $\sqrt3$ (resp.\ $c^\star$) is admissible: by the box of Step~B, the best re-timings over $\Gt$ towards it are the inverses of $\gid,\gamma_2,\gamma_4$, which lie in $\Gc$. (The law of $Y^{(0)}$ gives probability $\tfrac12$ to the zero curve, so Theorem~\ref{thm:wellposed} does not cover $\mu(Y^{(0)})$; only the laws of the individual effects are used here.) Assumption~\ref{ass:conc} is checked in Steps C and D. So $\tau_{\mathrm{ind}}$ is well defined under both laws.

\smallskip
\noindent\emph{Step B (closed forms on the shapes of constants).} For $c\ge0$,
$(c,\gamma)=c\,\psi_\gamma$, and the warp-SRSFs $\{\psi_\gamma:\gamma\in\Gt\}$ are
exactly the non-negative functions of unit norm (the inverse map is
$p\mapsto\int_0^{\cdot}p^2$, Lemma~\ref{lem:feasible}). Hence the shape $[c]$, for $c>0$, is
the set of non-negative functions of norm $c$, and every increasing curve $\Delta$ with
$\Delta(0)=0$, $\Delta(1)=a>0$ has $q_\Delta=\sqrt a\,\psi_{\Delta/a}\in[\sqrt a]$.

\begin{resultbox}{Distance to the shape of a non-negative constant}
For $q\in\Lt[0,1]$ and $c\ge0$, with $q^+=\max(q,0)$,
\[
\del^2\big([q],[c]\big)=\norm q^2+c^2-2c\,\norm{q^+}\ \ge\ \big(\norm q-c\big)^2 ,
\]
with equality, when $c>0$, if and only if $q\ge0$ almost everywhere.
If $\Delta=a\,\gamma_\Delta$ with $a>0$ and $\gamma_\Delta\in\G$, the registration of
$q_\Delta$ to the representative $c>0$ is unique: $\gamma^\star_{q_\Delta\to c}=\gamma_\Delta^{-1}$.
\end{resultbox}

\emph{Proof of the box.} For $\gamma\in\Gt$,
$\norm{q-(c,\gamma)}^2=\norm q^2+c^2-2c\inner{q}{\psi_\gamma}$, and over the
non-negative unit-norm $\psi$, $\inner q\psi\le\inner{q^+}{\psi}\le\norm{q^+}$, with
equality at $\psi=q^+/\norm{q^+}$ when $q^+\ne0$ (if $q^+=0$, the supremum $0$ is approached by $\psi_n=\sqrt n\,\ind_{[0,1/n]}$, since $|\inner q{\psi_n}|\le\norm{q\,\ind_{[0,1/n]}}\to0$). This gives the formula, and
the inequality follows from $\norm{q^+}\le\norm q$, with equality if and only if
$q^-=0$. For the registration,
$\norm{c-(q_\Delta,\gamma)}^2=c^2+a-2c\sqrt a\inner{1}{\psi_{\gamma_\Delta\circ\gamma}}$
(Lemma~\ref{lem:rightaction}), and by the Cauchy--Schwarz inequality
$\inner1\psi\le1$ for unit-norm $\psi$, with equality if and only if $\psi\equiv1$, that
is $\gamma_\Delta\circ\gamma=\gid$, that is $\gamma=\gamma_\Delta^{-1}$. \hfill$\lozenge$

\smallskip
\noindent\emph{Step C (the individual effect under $P_1^{*}$).} Here $\Delta\equiv3t$,
so $q_\Delta\equiv\sqrt3$ and $M_1([q])=\del^2([q],[\sqrt3])$, which vanishes if and only if
$[q]=[\sqrt3]$: the Step-1 mean shape is unique. With the representative
$\tilde q\equiv\sqrt3$, the registration is $\gid$ (box, with $\gamma_\Delta=\gid$), the
warp law is the point mass at $\psi_{\gid}$, whose Fréchet mean is $\gid$, and the
phase-centered registered curve is $\Delta$ itself. Hence
\[
\tau_{\mathrm{ind}}(P_1^{*})(t)=3t .
\]

\smallskip
\noindent\emph{Step D (the individual effect under $P_2^{*}$).}
\emph{Step 1.} By the box, for every $q$,
\[
M_2([q])=\E\big[\del^2([q],[\sqrt a])\big]\ \ge\ \E\big[(\norm q-\sqrt a)^2\big]\ \ge\ \operatorname{Var}(\sqrt a),
\]
with equality in both if and only if $q\ge0$ a.e.\ and $\norm q=c^\star:=\E\sqrt a=\tfrac{\sqrt3}2+\tfrac{\sqrt2}4+\tfrac12$,
that is, if and only if $[q]=[c^\star]$. The Step-1 mean shape is the single shape
$[c^\star]$ (Assumption~\ref{ass:conc}(4a) holds).

\emph{Step 2.} Take the representative $\tilde q\equiv c^\star$. By the box, the
registrations are $\gamma_a^{-1}$, with warp-SRSFs $\psi_3\equiv1$,
$\psi_2=\psi_{\gamma_2^{-1}}$ and $\psi_4=\psi_{\gamma_4^{-1}}$, of probabilities
$\tfrac12,\tfrac14,\tfrac14$, where $\psi_{\gamma^{-1}}(s)=\dot\gamma(\gamma^{-1}(s))^{-1/2}$.
By the change of variables \ref{T:cov}, $\inner{1}{\psi_{\gamma^{-1}}}=\int_0^1\sqrt{\dot\gamma}$,
so $\dG(\psi_2,1)=\arccos\int_0^1\sqrt{\dot\gamma_2}\approx0.148$ and
$\dG(\psi_4,1)=\arccos\int_0^1\sqrt{\dot\gamma_4}\approx0.073$: the warp law is
concentrated in the ball of radius $0.15<\pi/8$ about the feasible centre
$\psi_{\gid}\equiv1$. Write $p_3=\tfrac12$, $p_2=p_4=\tfrac14$ and
$N(\psi)=\sum_kp_k\dG^2(\psi,\psi_k)$ for the Fréchet function. The localisation,
convexity, uniqueness and existence steps (5a)--(5d) of Appendix~\ref{app:ka}, run on the
whole sphere instead of the feasible set (they only use that the ball is geodesically
convex and that $N$ is strongly convex on it), give a unique minimiser $m$ of $N$ over
the whole sphere, at distance at most $2r_0=0.30$ from $1$. At $m$ every distance
$\dG(m,\psi_k)$ is below $\pi/2$, so $N$ is differentiable there and $m$ is stationary:
$\sum_kp_k\operatorname{Log}_m\psi_k=0$, with
$\operatorname{Log}_m\psi=\tfrac{\theta}{\sin\theta}(\psi-\cos\theta\,m)$ and
$\theta=\dG(m,\psi)$. Solving for $m$ shows that $m$ is a combination of $1,\psi_2,\psi_4$
with positive coefficients; these functions take values in $[\sqrt{2/3},\sqrt2]$, so $m$
is bounded above and below by positive constants. More precisely, $m=\sum_k\alpha_k\psi_k/\sum_k\alpha_k\cos\theta_k$ with
$\alpha_k=p_k\theta_k/\sin\theta_k>0$ (with $\theta/\sin\theta:=1$ at $\theta=0$) and $\theta_k=\dG(m,\psi_k)\le0.30+0.15=0.45$, so
$\sum_k\alpha_k/\sum_k\alpha_k\cos\theta_k\le1/\cos0.45<1.12$: $m$ takes values in
$[\sqrt{2/3},\,1.12\sqrt2]\subset(\tfrac12,2)$ and has Lipschitz constant at most
$1.12\cdot2\sqrt2<4$. Hence $m$ is feasible and satisfies the constraints defining $\Psi(\Gc)$ strictly (values in $(\tfrac12,2)$, Lipschitz constant below $4$), it is the
unique Step-2 mean over $\Psi(\Gt)$, and $\bar\gamma=\int_0^{\cdot}m^2$ is a bi-Lipschitz,
invertible warp (Assumption~\ref{ass:conc}(4b) holds).

\emph{$\bar\gamma$ is not the identity.} For a unit tangent direction $v\perp1$ at the point
$1$ of the sphere, the derivative of $N$ at $\varepsilon=0$ along the great circle
$\varepsilon\mapsto\cos\varepsilon+v\sin\varepsilon$ is $-2\sum_kp_k\inner{\operatorname{Log}_1\psi_k}{v}=-\tfrac12\inner{w}{v}$ (the gradient of $\dG^2(\cdot,\psi_k)$ is $-2\operatorname{Log}\psi_k$, and $\operatorname{Log}_1\psi_3=0$),
where
\[
w=\operatorname{Log}_1\psi_2+\operatorname{Log}_1\psi_4
=\tfrac{\theta_2}{\sin\theta_2}\big(\psi_2-\cos\theta_2\big)+\tfrac{\theta_4}{\sin\theta_4}\big(\psi_4-\cos\theta_4\big),
\qquad\theta_k=\dG(1,\psi_k)>0 .
\]
The functions $\psi_2,\psi_4$ are continuous on $[0,1]$. If $w\equiv0$, then
$\psi_4(s)-\psi_4(0)$ would be a fixed multiple of $\psi_2(s)-\psi_2(0)$ for all $s$. The
exact values
\[
\psi_2(0)=\sqrt{2/3},\quad\psi_2(\tfrac12)=\sqrt2\,5^{-1/4},\quad\psi_2(1)=\sqrt2;\qquad
\psi_4(0)=2/\sqrt3,\quad\psi_4(\tfrac12)=2\cdot17^{-1/4},\quad\psi_4(1)=2/\sqrt5
\]
(from $\gamma_2^{-1}(\tfrac12)=\tfrac{3-\sqrt5}2$ and $\gamma_4^{-1}(\tfrac12)=\tfrac{\sqrt{17}-3}2$)
give the ratios $\frac{\psi_4(1)-\psi_4(0)}{\psi_2(1)-\psi_2(0)}\approx-0.435$ and
$\frac{\psi_4(1/2)-\psi_4(0)}{\psi_2(1/2)-\psi_2(0)}\approx-1.313$, which differ. So
$w\ne0$. Take $v=w/\norm w$, which is bounded and orthogonal to $1$ (each
$\operatorname{Log}_1\psi_k$ is). For small $\varepsilon>0$, the point
$\cos\varepsilon+v\sin\varepsilon$ is a positive bounded function of unit norm, the SRSF
of a bi-Lipschitz warp, and it has $N(\cos\varepsilon+v\sin\varepsilon)<N(1)$, since the
derivative there is $-\tfrac12\norm w<0$. The point $1=\psi_{\gid}$ is therefore not the
minimiser: $\bar\gamma\ne\gid$.

\emph{Step 3: the phase-centered registered curves.} For each value of $a$,
\[
\Delta_a\circ\gamma^\star_{q_{\Delta_a}\to\tilde q}\circ\bar\gamma^{-1}
=a\,\gamma_a\circ\gamma_a^{-1}\circ\bar\gamma^{-1}=a\,\bar\gamma^{-1},
\]
so
\[
\tau_{\mathrm{ind}}(P_2^{*})(t)=\E[a]\,\bar\gamma^{-1}(t)=3\,\bar\gamma^{-1}(t),
\qquad \E[a]=\tfrac12\cdot3+\tfrac14\cdot2+\tfrac14\cdot4=3 .
\]

\smallskip
\noindent\emph{Step E (conclusion).} Since $\bar\gamma\ne\gid$, also
$\bar\gamma^{-1}\ne\gid$, and $\tau_{\mathrm{ind}}(P_2^{*})=3\,\bar\gamma^{-1}$ differs from
$\tau_{\mathrm{ind}}(P_1^{*})(t)=3t$. Since $P_1^{*}$ and $P_2^{*}$ are observationally
indistinguishable (Step~A(a)) and both satisfy the assumptions (Step~A(b)), no
functional $\Theta$ of the observed law can satisfy $\Theta(P)$$=\tau_{\mathrm{ind}}(\cdot)$ on both:
$\tau_{\mathrm{ind}}$ is not identified. Note that the two couplings share the same
pointwise mean of the individual effects, $\E[\Delta](t)=3t$: they differ only through
the timing that the registrations reveal.

\end{proofenv}

\section{Proof of Theorem~\ref{thm:identloc} (identification of the local effect)} \label{app:identloc}

\begin{proofenv}[Proof.]
Identification of the local estimand is one of this paper's contributions. The proof uses only the classical conditioning facts of \ref{T:tower} and the well-posedness theory of Appendices~\ref{app:exist}--\ref{app:inv}.

\emph{Step A (the law of $\Delta_{\mathrm{loc}}$).} The local estimand $\tau_{\mathrm{loc}}$ is the elastic mean of the law of $\Delta_{\mathrm{loc}}$ \eqref{eq:locallaw}. We first describe this law. Given $X$ and $A=1$, $\Delta_{\mathrm{loc}}=Y^{(1)}-Y^{(X),(0)}$, where $Y^{(1)}$ has law $P_{Y^{(1)}\mid A=1,X}=P_{Y^{(1)}\mid X}$ by Assumption~\ref{ass:exch}, and $Y^{(X),(0)}$, independent of $(A,Y^{(1)})$ given $X$, has law $P_{Y^{(0)}\mid X}$. The case $A=0$ is symmetric. So, given $X$, $\Delta_{\mathrm{loc}}$ is the difference of independent draws from $P_{Y^{(1)}\mid X}$ and $P_{Y^{(0)}\mid X}$, whatever the value of $A$. In other words, draw covariates
$X\sim P_X$, then, given $X$, draw $Y^{(X),(1)}$ and $Y^{(X),(0)}$
independently from the two conditional potential-outcome laws, and form the difference
curve $Y^{(X),(1)}-Y^{(X),(0)}$: it has the law of $\Delta_{\mathrm{loc}}$, and we compute the expectations below with $\Delta_{\mathrm{loc}}=Y^{(X),(1)}-Y^{(X),(0)}$. In this proof, we re-express its
Step-1 objective as the observed-data functional
$M_{\mathrm{loc}}$ of \eqref{eq:locobj}, and do the same for the
phase objective and the average of the phase-centered registered curves in less detail. 

\smallskip
\noindent\emph{Step B (peel the construction).} Fix a candidate $[q]\in\Sw$ and
write $D=\del^2\big([q],[q_{Y^{(X),(1)}-Y^{(X),(0)}}]\big)$, a measurable
functional of the pair $(Y^{(X),(1)},Y^{(X),(0)})$ with finite expectation
(integrability and measurability as in Appendix~\ref{app:ident}, using
Corollary~\ref{cor:wp3} for the moment bound on difference curves). By
the description above, conditionally on $X$ the pair
$(Y^{(X),(1)},Y^{(X),(0)})$ has law $P_{Y^{(1)}\mid X}\otimes P_{Y^{(0)}\mid X}$. The
tower rule (\ref{T:tower}) then gives:
\begin{flalign*}
\E\big[\del^2([q],[q_{\Delta_{\mathrm{loc}}}])\big]
&=\E_{X\sim P_X}\Big[\ \E_{(Y^{(X),(1)},Y^{(X),(0)})\mid X}\big[D\big]\ \Big]
\why{tower rule \ref{T:tower}}\\
&=\E_{X\sim P_X}\Big[\ \E_{P_{Y^{(1)}\mid X}\otimes P_{Y^{(0)}\mid X}}\big[D\big]\ \Big].
\why{the conditional law is the product, as shown above}
\end{flalign*}

\smallskip
\noindent\emph{Step C (identify the conditional marginals).} We claim that for each
arm $a$ and $P_X$-a.e.\ $x$,
\[
P_{Y^{(a)}\mid X=x}
\ \overset{\text{(i)}}{=}\ P_{Y^{(a)}\mid A=a,X=x}
\ \overset{\text{(ii)}}{=}\ P_{Y\mid A=a,X=x}.
\] 
\emph{(i)} is Assumption~\ref{ass:exch}, conditional exchangeability at the curve
level, $Y^{(a)}\perp\!\!\!\perp A\mid X$. Conditioning on $\{A=a,X=x\}$ is legitimate because
$e_a(x)\ge c_e>0$ (Assumption~\ref{ass:pos}). \emph{(ii)} is
Assumption~\ref{ass:sutva}, consistency.

Substituting the identified marginals into Step~B, the inner expectation becomes,
for $P_X$-a.e.\ $x$, 
\[
m_{\mathrm{loc}}([q],x)
=\E_{P_{Y\mid A=1,X=x}\otimes P_{Y\mid A=0,X=x}}
\Big[\del^2\big([q],[q_{Y^{(X),(1)}-Y^{(X),(0)}}]\big)\Big],
\]
which is exactly the inner functional of \eqref{eq:locobj}. Therefore
\[
\E\big[\del^2([q],[q_{\Delta_{\mathrm{loc}}}])\big]
=\E_{X\sim P_X}\big[m_{\mathrm{loc}}([q],X)\big]
=M_{\mathrm{loc}}([q])
\qquad\text{for every }[q]\in\Sw ,
\]
The Step-1 objective is identified.

\smallskip
\noindent\emph{Step D (phase objective, Step-3 average of the phase-centered registered curves, and promotion to the estimand).}
The Step-2 (phase) objective is handled by the same Steps~B and~C, with the integrand
$\dG^2(\gamma,\gamma^\star_{q_{\Delta_{\mathrm{loc}}}\to\tilde q})$, a bounded,
measurable functional of the pair $(Y^{(X),(1)},Y^{(X),(0)})$ (measurable selection as
in Appendix~\ref{app:ident}). The Step-3 average of the phase-centered registered curves is handled by Steps~B and~C as well: for a fixed invertible warp $\beta$ and a fixed $t$, the integrand
$\big(\Delta_{\mathrm{loc}}\circ\gamma^\star_{q_{\Delta_{\mathrm{loc}}}\to\tilde q}\circ\beta\big)(t)$ is a measurable functional of the pair $(Y^{(X),(1)},Y^{(X),(0)})$ (Lemma~\ref{lem:alignavg}(b), the difference map being continuous on $(\F,d_\F)$), bounded by $|Y^{(X),(1)}(0)|+|Y^{(X),(0)}(0)|+\norm{q_{Y^{(X),(1)}}}^2+\norm{q_{Y^{(X),(0)}}}^2$ (Lemma~\ref{lem:alignavg}(a) and Corollary~\ref{cor:wp3}), which is integrable by Assumption~\ref{ass:mom}. Steps~B and~C then give
\[
\E\big[\big(\Delta_{\mathrm{loc}}\circ\gamma^\star_{q_{\Delta_{\mathrm{loc}}}\to\tilde q}\circ\beta\big)(t)\big]=L_{\mathrm{loc},\beta}(t)
\]
of \eqref{eq:locbase}. This functional is not linear in the pair: it is identified because the whole product law $P_{Y\mid A=1,X}\otimes P_{Y\mid A=0,X}$ is, not merely its two means. Finally, Assumption~\ref{ass:conc} applied to the law of
$\Delta_{\mathrm{loc}}$ makes both minimisers unique, so the argmin principle
(Appendix~\ref{app:ident}) promotes each objective identity to an identity of
minimisers. Propositions~\ref{thm:exist} and~\ref{thm:inv} and Lemma~\ref{lem:alignavg} (via
Corollary~\ref{cor:wp3}) make the three-step construction well posed, and taking $\beta=\bar\gamma^{-1}$, with $\bar\gamma$ the identified minimiser of $N_{\mathrm{loc}}$, identifies the Step-3 average. Hence
$\tau_{\mathrm{loc}}(\Pstar)$ is a functional of the observed law $P$. \hfill$\square$
\end{proofenv}

\subsection{Why no exact pairwise reweighting for the local objective} \label{app:noipw}

This section expands Remark~\ref{rem:noipw}. The local effect is identified (Theorem~\ref{thm:identloc}); the question is only whether, like the populational effect, it can be written as a weighted average over observed units, which is what makes the estimator of Section~\ref{sec:popest} a plug-in. It cannot. Weighting turns an observed law into a target law only if the target gives no mass to events of observed probability zero.

\smallskip
\noindent\emph{(i) Populational effect: one unit.} The Step-1 integrand depends on one potential-outcome curve. The target law is that of $(X,Y^{(a)})$ in the whole population, and the observed law is that of $(X,Y)$ restricted to $\{A=a\}$:
\[
\Pi_{\mathrm{tar}}(dx,dy)=P_X(dx)\,\Pstar_{Y^{(a)}\mid X=x}(dy),\qquad
\Pi_{\mathrm{obs}}(dx,dy)=e_a(x)\,P_X(dx)\,P_{Y\mid A=a,X=x}(dy).
\]
The two curve parts coincide,
\[
\Pstar_{Y^{(a)}\mid X=x}=\Pstar_{Y^{(a)}\mid A=a,X=x}=P_{Y\mid A=a,X=x},
\]
the first equality by exchangeability (Assumption~\ref{ass:exch}) and the second by consistency (Assumption~\ref{ass:sutva}), since $Y=Y^{(a)}$ on $\{A=a\}$. The target law is therefore a reweighting of the observed law:
\[
\Pi_{\mathrm{tar}}(dx,dy)=\frac{1}{e_a(x)}\,\Pi_{\mathrm{obs}}(dx,dy).
\]
The weight $1/e_a(x)$ is bounded by positivity (Assumption~\ref{ass:pos}) and is exactly the weight of Proposition~\ref{prop:objident}. It exists even for continuous $X$, since $e_a(x)=\Prob(A=a\mid X=x)$ is a conditional expectation, defined for almost every $x$.

\smallskip
\noindent\emph{(ii) Local effect: a pair of units.} The integrand of $M_{\mathrm{loc}}$ in \eqref{eq:locobj} depends on a treated and a control potential-outcome curve drawn at the same covariate value $x$, where $x$ follows the covariate law $P_X$. The target law is that of the pair $(X_1,X_0,Y^{(1)},Y^{(0)})$ with $X_0=X_1$, and the observed law is that of a treated unit $(X_1,Y_1)$ and an independent control unit $(X_0,Y_0)$:
\[
\Pi_{\mathrm{tar}}=P_X(dx_1)\,\delta_{x_1}(dx_0)\,\Pstar_{Y^{(1)}\mid X=x_1}(dy_1)\,\Pstar_{Y^{(0)}\mid X=x_0}(dy_0),
\]
\[
\Pi_{\mathrm{obs}}=P_{X\mid A=1}(dx_1)\,P_{X\mid A=0}(dx_0)\,P_{Y\mid A=1,X=x_1}(dy_1)\,P_{Y\mid A=0,X=x_0}(dy_0),
\]
with $\delta_{x_1}$ the point mass at $x_1$. As in (i), the curve parts coincide, since $\Pstar_{Y^{(a)}\mid X=x}=P_{Y\mid A=a,X=x}$ for each arm. The two laws differ only in their covariate parts. In the target, the control covariate is forced to equal the treated one, so $\Pi_{\mathrm{tar}}$ puts all its mass on the diagonal $\{X_1=X_0\}$. In the observed law, the two covariates are drawn independently, so for continuous $X$ the diagonal has probability zero. A reweighting $W\,\Pi_{\mathrm{obs}}$ only rescales mass that $\Pi_{\mathrm{obs}}$ already has, so it also gives the diagonal zero mass. Hence no weight $W$ gives $\Pi_{\mathrm{tar}}=W\,\Pi_{\mathrm{obs}}$: the difficulty is not that the right weight is unknown, but that no weight exists.

\smallskip

\begin{center}\small
\begin{tabular}{p{0.34\textwidth}p{0.27\textwidth}p{0.27\textwidth}}
\hline
Event & (i) populational & (ii) local \\
\hline
A single value $\{X=x\}$ & probability $0$ under both laws: harmless & same \\
An interval of values of $X$ & positive under both laws (positivity) & not the issue \\
The diagonal $\{X_1=X_0\}$ & not involved & probability $1$ under $\Pi_{\mathrm{tar}}$, $0$ under $\Pi_{\mathrm{obs}}$ \\
\hline
\end{tabular}
\end{center}

\section{The phase-centered elastic mean and the Karcher mean function} \label{app:twomeans}

Steps~1 and~2 of Section~\ref{sec:karcher} give a Fréchet mean shape, a reference representative $\tilde q$, registrations $\gamma^\star_{q_i\to\tilde q}$ and a mean warp $\bar\gamma_n$. Two curves can be built from them in Step~3 (Remark~\ref{rem:karcher}): the phase-centered elastic mean $\hat\mu_n$, used in the paper, and the Karcher mean function $\hat\mu^{\mathrm K}_n$. This appendix characterises them, summarises the associated literature and compares them. Lemma~\ref{lem:step1avg} and Proposition~\ref{prop:twomeans}(i) and~(iii) also hold for a law $P$, with expectations in place of averages.

\subsection{Two averages of the same phase-centered registered curves} \label{app:twomeans:char}

Write $g_i=f_i\circ\gamma^\star_{q_i\to\tilde q}\circ\bar\gamma_n^{-1}$ for the phase-centered registered curves, so that $\hat\mu_n=\frac1n\sum_ig_i$, and assume, as in the paper, that the registrations to $\tilde q$ are attained.

\begin{lemma}[Stationarity of Step~1]\label{lem:step1avg}
For every representative $\tilde q$ of a Step-1 mean shape, $\tilde q=\frac1n\sum_i\big(q_i,\gamma^\star_{q_i\to\tilde q}\big)$.
\end{lemma}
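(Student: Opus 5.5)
The plan is a first-order optimality argument in $\Lt$ at the minimiser, using the registrations frozen at $\tilde q$. Write $r_i=(q_i,\gamma^\star_{q_i\to\tilde q})$ for the registered transforms and $\bar r=\frac1n\sum_i r_i$. The goal is to show that $\tilde q$ minimises the quadratic $Q(p)=\frac1n\sum_i\norm{p-r_i}^2$ over $\Lt$. The unique minimiser of $Q$ is $\bar r$, which will give $\tilde q=\bar r$.

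\emph{Step 1 (values at $\tilde q$).} Since the registrations to $\tilde q$ are attained, $\del^2([\tilde q],[q_i])=\norm{\tilde q-r_i}^2$. Hence the Step-1 objective $\hat M([q])=\frac1n\sum_i\del^2([q],[q_i])$ satisfies $\hat M([\tilde q])=Q(\tilde q)$.

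\emph{Step 2 (upper bound elsewhere).} For any $p\in\Lt$ and each $i$, $r_i$ is a representative of $[q_i]$: it is a $\Gt$-re-timing of $q_i$, so it lies in the orbit closure by the density of $\G$ in $\Gt$. The elastic distance does not depend on the representatives and is at most the $\Lt$ distance of any two of them (take $\gamma=\gid$). Therefore $\del^2([p],[q_i])=\del^2([p],[r_i])\le\norm{p-r_i}^2$, and so $\hat M([p])\le Q(p)$ for every $p$.

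\emph{Step 3 (conclusion).} Because $[\tilde q]$ minimises $\hat M$, we get $Q(\tilde q)=\hat M([\tilde q])\le\hat M([\bar r])\le Q(\bar r)$. The parallelogram identity gives $Q(p)=Q(\bar r)+\norm{p-\bar r}^2$. Together these force $\norm{\tilde q-\bar r}^2\le0$, that is, $\tilde q=\bar r$. This uses no differentiability of the elastic distance, so no first-variation computation is needed.

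\emph{Main obstacle.} The delicate point is Step 2's claim that $\del([p],[q_i])=\del([p],[r_i])$ when $\gamma^\star_{q_i\to\tilde q}$ lies in $\Gt\setminus\G$. That is exactly the representative-independence of $\del$ on $\Sw$ recalled in Section~\ref{sec:shape} from \citet{bruveris_optimal_2016}, together with $(q,\gamma)\in[q]$ for $\gamma\in\Gt$. Once this is invoked, the rest is elementary Hilbert-space algebra.

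For a law $P$, the same argument applies with $Q(p)=\E\norm{p-r_Z}^2$. It requires $r_Z$ to be measurable (the measurable selection of Appendix~\ref{app:ident}) and $\E\norm{r_Z}^2=\E\norm{q_Z}^2<\infty$ (Lemma~\ref{lem:iso} and Assumption~\ref{ass:mom}). The Bochner mean $\E[r_Z]$ then replaces $\bar r$.
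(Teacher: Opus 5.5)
Your proof is correct and is essentially the paper's. Both bound the Step-1 objective above by the frozen-registration quadratic $\frac1n\sum_i\norm{p-(q_i,\gamma^\star_{q_i\to\tilde q})}^2$, note equality at $\tilde q$, and conclude that $\tilde q$ minimises this quadratic, whose unique minimiser is the average. The one difference is that the paper gets $\del([p],[q_i])\le\norm{p-(q_i,\gamma^\star_{q_i\to\tilde q})}$ directly by plugging $\gamma=\gamma^\star_{q_i\to\tilde q}\in\Gt$ into the infimum that defines $\del$, so the representative-independence step you flag as the main obstacle is not actually needed.
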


\begin{proofenv}[Proof.]
Let $M([q])=\frac1n\sum_i\del^2([q],[q_i])$ and, for $q\in\Lt$, $\Psi(q)=\frac1n\sum_i\norm{q-(q_i,\gamma^\star_{q_i\to\tilde q})}^2$. For every $q$, $\del([q],[q_i])\le\norm{q-(q_i,\gamma^\star_{q_i\to\tilde q})}$, so $M([q])\le\Psi(q)$, with equality at $q=\tilde q$ since by definition $\del([\tilde q],[q_i])=\inf_{\gamma}\norm{\tilde q-(q_i,\gamma)}$ attains its infimum at $\gamma^\star_{q_i\to\tilde q}$. Hence $\Psi(\tilde q)=M([\tilde q])\le M([q])\le\Psi(q)$ for every $q\in\Lt$: $\tilde q$ minimises the quadratic $\Psi$, whose unique minimiser is the average of the $(q_i,\gamma^\star_{q_i\to\tilde q})$. \hfill$\square$
\end{proofenv}

\begin{proposition}[The two curves]\label{prop:twomeans}
\leavevmode
\begin{enumerate}[leftmargin=2em,itemsep=2pt,topsep=2pt]
\item[(i)] The transform of the Karcher mean function is the average of the transforms of the phase-centered registered curves: $q_{\hat\mu^{\mathrm K}_n}=(\tilde q,\bar\gamma_n^{-1})=\frac1n\sum_iq_{g_i}$.
\item[(ii)] With $\F_{a}=\{f\in\F:f(0)=a\}$ and $\bar a=\frac1n\sum_if_i(0)$,
\[
  \hat\mu^{\mathrm K}_n=\argmin_{f\in\F_{\bar a}}\ \frac1n\sum_i\norm{q_f-q_{g_i}}^2,
  \qquad
  \hat\mu_n=\argmin_{f\in\F}\ \frac1n\sum_i\norm{f-g_i}^2 .
\]
\item[(iii)] Both curves start at $\bar a$, and for every $t$,
\[
  \hat\mu^{\mathrm K}_n(t)-\hat\mu_n(t)=\int_0^t\Big(\bar q\,|\bar q|-\frac1n\sum_iq_{g_i}|q_{g_i}|\Big)(s)\,ds,
  \qquad \bar q=\frac1n\sum_iq_{g_i}.
\]
\end{enumerate}
\end{proposition}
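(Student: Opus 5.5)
The plan is to prove (i) first, using Lemma~\ref{lem:step1avg} and the linearity and composition rules of the warp action, and then to derive (ii) and (iii) from (i) by elementary Hilbert-space and fundamental-theorem-of-calculus arguments. For (i), start from the definition of the Karcher mean function in Remark~\ref{rem:karcher}: its transform is $(\tilde q,\bar\gamma_n^{-1})$. Lemma~\ref{lem:step1avg} gives $\tilde q=\frac1n\sum_i(q_i,\gamma^\star_{q_i\to\tilde q})$. Apply the action of $\bar\gamma_n^{-1}$ to both sides; it is linear by Lemma~\ref{lem:iso}, so it passes through the average. Lemma~\ref{lem:rightaction}(a) then turns each term into $(q_i,\gamma^\star_{q_i\to\tilde q}\circ\bar\gamma_n^{-1})$. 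Finally, Lemma~\ref{lem:iso} identifies this term with the SRSF of $f_i\circ\gamma^\star_{q_i\to\tilde q}\circ\bar\gamma_n^{-1}=g_i$, which gives $\frac1n\sum_i q_{g_i}$. This step is legitimate because $\gamma^\star_{q_i\to\tilde q}\in\Gt$ and $\bar\gamma_n^{-1}$ is an invertible warp, so the composition lies in $\Gt$.

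For (ii), on $\F_{\bar a}$ the map $f\mapsto q_f$ is a bijection onto $\Lt$ by Proposition~\ref{prop:srsf}(ii). Minimising $\frac1n\sum_i\norm{q_f-q_{g_i}}^2$ over $f\in\F_{\bar a}$ is therefore the same as minimising a strictly convex quadratic over $q\in\Lt$. Its unique minimiser is $\bar q=\frac1n\sum_iq_{g_i}$, and by (i) together with the starting value $\bar a$ this corresponds exactly to $\hat\mu^{\mathrm K}_n$. For the elastic mean, $\frac1n\sum_i\norm{f-g_i}^2$ is a strictly convex quadratic in $f\in\Lt$, minimised uniquely at the pointwise average $\frac1n\sum_ig_i=\hat\mu_n$. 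It remains to note that this minimiser is absolutely continuous, since it is a finite average of curves in $\F$ (Lemma~\ref{lem:alignavg}(a)). So it lies in $\F$, and the argmin over $\F$ coincides with the argmin over $\Lt$.

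For (iii), Lemma~\ref{lem:alignavg}(a) gives $g_i(0)=f_i(0)$, so $\hat\mu_n(0)=\bar a$, and $\hat\mu^{\mathrm K}_n(0)=\bar a$ by construction. Next, reconstruct both curves with Proposition~\ref{prop:srsf}(ii). This gives $\hat\mu^{\mathrm K}_n(t)=\bar a+\int_0^t\bar q|\bar q|$ and $g_i(t)=f_i(0)+\int_0^tq_{g_i}|q_{g_i}|$. Averaging the second identity and subtracting it from the first yields the displayed formula. I expect the only delicate point to be in (i): checking that the composition and action identities hold with a non-invertible registration warp $\gamma^\star_{q_i\to\tilde q}\in\Gt$. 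Lemma~\ref{lem:rightaction} covers the monoid case with the stated convention on $\{\dot\gamma=0\}$, so I would simply cite it.
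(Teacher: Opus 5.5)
Your proposal is correct and follows the paper's proof essentially line by line. Part (i) uses Lemma~\ref{lem:step1avg}, linearity of the action, Lemma~\ref{lem:rightaction}(a) and Lemma~\ref{lem:iso}; part (ii) uses the bijection of Proposition~\ref{prop:srsf}(ii) and least squares in $\Lt$; part (iii) uses the reconstruction formula of Proposition~\ref{prop:srsf}(ii), and your added check that $\frac1n\sum_i g_i$ lies in $\F$, so the argmin over $\F$ agrees with the one over $\Lt$, is a detail the paper leaves implicit.
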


\begin{proofenv}[Proof.]
(i)\begin{flalign*}
q_{\hat\mu^{\mathrm K}_n}
&=(\tilde q,\bar\gamma_n^{-1})\\
&=\Big(\frac1n\sum_{i=1}^n\big(q_i,\gamma^\star_{q_i\to\tilde q}\big),\ \bar\gamma_n^{-1}\Big)
\why{stationarity of Step~1, Lemma~\ref{lem:step1avg}}\\
&=\frac1n\sum_{i=1}^n\Big(\big(q_i,\gamma^\star_{q_i\to\tilde q}\big),\ \bar\gamma_n^{-1}\Big)
\why{the action is linear in $q$: $(au+bv,\gamma)=a\,(u,\gamma)+b\,(v,\gamma)$}\\
&=\frac1n\sum_{i=1}^n\big(q_i,\ \gamma^\star_{q_i\to\tilde q}\circ\bar\gamma_n^{-1}\big)
\why{right action, Lemma~\ref{lem:rightaction}(a)}\\
&=\frac1n\sum_{i=1}^n q_{g_i},
\qquad g_i=f_i\circ\gamma^\star_{q_i\to\tilde q}\circ\bar\gamma_n^{-1}
\why{SRSF of a warped curve, Lemma~\ref{lem:iso}: $(q_f,\gamma)=q_{f\circ\gamma}$}
\end{flalign*} 

\noindent (ii) On $\F_{\bar a}$ the map $f\mapsto q_f$ is a bijection onto $\Lt$ (Proposition~\ref{prop:srsf}(ii)), so the first convex problem is the minimisation of $\frac1n\sum_i\norm{q-q_{g_i}}^2$ over $q\in\Lt$, solved by the average $\bar q=\frac1n\sum_iq_{g_i}$, which is the transform of $\hat\mu^{\mathrm K}_n$ by (i). The second is a least-squares problem in $\Lt$, solved by the pointwise average.

(iii) Every warp fixes $0$, so $g_i(0)=f_i(0)$ and $\hat\mu_n(0)=\bar a=\hat\mu^{\mathrm K}_n(0)$. By Proposition~\ref{prop:srsf}(ii) and (i),
\[
\hat\mu_n(t)=\bar a+\int_0^t\frac1n\sum_iq_{g_i}|q_{g_i}|,
\qquad
\hat\mu^{\mathrm K}_n(t)=\bar a+\int_0^t\bar q\,|\bar q|,
\]
and subtracting gives the result. \hfill$\square$
\end{proofenv}

The two curves thus share the Fréchet mean shape and the mean warp, and differ only in what is averaged: the SRSFs of the $g_i$ for $\hat\mu^{\mathrm K}_n$, the values of the $g_i$ for $\hat\mu_n$. By (iii), they coincide when the $g_i$ differ only by vertical shifts: then all the $q_{g_i}$ are equal, since the SRSF does not see constants, so the integrand in (iii) vanishes. They differ otherwise (different amplitudes, noise, features that no warp can align). The final value of $\hat\mu^{\mathrm K}_n$ is then in general not the average final value.

\subsection{Position in the literature} \label{app:twomeans:lit}

Table~\ref{tab:twomeanslit} lists which curve each reference computes.  We found no reference that compares the two curves.

\begin{table}[htbp]
\centering
\small
\caption{\textbf{Which curve each reference computes} (elastic mean: the phase-centered elastic mean).}
\label{tab:twomeanslit}
\begin{tabular}{@{}p{0.27\textwidth}p{0.30\textwidth}p{0.36\textwidth}@{}}
\toprule
Reference & Curve & Name used there\\
\midrule
\citet{srivastava_registration_2011} & both; the elastic mean as signal estimator & centre of the shape (transform); mean of the aligned functions\\
\citet{kurtek_signal_2011} & elastic mean & estimate of the signal\\
\citet{tucker_generative_2013} & Karcher mean function; elastic mean in figures & Karcher mean $\mu_f$; aligned mean\\
\texttt{fdasrsf} & Karcher mean function (\texttt{fmean}); aligned functions (\texttt{fn}) & Karcher mean\\
\texttt{scikit-fda} & Karcher mean function (\texttt{fisher\_rao\_karcher\_mean}) & Karcher mean\\
\citet{kneip_statistical_1992}, \citet{ramsay_curve_1998} & mean of the registered curves (other registrations) & structural mean (structural averaging)\\
\citet{chakraborty_functional_2021} & mean of the registered curves & mean of the synchronised process\\
This paper & elastic mean & phase-centered elastic mean\\
\bottomrule
\end{tabular}
\end{table} 

\subsection{The two means compared} \label{app:twomeans:sims}

Figure~\ref{fig:comparemeans} compares the two means on three samples of Figure~\ref{fig:meansgrid}, computed from the same run of Steps~1 and~2. The phase-centered elastic mean captures the typical shape better. It averages the registered curves themselves, so it ends at the average final value of the sample and keeps the average amplitude of the curves (Proposition~\ref{prop:propsB}(a) and~(c)). The Karcher mean function averages their SRSFs. Its final value is not controlled: it drifts away from the average final value as soon as the registered curves differ by more than a vertical shift (Proposition~\ref{prop:twomeans}(iii)). It also shrinks amplitudes: curves of one shape with heights $s_i$ give it the height $\big(\frac1n\sum_i\sqrt{s_i}\big)^2$, below the average height $\frac1n\sum_is_i$ (Proposition~\ref{prop:propsB}(c)).

\begin{figure}[htbp]
\centering
\includegraphics[width=\textwidth]{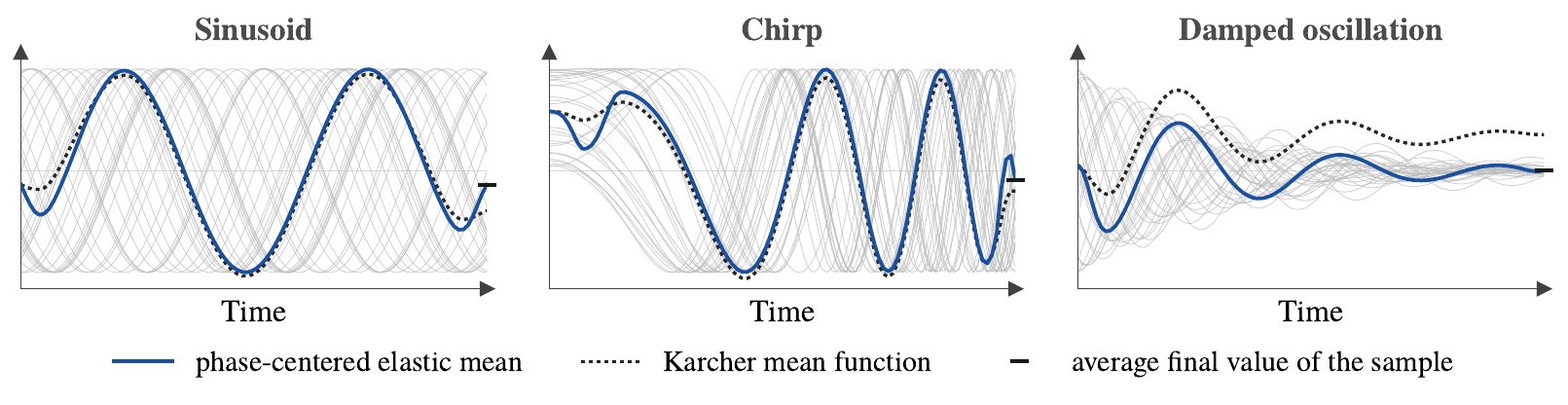}
\caption{\textbf{The two means from the same registrations}, on three samples of Figure~\ref{fig:meansgrid} (grey): the phase-centered elastic mean (blue) and the Karcher mean function (dotted). Black tick: average final value of the sample.}
\label{fig:comparemeans}
\end{figure}

\section{Simulation details} \label{app:sims}

This appendix completes Section~\ref{sec:sims}: the details of the data-generating process, the computation of the estimands and of the shape metrics, the implementation of the estimators, then three complements (a comparison of functional means, the causal-estimand gap over the $(\alpha,p_{\mathrm{rebound}})$ grid, the noise dependence of the estimands) and the sensitivity analyses.

\paragraph{Data-generating process.}
The model is given in Section~\ref{sec:simmodel}, where time is rescaled to $[0,1]$. Each noise curve $\varepsilon^{(a)}$ of that section is a random trigonometric polynomial on the three lowest frequencies,
\[
  \varepsilon(t)\ =\ \sum_{f=1}^{3}\Big(a_f\,\sin(2\pi f t)\ +\ b_f\,\big(\cos(2\pi f t)-1\big)\Big),
  \qquad a_f,b_f\ \overset{\text{iid}}{\sim}\ \mathcal N\!\big(0,\sigma_\varepsilon^2/3\big),
\]
drawn independently for each unit. The cosine terms are written as $\cos-1$ so that $\varepsilon(0)=0$. Figure~\ref{fig:biasgrid} shows the causal estimands as the noise level $\sigma_{\varepsilon}$ grows in Scenario~1.

\paragraph{Estimands and reference values.}
Our causal estimands, which contain elastic means defined at the scale of the population distribution, are computed on $800$ potential-outcome draws with the same elastic-mean code as the estimators. The code runs two alternations of registration and averaging and a final registration (Section~\ref{sec:simestimators}). The reference value of each shape evaluation metric (depth, drop speed and onset time) is its median over the $800$ curves $\Delta_i=Y^{(1)}_i-Y^{(0)}_i$. In Scenario~1 these medians are close to the metrics of $\tau_{\mathrm{pop}}$ (depth $-1.13$ against $-1.14$, onset time $0.28$ against $0.27$, drop speed $7.0$ against $6.4$), which is why the estimators are scored against $\tau_{\mathrm{pop}}$. Being computed on finitely many draws, with two alternations in Step~1, the estimands carry a Monte-Carlo and algorithmic error. Recomputed on independent sets of draws (eight sets of $800$ draws per scenario, four sets of $3200$ draws in Scenario~1), $\tau_{\mathrm{pop}}$ varies by about $0.03$ to $0.04$ in $L^2$ (root-mean-square distance to the average of the replicates; $0.027$ for $3200$ draws). This is small compared with the errors of the competitors. 

\paragraph{Shape evaluation metrics.}
The \emph{depth} of a curve is its minimum, the \emph{onset time} is the time of the half-depth crossing on the way down, and the \emph{drop speed} is the magnitude of its steepest downward slope (Figure~\ref{fig:metricschema}). The three are defined for rebounding and plateauing curves alike, hence in all five scenarios. Curve errors are $L^2$ distances; scalar errors are absolute differences to the reference, summarised by their median and interquartile range over replications.

\begin{figure}[htbp]
\centering
\includegraphics[width=0.95\textwidth]{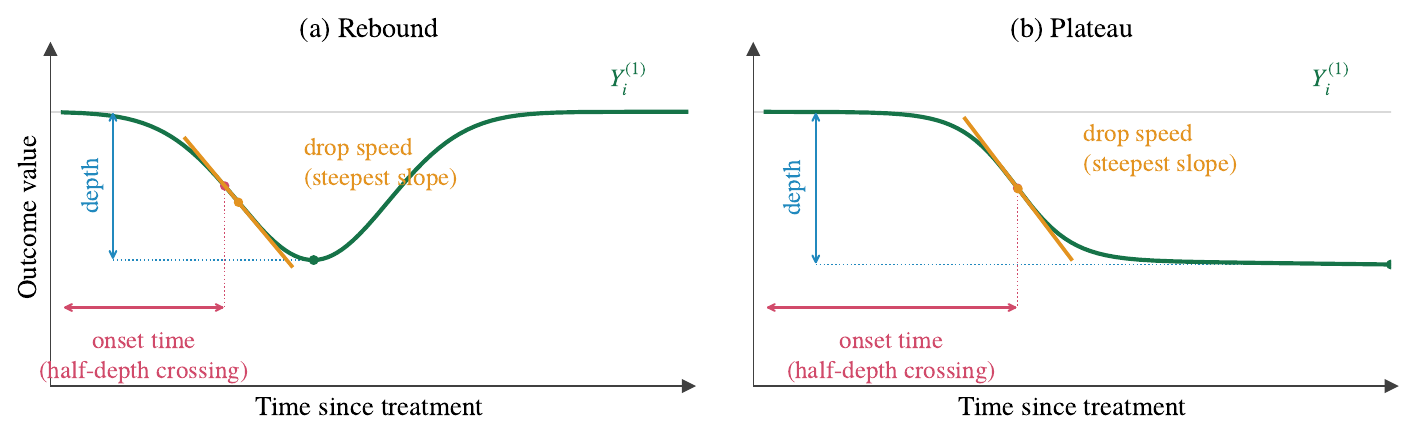}
\caption{\textbf{The three shape evaluation metrics}, on a rebounding
\emph{(a)} and a plateauing \emph{(b)} treated potential outcome (green):
the depth (blue), the onset time (pink, half-depth crossing on the
way down), and the drop speed (yellow, slope of the tangent at the
steepest-descending point).}
\label{fig:metricschema}
\end{figure}

\paragraph{Estimator and competitor implementations.}
All estimators share the registration code (dynamic programming over piecewise-linear warps with slopes in $[1/4,4]$, a family $\Gc$ satisfying clauses (i)--(ii) of Assumption~\ref{ass:warp} with $c=1/4$, $C=4$) and the same time grid. The simulated laws add noise and, in Scenarios~2 to~4, mix rebounding and plateauing curves, so they do not satisfy clause (iii) exactly (Remark~\ref{rem:plaus}): the computed estimands are the outputs of this computation over $\Gc$, for which Theorem~\ref{thm:wellposed} gives no formal guarantee. Our populational estimator places the inverse-propensity weights inside Steps~1 and~2 of the elastic mean and in its Step~3 (Hájek form, Remark~\ref{rem:hajek}), with a logistic propensity fitted on the whole sample and clipped to $[0.02,0.98]$. In Step~1, the aligned transforms $(q_i,\gamma_i)$ are computed as the SRSFs of the re-timed curves $f_i\circ\gamma_i$, which keeps the total variation of features narrower than the grid spacing. The local estimator takes the elastic mean of the matched difference curves ($k=3$ neighbours, we kept at most $300$ pooled differences in our simulations). DR-FoS \citep{testa_doubly_2025} is the pointwise AIPW formula with the same logistic propensity model, cross-fitted over $K=5$ folds, and per-arm pointwise linear outcome regressions. The SRVF kernel ATE follows the authors' released implementation \citep{raykov_kernel_2025}: five-point moving-average smoothing, separate registration of each arm to its Fréchet mean shape, with centered warps, then operator-valued kernel ridge regression with their kernels, their ridge grid $\{10^{-3},1,10^{3}\}$ and their hold-out selection. We kept two features of the released code. Its treatment kernel is the dot product $k(a,a')=aa'$, so the prediction at $A=0$ is identically zero and the estimate reduces to the predicted mean of the registered treated curves. Its ridge parameter is chosen by the prediction error on raw, unregistered held-out curves. We fixed one ordering inconsistency of the released code (time-major outcomes against a sample-major Kronecker system); the fix is what their Algorithm~1 specifies and is favourable to the competitor.

The Monte-Carlo comparisons of Section~\ref{sec:simestimators} run $50$ replications of $n=250$ per scenario, each drawing a new sample and refitting the four estimators. The phase sweep runs $10$ replications per value of $\alpha$, against estimands cached on $800$ draws per $\alpha$.

\paragraph{Estimator distributions across scenarios.}
Figure~\ref{fig:mcgridfull} shows the Monte-Carlo distribution of the estimated effect curves in the five scenarios, two of which are in the main text (Figure~\ref{fig:mcgrid}), and Figure~\ref{fig:boxgrid} gives the error distributions behind them. Both elastic estimators have small errors on every metric, DR-FoS carries the estimand gap of the FATE on every metric, and the SRVF kernel ATE combines a large $L^2$ error with wide depth errors.

\begin{figure}[p]
\centering
\includegraphics[width=0.92\textwidth]{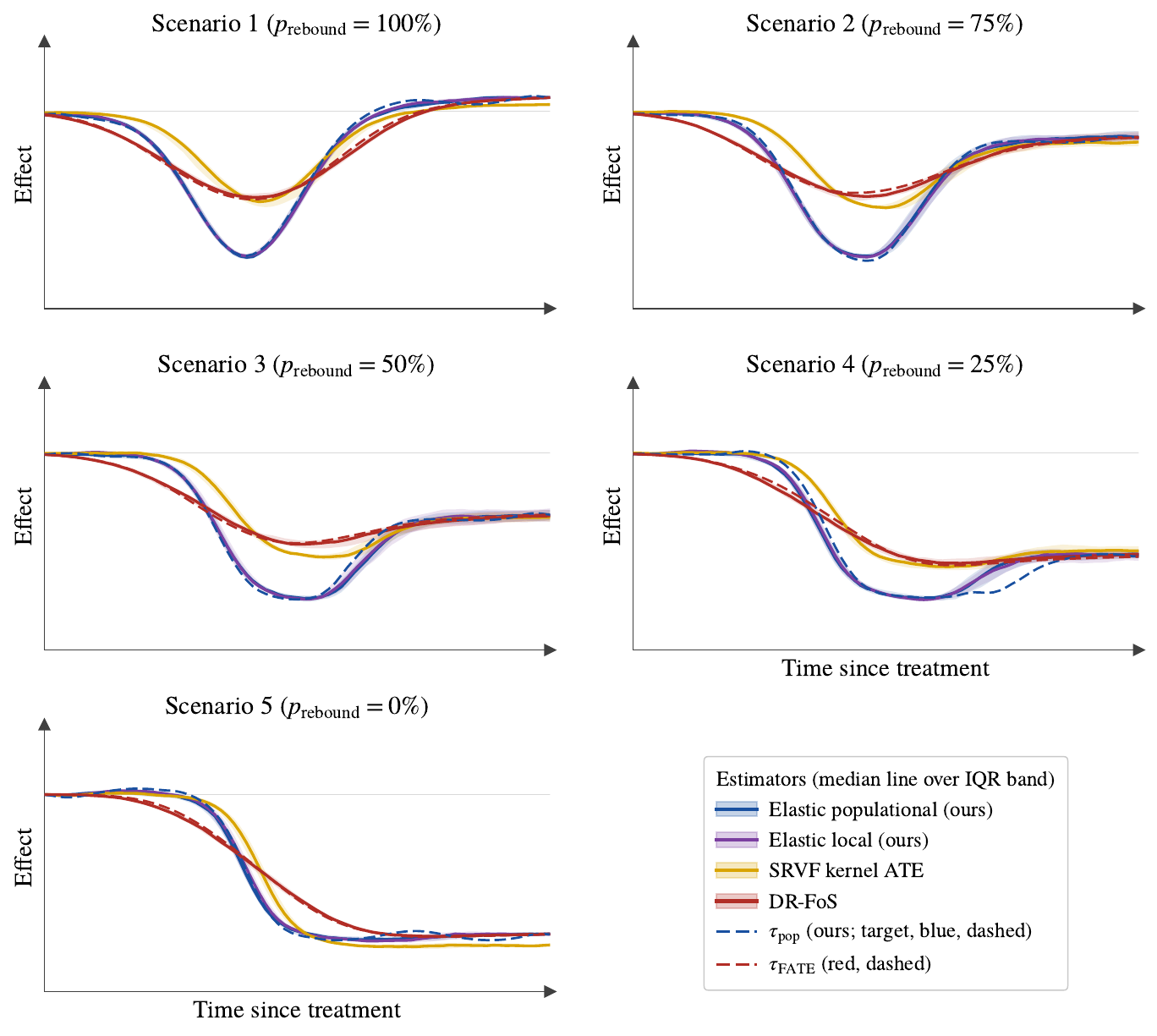}
\caption{\textbf{Monte-Carlo distribution of the four estimators, across
scenarios} (one panel per scenario, from Scenario~1, $p_{\mathrm{rebound}}=100\%$,
down to Scenario~5, $p_{\mathrm{rebound}}=0\%$; $50$ replications per scenario,
$n=250$). Each panel: pointwise
median (solid) and pointwise interquartile band of the four estimated effect
curves, with the causal estimands $\tau_{\mathrm{pop}}$ (blue, dashed) and
$\tau_{\mathrm{FATE}}$ (red, dashed).}
\label{fig:mcgridfull}
\end{figure}  

\begin{figure}[htbp]
\centering
\includegraphics[width=0.98\textwidth]{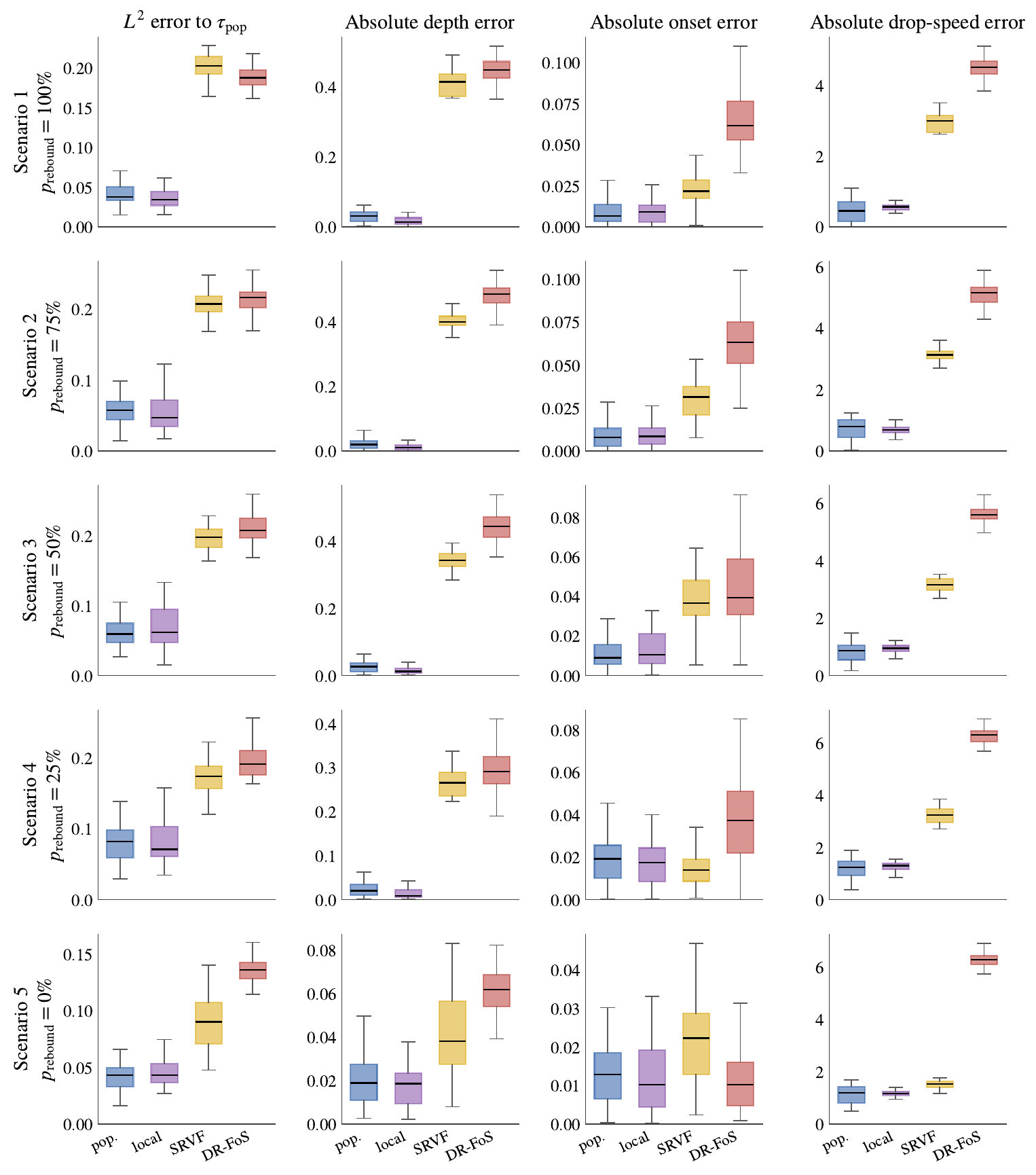}
\caption{\textbf{Estimator absolute errors, across scenarios} (one row
per scenario, from Scenario~1 down to Scenario~5; $50$ replications, $n=250$). Boxes: median and interquartile
range of the absolute errors. Onset timing in units of rescaled time; drop speed in outcome
units per unit of rescaled time.}
\label{fig:boxgrid} 
\end{figure}

\subsection{Functional means across data-generating processes} \label{app:meansgrid}

Figure~\ref{fig:meansgrid} compares the phase-centered elastic mean of Section~\ref{sec:karcher} with the cross-sectional mean and four other registration-based averages on seven data-generating processes. The cross-sectional mean flattens the shared features and matches no curve of the sample. The elastic mean recovers the common shape in every row.

\begin{figure}[htbp]
\centering
\includegraphics[width=\textwidth]{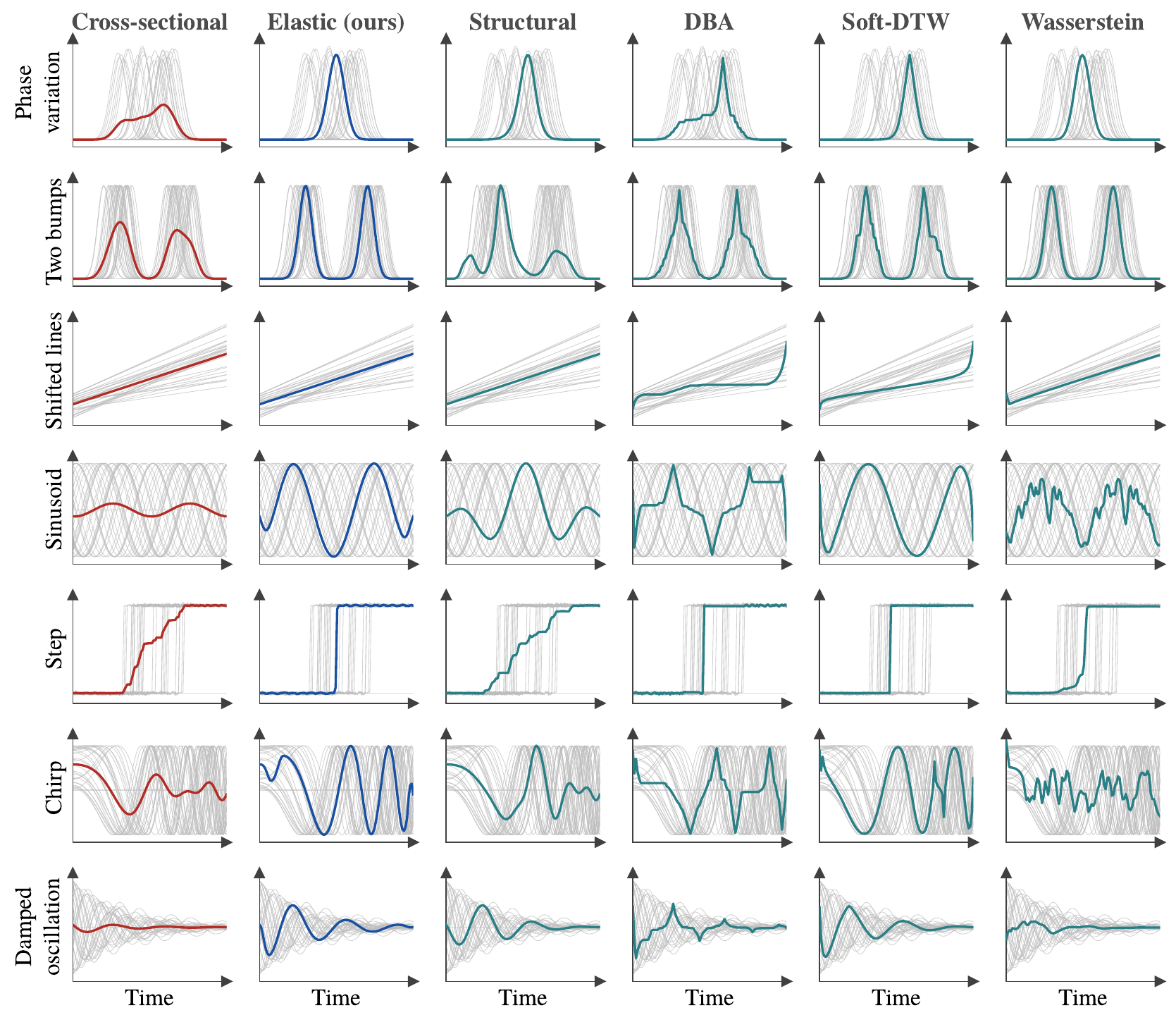}
\caption{\textbf{Functional means across data-generating processes.} One row per data-generating process (grey: a sample of curves sharing a common shape but differing in timing, in some rows also in amplitude). One column per averaging operator: the Euclidean cross-sectional mean (red) and five registration-based means (blue for the elastic mean, teal for the others): structural with landmark registration \citep{kneip_statistical_1992}, elastic (Fisher--Rao; the phase-centered elastic mean), DBA (DTW barycenter averaging) \citep{petitjean_global_2011}, soft-DTW \citep{cuturi_soft-dtw_2017} and Wasserstein \citep{panaretos_amplitude_2016}.}
\label{fig:meansgrid}
\end{figure}

\subsection{The causal-estimand gap over the
$(\alpha,p_{\mathrm{rebound}})$ grid} \label{app:mix}

Phase variation and shape heterogeneity are two distinct mechanisms separating the FATE from the typical shape. The table gives the gap $\|\tau_{\mathrm{FATE}}-\tau_{\mathrm{pop}}\|_{L^2}$ over a grid of $(\alpha,p_{\mathrm{rebound}})$ values: one row per scenario, whose phase-variation scale $\alpha$ is varied, each cell on $800$ potential-outcome draws (Monte-Carlo error about $0.03$ to $0.04$; the draws differ from those of Figure~\ref{fig:phase}, hence the small differences with the gaps of that figure in Scenario~1):
\begin{center}
\begin{tabular}{@{}lccc@{}}
\toprule
 & $\alpha=0$ & $\alpha=0.5$ & $\alpha=1$ \\
\midrule
Scenario~1 ($p_{\mathrm{rebound}}=100\%$) & $0.02$ & $0.07$ & $0.19$ \\
Scenario~2 ($p_{\mathrm{rebound}}=75\%$) & $0.12$ & $0.13$ & $0.21$ \\
Scenario~3 ($p_{\mathrm{rebound}}=50\%$) & $0.13$ & $0.14$ & $0.20$ \\
Scenario~4 ($p_{\mathrm{rebound}}=25\%$) & $0.09$ & $0.12$ & $0.18$ \\
Scenario~5 ($p_{\mathrm{rebound}}=0\%$) & $0.04$ & $0.05$ & $0.12$ \\
\bottomrule
\end{tabular}
\end{center}
In Scenarios~1 and~5 the gap starts small and grows with $\alpha$: timing variability is the only mechanism. In Scenarios~2 to~4 it is large already at $\alpha=0$, it is the largest in Scenarios~2 and~3, and phase variation adds to it.

\subsection{The noise level and the causal estimands} \label{app:bias}

Section~\ref{sec:noise} explains that the template bias is a problem when the noise is more important than the signal inside the elastic mean, that is, when the total variation added by the noise is of the order of the total variation of the curve being averaged, or larger. This happens in three situations.
\begin{enumerate}[leftmargin=2em,itemsep=2pt,topsep=2pt]
\item \emph{Very noisy outcomes, for any elastic estimand}: the noise carries much more total variation than the shape of the effect. This is the regime of the highest noise level in Figure~\ref{fig:biasgrid} (third row), where the estimands themselves move with the noise.
\item \emph{One arm is nearly constant and we use the populational estimand.} The signal of that arm has a total variation close to zero, so any noise dominates it: the elastic mean of the arm is an oscillation whose size grows with the noise (Figure~\ref{fig:biasgrid}, control arm in column~(c), and Figure~\ref{fig:nullgrid}, first row), and $\tau_{\mathrm{pop}}$ carries it, even when the other arm is recovered well.
\item \emph{Treatment has no effect and we compute the local or individual estimands.} The contrast curves inside the elastic mean then carry no signal at all, which is situation~2 again. Besides, in that case the elastic mean is not unique: the law of the contrasts $Y^{(1)}-Y^{(0)}$ (or $Y^{(1)}-Y^{(0)} \mid X=x$) is symmetric, so a shape $q$ and its mirror image $-q$ are equally good minimisers of the objective $\E_{P}\!\big[\del^2([q],[q_{Y^{(1)}-Y^{(0)}}])\big]$, and, because of the template bias, the flat curve $q(t) \equiv 0$ is not a Fréchet minimiser of this objective. Therefore, the estimator returns one shape or its mirror with a sign that depends on the sample. Assumption~\ref{ass:conc}(4a) fails, and the estimate may be a non-zero curve although the effect is zero. Note that, still under the \emph{null}, the populational estimand does not share this limitation when each arm carries a shape whose total variation is large compared with the noise (otherwise it is situation~2 again). Then only the local and individual estimands are template biased (Figure~\ref{fig:nullgrid}, second row). 
\end{enumerate}

Figures~\ref{fig:biasgrid} and~\ref{fig:nullgrid} illustrate the three situations in Scenario~1, each on $800$ potential-outcome draws (the same draws for the three noise levels, new draws for the two rows without effect) and the same layout as Figure~\ref{fig:estgrid}.  In column~(b), $\tau_{\mathrm{loc}}$ (purple) is added to the estimands of Figure~\ref{fig:estgrid}.

\begin{figure}[htbp]
\centering
\includegraphics[width=0.93\textwidth]{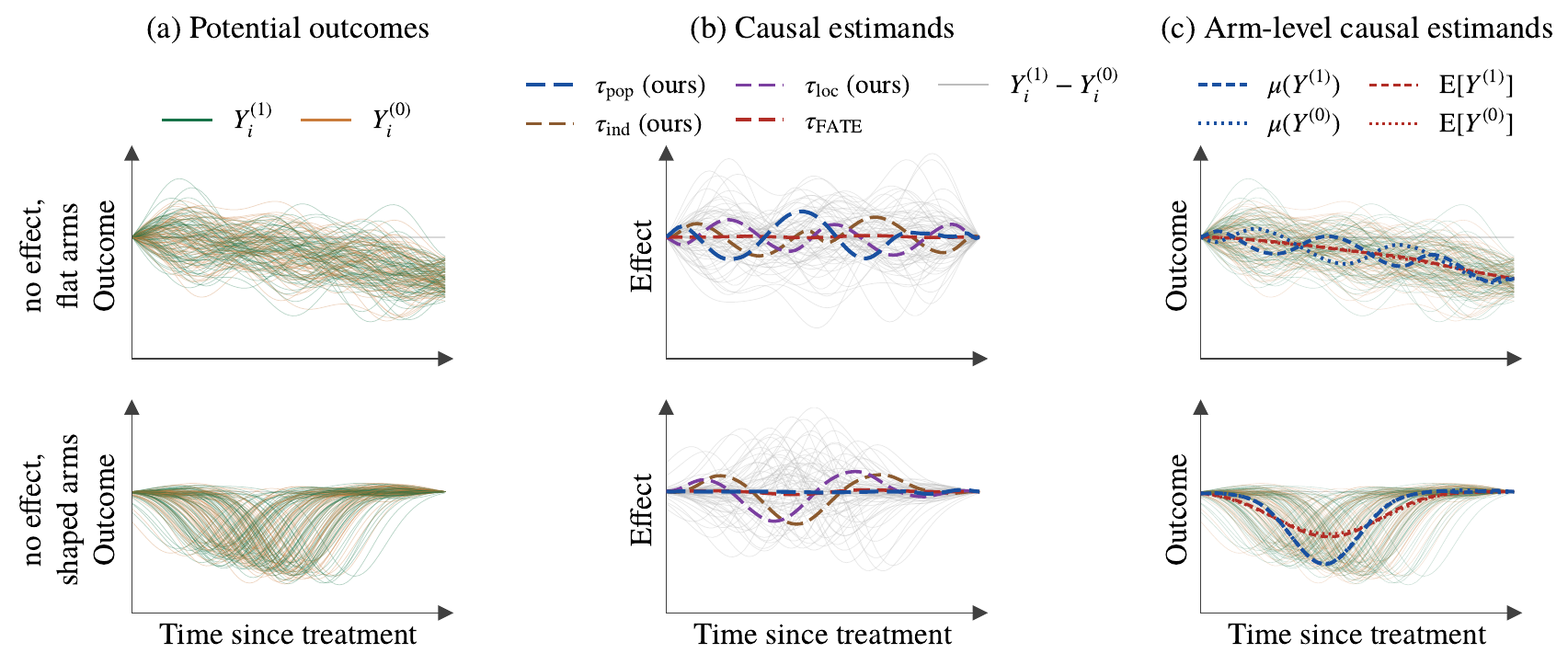}
\caption{\textbf{The causal estimands when the treatment has no effect} ($800$ draws, $\sigma_\varepsilon=0.03$). First row: both potential outcomes follow the control law (flat arms; the vertical scale is enlarged). Second row: both follow the treated law, drawn independently given $X$ (shaped arms). The noise is the same in both rows ($\sigma_\varepsilon=0.03$); it is visible
only in the first, whose vertical scale is enlarged. Columns and colours as in Figure~\ref{fig:biasgrid}.}
\label{fig:nullgrid}
\end{figure}

\paragraph{Noise level (Figure~\ref{fig:biasgrid}).}
The three rows use the same $800$ units at the noise scales $\sigma_\varepsilon\in\{0.03,0.10,0.30\}$. The noise-free versions ($\sigma_\varepsilon=0$, same units) serve as reference and are not shown. At $\sigma_\varepsilon=0.03$ (the paper's noise level), the elastic estimands barely move: they are within $0.03$ to $0.05$ in $L^2$ of their noise-free versions, which is of the order of the Monte-Carlo resolution of estimands computed on $800$ draws ($0.03$ to $0.04$), and noise increases their total variation only by about $4\%$. At $\sigma_\varepsilon=0.10$ the treated arm and the effect are still recovered ($\tau_{\mathrm{pop}}$ at about $0.08$ in $L^2$ distance of its noise-free version) while the control arm mean has become an oscillation (total variation $1.0$, almost ten times its noise-free value): this is situation~2 above. At $\sigma_\varepsilon=0.30$ every elastic estimand moves ($\tau_{\mathrm{pop}}$ and $\tau_{\mathrm{ind}}$ within $0.22$, $\tau_{\mathrm{loc}}$ within $0.43$ $L^2$ distance of their noise-free versions; total variations more than doubled; in this model $\tau_{\mathrm{ind}}=\tau_{\mathrm{loc}}$, so the gap between $0.22$ and $0.43$ is the numerical error of the two computations, which is large at this noise level), the control arm mean is a large oscillation (total variation $3.1$, nearly thirty times its noise-free value), and the elastic estimands read the noise as shape: this is situation~1. At every level the FATE is unchanged (within $0.03$ $L^2$ distance of its noise-free version): it is flattened by phase variation, but it does not react to the noise.

\paragraph{No effect (Figure~\ref{fig:nullgrid}).}
In these two simulations, each displayed in one row of Figure~\ref{fig:nullgrid}, both potential outcomes follow the same law, at the paper's noise level in both
cases. The difference between the two is that the potential outcomes have a low total variation compared to noise in the first simulation, while they have a high signal-to-noise total variation ratio in the second one. The FATE captures this absence of effect: it is zero up
to Monte-Carlo error in both cases. These two simulations illustrate the two halves of situation~3 above.
In the first, no elastic estimand escapes the template bias: $\tau_{\mathrm{ind}}$
and $\tau_{\mathrm{loc}}$ because of the absence of effect which makes the potential outcomes contrast law constant up to noise, and $\tau_{\mathrm{pop}}$
because each arm is nearly constant, which is situation~2 again. Thus, for the three targets even a small noise significantly affects the total variation of the shape captured and template bias is significant. In the second simulation, the arms carry enough total variation for their elastic means to be well determined and almost equally biased, so $\tau_{\mathrm{pop}}$ cancels, while $\tau_{\mathrm{ind}}$ and $\tau_{\mathrm{loc}}$ are still significantly affected by template bias. Note that here $\tau_{\mathrm{ind}}$ is
computed from independent potential outcomes; under the sharp null
($Y^{(1)}=Y^{(0)}$) every individual effect is the zero curve. Theorem~\ref{thm:wellposed} then does not apply: the Step-1 mean shape is the shape of the zero curve, whose only representative is $\tilde q=0$, so no admissible reference exists, and every re-timing is a best one (Appendix~\ref{app:flat}). Every choice of re-timings and of mean warp nevertheless gives zero phase-centered registered curves, so Step~3 returns $\tau_{\mathrm{ind}}\equiv0$ whatever choices are made.

\subsection{Sensitivity analyses} \label{app:sens}

The comparisons of Section~\ref{sec:sims} use the same sample size, treatment model and covariate dimension. This appendix varies each of them in Scenario~1 ($p_{\mathrm{rebound}}=100\%$, $\alpha=1$, $n=250$): Figures~\ref{fig:sensn}, \ref{fig:sensmis} and~\ref{fig:sensdim}, and Table~\ref{tab:sens} for all $L^2$ errors. Each figure reports the median $L^2$ error and the median absolute depth error of the estimators, with interquartile bands.

\paragraph{Protocol.}
Every analysis reuses the pipeline of Appendix~\ref{app:sims}. Only the treatment
model or the covariate dimension changes, so the outcome law and the estimands
stay the same. We add two reference estimators: an unweighted one, which adjusts
for nothing, and an oracle-weighted one, which uses the true propensities. The
first shows what no adjustment gives, the second what a perfect propensity model
gives.

\paragraph{Sample size.}
The $L^2$ error of the elastic estimators falls overall as the sample size grows, while
the two competitors stay at a roughly constant error: DR-FoS targets a different curve, the SRVF kernel ATE has no well-defined causal estimand, and
more data does not help (Figure~\ref{fig:sensn}). The elastic estimators are
better even at the smallest sizes, although the spread is wide there. Their error
stops falling at a floor set by the resolution of the reference estimands and by
the two computational caps, the number of alternations and the number of matched
differences. Importantly, even at small sample sizes the elastic estimators recover well the depth of the rebound template, and much better than the two competitors.

\paragraph{Misspecified propensity model (omitted interactions).}
A wrong treatment model impacts weighting only through the covariate imbalance it
leaves between the two imperfectly reweighted groups. We simulate a case where the treatment model is driven by interactions and the working model leaves them out. We use
four covariates $Z\sim N(0,I_4)$ and a severity index mixing a linear and an
interaction part,
\[
\eta_\lambda \;=\; \frac{(1-\lambda)\,\eta_{\mathrm{lin}}
  + \lambda\,\eta_{\mathrm{int}}}{\sqrt{(1-\lambda)^2+\lambda^2}},
\qquad
\eta_{\mathrm{lin}} = \tfrac{Z_1+0.5Z_2+0.25Z_3+0.1Z_4}{1.15},
\quad
\eta_{\mathrm{int}} = \tfrac{Z_1Z_2+Z_3Z_4}{\sqrt2},
\]
with $\operatorname{logit} e=2\,\eta_\lambda$, all estimators keeping a
linear-logistic propensity score model. As the interaction share $\lambda$ grows, the fitted
propensity flattens towards a constant and the populational weighted elastic estimator gets
worse, until it is no better than the unweighted one: the wrong weights remove no
confounding (Figure~\ref{fig:sensmis}). The oracle-weighted estimator does not
move, so the propensity model alone is at fault. The matched estimator, which
never models the propensity, changes much less.

\paragraph{Covariate dimension.}
We then place Scenario~1 in higher dimensions, with every covariate
carrying signal and with confounding, overlap and the estimand the same at every
dimension. Only the nuisance models get harder to fit from the same number of
observations. Both the propensity model and the matching get worse as the
dimension grows, the matched estimator more than the weighted one, while the
oracle-weighted estimator does not move (Figure~\ref{fig:sensdim}). The problem is not dimension
itself but estimating the nuisance models. In high
dimension, one can regularise or select the covariates of the propensity model, and match
on a propensity or prognostic score.

\begin{figure}[p]
\centering
\includegraphics[width=0.85\textwidth]{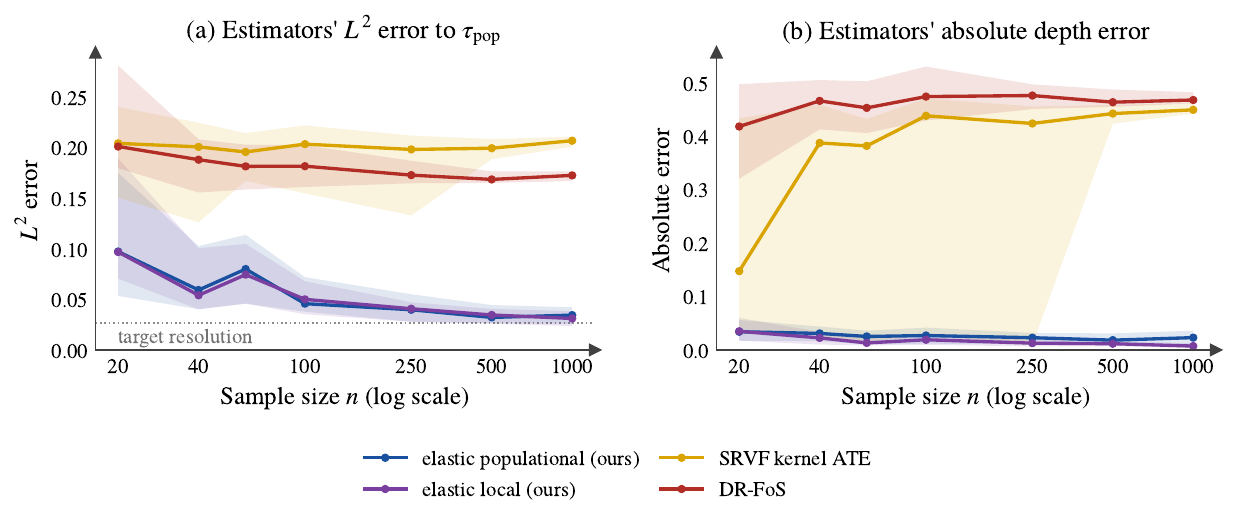}
\caption{\textbf{Sensitivity to the sample size} (Scenario~1,
$50$ replications per size, $30$ at $n=1000$). \emph{(a)} Estimators'
median $L^2$ error to $\tau_{\mathrm{pop}}$ (solid lines; bands:
interquartile range); the dotted line marks the resolution of the
Monte-Carlo target ($3200$ draws; $0.027$, Appendix~\ref{app:sims}). \emph{(b)} Estimators' median
absolute depth error.}
\label{fig:sensn}
\end{figure}

\begin{figure}[p]
\centering
\includegraphics[width=\textwidth]{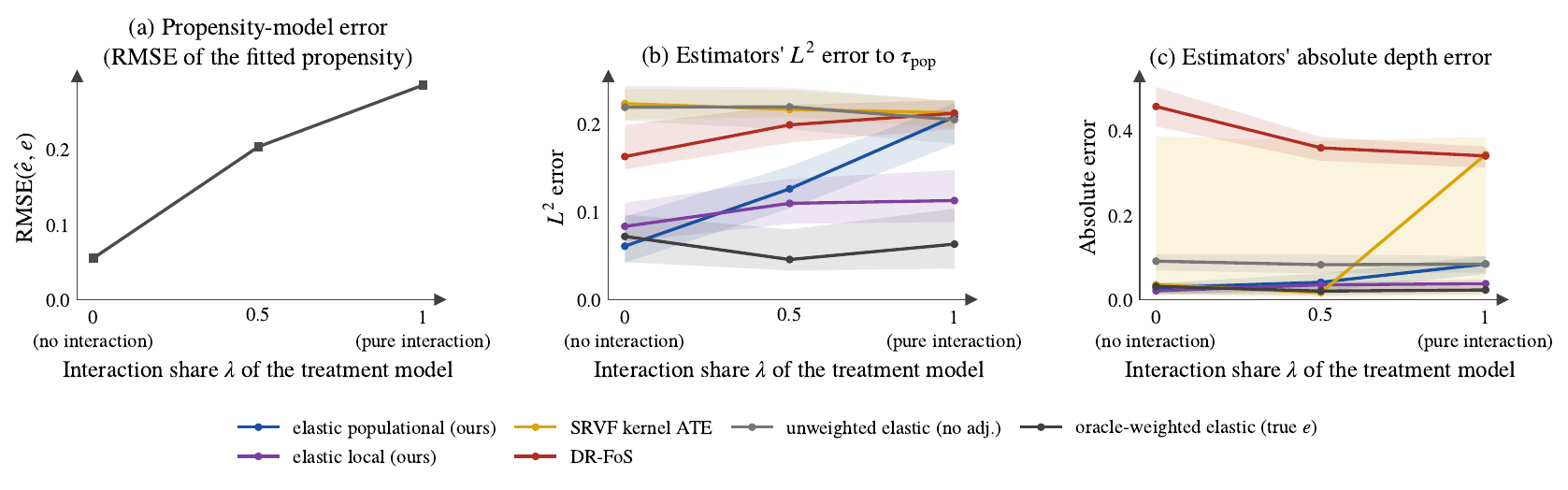}
\caption{\textbf{Sensitivity to propensity-model misspecification
(omitted interactions)} ($n=250$, $50$ replications per level; outcome
law, hence the estimand, unchanged). \emph{(a)} Root-mean-square error of the fitted
linear-logistic propensity against the true one, as a function of the
interaction share $\lambda$ (same abscissa as the other panels): as
$\lambda$ grows the fitted model flattens towards a constant while the
true propensity stays steep. \emph{(b)} Estimators' median $L^2$
error (bands: interquartile range); the unweighted (grey) and
oracle-weighted (dark grey) elastic estimators bracket the failure: at
$\lambda=1$ the fitted-weight estimator meets the unweighted one, while
the oracle stays at the well-specified level. \emph{(c)} Estimators'
median absolute depth error.}
\label{fig:sensmis}
\end{figure}

\begin{figure}[p]
\centering
\includegraphics[width=\textwidth]{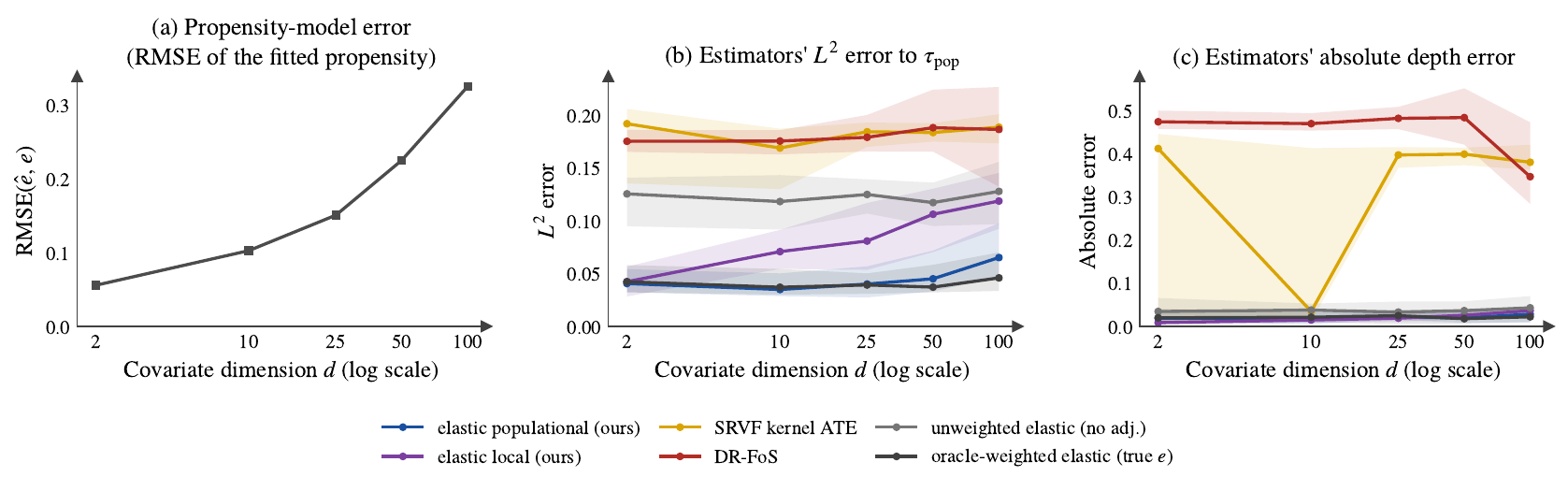}
\caption{\textbf{Sensitivity to the covariate dimension, dense design}
($n=250$, $50$ replications per dimension; all $d$ covariates carry
signal through the index $U=\Phi\big(\sum_j X_j/\sqrt d\big)$; confounding,
overlap and the estimand are identical at every $d$). \emph{(a)} Root-mean-square error of the
fitted propensity against the true one, as a function of $d$.
\emph{(b)} Estimators' median $L^2$
error (bands: interquartile range); the oracle-weighted estimator (dark
grey) is flat, so the IPW degradation is entirely nuisance-estimation
error. \emph{(c)} Estimators' median absolute depth error.}
\label{fig:sensdim}
\end{figure}

\begin{table}[p]
\centering
\caption{\textbf{Sensitivity analyses: median [interquartile range] of the
$L^2$ error to $\tau_{\mathrm{pop}}$} over $50$ replications ($30$ at
$n=1000$).  Every block varies one mechanism around the baseline
(Scenario~1, $n=250$); the misspecification and dimension
blocks leave the outcome law (hence the estimand) exactly unchanged.}
\label{tab:sens}
\vspace{4pt}
\small
\begin{tabular}{@{}lcccc@{}}
\toprule
design & elastic pop.\ (ours) & elastic local (ours) & SRVF kernel ATE & DR-FoS \\
\midrule
baseline ($n=250$) & $0.040$ [$0.028$, $0.055$] & $0.041$ [$0.028$, $0.047$] & $0.198$ [$0.133$, $0.212$] & $0.173$ [$0.165$, $0.187$] \\
\addlinespace[2pt]\multicolumn{5}{@{}l}{\emph{sample size}}\\[1pt]
\quad $n=20$ & $0.097$ [$0.053$, $0.175$] & $0.097$ [$0.071$, $0.189$] & $0.204$ [$0.151$, $0.240$] & $0.201$ [$0.180$, $0.281$] \\
\quad $n=40$ & $0.059$ [$0.040$, $0.103$] & $0.054$ [$0.040$, $0.101$] & $0.201$ [$0.126$, $0.225$] & $0.188$ [$0.156$, $0.208$] \\
\quad $n=60$ & $0.080$ [$0.046$, $0.114$] & $0.075$ [$0.046$, $0.105$] & $0.196$ [$0.167$, $0.214$] & $0.182$ [$0.159$, $0.203$] \\
\quad $n=100$ & $0.046$ [$0.038$, $0.072$] & $0.050$ [$0.035$, $0.068$] & $0.204$ [$0.155$, $0.222$] & $0.182$ [$0.161$, $0.203$] \\
\quad $n=500$ & $0.032$ [$0.027$, $0.044$] & $0.035$ [$0.026$, $0.041$] & $0.200$ [$0.189$, $0.209$] & $0.169$ [$0.165$, $0.177$] \\
\quad $n=1000$ & $0.034$ [$0.026$, $0.042$] & $0.031$ [$0.024$, $0.038$] & $0.207$ [$0.201$, $0.211$] & $0.173$ [$0.167$, $0.177$] \\
\addlinespace[2pt]\multicolumn{5}{@{}l}{\emph{propensity misspecification (omitted interaction)}}\\[1pt]
\quad $\lambda=0$ (well-spec.) & $0.061$ [$0.042$, $0.094$] & $0.083$ [$0.066$, $0.110$] & $0.223$ [$0.205$, $0.240$] & $0.163$ [$0.148$, $0.198$] \\
\quad $\lambda=0.5$ & $0.126$ [$0.104$, $0.151$] & $0.109$ [$0.086$, $0.137$] & $0.216$ [$0.207$, $0.238$] & $0.199$ [$0.178$, $0.221$] \\
\quad $\lambda=1$ & $0.208$ [$0.176$, $0.223$] & $0.113$ [$0.088$, $0.147$] & $0.212$ [$0.195$, $0.226$] & $0.212$ [$0.194$, $0.227$] \\
\addlinespace[2pt]\multicolumn{5}{@{}l}{\emph{covariate dimension (dense design)}}\\[1pt]
\quad $d=2$ & $0.041$ [$0.032$, $0.054$] & $0.042$ [$0.028$, $0.057$] & $0.192$ [$0.136$, $0.206$] & $0.175$ [$0.165$, $0.186$] \\
\quad $d=10$ & $0.035$ [$0.029$, $0.050$] & $0.071$ [$0.054$, $0.091$] & $0.169$ [$0.130$, $0.187$] & $0.176$ [$0.163$, $0.186$] \\
\quad $d=25$ & $0.040$ [$0.027$, $0.057$] & $0.081$ [$0.054$, $0.117$] & $0.184$ [$0.170$, $0.193$] & $0.179$ [$0.166$, $0.200$] \\
\quad $d=50$ & $0.045$ [$0.033$, $0.072$] & $0.106$ [$0.070$, $0.130$] & $0.184$ [$0.175$, $0.193$] & $0.188$ [$0.166$, $0.224$] \\
\quad $d=100$ & $0.065$ [$0.047$, $0.098$] & $0.119$ [$0.092$, $0.145$] & $0.189$ [$0.173$, $0.201$] & $0.187$ [$0.132$, $0.227$] \\
\bottomrule
\end{tabular}

\end{table}
\section{Real-data application: supplementary material}
\label{app:realdata}

This appendix details the material behind Section~\ref{sec:realdata}: the cohort construction 
and attrition, the covariates, the reconstruction of the weight curves and 
the stability of the estimates.

\subsection{Cohort and attrition}
\label{app:cohort_attrition}
The application studies the effect of GLP-1 RA treatment (brands Zepbound, Wegovy and
Saxenda) on the weight trajectories of patients over a horizon of $h=24$ months.
The claims data used to build the cohort cover the period from 1 September 2020 to
30 September 2025, and the weight measurements extend to August 2026.
The treated arm follows a new-user design and contains the patients with a GLP-1 RA
prescription. Their index date $t_0$ is the first fill. We require continuous insurance
coverage over the $6$ months before $t_0$ (defining the washout period), so that no earlier
exposure to the treatment is missed: an insured patient is reimbursed for their
prescriptions, which therefore appear in the claims. Treated patients who meet this
requirement are the pseudo-initiators of Figure~\ref{fig:rd_flow_chart}.
Controls are patients who have not used a GLP-1 RA during the entire observation window.
For them, we require continuous insurance enrolment over the washout period and the
horizon ($30$ months). Any GLP-1 RA fill during follow-up would then appear in the
claims, so that controls are truly non-users over the whole horizon.
Their index date $t_0$ is drawn at random within an eligible window of their
insurance coverage. Controls who meet this requirement are the pseudo-non-users of
Figure~\ref{fig:rd_flow_chart}.
The insurance coverage before $t_0$ also allows some covariates to be measured properly
(Appendix~\ref{app:covariates}).
Eligibility further requires five weight measurements: one in the $4$ months before
$t_0$, three over the horizon $[t_0,t_0+h)$, and one in
the $4$ months after $t_0+h$. The first and the last anchor the
spline at both ends of the horizon, which stabilises the baseline weight
(Appendix~\ref{app:reconstruction_weight_curves}).
We also require valid socio-demographics: age between $18$ and $80$ years, gender male or
female, weight between $30$ and $300$\,kg, and BMI between $12$ and $80$\,kg/m$^2$.

Figure~\ref{fig:rd_flow_chart} shows the attrition flow chart. For computational
reasons, controls are randomly down-sampled to $n_0=5\,000$.

\begin{figure}[t]
\centering
\definecolor{treatedColor}{HTML}{167348}
\definecolor{controlColor}{HTML}{C4692F}
\resizebox{\textwidth}{!}{%
\begin{tikzpicture}[
  font=\footnotesize,
  arm/.style={draw, thick, rounded corners=2pt, align=center, inner sep=4pt,
              text width=34mm, minimum height=11mm},
  excl/.style={draw, black!40, align=left, inner sep=4pt,
               text width=44mm, font=\scriptsize},
  stage/.style={align=center, text width=30mm, font=\footnotesize\itshape,
                color=black!60},
  head/.style={align=center, font=\large\bfseries},
  flow/.style={-{Latex[length=3mm]}, line width=1.2pt},
  exclarrow/.style={-{Latex[length=2mm]}, black!50, thin}
]

\def\xT{-4.2}   \def\xC{4.2}
\def\xET{-9.4}  \def\xEC{9.4}
\def\yA{0}      \def\yB{-3.2}   \def\yC{-6.4}
\def\yD{-9.6}   \def\yE{-12.4}

\node[head, color=treatedColor] at (\xT, 1.9) {Treated arm};
\node[head, color=controlColor] at (\xC, 1.9) {Control arm};

\node[arm, draw=treatedColor] (t1) at (\xT,\yA)
  {Anti-obesity GLP-1 users\\\textbf{$n = 5\,067$}};
\node[arm, draw=controlColor] (c1) at (\xC,\yA)
  {GLP-1 non-users\\\textbf{$n = 182\,078$}};
\node[stage] at (0,\yA) {Source population};

\node[arm, draw=treatedColor] (t2) at (\xT,\yB) {\textbf{$n = 3\,094$}};
\node[arm, draw=controlColor] (c2) at (\xC,\yB) {\textbf{$n = 55\,941$}};
\node[stage] at (0,\yB) {Pseudo-initiators\\and pseudo-non-users};

\coordinate (mt1) at (\xT,{(\yA+\yB)/2});
\coordinate (mc1) at (\xC,{(\yA+\yB)/2});
\node[excl] (et1) at (\xET,{(\yA+\yB)/2})
  {\textbf{$-1\,973$}\\
   no pharmacy coverage over the \textit{washout} (827)\\
   no medical coverage over the \textit{washout} (1\,146)};
\node[excl] (ec1) at (\xEC,{(\yA+\yB)/2})
  {\textbf{$-126\,137$}\\
   pharmacy coverage $< 30$ months (insufficient to draw $t_0$)
   (101\,223)\\
   no medical coverage over the \textit{washout} (24\,914)};
\draw[exclarrow] (mt1) -- (et1.east);
\draw[exclarrow] (mc1) -- (ec1.west);

\node[arm, draw=treatedColor] (t3) at (\xT,\yC) {\textbf{$n = 544$}};
\node[arm, draw=controlColor] (c3) at (\xC,\yC) {\textbf{$n = 15\,692$}};
\node[stage] at (0,\yC) {Sufficient weight\\measurements (24-month horizon)};

\coordinate (mt2) at (\xT,{(\yB+\yC)/2});
\coordinate (mc2) at (\xC,{(\yB+\yC)/2});
\node[excl] (et2) at (\xET,{(\yB+\yC)/2})
  {\textbf{$-2\,550$}\\
   no Withings scale (502)\\
   no pre-$t_0$ measurement (1\,168)\\
   fewer than 3 measurements over horizon (81)\\
   no post-horizon measurement (798)\\
   no pre-$t_0$ BMI (1)};
\node[excl] (ec2) at (\xEC,{(\yB+\yC)/2})
  {\textbf{$-40\,249$}\\
   no Withings scale (9\,557)\\
   no pre-$t_0$ measurement (26\,635)\\
   fewer than 3 measurements over horizon (936)\\
   no post-horizon measurement (3\,099)\\
   no pre-$t_0$ BMI (22)};
\draw[exclarrow] (mt2) -- (et2.east);
\draw[exclarrow] (mc2) -- (ec2.west);

\node[arm, draw=treatedColor] (t4) at (\xT,\yD) {\textbf{$n = 544$}};
\node[arm, draw=controlColor] (c4) at (\xC,\yD) {\textbf{$n = 15\,648$}};
\node[stage] at (0,\yD) {Valid\\socio-demographics};

\coordinate (mt3) at (\xT,{(\yC+\yD)/2});
\coordinate (mc3) at (\xC,{(\yC+\yD)/2});
\node[excl] (et3) at (\xET,{(\yC+\yD)/2})
  {\textbf{$-0$}\\
   age outside $[18,80]$, unknown gender, weight outside $[30,300]$ kg or BMI outside $[12,80]$ (0)};
\node[excl] (ec3) at (\xEC,{(\yC+\yD)/2})
  {\textbf{$-44$}\\
   age outside $[18,80]$, unknown gender, weight outside $[30,300]$ kg or BMI outside $[12,80]$ (44)};
\draw[exclarrow] (mt3) -- (et3.east);
\draw[exclarrow] (mc3) -- (ec3.west);

\node[arm, draw=treatedColor, line width=1.4pt, fill=treatedColor!8] (t5) at (\xT,\yE)
  {\textcolor{treatedColor}{\textbf{$n_1 = 544$}}};
\node[arm, draw=controlColor, line width=1.4pt, fill=controlColor!8] (c5) at (\xC,\yE)
  {\textcolor{controlColor}{\textbf{$n_0 = 5\,000$}}};
\node[stage] at (0,\yE) {\textbf{Study cohort}\\(after down-sampling)};

\coordinate (mt4) at (\xT,{(\yD+\yE)/2});
\coordinate (mc4) at (\xC,{(\yD+\yE)/2});
\node[excl] (et4) at (\xET,{(\yD+\yE)/2})
  {\textbf{$-0$}\\
   random down-sampling to limit computational
   cost (not needed here)};
\node[excl] (ec4) at (\xEC,{(\yD+\yE)/2})
  {\textbf{$-10\,648$}\\
   random down-sampling to limit computational
   cost};
\draw[exclarrow] (mt4) -- (et4.east);
\draw[exclarrow] (mc4) -- (ec4.west);

\foreach \a/\b in {t1/t2, t2/t3, t3/t4, t4/t5}
  \draw[flow, treatedColor] (\a) -- (\b);
\foreach \a/\b in {c1/c2, c2/c3, c3/c4, c4/c5}
  \draw[flow, controlColor] (\a) -- (\b);

\end{tikzpicture}}
\caption{\textbf{Attrition flow chart of the study cohort, for a horizon of $h = 24$ months}.
Exclusion reasons and their counts are shown on either side of the diagram.}
\label{fig:rd_flow_chart}
\end{figure}

\subsection{Covariates}
\label{app:covariates}
The retained set has $p=9$ covariates ($14$ one-hot columns) in the three families
of Section~\ref{sec:cohort_covariates_outcome}: socio-demographics (age, gender and
race), health status (baseline BMI and pre-$t_0$ counts of Elixhauser comorbidities
\citep{elixhauser1998,quan2005}, medication categories and medical visits), and
motivation (pre-$t_0$ counts of weight measures and preventive care categories). Baseline BMI 
is the weighted mean of the
measurements taken before $t_0$, with Gaussian kernel weights centred at $t_0$ and a standard
deviation of $30$ days. The counts before $t_0$ are computed over the $6$-month washout
period. Age is taken at the index date $t_0$.

This set reflects the trade-off between two of the three identifying conditions.
On the one hand, conditional exchangeability pushes towards more covariates, to block as many
confounding paths as possible.
On the other hand, positivity pushes towards fewer, since each added covariate makes overlap
between the arms harder to achieve.
This is why baseline health status and motivation are summarised by event counts rather
than by one indicator per category of medical event.
Socio-economic status remains an unmeasured confounder, since no proxy for it is available in
the data. All patients of both arms own a Withings connected scale, which suggests a relatively
homogeneous socio-economic status and limits this source of confounding, without removing it.
Figure~\ref{fig:rd_covariates} shows the distributions of the nine covariates in each arm. As
expected, controls are healthier than the treated: they have a lower baseline BMI and fewer
comorbidities, medications and medical visits before $t_0$.

\begin{figure}[h]
\centering
\includegraphics[width=1\textwidth]{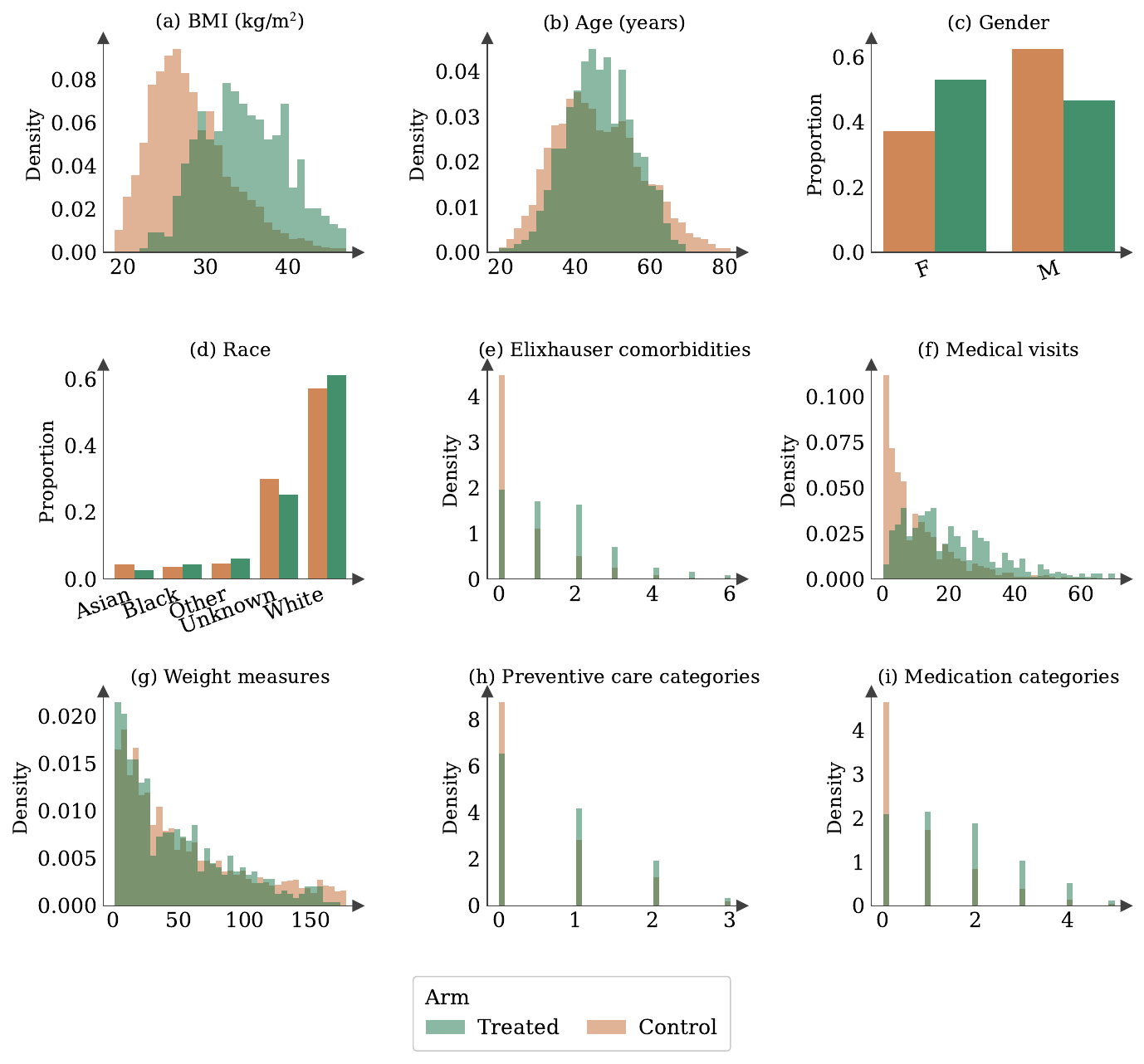}
\caption{\textbf{Treated patients have a higher BMI and more medical events
before $t_0$ than controls.} Unweighted distributions of the nine covariates in
each arm. \emph{(a)}, \emph{(b)}: baseline BMI and age at $t_0$.
\emph{(c)}, \emph{(d)}: proportions per category. \emph{(e)}--\emph{(i)}:
counts over the $6$-month washout period before $t_0$. Continuous variables and
counts are shown as normalised histograms.}
\label{fig:rd_covariates}
\end{figure}

The elastic populational estimator adjusts for these imbalances by reweighting the controls
to match the treated on the observed covariates (see Section~\ref{sec:popest}). 
The weights target the effect on the
treated, and the propensity scores they use are estimated with an
eXtreme Gradient Boosting model (XGBoost, \citealp{chen_xgboost_2016}).
We assess the quality of the adjustment with the absolute standardised mean difference (ASMD)
of each covariate:
\[
\text{ASMD}_k = \frac{|\bar{X}_{k,\text{treated}} - \bar{X}_{k,\text{control}}|}{\sqrt{\frac{1}{2}(\sigma^2_{k,\text{treated}} + \sigma^2_{k,\text{control}})}},
\]
where $\bar{X}_{k,\text{treated}}$ and $\bar{X}_{k,\text{control}}$ are the means of covariate $k$ 
in the treated and control groups, respectively, and $\sigma^2_{k,\text{treated}}$ and 
$\sigma^2_{k,\text{control}}$ are the corresponding variances.
The love plot of Figure~\ref{fig:rd_love_plot} shows the ASMDs before and after reweighting,
and the improvement in balance that the reweighting brings.
The usual threshold for acceptable balance is an ASMD below $0.1$ \citep{normand2001,austin2009}.
After reweighting, only the baseline BMI remains above it, and it stays below the looser
threshold of $0.25$ also used in practice, in line with the balance guidelines of \citet{rubin2001}.

\begin{figure}[h]
\centering
\includegraphics[width=1\textwidth]{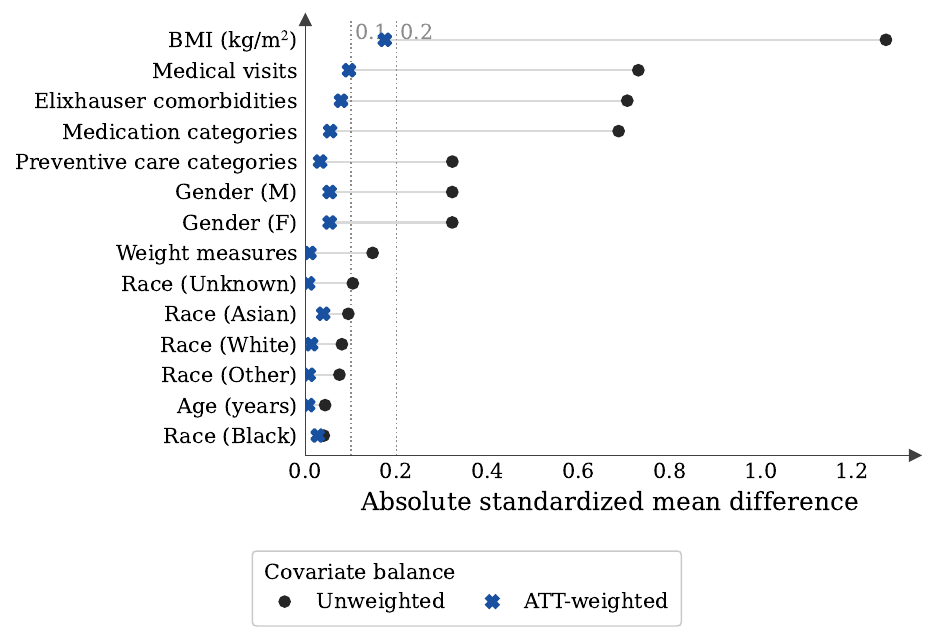}
\caption{\textbf{Reweighting balances all covariates except baseline BMI}. 
ASMD of each covariate between the treated and the controls, before reweighting (black
dots) and after ATT reweighting (blue crosses).}
\label{fig:rd_love_plot}
\end{figure}

\subsection{Reconstruction of the weight curves}
\label{app:reconstruction_weight_curves}
{
Let $W_{ij}$ be the $j$-th weight measurement of patient $i$, taken at time $t_{ij}$.
The raw weights of each patient are first reconstructed independently by a cubic
smoothing spline $\hat f_i$ minimising
\begin{equation*}
\sum_{j=1}^{m_i} \bigl(W_{ij}-f(t_{ij})\bigr)^2
   + \lambda \int_{-120}^{h+120} f''(t)^2\,dt,
\end{equation*}
where $\lambda>0$ is the smoothing parameter,
$t$ is measured in days since $t_0$ (so that $h=728$) and $m_i$ is the number of
measurements of patient $i$ on $[-120, h+120]$, that is, from $4$ months before $t_0$
to $4$ months after the end of the horizon. The baseline weight is the value of the
spline at the index date, $W_{i,\mathrm{baseline}}=\hat f_i(0)$, and the reconstructed
trajectory is expressed as a relative change from it, in percent in the figures:
\[
Y_i(t) = \frac{\hat f_i(t) - W_{i,\mathrm{baseline}}}{W_{i,\mathrm{baseline}}}.
\]
Evaluated on a weekly grid of $[0,h)$, it gives the outcome curve
$Y_i\in\mathbb{R}^{104}$, which starts at $0$ at $t_0$ for every patient. Fitting
beyond both ends of the horizon avoids the boundary effects of the spline at $t=0$
and $t=h$. Taking the baseline from the spline, rather than from the measurements
alone, also makes it less sensitive to their noise.}

The smoothing level is crucial, as it shapes the curves and hence
the estimates. It trades off two conflicting requirements. Strong smoothing removes the
measurement noise of the weights, due for instance to the time of day, clothing or the
accuracy of the scale, but flattens the shapes the elastic estimators aim to capture,
such as rebounds and plateaus. Weak smoothing preserves these features but lets noise
through, which a shape-sensitive method will then read as signal.

Calibrating $\lambda$ for each patient is especially difficult: automatic
criteria such as generalised cross-validation \citep{craven1978} optimise pointwise prediction
rather than shape fidelity, and no reference shape is available to tune
against. We therefore select one level of smoothing per arm, common to
all patients of that arm. The treated curves carry the shapes of interest, so we keep a
moderate smoothing ($\lambda=10^6$) that preserves their rebounds and plateaus. The control
curves are nearly flat on average, which is the situation where the template bias of the
elastic mean is largest relative to the signal (Section~\ref{sec:noise}), so we smooth them
strongly ($\lambda=10^8$). This reduces the part of the noise read as signal, without
removing it (Appendix~\ref{app:stability_estimations}). Both levels are chosen by visual
inspection, so that the reconstructed trajectories remain consistent with the raw
measurements. Figure~\ref{fig:rd_ex_lambda} shows reconstructed trajectories of a sample
of patients for a grid of smoothing levels.

\begin{figure}[h]
\centering
\includegraphics[width=1\textwidth]{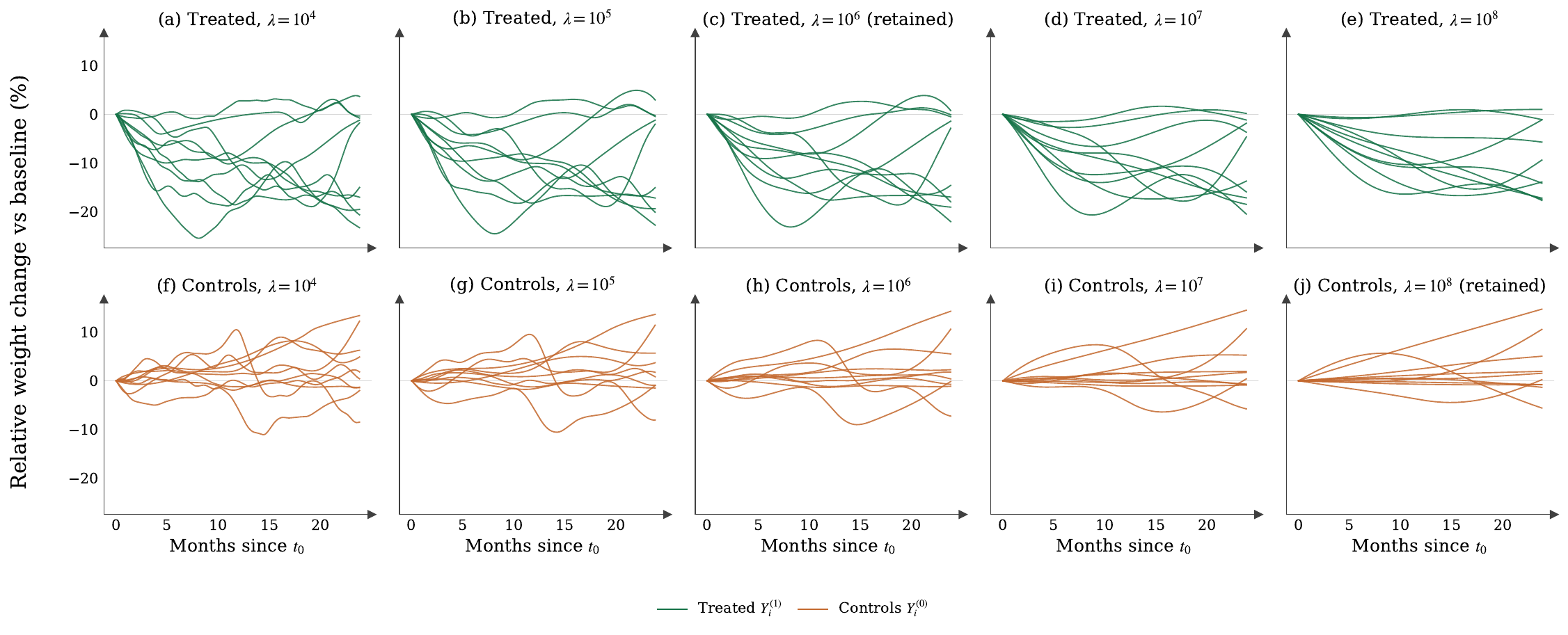}
\caption{\textbf{Examples of reconstructed weight trajectories for different smoothing levels $\lambda$}. 
Trajectories of $10$ treated and $10$ controls for $\lambda \in \lbrace 10^4, 10^5, 10^6, 10^7, 10^8\rbrace$.
The retained levels are $\lambda=10^6$ for the treated and
$\lambda=10^8$ for the controls.}
\label{fig:rd_ex_lambda}
\end{figure}

\subsection{Stability of the estimates}
\label{app:stability_estimations}
To assess the stability of the estimated effects, we compute each estimator on
$B=50$ bootstrap samples, obtained by resampling patients with replacement, 
stratified by treatment arm.
Figure~\ref{fig:rd_bootstrap} shows the results.

\begin{figure}[h]
\centering
\includegraphics[width=1\textwidth]{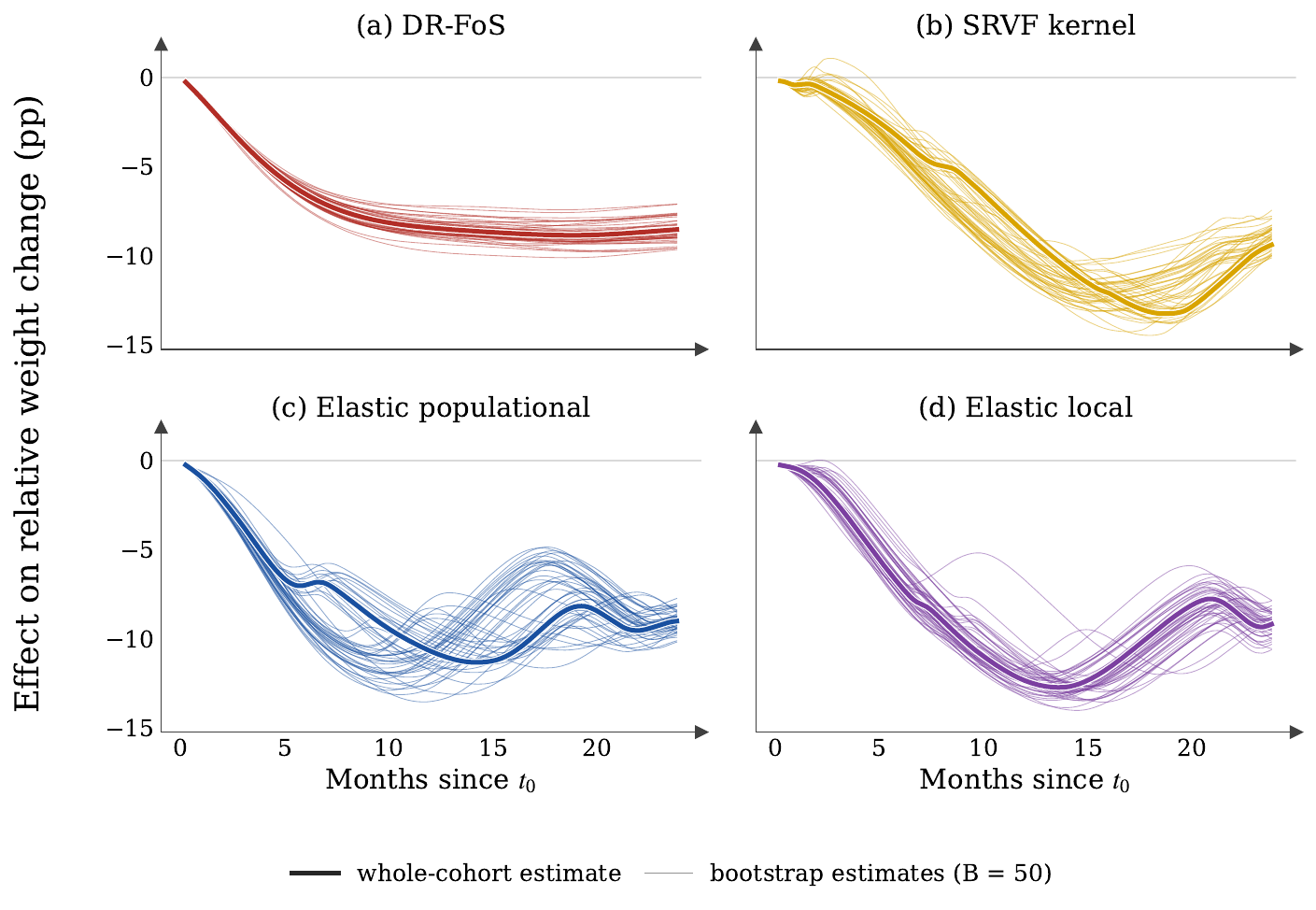}
\caption{\textbf{The elastic local estimate keeps its shape
across bootstrap samples, the populational one less so.} Estimated effect on the
treated for each estimator, on the full cohort (thick line) and on $B=50$ bootstrap
samples (thin lines).}
\label{fig:rd_bootstrap}
\end{figure}

{
DR-FoS is the most stable estimator: all bootstrap estimates share its monotone shape
and differ only in level, by about $\pm 1$ pp. The SRVF kernel estimator
and the elastic local estimator also keep their shape across bootstrap samples, with 
moderate variability in phase and amplitude. The local estimate reaches its lowest point
between about $12$ and $16$ months and its rebound peaks around $20$ months in almost
all samples.

The elastic populational estimator is less stable. The timing of its lowest point
varies from about $9$ to $15$ months, and its rebound varies in both timing and
height. The small plateau of the full-cohort estimate around $6$--$7$ months appears
in only part of the samples. This instability is likely due to the
template bias of the nearly flat control arm
(Appendix~\ref{app:reconstruction_weight_curves}): the oscillation it adds to the elastic
control mean changes in amplitude and phase from one sample to another, and the
populational estimate inherits it.

\clearpage
\bibliographystyle{plainnat}
\bibliography{refs}

\begin{thebibliography}{62}
\providecommand{\natexlab}[1]{#1}
\providecommand{\url}[1]{\texttt{#1}}
\expandafter\ifx\csname urlstyle\endcsname\relax
  \providecommand{\doi}[1]{doi: #1}\else
  \providecommand{\doi}{doi: \begingroup \urlstyle{rm}\Url}\fi

\bibitem[Afsari(2011)]{afsari_riemannian_2011}
Bijan Afsari.
\newblock Riemannian {$L^p$} center of mass: existence, uniqueness, and
  convexity.
\newblock \emph{Proceedings of the American Mathematical Society}, 139\penalty0
  (2):\penalty0 655--673, 2011.
\newblock \doi{10.1090/S0002-9939-2010-10541-5}.

\bibitem[Aliprantis and Border(2006)]{aliprantis_infinite_2006}
Charalambos~D. Aliprantis and Kim~C. Border.
\newblock \emph{Infinite Dimensional Analysis: A Hitchhiker's Guide}.
\newblock Springer, Berlin, 3rd edition, 2006.

\bibitem[Austin(2009)]{austin2009}
Peter~C. Austin.
\newblock Balance diagnostics for comparing the distribution of baseline
  covariates between treatment groups in propensity-score matched samples.
\newblock \emph{Statistics in Medicine}, 28\penalty0 (25):\penalty0 3083--3107,
  2009.

\bibitem[Bhattacharya and Patrangenaru(2003)]{bhattacharya_large_2003}
Rabi Bhattacharya and Vic Patrangenaru.
\newblock Large sample theory of intrinsic and extrinsic sample means on
  manifolds. {I}.
\newblock \emph{The Annals of Statistics}, 31\penalty0 (1):\penalty0 1--29,
  2003.

\bibitem[Bigot and Charlier(2011)]{bigot_consistency_2011}
J{\'e}r{\'e}mie Bigot and Benjamin Charlier.
\newblock On the consistency of {F}r{\'e}chet means in deformable models for
  curve and image analysis.
\newblock \emph{Electronic Journal of Statistics}, 5:\penalty0 1054--1089,
  2011.
\newblock arXiv:1010.0427.

\bibitem[Brezis(2011)]{brezis_functional_2011}
Ha\"im Brezis.
\newblock \emph{Functional Analysis, {Sobolev} Spaces and Partial Differential
  Equations}.
\newblock Universitext. Springer, New York, 2011.

\bibitem[Bridson and Haefliger(1999)]{bridson_haefliger_1999}
Martin~R. Bridson and Andr\'e Haefliger.
\newblock \emph{Metric Spaces of Non-Positive Curvature}, volume 319 of
  \emph{Grundlehren der mathematischen Wissenschaften}.
\newblock Springer, Berlin, 1999.

\bibitem[Bruveris(2016)]{bruveris_optimal_2016}
Martins Bruveris.
\newblock Optimal reparametrizations in the square root velocity framework.
\newblock \emph{SIAM Journal on Mathematical Analysis}, 48\penalty0
  (6):\penalty0 4335--4354, 2016.
\newblock \doi{10.1137/15M1014693}.
\newblock arXiv:1507.02728.

\bibitem[Burago et~al.(2001)Burago, Burago, and Ivanov]{burago_2001}
Dmitri Burago, Yuri Burago, and Sergei Ivanov.
\newblock \emph{A Course in Metric Geometry}, volume~33 of \emph{Graduate
  Studies in Mathematics}.
\newblock American Mathematical Society, 2001.

\bibitem[Chabassier et~al.(2026)Chabassier, Vittrant, Binon, Scornet, and
  Josse]{chabassier_glp1_2026}
Agathe Chabassier, Benjamin Vittrant, Pierre Binon, Erwan Scornet, and Julie
  Josse.
\newblock Real-world body composition and blood pressure changes following
  glucagon-like peptide 1 receptor agonist initiation: A causal inference study
  of connected device data.
\newblock \emph{Mayo Clinic Proceedings: Digital Health}, page 100375, 2026.
\newblock \doi{10.1016/j.mcpdig.2026.100375}.

\bibitem[Chakraborty and Panaretos(2021)]{chakraborty_functional_2021}
Anirvan Chakraborty and Victor~M. Panaretos.
\newblock Functional registration and local variations: Identifiability, rank,
  and tuning.
\newblock \emph{Bernoulli}, 27\penalty0 (2):\penalty0 1103--1130, 2021.

\bibitem[Chen and Guestrin(2016)]{chen_xgboost_2016}
Tianqi Chen and Carlos Guestrin.
\newblock {XGBoost}: A scalable tree boosting system.
\newblock In \emph{Proceedings of the 22nd ACM SIGKDD International Conference
  on Knowledge Discovery and Data Mining}, pages 785--794, 2016.
\newblock \doi{10.1145/2939672.2939785}.

\bibitem[Craven and Wahba(1978)]{craven1978}
Peter Craven and Grace Wahba.
\newblock Smoothing noisy data with spline functions: Estimating the correct
  degree of smoothing by the method of generalized cross-validation.
\newblock \emph{Numerische Mathematik}, 31\penalty0 (4):\penalty0 377--403,
  1978.
\newblock \doi{10.1007/BF01404567}.

\bibitem[Cuturi and Blondel(2017)]{cuturi_soft-dtw_2017}
Marco Cuturi and Mathieu Blondel.
\newblock Soft-{DTW}: a {Differentiable} {Loss} {Function} for {Time}-{Series}.
\newblock In \emph{Proceedings of the 34th {International} {Conference} on
  {Machine} {Learning}}, volume~70 of \emph{Proceedings of Machine Learning
  Research}, pages 894--903. PMLR, July 2017.

\bibitem[Dacorogna(2008)]{dacorogna_direct_2008}
Bernard Dacorogna.
\newblock \emph{Direct Methods in the Calculus of Variations}, volume~78 of
  \emph{Applied Mathematical Sciences}.
\newblock Springer, New York, 2nd edition, 2008.

\bibitem[Devilliers et~al.(2017{\natexlab{a}})Devilliers, Allassonni{\`e}re,
  Trouv{\'e}, and Pennec]{devilliers_inconsistency_2017}
Lo{\"i}c Devilliers, St{\'e}phanie Allassonni{\`e}re, Alain Trouv{\'e}, and
  Xavier Pennec.
\newblock Inconsistency of template estimation by minimizing of the
  variance/pre-variance in the quotient space.
\newblock \emph{Entropy}, 19\penalty0 (6):\penalty0 288, 2017{\natexlab{a}}.
\newblock \doi{10.3390/e19060288}.

\bibitem[Devilliers et~al.(2017{\natexlab{b}})Devilliers, Allassonni{\`e}re,
  Trouv{\'e}, and Pennec]{devilliers_template_2017}
Lo{\"i}c Devilliers, St{\'e}phanie Allassonni{\`e}re, Alain Trouv{\'e}, and
  Xavier Pennec.
\newblock Template estimation in computational anatomy: {F}r{\'e}chet means in
  top and quotient spaces are not consistent.
\newblock \emph{SIAM Journal on Imaging Sciences}, 10\penalty0 (3):\penalty0
  1139--1169, 2017{\natexlab{b}}.
\newblock \doi{10.1137/16M1083931}.
\newblock arXiv:1608.03703.

\bibitem[Dupuy et~al.(2011)Dupuy, Loubes, and Maza]{dupuy_structural_2011}
Jean-Fran\c{c}ois Dupuy, Jean-Michel Loubes, and Elie Maza.
\newblock Non parametric estimation of the structural expectation of a
  stochastic increasing function.
\newblock \emph{Statistics and Computing}, 21\penalty0 (1):\penalty0 121--136,
  2011.

\bibitem[Ecker et~al.(2024)Ecker, de~Luna, and Schelin]{ecker_causal_2024}
Kreske Ecker, Xavier de~Luna, and Lina Schelin.
\newblock Causal inference with a functional outcome.
\newblock \emph{Journal of the Royal Statistical Society Series C}, 73\penalty0
  (1):\penalty0 221--240, 2024.
\newblock \doi{10.1093/jrsssc/qlad092}.

\bibitem[Elixhauser et~al.(1998)Elixhauser, Steiner, Harris, and
  Coffey]{elixhauser1998}
Anne Elixhauser, Claudia Steiner, D.~Robert Harris, and Rosanna~M. Coffey.
\newblock Comorbidity measures for use with administrative data.
\newblock \emph{Medical Care}, 36\penalty0 (1):\penalty0 8--27, 1998.

\bibitem[Evans and Jaffe(2024)]{evans_jaffe_2024}
Steven~N. Evans and Adam~Q. Jaffe.
\newblock Limit theorems for {Fr\'echet} mean sets.
\newblock \emph{Bernoulli}, 30\penalty0 (1):\penalty0 419--447, 2024.
\newblock \doi{10.3150/23-BEJ1603}.
\newblock arXiv:2012.12859.

\bibitem[Even and Josse(2025)]{even_rethinking_2025}
Mathieu Even and Julie Josse.
\newblock Rethinking the win ratio: A causal framework for hierarchical outcome
  analysis, 2025.
\newblock arXiv:2501.16933 [stat.ME].

\bibitem[Fr\'echet(1948)]{frechet_elements_1948}
Maurice Fr\'echet.
\newblock Les \'el\'ements al\'eatoires de nature quelconque dans un espace
  distanci\'e.
\newblock \emph{Annales de l'Institut Henri Poincar\'e}, 10\penalty0
  (4):\penalty0 215--310, 1948.

\bibitem[Gasoyan et~al.(2025)Gasoyan, Butsch, Schulte, Casacchia, Le, Boyer,
  Griebeler, Burguera, and Rothberg]{gasoyan_changes_2025}
Hamlet Gasoyan, W.~Scott Butsch, Rebecca Schulte, Nicholas~J. Casacchia, Phuc
  Le, Christopher~B. Boyer, Marcio~L. Griebeler, Bartolome Burguera, and
  Michael~B. Rothberg.
\newblock Changes in weight and glycemic control following obesity treatment
  with semaglutide or tirzepatide by discontinuation status.
\newblock \emph{Obesity}, 33\penalty0 (9):\penalty0 1657--1667, 2025.
\newblock \doi{10.1002/oby.24331}.

\bibitem[H{\'a}jek(1971)]{hajek_comment_1971}
Jaroslav H{\'a}jek.
\newblock Comment on ``{A}n essay on the logical foundations of survey
  sampling, part one'' by {D}. {B}asu.
\newblock In V.~P. Godambe and D.~A. Sprott, editors, \emph{Foundations of
  Statistical Inference}, page 236. Holt, Rinehart and Winston, Toronto, 1971.

\bibitem[Hern\'an and Robins(2016)]{hernan_target_2016}
Miguel~A. Hern\'an and James~M. Robins.
\newblock Using big data to emulate a target trial when a randomized trial is
  not available.
\newblock \emph{American Journal of Epidemiology}, 183\penalty0 (8):\penalty0
  758--764, 2016.

\bibitem[Hern\'an and Robins(2020)]{hernan_causal_2020}
Miguel~A. Hern\'an and James~M. Robins.
\newblock \emph{Causal Inference: What If}.
\newblock Chapman \& Hall/CRC, Boca Raton, 2020.

\bibitem[Horvitz and Thompson(1952)]{horvitz_generalization_1952}
Daniel~G. Horvitz and Donovan~J. Thompson.
\newblock A generalization of sampling without replacement from a finite
  universe.
\newblock \emph{Journal of the American Statistical Association}, 47\penalty0
  (260):\penalty0 663--685, 1952.

\bibitem[Imbens and Rubin(2015)]{imbens_rubin_2015}
Guido~W. Imbens and Donald~B. Rubin.
\newblock \emph{Causal Inference for Statistics, Social, and Biomedical
  Sciences: An Introduction}.
\newblock Cambridge University Press, New York, 2015.

\bibitem[Karcher(1977)]{karcher_riemannian_1977}
Hermann Karcher.
\newblock Riemannian center of mass and mollifier smoothing.
\newblock \emph{Communications on Pure and Applied Mathematics}, 30\penalty0
  (5):\penalty0 509--541, 1977.

\bibitem[Katta et~al.(2024)Katta, Parikh, Rudin, and
  Volfovsky]{katta_interpretable_2024}
Srikar Katta, Harsh Parikh, Cynthia Rudin, and Alexander Volfovsky.
\newblock Interpretable causal inference for analyzing wearable, sensor, and
  distributional data.
\newblock In \emph{Proceedings of the 27th International Conference on
  Artificial Intelligence and Statistics (AISTATS)}, volume 238 of
  \emph{Proceedings of Machine Learning Research}, pages 3340--3348. PMLR,
  2024.
\newblock arXiv:2312.10569.

\bibitem[Kneip and Gasser(1992)]{kneip_statistical_1992}
Alois Kneip and Theo Gasser.
\newblock Statistical {Tools} to {Analyze} {Data} {Representing} a {Sample} of
  {Curves}.
\newblock \emph{The Annals of Statistics}, 20\penalty0 (3):\penalty0
  1266--1305, September 1992.
\newblock \doi{10.1214/aos/1176348769}.

\bibitem[Kurisu et~al.(2024)Kurisu, Zhou, Otsu, and
  M{\"u}ller]{kurisu_geodesic_2024}
Daisuke Kurisu, Yidong Zhou, Taisuke Otsu, and Hans-Georg M{\"u}ller.
\newblock Geodesic causal inference, 2024.
\newblock arXiv:2406.19604 [stat.ME].

\bibitem[Kurtek et~al.(2011)Kurtek, Srivastava, and Wu]{kurtek_signal_2011}
Sebastian~A. Kurtek, Anuj Srivastava, and Wei Wu.
\newblock Signal estimation under random time-warpings and nonlinear signal
  alignment.
\newblock In \emph{Advances in Neural Information Processing Systems 24}, pages
  675--683. Curran Associates, 2011.

\bibitem[Leoni(2017)]{leoni_first_2017}
Giovanni Leoni.
\newblock \emph{A First Course in {S}obolev Spaces}, volume 181 of
  \emph{Graduate Studies in Mathematics}.
\newblock American Mathematical Society, Providence, RI, 2nd edition, 2017.

\bibitem[Lin et~al.(2023)Lin, Kong, and Wang]{lin_causal_2023}
Zhenhua Lin, Dehan Kong, and Linbo Wang.
\newblock Causal inference on distribution functions.
\newblock \emph{Journal of the Royal Statistical Society Series B: Statistical
  Methodology}, 85\penalty0 (2):\penalty0 378--398, 2023.
\newblock \doi{10.1093/jrsssb/qkad008}.

\bibitem[Liu et~al.(2024)Liu, Ecker, Schelin, and de~Luna]{liu_doubly_2024}
Xijia Liu, Kreske Ecker, Lina Schelin, and Xavier de~Luna.
\newblock Doubly robust estimation with functional outcomes missing at random,
  2024.
\newblock arXiv:2411.17224 [stat.ME].

\bibitem[Lunceford and Davidian(2004)]{lunceford_stratification_2004}
Jared~K. Lunceford and Marie Davidian.
\newblock Stratification and weighting via the propensity score in estimation
  of causal treatment effects: a comparative study.
\newblock \emph{Statistics in Medicine}, 23\penalty0 (19):\penalty0 2937--2960,
  2004.

\bibitem[Marron et~al.(2015)Marron, Ramsay, Sangalli, and
  Srivastava]{marron_functional_2015}
J.~S. Marron, James~O. Ramsay, Laura~M. Sangalli, and Anuj Srivastava.
\newblock Functional data analysis of amplitude and phase variation.
\newblock \emph{Statistical Science}, 30\penalty0 (4):\penalty0 468--484, 2015.
\newblock \doi{10.1214/15-STS524}.

\bibitem[Miolane et~al.(2017)Miolane, Holmes, and
  Pennec]{miolane_template_2017}
Nina Miolane, Susan Holmes, and Xavier Pennec.
\newblock Template shape estimation: Correcting an asymptotic bias.
\newblock \emph{SIAM Journal on Imaging Sciences}, 10\penalty0 (2):\penalty0
  808--844, 2017.
\newblock \doi{10.1137/16M1084493}.
\newblock arXiv:1610.01502.

\bibitem[Normand et~al.(2001)Normand, Landrum, Guadagnoli, et~al.]{normand2001}
Sharon-Lise~T. Normand, Mary~Beth Landrum, Edward Guadagnoli, et~al.
\newblock Validating recommendations for coronary angiography following acute
  myocardial infarction in the elderly: A matched analysis using propensity
  scores.
\newblock \emph{Journal of Clinical Epidemiology}, 54\penalty0 (4):\penalty0
  387--398, 2001.

\bibitem[Panaretos and Zemel(2016)]{panaretos_amplitude_2016}
Victor~M. Panaretos and Yoav Zemel.
\newblock Amplitude and phase variation of point processes.
\newblock \emph{The Annals of Statistics}, 44\penalty0 (2):\penalty0 771--812,
  2016.
\newblock arXiv:1603.08691.

\bibitem[Petitjean et~al.(2011)Petitjean, Ketterlin, and
  Gançarski]{petitjean_global_2011}
François Petitjean, Alain Ketterlin, and Pierre Gançarski.
\newblock A global averaging method for dynamic time warping, with applications
  to clustering.
\newblock \emph{Pattern Recognition}, 44\penalty0 (3):\penalty0 678--693, March
  2011.
\newblock \doi{10.1016/j.patcog.2010.09.013}.

\bibitem[Quan et~al.(2005)Quan, Sundararajan, Halfon, et~al.]{quan2005}
Hude Quan, Vijaya Sundararajan, Patricia Halfon, et~al.
\newblock Coding algorithms for defining comorbidities in {ICD-9-CM} and
  {ICD-10} administrative data.
\newblock \emph{Medical Care}, 43\penalty0 (11):\penalty0 1130--1139, 2005.

\bibitem[Ramos-Carre{\~n}o et~al.(2024)Ramos-Carre{\~n}o, Torrecilla,
  Carbajo-Berrocal, Marcos, and Su{\'a}rez]{ramos_carreno_scikitfda_2024}
Carlos Ramos-Carre{\~n}o, Jos{\'e}~Luis Torrecilla, Miguel Carbajo-Berrocal,
  Pablo Marcos, and Alberto Su{\'a}rez.
\newblock scikit-fda: A {P}ython package for functional data analysis.
\newblock \emph{Journal of Statistical Software}, 109\penalty0 (2), 2024.

\bibitem[Ramsay and Li(1998)]{ramsay_curve_1998}
J.~O. Ramsay and Xiaochun Li.
\newblock Curve registration.
\newblock \emph{Journal of the Royal Statistical Society: Series B (Statistical
  Methodology)}, 60\penalty0 (2):\penalty0 351--363, 1998.
\newblock \doi{10.1111/1467-9868.00129}.

\bibitem[Ramsay and Silverman(2005)]{ramsay_silverman_2005}
James~O. Ramsay and Bernard~W. Silverman.
\newblock \emph{Functional Data Analysis}.
\newblock Springer, New York, 2nd edition, 2005.

\bibitem[Raykov et~al.(2025)Raykov, Luo, Strait, and
  KhudaBukhsh]{raykov_kernel_2025}
Yordan~P. Raykov, Hengrui Luo, Justin~D. Strait, and Wasiur~R. KhudaBukhsh.
\newblock Kernel-based estimators for functional causal effects, 2025.
\newblock arXiv:2503.05024 [stat.ME].

\bibitem[Rodriguez et~al.(2025)Rodriguez, Zhang, Gratzl, Do,
  Goodwin~Cartwright, Baker, Gluckman, Stucky, and
  Emanuel]{rodriguez_discontinuation_2025}
Patricia~J. Rodriguez, Vincent Zhang, Samuel Gratzl, Duy Do, Brianna
  Goodwin~Cartwright, Charlotte Baker, Ty~J. Gluckman, Nicholas Stucky, and
  Ezekiel~J. Emanuel.
\newblock Discontinuation and reinitiation of dual-labeled {GLP-1} receptor
  agonists among {US} adults with overweight or obesity.
\newblock \emph{JAMA Network Open}, 8\penalty0 (1):\penalty0 e2457349, 2025.
\newblock \doi{10.1001/jamanetworkopen.2024.57349}.

\bibitem[Rubin(2001)]{rubin2001}
Donald~B. Rubin.
\newblock Using propensity scores to help design observational studies:
  Application to the tobacco litigation.
\newblock \emph{Health Services and Outcomes Research Methodology}, 2\penalty0
  (3--4):\penalty0 169--188, 2001.

\bibitem[Rubino et~al.(2021)Rubino, Abrahamsson, Davies,
  et~al.]{rubino_effect_2021}
Domenica Rubino, Niclas Abrahamsson, Melanie Davies, et~al.
\newblock Effect of continued weekly subcutaneous semaglutide vs placebo on
  weight loss maintenance in adults with overweight or obesity: the {STEP} 4
  randomized clinical trial.
\newblock \emph{JAMA}, 325\penalty0 (14):\penalty0 1414--1425, 2021.
\newblock \doi{10.1001/jama.2021.3224}.

\bibitem[Rudin(1976)]{rudin_principles_1976}
Walter Rudin.
\newblock \emph{Principles of Mathematical Analysis}.
\newblock McGraw-Hill, New York, 3rd edition, 1976.

\bibitem[Serrin and Varberg(1969)]{serrin_varberg_1969}
James Serrin and Dale~E. Varberg.
\newblock A general chain rule for derivatives and the change of variables
  formula for the {L}ebesgue integral.
\newblock \emph{The American Mathematical Monthly}, 76\penalty0 (5):\penalty0
  514--520, 1969.

\bibitem[Srivastava and Klassen(2016)]{srivastava_klassen_2016}
Anuj Srivastava and Eric~P. Klassen.
\newblock \emph{Functional and Shape Data Analysis}.
\newblock Springer Series in Statistics. Springer, New York, NY, 2016.
\newblock \doi{10.1007/978-1-4939-4020-2}.

\bibitem[Srivastava et~al.(2011)Srivastava, Wu, Kurtek, Klassen, and
  Marron]{srivastava_registration_2011}
Anuj Srivastava, Wei Wu, Sebastian Kurtek, Eric Klassen, and J.~S. Marron.
\newblock Registration of functional data using {Fisher--Rao} metric, 2011.
\newblock arXiv:1103.3817 [math.ST].

\bibitem[Sturm(2003)]{sturm_probability_2003}
Karl-Theodor Sturm.
\newblock Probability measures on metric spaces of nonpositive curvature.
\newblock In \emph{Heat Kernels and Analysis on Manifolds, Graphs, and Metric
  Spaces}, volume 338 of \emph{Contemporary Mathematics}, pages 357--390.
  American Mathematical Society, 2003.

\bibitem[Testa et~al.(2026)Testa, Boschi, Chiaromonte, Kennedy, and
  Reimherr]{testa_doubly_2025}
Lorenzo Testa, Tobia Boschi, Francesca Chiaromonte, Edward~H. Kennedy, and
  Matthew Reimherr.
\newblock Doubly-robust functional average treatment effect estimation.
\newblock \emph{Journal of Causal Inference}, 14\penalty0 (1):\penalty0
  20250045, 2026.
\newblock \doi{10.1515/jci-2025-0045}.
\newblock arXiv:2501.06024.

\bibitem[Tucker(2020)]{tucker_fdasrsf_2020}
J.~Derek Tucker.
\newblock fdasrsf: Functional data analysis using the square root slope
  framework.
\newblock Python package, \url{https://github.com/jdtuck/fdasrsf_python}, 2020.

\bibitem[Tucker et~al.(2013)Tucker, Wu, and Srivastava]{tucker_generative_2013}
J.~Derek Tucker, Wei Wu, and Anuj Srivastava.
\newblock Generative models for functional data using phase and amplitude
  separation.
\newblock \emph{Computational Statistics \& Data Analysis}, 61:\penalty0
  50--66, 2013.

\bibitem[Wilding et~al.(2021)Wilding, Batterham, Calanna, Davies, Van~Gaal,
  Lingvay, et~al.]{wilding_once-weekly_2021}
John P.~H. Wilding, Rachel~L. Batterham, Salvatore Calanna, Melanie Davies,
  Luc~F. Van~Gaal, Ildiko Lingvay, et~al.
\newblock Once-weekly semaglutide in adults with overweight or obesity.
\newblock \emph{The New England Journal of Medicine}, 384\penalty0
  (11):\penalty0 989--1002, 2021.
\newblock \doi{10.1056/NEJMoa2032183}.

\bibitem[Wilding et~al.(2022)Wilding, Batterham, Davies, Van~Gaal, Kandler,
  Konakli, Lingvay, McGowan, Oral, Rosenstock, Wadden, Wharton, Yokote, and
  Kushner]{wilding_weight_2022}
John P.~H. Wilding, Rachel~L. Batterham, Melanie Davies, Luc~F. Van~Gaal,
  Kristian Kandler, Katerina Konakli, Ildiko Lingvay, Barbara~M. McGowan,
  Tugce~Kalayci Oral, Julio Rosenstock, Thomas~A. Wadden, Sean Wharton, Koutaro
  Yokote, and Robert~F. Kushner.
\newblock Weight regain and cardiometabolic effects after withdrawal of
  semaglutide: The {STEP} 1 trial extension.
\newblock \emph{Diabetes, Obesity and Metabolism}, 24\penalty0 (8):\penalty0
  1553--1564, 2022.
\newblock \doi{10.1111/dom.14725}.

\bibitem[Ziezold(1977)]{ziezold_expected_1977}
Herbert Ziezold.
\newblock On expected figures and a strong law of large numbers for random
  elements in quasi-metric spaces.
\newblock In J.~Ko\v{z}e\v{s}nik, editor, \emph{Transactions of the Seventh
  Prague Conference on Information Theory, Statistical Decision Functions,
  Random Processes and of the 1974 European Meeting of Statisticians},
  volume~A, pages 591--602. Reidel, Dordrecht, 1977.

\end{thebibliography}

\end{document}